\title{Selling Information while Being an Interested Party}
\newtheorem{definition}{Definition}
\newtheorem{theorem}{Theorem}
\newtheorem{corollary}{Corollary}
\author{
	Matteo Castiglioni\\
	Politecnico di Milano\\
	\texttt{matteo.castiglioni@polimi.it}
	\And
	Francesco Bacchiocchi\\
	Politecnico di Milano\\
	\texttt{francesco.bacchiocchi@polimi.it}
	\And
	Alberto Marchesi\\
	Politecnico di Milano\\
	\texttt{alberto.marchesi@polimi.it}
	\And
	Giulia Romano\\
	Politecnico di Milano\\
	\texttt{giulia.romano@polimi.it}
	\And
	Nicola Gatti\\
	Politecnico di Milano\\
	\texttt{nicola.gatti@polimi.it}
}
\definecolor{mygreen}{rgb}{0.0, 0.5, 0.0}
\definecolor{myorange}{rgb}{0.55, 0.62, 1}
\newcommand{\nb}[3]{{\colorbox{#2}{\bfseries\sffamily\scriptsize\textcolor{white}{#1}}}{\textcolor{#2}{\sf\small\textsf{#3}}}}
\newcommand{\mat}[1]{\nb{Matteo}{purple}{#1}}
\pgfplotsset{compat=newest}
\pgfplotsset{compat=1.15}
		\newcommand*\patchAmsMathEnvironmentForLineno[1]{%
			\expandafter\let\csname old#1\expandafter\endcsname\csname #1\endcsname
			\expandafter\let\csname oldend#1\expandafter\endcsname\csname end#1\endcsname
			\renewenvironment{#1}%
			{\linenomath\csname old#1\endcsname}%
			{\csname oldend#1\endcsname\endlinenomath}%
		}%
		\newcommand*\patchBothAmsMathEnvironmentsForLineno[1]{%
			\patchAmsMathEnvironmentForLineno{#1}%
			\patchAmsMathEnvironmentForLineno{#1*}%
		}%
\DeclareMathOperator*{\A}{\mathcal{A}}
\DeclareMathOperator*{\argmax}{argmax}
\newcommand{\R}{\mathcal{R}}
\newcommand{\br}{b^{k}_{\xi, \pi}}
\newcommand{\ebr}{b^{k, \epsilon}_{\xi, \pi}}
\newcommand{\defeq}{\coloneqq}
\newcommand{\sset}{\mathcal{S}}
\newcommand{\p}{\xi}
\newcommand{\pset}{{\Xi}}
\newcommand{\pr}{\textnormal{Pr}}
\newcommand{\xvec}{\mathbf{x}}
\newcommand{\K}{\mathcal{K}}
\newcommand{\NPHard}{$\mathsf{NP}$-hard}
\newcommand{\supp}{\textnormal{supp}}
\newcommand{\send}{\mathsf{s}}
\newcommand{\avec}{\boldsymbol{a}}
\newcommand{\Expec}{\mathbb{E}}
\newcommand{\brset}{\mathcal{B}}
\newcommand{\nAct}{m}
\newcommand{\nState}{d}
\newcommand{\nVar}{n_\textnormal{var}}
\newcommand{\nEq}{n_\textnormal{eq}}
\newcommand{\epsLM}{\zeta}
\newcommand{\deltaLM}{\delta}
\newcommand{\normaliz}{\tau}
\newcommand{\eps}{\epsilon}
\newcommand{\subalign}[1]{%
	\vcenter{%
		\Let@ \restore@math@cr \default@tag
		\baselineskip\fontdimen10 \scriptfont\tw@
		\advance\baselineskip\fontdimen12 \scriptfont\tw@
		\lineskip\thr@@\fontdimen8 \scriptfont\thr@@
		\lineskiplimit\lineskip
		\ialign{\hfil$\m@th\scriptstyle##$&$\m@th\scriptstyle{}##$\hfil\crcr
			#1\crcr
		}%
	}%
}
\newcommand{\pushright}[1]{\ifmeasuring@#1\else\omit\hfill$\displaystyle#1$\fi\ignorespaces}
\newcommand{\pushleft}[1]{\ifmeasuring@#1\else\omit$\displaystyle#1$\hfill\fi\ignorespaces}
\newcommand{\specialcell}[1]{\ifmeasuring@#1\else\omit$\displaystyle#1$\ignorespaces\fi}
\begin{document}
	\maketitle
\begin{abstract}
	We study the algorithmic problem faced by an information holder (\emph{seller}) who \emph{wants to optimally sell such information} to a budged-constrained decision maker (\emph{buyer}) that has to undertake some action. Differently from previous works addressing this problem, we consider the case in which the seller is an \emph{interested party}, as the action chosen by the buyer does \emph{not} only influence their utility, but also seller's one. This happens in many real-world settings, where the way in which businesses use acquired information may positively or negatively affect the seller, due to the presence of externalities on the information market. The utilities of both the seller and the buyer depend on a random \emph{state of nature}, which is revealed to the seller, but it is unknown to the buyer. Thus, the seller's goal is to (partially) sell their information about the state of nature to the buyer, so as to concurrently maximize revenue and induce the buyer to take a desirable action.

	We study settings in which buyer's budget and utilities are determined by a random buyer's \emph{type} that is unknown to the seller. In such settings, an optimal \emph{protocol} for the seller must propose to the buyer a \emph{menu} of information-revelation policies to choose from, with the latter acquiring one of them by paying its corresponding \emph{price}. Moreover, since in our model the seller is an interested party, an optimal protocol must also prescribe the seller to pay back the buyer contingently on their action.

	First, we show that the problem of computing a seller-optimal protocol can be solved in polynomial time. This result relies on a quadratic formulation of the problem, which we solve by means of a linear programming relaxation. Next, we switch the attention to the case in which a seller's protocol employs a single information-revelation policy, rather than proposing a menu. In such a setting, we show that computing a seller-optimal protocol is $\mathsf{APX}$-hard, even when either the number of actions or that of states of nature is fixed. We complement such a negative result by providing a quasi-polynomial-time approximation algorithm that, given any $\rho > 0$ and $\epsilon > 0$ as input, provides a multiplicative approximation $\rho$ of the optimal seller's expected utility, by only suffering a negligible $2^{-\Omega(1/\rho)}+\epsilon$ additive loss. Such an algorithm runs in polynomial time whenever either the number of buyer's actions or that of states of nature is fixed. In order to derive our results, we draw a connection between our information-selling problem and principal-agent problems with observable actions. Finally, we complete the picture of the computational complexity of finding seller-optimal protocols without menus by providing additional results for the specific setting in which the buyer has limited liability, and by designing a polynomial-time algorithm for the case in which buyer's types are fixed.
\end{abstract}

\section{Introduction}

Nowadays, there is a terrific amount of information being collected on the Web and other online platforms.
Such information ranges from consumer preferences, \emph{e.g.}, in e-commerce and streaming websites, to credit reports and location histories. 
As a result, recent years have witnessed the born and exponential blowout of markets where specialized companies sell information that is valuable to other businesses, such as advertisers, retailers, and loan providers.

Very recently, information markets have also received the attention of the algorithmic game theory research community.
However, while works addressing classical settings such as auctions~\citep{daskalakis2022learning}, signaling~\citep{dughmi2019algorithmic}, and contract design~\citep{dutting2019simple} are now proliferating, only few papers studied the problem of {information selling}, with ~\citep{babaioff2012}~and~\citep{xu2020} constituting two notable examples.

We study the algorithmic problem faced by an information holder (\emph{seller}) who \emph{wants to optimally sell such information} to a budged-constrained decision maker (\emph{buyer}) that has to undertake some action.
Differently from previous works addressing such a problem (see, \emph{e.g.},~\citep{xu2020}), we consider the case in which the seller is an \emph{interested party}, as the action chosen by the buyer does \emph{not} only influence their utility, but also seller's one.
This happens in many real-world settings, where the way in which businesses use acquired information may positively or negatively affect the seller, due to the presence of externalities on the information market.
The utilities of both the seller and the buyer depend on a \emph{state of nature} that is drawn according to a commonly-known probability distribution.
The realized state of nature is revealed to the seller, while it remains unknown to the buyer.
Thus, the seller's goal is to (partially) sell their information about the state of nature to the buyer, so as to concurrently maximize revenue and induce the buyer to take a desirable action.

We study settings in which buyer's budget and utilities are determined by a random buyer's \emph{type} that is unknown to the seller.
In such settings, in order to optimally sell information, the seller has to commit upfront to a \emph{protocol} working as follows.
First, the seller proposes to the buyer a \emph{menu} of information-revelation policies to choose from, and the latter acquires an expected-utility-maximizing one according to their (private) type, by paying its corresponding \emph{price}.
By building on the \emph{Bayesian persuasion} framework introduced by~\citet{kamenica2011bayesian}, an information-revelation policy is implemented as a \emph{signaling scheme}, which is a randomized mapping from states of nature to signals issued to the buyer.
Then, the realized state of nature is disclosed to the seller, who reveals information about it to the buyer according to the acquired signaling scheme.
Finally, the buyer selects a best-response action according to the just acquired information, and the seller pays back the buyer with a \emph{payment} which depends on both the chosen action and the signal that has been previously sent by the seller.
Our protocol extends the one of~\citet{xu2020} by adding a final payment from the seller to the buyer.
As we show later, this is crucial in order to design seller-optimal protocols in our setting where the seller is an interested party, since the latter is \emph{not} only concerned with revenue, but also with the buyer's action.
Moreover, the addition of payments from the buyer to the seller is also reasonable in many real-world scenarios.
For instance, think of a case in which the information holder asks the buyer to deposit additional money, and this is given back to them only if the performed action respects some given rules on which the two parties agreed upfront.

\subsection{Original Contributions}

After introducing all the needed concepts in Section~\ref{sec:preliminaries}, we start providing our results in Section~\ref{sec:protocol_selection}, where we analyze the case of general protocols in which the seller proposes a menu of signaling schemes to the buyer.
We show that a seller-optimal protocol can be computed in polynomial time.
In order to do that, we first formulate the problem of finding a seller-optimal protocol as a quadratic problem.
%
% polynomially-sized \emph{linear program} (LP), thus showing that the problem can be solved in polynomial time.
%
% The core result that allows us to provide such a formulation is that 
%
Then, we show that one can focus on \emph{direct} and \emph{persuasive} signaling schemes, which are those that send signals corresponding to action recommendations for the buyer and properly incentivize the latter to follow such recommendations.
This in turn allows us to restrict the attention to protocols that ask the buyer to pay their entire budget upfront and, then, pay back the buyer only if they take the recommended action.
These results allow us to formulate a suitable linear relaxation of the quadratic problem.
A similar technique has been employed in generalized principal-agent problems~\citep{gan2022optimal}, where it is possible to show that an optimal solution to the linear relaxation can be efficiently cast to an approximately-optimal solution to the quadratic problem.
Indeed, \citet{Castiglioni2022Randomized} show that, even in the special case of hidden-action principal-agent problems, obtaining an optimal solution to the quadratic problem is \emph{not} possible in general, since the principal's optimization problem may \emph{not} admit a maximum.
Surprisingly, in our information-selling setting, we prove that an optimal solution to our linear relaxation, which can be computed in polynomial time, can be used to recover a seller-optimal protocol in polynomial time.
As a byproduct, this also shows that, in our setting, the seller's problem always admits a maximum.

In the second part of the paper, we switch the attention to the case of protocols \emph{without menus}, in which the seller does \emph{not} propose a menu of signaling schemes to the buyer, but they rather commit to a single signaling scheme.
This is the case in many real-world applications, where it is unreasonable that a buyer is asked to choose an information-revelation policy among a range of options.
%
%
% Notice that such a setting is more realistic than the general one as, in many real-world applications, it is unreasonable that the buyer is asked to choose an information-revelation policy.
%
Computing a seller-optimal protocol without menus begets considerable additional computational challenges, since, intuitively, the seller has no way of extracting information about the buyer's private type, as instead it is the case when proposing a menu to choose from.

In Section~\ref{sec:principal_agent}, we draw a connection between the problem of computing a seller-optimal protocol without menus and principal-agent problems with \emph{observable actions}.
These are problems in which a principal commits to an action-dependent payment scheme in order to incentivize an agent to take some costly, observable action, in order to maximize their expected utility.
We prove that observable-action principal-agent problems are a special case of our information-selling problem, and that, in such problems, computing an expected-utility-maximizing payment-scheme for the principal is $\mathsf{APX}$-hard.
In particular, these results show that our information-selling problem is $\mathsf{APX}$-hard even when the number of states of nature is fixed and the buyer has \emph{limited liability}, and, thus, the seller cannot charge a price for a signaling scheme upfront.
%
%\textcolor{red}{This is reasonable \ldots}
%
We also provide some preliminary technical results on observable-action principal-agent problems, which are useful in order to prove some of our main claims in the paper, while also being of independent interest.

In Section~\ref{sec:selection_limited_liability}, we show how to circumvent the $\mathsf{APX}$-hardness for settings in which the seller employs protocols without menus and the buyer has limited liability.
%
% we address the problem of designing a seller-optimal protocol without type reporting in settings in which the buyer has limited liability.
%
%selection when each buyer's type has no budget and, consequently, the seller may not charge them an initial payment.
%
These special settings are of interested on their own, as a similar model has been recently addressed by~\citet{dughmi2019}.
%
% In this case, we prove that the problem of computing an optimal protocol is equivalent to design an optimal payment scheme in the principal-agent problem with observable action.Due to that we start by studying the computational complexity of the principal-agent problem with observable action proving that it is an \textsf{APX}-hard problem.
%
% In order to circumvent the $\mathsf{APX}$-hardness result, 
%
We focus on special cases where one of the parameters characterizing a problem instance is fixed.
In particular, we study what happens if we fix the number of buyer's actions, 
%or the number buyer's types, 
showing that the problem admits a PTAS.
%and a polynomial-time algorithm, respectively.
%
Moreover, we prove that, when instead the number of states of nature is fixed, there exists a polynomial-time \emph{bi-criteria} approximation algorithm that, given any $\rho > 0$ and $\epsilon > 0$ as input, provides a multiplicative approximation $\rho$ of the optimal seller's expected utility, by only suffering a
%exponentially vanishing
$2^{-\Omega(1/\rho)}+\epsilon$ additive loss.
Notice that such a loss is exponentially small in $\frac{1}{\rho}$, and, thus, it is negligible even for reasonably large values of $\rho$. 
As shown by~\citet{castiglioni2021Contract}, such an approximation result is tight for hidden-action principal-agent problems.
It remains an open problem to establish whether such an approximation guarantee is also tight for principal-agent problems with observable actions, which are a special case of our information-selling problem.
%
%\mat{aggiungere che questa approssimazione è ottima per il principal-agent problem with hidden action. open problem understant if the same lower bound holds for the problem with observable action and hence our approximation algorithms are optimal}
% for every fixed $\epsilon > 0$. 
%
% Moreover, we study what happens if we fix the number of actions or the number buyer's types, showing that the problem admits a PTAS and a polynomial-time algorithm, respectively.

\begin{table}[!htp]
	\caption{Summary of the results provided in the paper. Each cell specifies, on the first line, the computational complexity of finding a seller-optimal protocol, while, additionally, on the second line, it specifies the approximation guarantees that we can obtain in polynomial time, where \textsc{opt} denotes the seller's expected utility in an optimal protocol. The approximation guarantees that are shaded in gray can only be obtained by means of a quasi-polynomial-time algorithm.}
	\label{table-results}
	\centering
	{\renewcommand{\arraystretch}{1.2}
		\setlength{\tabcolsep}{2pt}
		\begin{tabular}{c|c|c|c|c|}
			& general &  fixed \# actions &  fixed \# states &  fixed \# types \\
			\hline
			Protocols \emph{with menus}
			& $\mathsf{P}$
			& $\mathsf{P}$
			& $\mathsf{P}$
			& $\mathsf{P}$  \\
			\hline
			\multirow{2}{*}{\shortstack[c]{Protocols \emph{w/o menus} \\ Buyer \emph{w. limited liability}}}
			& $\mathsf{APX}$-hard 
			& ---
			&  $\mathsf{APX}$-hard
			& \multirow{2}{*}{$\mathsf{P}$}  \\
			& \cellcolor{gray!15} $\rho \textsc{opt} \hspace{-0.5mm}-\hspace{-0.5mm} 2^{-\Omega(\nicefrac{1}{\rho})}\hspace{-0.5mm}-\hspace{-0.5mm}\epsilon$
			& PTAS
			&  $\rho \textsc{opt} \hspace{-0.5mm}-\hspace{-0.5mm} 2^{-\Omega(\nicefrac{1}{\rho})}\hspace{-0.5mm}-\hspace{-0.5mm}\epsilon$
			& 
			\\
			\hline
			\multirow{2}{*}{\shortstack[c]{Protocols \emph{w/o menus} \\ Buyer {w/o} limited liability}}
			& $\mathsf{APX}$-hard 
			& $\mathsf{APX}$-hard
			& $\mathsf{APX}$-hard
			& \multirow{2}{*}{$\mathsf{P}$} \\
			& \cellcolor{gray!15} $\rho \textsc{opt} \hspace{-0.5mm}-\hspace{-0.5mm} 2^{-\Omega(\nicefrac{1}{\rho})}\hspace{-0.5mm}-\hspace{-0.5mm}\epsilon$
			& $\rho \textsc{opt} \hspace{-0.5mm}-\hspace{-0.5mm} 2^{-\Omega(\nicefrac{1}{\rho})}\hspace{-0.5mm}-\hspace{-0.5mm}\epsilon$
			& $\rho \textsc{opt} \hspace{-0.5mm}-\hspace{-0.5mm} 2^{-\Omega(\nicefrac{1}{\rho})}\hspace{-0.5mm}-\hspace{-0.5mm}\epsilon$
			& \\
			\hline
	\end{tabular}}
\end{table}

In conclusion, in Section~\ref{sec:selection_no_limited_liability} we study the problem of computing seller-optimal protocols without menus in general settings in which the buyer does \emph{not} have limited liability, and, thus, the seller can charge a price for a signaling scheme.
%
% switch the attention to the general settings without limited liability, in which the seller is allowed to charge a price.
%
% case in which the seller is allowed to charge the buyer an initial payment, meaning that at least one of the buyer's type has a strictly positive budget. 
%
We first prove a stronger negative result, by showing that, in such a setting, the problem of computing a seller-optimal protocol is $\mathsf{APX}$-hard even if the number of buyer's actions is fixed.
%
% Also in this case we focus on particular scenarios by fixing some of the parameters characterizing the problem instance.
%
Then, we show how to circumvent such a negative result by providing a quasi-polynomial-time \emph{bi-criteria} approximation algorithm that, given any $\rho > 0$ and $\epsilon > 0$ as input, provides a multiplicative approximation $\rho$ of the optimal seller's expected utility, plus a%exponentially vanishing
$2^{-\Omega(1/\rho)}+\epsilon$ additive loss.
% for every fixed $\epsilon > 0$. 
%
We prove that, when either the number of buyer's action or that of states of nature is fixed, such an algorithm runs in polynomial time.
Finally, we show that, when the number of buyer's types is fixed, the problem admits a polynomial-time algorithm.
This also implies that the seller's optimization problem for protocols without menus always admits a maximum.
%
%Specifically, we show that, analogously to the case in which all the buyer's types have no budget, there exists a polynomial time algorithm computing an optimal protocol when the number of buyer's types is fixed. In addition, we prove the existence of an algorithm with running time $\min\{O(I^d), O(I^{\log(m))}\}$, achieving an expected sender's utility of at least $\rho_2 \text{OPT}-2^{-\Omega(1/\rho_2)}-\epsilon$ with $\epsilon >0 $.\footnote{In the following we denote with $I$ the size of the problem instance, while we let $\rho_1 \in [0, 1/2]$ and $\rho_2 \in [0, 1/6]$.}

We summarize the results provided in this paper in Table~\ref{table-results}.
All the proofs are in the Appendix.

\subsection{Related Works}

The study of algorithmic ways of selling information to an imperfectly-informed buyer has received some attention in the past.
\citet{babaioff2012} initiated the study by considering a buyer with an unlimited budget.
They provide an exponentially-sized \emph{linear program} (LP) for computing an optimal mechanism for selling information, and they efficiently solve it
through the ellipsoid method.
The main drawback of the approach presented by~\citet{babaioff2012} is that an
optimal mechanism may require a significant money transfer from the buyer to the seller and \emph{viceversa}, in order to only
achieve a small, overall net transfer.
\citet{xu2020} complement the results in~\citep{babaioff2012} by studying the problem of selling information when both the buyer and the seller are budget-constrained.
Moreover, they also consider a setting in which the buyer's budget is private, and the seller needs to elicit it in the mechanism.
\citet{xu2020} show that the addition of budget constraints considerably simplifies the problem of computing an optimal mechanism, since it can be formulated as a polynomially-sized LP.
%
%Moreover, they discuss a scenarios in which
%the buyer’s budget is private and when the buyer’s and the seller’s signals are correlated.

The problem of selling information has also been addressed by \citet{bergemann2018}, who study the case of binary actions and states of nature, characterizing a revenue-maximizing mechanism in such a setting.
Furthermore, \citet{bergemann2022} extend the analysis to the case in which there are more than two actions and binary states of nature.
% presenting a mechanism achieving a tight $O(m)$-approximation with respect to full informative case.
%
In contrast, \citet{Liu2021} study a revenue-maximizing mechanism for selling information when the stochasticity of the state of nature only affects a subset of the actions of the decision maker.
%, denoted as \emph{active} actions.

Our problem is also related to the \emph{Bayesian persuasion} framework originally introduced by~\citet{kamenica2011bayesian}, where an informed sender wants to influence the behavior of a self-interested receiver via the strategic provision of information.
\citet{dughmi2019} generalize the classical framework by considering the case in which there are monetary transfers between the sender and the receiver.
% an informed sender is allowed to pay a receiver to take a more favorable action. 
%
Our information-selling setting in which the buyer has limited liability generalizes the model of~\citet{dughmi2019} by also introducing buyer's types.
%
% In this paper we extend their model by allowing the sender to charge a multi-types receiver an initial payment, representing the cost of the information sold. 

%The problem of selling information has been addressed also by \citet{bergemann2018}. Specifically, they study the case of binary actions and states, characterizing the optimal revenue-maximising mechanism in this setting. Furthermore, \citet{bergemann2022} extended the study to the case with more than two actions and binary states.
%% presenting a mechanism achieving a tight $O(m)$-approximation with respect to full informative case.
%%
%In contrast, \citet{Liu2021} study the optimal revenue-maximising mechanism for selling information when the randomness of the state of nature affects only a subset of the decision-maker's actions.
%%, denoted as \emph{active} actions.

Finally, let us remark that our information-selling problem shares critical features with \emph{Bayesian principal-agent problems} (see, \emph{e.g.},~\citep{castiglioni2021Contract,alon2021contracts,alon2022bayesian,guruganesh2021contracts,DBLP:conf/sigecom/CastiglioniM022} for some references).
Indeed, as we show in Section~\ref{sec:principal_agent}, the problem of computing a seller-optimal protocol generalizes particular principal-agent problems in which the agent's action is observable.
Such a connection between the two settings is also demonstrated in terms of results.
In particular, notice that~\citet{castiglioni2021Contract} design bi-criteria approximation algorithms whose guarantees are similar to those provided in this paper.
Moreover, \citet{gan2022optimal} show how to find optimal protocols in generalized principal-agent problems by using a linear relaxation of the principal's optimization problem, which is quadratic.
%
%Specifically, we prove that when there is no randomness in the state of nature and all the buyer's types have no budget, our setting generalizes the well-known \emph{Principal-agent problem} with \emph{observable action}. 
%
%This fact allows us to exploit some fundamental results from the Contract design theory, such as the ones presented by~\citet{dutting2021complexity} and~\citet{castiglioni2021Contract}. 
%
%Our work is also related to Principal-agent problems [,,]. Specifically we prove that, if each buyer's types has no budget, our setting generalize the principal-agent problem with observable actions. While, when there is no randomicity in the environemente, i.e. there is only a state of nature, the two problems coincide.

% Finally, our work is also related to the research line that studies how the strategic provision of information applies to auction settings, see, \emph{e.g.}, \citet{castiglioni2022}, ~\citet{bacchio2022}, \citet{badanidiyuru2018targeting}, \citet{emek2014signaling}. 

\section{Preliminaries}\label{sec:preliminaries}

We study the problem faced by an information holder (\emph{seller}) selling information to a budget-constrained decision maker (\emph{buyer}).
%, who needs to undertake some action.
%
The information available to the seller is collectively termed \emph{state of nature} and encoded as an element of a finite set $\Theta \defeq \{\theta_i\}_{i=1}^d$ of $d$ possible states, while the set of the $m$ actions available to the buyer is $\mathcal{A} \defeq \{a_i\}_{i=1}^m$.
The buyer is also characterized by a private \emph{type}, which is unknown to the seller and belongs to a finite set $\mathcal{K} \defeq \{k_i\}_{i=1}^n$ of $n$ possible types.
Each buyer's type $k \in \mathcal{K}$ is characterized by a utility function $u^k_\theta:\mathcal{A} \to [0,1]$ associated to each state $\theta \in \Theta$ and a budget $b_k \in \mathbb{R}_{+}$ representing how much they can afford to pay.
%
% Differently from the setting of~\citet{xu2020}, 
%
In our model, the seller's utility is \emph{not} only determined by how much the buyer pays for acquiring information, but it also depends on the buyer's action.
Specifically, for every state $\theta \in \Theta$, the sender gets an additional utility contribution determined by a function $u^s_{\theta}:\mathcal{A} \to [0,1]$. 
We assume that both the seller and the buyer know the probability distribution $\mu \in \Delta_{\Theta}$ according to which the state of nature is drawn, as well as the probability distribution $\lambda \in \Delta_{\mathcal{K}}$ determining the buyer's type.\footnote{In this work, given a finite set $X$, we let $\Delta_{X}$ be the set of all the probability distributions defined over the elements of $X$.}
We let $\mu_\theta$ be the probability assigned to state $\theta \in \Theta$, while $\lambda_k$ is the probability of type $k \in \mathcal{K}$.

As in~\citep{xu2020}, we assume w.l.o.g. that information revelation happens only once during the seller-buyer interaction.
Thus, as it is the case in \emph{Bayesian} persuasion~\citet{kamenica2011bayesian}, the seller reveals information to the buyer by committing to a \emph{signaling scheme} $\phi$, which is a randomized mapping from states of nature to signals being issued to the buyer.
Formally, $\phi : \Theta \to \Delta_{\mathcal{S}}$, where $\mathcal{S}$ is a finite set of signals. We denote by $\phi_\theta \in \Delta_{\mathcal{S}}$ the probability distribution employed when the state of nature is $\theta \in \Theta$, with $\phi_\theta(s)$ being the probability of sending $s \in \mathcal{S}$.

\subsection{Protocols with Menus}\label{sec:preliminaries_menus}

An information-selling \emph{protocol} for the seller is defined as follows.
%that the seller implements in order to sell information to the buyer.
%
% In such a protocol,
%
The seller first proposes a \emph{menu} of signaling schemes to the buyer, with each signaling scheme being assigned with a \emph{price}.
Then, the buyer chooses a signaling scheme and pays its price upfront, before information is revealed.\footnote{Notice that proposing a menu of signaling schemes is equivalent to asking the buyer to report their type and then choosing a signaling scheme based on that, as it is the case in~\citep{xu2020}.}
The seller also commits to action-dependent payments, which are made by the seller in favor of the buyer after information is revealed and the latter has taken an action.
This is in contrast with what happens in the protocol introduced by~\citet{xu2020}, where there are no action-dependent money transfers.
Intuitively, such payments are needed in order to incentivize the agent to play an action that is profitable for the seller, and, thus, they are \emph{not} needed in the setting of~\cite{xu2020} where the seller's utility function is only determined by how much the buyer pays for acquiring information.
Formally, we define a seller's protocol as follows:
\begin{definition}[Seller's protocol]\label{def:protocol}
	A \emph{protocol} for the seller is a tuple $ \{ (\phi^k, p_k,\pi_k ) \}_{k \in \mathcal{K}} $, where:
	\begin{itemize}
		\item $\{\phi^k\}_{k \in \mathcal{K}}$ is a \emph{menu of signaling schemes} $\phi^k: \Theta \to\Delta_{\mathcal{S}}$, one for each receiver's type $k \in \mathcal{K}$; % representing the amount of information the sender discloses to the receiver.
		\item $\{p_k\}_{k \in \mathcal{K}}$ is a \emph{menu of prices}, with $p_k \in \mathbb{R}_+$ representing how much the seller charges the buyer for selecting the signaling scheme $\phi^k$;\footnote{Assuming $p_k \geq 0$ is w.l.o.g., since, intuitively, the seller is never better off paying the buyer before they played any action.}
		%the cost the sender asks to the receiver to sell them the information contained in the signaling scheme $\phi_k$.
		%
		\item $\{\pi_k\}_{k \in \mathcal{K}}$ is a \emph{menu of payment functions}, which are defined as $\pi_k: \mathcal{S} \times \mathcal{A} \to \mathbb{R}_+$ with $\pi_k(s,a)$ encoding how much the seller pays the buyer whenever the latter plays action $a \in \mathcal{A}$ after selecting the signaling scheme $\phi^k$ and receiving signal $s \in \mathcal{S}$.\footnote{The assumption that $\pi_k(s,a) \geq 0$ is w.l.o.g., since the buyer does \emph{not} commit to following the protocol, and, thus, $\pi_k(s,a) < 0$ would result in the buyer leaving the protocol without paying after taking an action.}
	\end{itemize}
\end{definition}
%
%
% Different from the classical Bayesian perusasion framework, we assume that the sender asks the receiver to report their type befoere observing the actual realization of the state of nature. This fact enables the sender to commit to a protocol defined as a menu of signaling schemes and payments. Before describing the actual interaction between the sender and the receiver we report the formal definition of protocol.

The seller and the buyer interact as follows:
(i) the seller commits to a protocol $ \{ (\phi^k, p_k,\pi_k  ) \}_{k \in \mathcal{K}} $;
(ii) the buyer selects a signaling scheme $\phi^{k}$ and pays $p_{k}$ to the seller (with $k \in \mathcal{K}$ possibly different from their true type);
(iii) the seller observes the realized state of nature $\theta \sim \mu$, draws a signal $s \sim \phi^{k}_\theta$ according to the selected signaling scheme, and communicates $s$ to the
buyer; 
(iv) given the signal $s$, the buyer infers a posterior distribution $\xi^s \in \Delta_\Theta$ over states of nature, where the probability $\xi^s_\theta$ of state $\theta \in \Theta$ is computed with the \emph{Bayes} rule, as follows:
$$
\xi^s_\theta \defeq \frac{\mu_\theta \, \phi^k_{\theta}(s)}{\sum_{\theta' \in \Theta} \mu_{ \theta'} \phi^k_{\theta'}(s)};
$$
(v) given the posterior $\xi^s$, the buyer selects an action $a \in \mathcal{A}$; and
(vi) the seller pays $\pi_{k}(s, a)$ to the buyer.
As in the model by~\citet{xu2020}, we assume that the seller is committed to following the protocol, while the buyer is \emph{not}, \emph{i.e.}, the buyer is free of leaving the interaction at any point.

In step (v), after observing a signal $s \in \mathcal{S}$ and computing the posterior $\xi^s$, the buyer plays a \emph{best response} by choosing an action $a \in \mathcal{A}$ maximizing their expected utility.
Formally:
\begin{definition}[$\epsilon$-Best-response]\label{def:br_set}
	Let $\epsilon \geq 0$.
	Given a signal $s \in \mathcal{S}$, the induced posterior $\xi^s \in \Delta_{\Theta}$, and a payment function $\pi: \mathcal{S} \times \mathcal{A} \to \mathbb{R}_+$, the \emph{$\epsilon$-best-response set} of a buyer of type $k \in \mathcal{K}$ is:
	%the set of all the buyer's actions $a \in \mathcal{A}$ such that:
	%
	\[
		\mathcal{B}^{k,\epsilon}_{\xi^s, \pi} \coloneqq \left\{ a \in \A : \sum_{\theta\in\Theta}\xi^s_\theta \, u_\theta^k(a) + \pi(s,a) \ge \max_{a' \in \A} \,\sum_{\theta\in\Theta}\xi^s_\theta \, u_\theta^k(a') + \pi(s,a') - \epsilon \right\}.
	\]
	%
	%	\[
	%	\mathcal{B}^{k}_{\xi^s, \pi} \defeq \argmax_{a\in\mathcal{A}}\sum_{\theta\in\Theta}\xi^s_\theta \, u_\theta^k(a) + \pi(s,a) .
	%	\]
	%
	% We denote such a set by $\mathcal{B}^{k,\epsilon}_{\xi^s, \pi}$.
	%
	We let $b_{\xi^s, \pi}^{k,\epsilon} \in \mathcal{B}^{k,\epsilon}_{\xi^s, \pi}$ be an \emph{$\epsilon$-best response} played by the buyer.
	% of type $k \in \mathcal{K}$.
	%
	The \emph{best-response set} $\mathcal{B}^{k}_{\xi^s, \pi}$ of a buyer of type $k \in \mathcal{K}$ is defined for $\epsilon = 0$, while $b_{\xi^s, \pi}^{k} \in \mathcal{B}^{k}_{\xi^s, \pi}$ is a \emph{best response} played by the buyer.\footnote{When the buyer is indifferent among multiple best responses (respectively, $\epsilon$-best responses), we always assume that they break ties in favor of the seller, choosing an action in $ \mathcal{B}^k_{\xi^s, \pi}$ (respectively, $ \mathcal{B}^{k,\epsilon}_{\xi^s, \pi}$) maximizing the seller's expected utility.}
	%	
	%	The \emph{best-response set} $\mathcal{B}^{k}_{\xi^s, \pi}$ of a buyer of type $k \in \mathcal{K}$ is defined for $\epsilon = 0$, while $b_{\xi^s, \pi}^{k} \in \mathcal{B}^{k}_{\xi^s, \pi}$ denotes a best-response action that is played by a buyer of type $k \in \mathcal{K}$.\footnote{When the buyer is indifferent among multiple actions, we always assume that they break ties in favor of the seller, choosing the action in $ \mathcal{B}^k_{\xi^s, \pi}$ that maximizes the seller's expected utility.}
	%
	% , while we let $b_{\xi^s, \pi}^{k,\epsilon} \in \mathcal{B}^{k,\epsilon}_{\xi^s, \pi}$ a best-response action that is played by a buyer of type $k \in \mathcal{K}$.\footnote{When the buyer is indifferent among multiple actions, we always assume that they break ties in favor of the seller, choosing the action in $ \mathcal{B}^k_{\xi^s, \pi}$ that maximizes the seller's expected utility.}
\end{definition}
%
\begin{comment}
We also study the case in which the buyer plays an $\epsilon$-best response, meaning that the selected action provides them with an expected utility which is at most $\epsilon$ less than the value of a best response.
%
Formally:
%
\begin{definition}\label{def:eps_br_set}
	Given a signal $s \in \mathcal{S}$, the induced posterior $\xi^s \in \Delta_{\Theta}$, and a payment function $\pi: \mathcal{S} \times \mathcal{A} \to \mathbb{R}_+$, the \emph{$\epsilon$-best-response set} of a buyer of type $k \in \mathcal{K}$ is the set $\mathcal{B}^{\epsilon, k}_{\xi^s, \pi}$ of all the actions $a \in \mathcal{A}$ such that:
	%
	\[
	\sum_{\theta\in\Theta}\p_\theta u_\theta^k(a)+ \pi(s,a) \geq \sum_{\theta\in\Theta}\p_\theta u_\theta^k( a') + \pi(s,a')-\epsilon \quad \forall  a'  \in \mathcal{A}.
	\]
	Moreover, we denote by $b^{\epsilon, k}_{\xi^s, \pi} \in \mathcal{B}^{\epsilon, k}_{\xi^s, \pi}$ the $\epsilon$-best-response action that is actually played by a buyer of type $k \in \mathcal{K}$.\footnote{We assume that, among the actions in the $\epsilon$-best-response set, the buyer always takes the one that is more favorable for the seller.}
\end{definition}
\end{comment}
%
In the following, we will oftentimes work in the space of the distributions over posteriors.
In that case, given a posterior $\xi \in \Delta_{\Theta}$, we abuse notation and write $\mathcal{B}^k_{\xi, \pi}$, $\mathcal{B}^{k,\epsilon}_{\xi, \pi}$, $b_{\xi, \pi}^{k,\epsilon}$, and $b_{\xi, \pi}^k$.
% Thus, given a posterior $\xi \in \Delta_{\Theta}$, we abuse notation and write $\mathcal{B}^k_{\xi, \pi}$ and $b_{\xi, \pi}^k$ by omitting their dependence on a signal $s \in \mathcal{S}$, since this will be clear from context.

The seller's goal is to implement an optimal (\emph{i.e.}, utility-maximizing) protocol $ \{ (\phi^k, p_k,\pi_k ) \}_{k \in \mathcal{K}} $.
We focus on seller's protocols that are \emph{incentive compatible} (IC) and \emph{individually rational} (IR).\footnote{By a revelation-principle-style argument (see~\citep{shoham2008multiagent} for some examples), focusing on IC and IR protocols is w.l.o.g. when looking for an optimal protocol.}
Specifically, a seller's protocol is IC if for every pair of buyer's types $k, k' \in \mathcal{K}$:
\begin{equation*}
\sum_{s \in \mathcal{S}}  \sum_{\theta \in \Theta}  \mu_\theta \phi^k_{\theta}(s) \left[ u^k_\theta(b_{\xi^s, \pi_k}^k) + \pi_k(s,b_{\xi^s, \pi_k}^k) \right] - p_k \ge 
 \sum_{s \in \mathcal{S}} \max_{a \in \mathcal{A}} \sum_{\theta \in \Theta} \mu_\theta \phi^{k'}_{\theta}(s) \left[ u^k_\theta(a) + \pi_{k'}(s, a) \right] - p_{k'} ,
\end{equation*}
while it is IR if for every buyers' type $k \in \mathcal{K}$:
\begin{equation*}
\sum_{s \in \mathcal{S}} \sum_{\theta \in \Theta} \mu_\theta \phi^k_{\theta}(s) \left[ u^k_\theta(b_{\xi^s, \pi_k}^k) +  \pi_k(s,b_{\xi^s,\pi_k}^k) \right] - p_k \ge 
 \max_{a \in \mathcal{A}}\sum_{\theta \in \Theta} \mu_\theta  u^k_\theta(a) .
\end{equation*}
Intuitively, an IC protocol incentivizes the buyer to select the signaling scheme $\phi^{k}$ corresponding ot their true type $k \in \mathcal{K}$, while an IR protocol ensures that the buyer gets more utility by acquiring information rather than leaving the protocol before step (ii) and playing an action without information.
Then, the seller's expected utility is computed as follows:
\[
\sum_{k \in \mathcal{K}} \lambda_k \left[ \sum_{s \in\mathcal{S}} \sum_{\theta \in \Theta} \mu_\theta \phi^{k}_\theta(s) \left[ u_\theta^s(b^k_{\xi^s,\pi_k}) -\pi_k(s, b^k_{\xi^s,\pi_k}) \right]+ p_k  \right].
\]

A crucial component of our results is that we can restrict the attention to protocols that are \emph{direct} and \emph{persuasive}.
We say that protocol is {direct} if it uses signaling schemes whose signals correspond to action recommendations for the buyer, namely $\mathcal{S} = \mathcal{A}$, while a direct protocol is said to be {persuasive} whenever playing the recommended action is always a best response for the buyer. 
%
%\mat{magari dire che lo dimostriamo dopo? cosi sembra che è ovvio.}

\subsection{Protocols without Menus}\label{sec:preliminaries_nomenus}

In the second part of the paper, we study the case of seller's \emph{protocols without menus}, in which the seller does \emph{not} propose a menu of signaling schemes to the buyer, but they rather commit to a single signaling scheme and a single payment function.\footnote{From the point of view of~\citet{xu2020}, this is equivalent to assuming that there is no type reporting stage.}
%
% Analogously to the case with selection, 
%
This allows us to simplify the definition of a protocol (see Definition~\ref{def:protocol}), by denoting a seller's protocol without menus as a tuple $(\phi,p,\pi)$, where $\phi: \Theta \to\Delta_{\mathcal{S}}$ is a signaling scheme, $p \in \mathbb{R}_+$ is a price for such a signaling scheme, representing how much the seller charges the buyer to reveal information to them, 
%
% the seller initially charges the buyer to reveal them the information, while 
%
and $\pi: \mathcal{S} \times \mathcal{A} \to \mathbb{R}_+$ is a payment function.
%, describing how much the seller pays the buyer after receiving the information.
%
The seller-buyer interaction unfolds as in the general case with menus, but, in this case, step (ii) only involves the payment of price $p \in \mathbb{R}_+$ on buyer's part.
% which is common among all the buyer's types.

Some of our results on protocols without menus address the special case in which the buyer has \emph{limited liability}, which means that the buyer has no budget, and, thus, the seller cannot charge a price for a signaling scheme upfront.
Formally, this amounts to asking that $b_k = 0$ for all $k \in \K$.
Notice that, while such a special case may seem of scarce appeal for the problem of selling information, it is indeed interesting on its own, as it is similar to the model studied by~\citet{dughmi2019}.
Indeed, our model can be seen as a generalization of the one in~\citep{dughmi2019}, which adds buyer's private types.
Moreover, in the general case in which the buyer has \emph{no} limited liability, our model additionally builds on top of that of~\citet{dughmi2019} by adding the possibility for the seller to ask the buyer a payments before information is revealed.
%
%Notice that, restricted to protocols without menus, the problem of computing a seller-optimal protocol generalizes the one studied in~\citep{dughmi2019}, by adding buyer's private types and the possibility of asking the buyer to pay the seller before information is revealed.
%%a payment from the buyer to the seller.

For protocols without menus, IC constraints are \emph{not} needed anymore, while IR constraints are still required in order to ensure that the buyer is incentivized to acquire information from the principal.
%
% to take part in the protocol, the buyer is only required to satisfy the IR constraints since the IC constraints are \emph{not} needed without a menu.
%
Given a protocol without menus $(\phi,p,\pi)$, only some of the buyer's types are actually incentivized to participate in the protocol, \emph{i.e.}, all the types whose corresponding IR constraint is satisfied.
%
% It is worth noting that the seller may design a protocol in which only a subset of buyer types may actually participate in the mechanism and consequently buy the information. 
%
Formally, a protocol determines a subset $ \mathcal{R}_{\phi,p,\pi} \subseteq \mathcal{K} $ of buyer's types such that, for every $k \in  \mathcal{R}_{\phi,p,\pi}$, it holds that: (i) a buyer of type $k$ has enough budget to buy information, namely $b_k \ge p $; and (ii) the IR constraint is satisfied for a buyer of type $k$.\footnote{Whenever the expected utility of a buyer's type is the same by participating in the protocol as \emph{not} doing that, we assume that they take the option maximizing the seller's expected utility.}
%
%, according to the following equation:
%
In particular, point (ii) can be formally stated by saying that the following condition is satisfied for every $k \in  \mathcal{R}_{\phi,p,\pi}$:
\begin{equation*}
	\sum_{s \in \mathcal{S}} \sum_{\theta \in \Theta} \mu_\theta \phi_{\theta}(s) \left[ u^k_\theta(b_{\xi^s, \pi}^k) +  \pi(s,b_{\xi^s,\pi}^k) \right] - p \ge \max_{a \in \A} \sum_{\theta \in \Theta} \mu_\theta  u^k_\theta(a) .
\end{equation*}
%for each action $a \in \mathcal{A}$.
%
Moreover, given a protocol without menus $(\phi, \pi, p)$, the seller's expected utility is given by: 
\begin{equation*}
\sum_{k \in \mathcal{R}_{\phi,p,\pi}} \lambda_k \left[  \sum_{s \in\mathcal{S}} \sum_{\theta \in \Theta} \mu_\theta \phi_\theta(s) \left[ u_\theta^s(b^k_{\xi^s,\pi}) -\pi(s, b^k_{\xi^s,\pi}) \right] + p  \right] + \sum_{k \not \in \mathcal{R}_{\phi,p,\pi}}\lambda_k \sum_{\theta \in \Theta} \mu_\theta  u^k_\theta(b^k_\mu),
\end{equation*}
where $b^k_{\xi} \in \arg \max_{a \in \mathcal{A}} \sum_{\theta \in \Theta} \xi_\theta  u^k_\theta(a)$ is a best response for a buyer's type $k \in \K$ that only considers the posterior $\xi \in \Delta_{\Theta}$, where, as customary, ties are broken in favor of the seller.
Notice that a buyer's type $k \notin \mathcal{R}_{\phi,p,\pi}$ is among those who decide to do \emph{not} acquire information from the seller, and, thus, they play a best response to the probability distribution $\mu$ (instead of a posterior).

\begin{comment}
\begin{subequations}
	\begin{align*}
	& \max_{A \subseteq \mathcal{P}(\K)}\max_{\gamma, p, \pi \ge 0}
	\sum_{k \in A} \lambda_k \sum_{s \in\mathcal{S}} \sum_{\theta \in \Theta} \mu_\theta \phi_\theta(s) \left[ u_\theta^s(b^k_{\xi^s,\pi}) -\pi(s, b^k_{\xi^s,\pi}) \right] \\
	& + p \sum_{k \in A } \lambda_k + \sum_{k \not \in A }\lambda_k \sum_{\theta \in \Theta} \mu_\theta  u^k_\theta(b^k_\mu) \\ 
	&\sum_{s \in \mathcal{S}} \sum_{\theta \in \Theta} \mu_\theta \phi_{\theta}(s) \left[ u^k_\theta(b_{\xi^s, \pi}^k) +  \pi(s,b_{\xi^s,\pi}^k) \right] - p \ge \\
	& \quad \max_{a \in \mathcal{A}}\sum_{\theta \in \Theta} \mu_\theta  u^k_\theta(a) \,\,\,  \forall k \in A \\
	& \sum_{s \in \mathcal{S}} \sum_{\theta \in \Theta} \mu_\theta \phi_{\theta}(s) \left[ u^k_\theta(b_{\xi^s, \pi}^k) +  \pi(s,b_{\xi^s,\pi}^k) \right] - p \le \\
	& \sum_{s \in \mathcal{S}} \mu_\theta  u^k_\theta(b^k_\mu) \,\,\,  \forall k \not \in A \\
	& \sum_{s \in \mathcal{S}} \phi_\theta(s)  = 1  \,\,\,  \forall \theta \in \Theta \\
	& p \le b_k \forall k \in A
	\end{align*}
\end{subequations}
\end{comment}

Finally, when dealing with protocols without menus, it will be useful to directly work with \emph{distributions over posteriors} induced by signaling schemes, rather than with signaling schemes~\citep{kamenica2011bayesian}.
A signaling scheme $\phi: \Theta \to \Delta_{\mathcal{S}}$ induces a distribution $\gamma$ over $\Delta_{\Theta}$, which has a support $\text{supp}(\gamma) \defeq \left\{ \xi^s \mid s \in \mathcal{S}  \right\}$ and satisfies the following conditions:
\begin{equation}
\label{eq:consistent}
\sum_{\xi \in \text{supp}(\gamma)} \gamma_\xi \, \xi_\theta = \mu_\theta \quad \forall \theta \in \Theta,
\end{equation}
where $\gamma_\xi \in [0,1]$ is the probability that $\gamma$ assigns to the posterior $\xi \in \text{supp}(\gamma)$.
Thus, instead of working with signaling schemes $\phi$, one can w.l.o.g. work with distributions $\gamma$ over $\Delta_{\Theta}$ that are \emph{consistent} with the probability distribution $\mu$, \emph{i.e.}, they satisfy the condition in Equation~\eqref{eq:consistent}.

When working with distributions over posteriors $\gamma$ rather than with signaling schemes $\phi$, with a slight abuse of notation, we denote a seller's protocol without menus as $(\gamma,p,\pi)$, by identifying a signaling scheme with its induced distribution over posteriors $\gamma$.
Similarly, we slightly abuse notation in payment functions, by assuming that they are defined over posteriors rather than signals.
Formally, we let $\pi: \Delta_{\Theta} \times \A \to \mathbb{R}_+$, with $\pi(\xi,a)$ denoting how much the buyer pays back the seller when the induced posterior is $\xi \in \Delta_{\Theta}$ and they play action $a \in \A$.

\section{Computing a Seller-optimal Protocol with Menus}\label{sec:protocol_selection}
We begin by studying the problem of computing a seller-optimal protocol in which the seller has the ability of proposing a menu of signaling schemes and payment functions to the buyer.
Formally, the problem of computing an optimal IC and IR protocol with menus can be formulated as follows:
\begin{subequations}\label{eqn:LP_type_reporting1}
	\begin{align}
	 \sup_{\substack{\phi_\theta^k(s) \geq 0 \\ p_k \geq 0 \\ \pi_k(s,a) \ge 0}} & \,\, \sum_{k \in \mathcal{K}} \lambda_k \left[ \sum_{s \in\mathcal{S}} \sum_{\theta \in \Theta} \mu_\theta \phi^{k}_\theta(s) \left[ u_\theta^s(b^k_{\xi^s,\pi_k}) -\pi_k(s, b^k_{\xi^s,\pi_k}) \right]+ p_k  \right] \quad \quad \textnormal{s.t.}\quad\quad\quad\quad\quad\quad\,\,\,\, \label{opt_reporting_c1} \\
	& \sum_{s \in \mathcal{S}}  \sum_{\theta \in \Theta}  \mu_\theta \phi^k_{\theta}(s) \left[ u^k_\theta(b_{\xi^s, \pi_k}^k) + \pi_k(s,b_{\xi^s, \pi_k}^k) \right] - p_k \nonumber\\
	& \specialcell{\quad \quad \quad  \quad \quad \ge 
	\sum_{s \in \mathcal{S}} \max_{a \in \mathcal{A}} \sum_{\theta \in \Theta} \mu_\theta \phi^{k'}_{\theta}(s) \left[ u^k_\theta(a) + \pi_{k'}(s, a) \right] - p_{k'}  \hfill   \forall k \in \mathcal{K}, \forall k' \in \mathcal{K}} \label{opt_reporting_c2} \\
	& \specialcell{\sum_{s \in \mathcal{S}} \sum_{\theta \in \Theta} \mu_\theta \phi^k_{\theta}(s) \left[ u^k_\theta(b_{\xi^s, \pi_k}^k) +  \pi_k(s,b_{\xi^s,\pi_k}^k) \right] - p_k \ge 
	 \max_{a \in \mathcal{A}}\sum_{\theta \in \Theta} \mu_\theta  u^k_\theta(a) 
	\hfill \forall k \in \mathcal{K}}\label{opt_reporting_c3}\\
	& \specialcell{\sum_{s \in \mathcal{S}} \phi^k_{\theta}(s) = 1
		\hfill \forall k \in \mathcal{K},\forall \theta \in \Theta.}\label{opt_reporting_c4}
	\end{align}
\end{subequations}
Notice that Problem~\eqref{eqn:LP_type_reporting1} is defined in terms of $\sup$ rather than $\max$ since, as it is the case in principal-agent problems (see, \emph{e.g.},~\citep{Castiglioni2022Randomized,gan2022optimal}), it is \emph{not} in general immediate to establish whether the seller's optimization problem always admits a maximum or not.
Indeed, in the following we show that our problem always admits a maximum.

%
%In particular, we show that the problem can be solved in polynomial time by formulating it as an LP of polynomial size.
%As a first step, we prove that there always exists an optimal IC and IR seller's protocol that only employs signaling schemes that are direct and persuasive.
%
As a first step, we prove that we can focus w.l.o.g on protocols which are direct and persuasive.
%
%given an IC and IR protocol, it is possible to recover a new IC and IR protocol which provides the same seller's expected utility and employs direct and persuasive signaling schemes.
%
%Formally:
%
\begin{restatable}{lemma}{typerepfirst}\label{lem:typerep1}
	%There always exists a IC and IR protocol $ \{\left(\phi^k, p_k,\pi_k \right) \}_{k \in \mathcal{K}} $ whose signaling schemes $\phi^{k}$ are direct and persuasive.
	%
	Given any IC and IR seller's protocol, it is always possible to recover an IC and IR seller's protocol that is direct and persuasive, and it provides the seller with the same expected utility.
	%, whose signaling schemes $\phi^{k}$ are direct and persuasive.
\end{restatable}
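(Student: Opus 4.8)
The plan is to follow the standard revelation-principle argument adapted to this information-selling setting, turning an arbitrary IC and IR protocol $\{(\phi^k,p_k,\pi_k)\}_{k\in\mathcal{K}}$ into a direct and persuasive one without changing the seller's expected utility. First I would fix a type $k\in\mathcal{K}$ and look at the signaling scheme $\phi^k$ together with the induced best-response behavior: for every signal $s\in\mathcal{S}$ the buyer of type $k$ (who, being truthful under IC, selects $\phi^k$) plays $b^k_{\xi^s,\pi_k}$, breaking ties in favor of the seller. I would then define a new, direct signaling scheme $\hat\phi^k:\Theta\to\Delta_{\mathcal{A}}$ by composing $\phi^k$ with this best-response map, i.e. $\hat\phi^k_\theta(a)\defeq \sum_{s\in\mathcal{S}: b^k_{\xi^s,\pi_k}=a}\phi^k_\theta(s)$, and a corresponding direct payment function $\hat\pi_k(a,a')$ obtained by transporting $\pi_k$ along the same map (for the recommended action $a$, set $\hat\pi_k(a,a)\defeq \pi_k(s,a)$ for the $s$ mapped to $a$, aggregating appropriately; for non-recommended actions $a'\neq a$ keep $\hat\pi_k(a,a')\defeq \pi_k(s,a')$ so that deviating is no more attractive than before). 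Prices $p_k$ are left unchanged. The key observation is that the posterior induced by recommendation $a$ in $\hat\phi^k$ is a convex combination (a "merge") of the posteriors $\xi^s$ over all $s$ that were mapped to $a$, and since each such $s$ had $a$ as a seller-favored best response, $a$ remains a best response at the merged posterior by linearity of expected utility — this is the persuasiveness claim.

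Next I would verify that the seller's expected utility is preserved. This is essentially by construction: the inner sum $\sum_{s}\sum_\theta \mu_\theta\phi^k_\theta(s)[u^s_\theta(b^k_{\xi^s,\pi_k})-\pi_k(s,b^k_{\xi^s,\pi_k})]$ is, after regrouping the signals $s$ by the action they recommend, exactly $\sum_{a}\sum_\theta\mu_\theta\hat\phi^k_\theta(a)[u^s_\theta(a)-\hat\pi_k(a,a)]$; the prices $p_k$ and type probabilities $\lambda_k$ are untouched. Then I would check the constraints. For IR (constraint~\eqref{opt_reporting_c3}): the left-hand side is the buyer-$k$ expected utility from following the (persuasive) recommendations, which equals the original expected utility from $\phi^k$ since recommendations are followed and payments transported faithfully; the right-hand side is unchanged, so IR still holds. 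For IC (constraint~\eqref{opt_reporting_c2}): the left-hand side is again unchanged by the same reasoning; for the right-hand side I need that a buyer of type $k$ deviating to report $k'$ does not do strictly better under $(\hat\phi^{k'},\hat\pi_{k'})$ than under $(\phi^{k'},\pi_{k'})$. This holds because the best misreport utility $\sum_{s}\max_{a}\sum_\theta\mu_\theta\phi^{k'}_\theta(s)[u^k_\theta(a)+\pi_{k'}(s,a)]$ is an upper bound on what type $k$ can extract, and moving to the direct scheme only merges signals — and merging signals can only (weakly) decrease the value of the best-response-per-signal expression, by convexity of the max. So the original IC inequality implies the new one.

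The main obstacle I anticipate is the careful bookkeeping in the IC constraint's right-hand side, specifically making the "merging weakly decreases the misreport value" step rigorous: I must argue that for the off-path type $k$ reporting $k'$, the quantity $\sum_{s}\max_{a}[\,\cdot\,]$ does not increase when signals of $\phi^{k'}$ that are mapped to the same recommendation of $\hat\phi^{k'}$ are merged, which follows from the subadditivity/convexity of $\xi\mapsto\max_a(\langle\xi,u^k(\cdot)\rangle+\pi_{k'}(\cdot,a))$ but needs the payment transport to be set up so that the merged payment is consistent across the signals being merged — a subtlety since different signals $s$ mapped to the same action $a$ might have carried different $\pi_{k'}(s,a')$ for $a'\neq a$. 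The clean fix is to first apply the construction to make $\phi^k$ direct for the \emph{truthful} type $k$, and to observe that for IC purposes we only need that deviating is not \emph{more} profitable, so it suffices to take, for each recommended action $a$ in $\hat\phi^{k'}$, the payment vector $\hat\pi_{k'}(a,\cdot)$ to coincide with $\pi_{k'}(s,\cdot)$ for one representative $s$ while ensuring (via the tie-breaking and the fact that recommendations are persuasive for type $k'$) that this does not create a profitable cross-report; alternatively one merges only signals sharing the \emph{same} payment vector, which loses no generality since one can always split a direct scheme's recommendation $a$ into parallel copies indexed by payment vectors. I would present whichever of these is cleanest, but the conceptual content is the same: composition with best responses plus convexity of the max.
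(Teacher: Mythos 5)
Your overall strategy is the same as the paper's: compose the signaling scheme with the (seller-favored) best-response map so that signals inducing the same best response get merged, preserve the on-path terms by linearity, and control the misreport side of the IC constraint by convexity of the max operator. However, there is a genuine gap exactly at the point you flag yourself: the definition of the merged payment function. The correct resolution, which the paper uses, is to define the payment at the merged signal $\bar s$ (for \emph{every} action $a'$, not just the recommended one) as the probability-weighted average of the original payments,
\begin{equation*}
\bar\pi_k(\bar s,a')\;=\;\frac{\sum_{\theta} \mu_\theta \phi^{k}_{\theta}(s_1)\, \pi_{k}(s_1,a') +\sum_{\theta} \mu_\theta \phi^{k}_{\theta}(s_2)\, \pi_{k}(s_2,a')}{\sum_{\theta} \mu_\theta \left(\phi^{k}_{\theta}(s_1) + \phi^{k}_{\theta}(s_2)\right)},
\end{equation*}
so that the total expected transfer attached to each action is preserved. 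With this choice, the merged persuasiveness inequality at $\bar s$ is literally the sum of the two per-signal best-response inequalities, the seller's expected utility and the buyer's on-path utility (hence IR and the IC left-hand side) are unchanged, and the IC right-hand side can only decrease because the per-signal maxima dominate the max at the merged signal. One then iterates the merge over pairs of signals for each type.

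Neither of your two proposed fixes achieves this. Taking the payment vector of a single representative signal $s$ changes the expected on-path payment whenever the merged signals carried different $\pi_k(s,a)$ on the recommended action (so the seller's utility and the buyer's utility are no longer equal to the original ones, endangering IR and the "same expected utility" claim), and it can also destroy persuasiveness: the other merged signal's best-response property was established under \emph{its own} payment vector, so at the merged posterior with the representative's payments the recommended action need not remain optimal. Your alternative fix, merging only signals that share the same payment vector, does not produce a direct protocol at all, since several surviving signals may still recommend the same action with distinct payments (and "splitting a recommendation into parallel copies" is precisely what directness forbids). So the conceptual skeleton of your argument is right, but the lemma's proof hinges on the averaging construction above, which your proposal circles around ("aggregating appropriately") without supplying.
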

Intuitively, Lemma~\ref{lem:typerep1} follows from the fact that, given any signaling scheme $\phi^{k}$ and price function $\pi_{k}$ corresponding to some type $k \in \mathcal{K}$, if two signals induce the same best response for a buyer of type $k$, then it is possible to merge the two signals in a single one, recovering a new signaling scheme and a new price function for type $k$ that achieve the same seller's expected utility.
%, by merging the two signals in a single one.
%
By doing such a procedure for every buyer's type until there are no two signals inducing the same best response for that type, we obtain a protocol that is direct and persuasive, and it has the same seller's expected utility as the original protocol.
%
% Consequently, focusing on protocols employing direct and persuasive signaling schemes, results without loss of generality.
% \mat{forse non si capisce che in $\pi_k(a,a')$, $a'$ è un segnale. Specificare?}
%
Notice that, since in direct protocols it holds $\S = \A$, whenever we write $\pi_k(a,a')$ for $a,\ a' \in \A$, the first action $a$ is the seller's recommendation (signal), while the second action $a'$ is the one actually played by the buyer.

As a second crucial step, we exploit Lemma~\ref{lem:typerep1} in order to show that, given an IC and IR protocol that is direct and persuasive, there exists another IC and IR protocol which is still direct and persuasive, it achieves the same seller's expected utility, and it is such that: (i) for every $k \in \K$, the price $p_k$ of $\phi^k$ is equal to entire budget $b_k$ of a buyer of type $k$, and (ii) the buyer is \emph{not} paid back (\emph{i.e.}, they get a null payment) if they deviate from the seller's action recommendation.
Formally:
\begin{restatable}{lemma}{typerepsecond}\label{lem:typerep2}
	Given an IC and IR protocol $ \{ (\phi^k, p_k,\pi_k ) \}_{k \in \mathcal{K}} $ that is direct and persuasive, it is always possible to recover an IC and IR protocol $ \{( \phi^k, \tilde p_k,\tilde \pi_k ) \}_{k \in \mathcal{K}} $ such that: it is direct and persuasive, it provides the same seller's expected utility as the original protocol, and, for every buyers' type $k \in \mathcal{K}$, it satisfies $\tilde p_k=b_k$ and $\tilde \pi_k(a,a')=0$ for all $a \neq a' \in \mathcal{A}$.
\end{restatable}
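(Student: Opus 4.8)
The plan is to keep the signaling schemes $\phi^k$ of the given direct and persuasive IC/IR protocol fixed throughout (so directness is automatically retained) and apply two payment/price modifications in sequence, checking at each step persuasiveness, the IC constraints~\eqref{opt_reporting_c2}, the IR constraints~\eqref{opt_reporting_c3}, and the objective~\eqref{opt_reporting_c1}.

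\textbf{Step 1: zero out the off-recommendation payments.} First I would replace $\pi_k$ by $\hat\pi_k$ with $\hat\pi_k(a,a) \defeq \pi_k(a,a)$ and $\hat\pi_k(a,a') \defeq 0$ for $a \neq a'$, leaving $p_k$ untouched. Since the protocol is persuasive, for each signal $a$ the recommended action already satisfies $\sum_\theta \xi^a_\theta u^k_\theta(a) + \pi_k(a,a) \ge \sum_\theta \xi^a_\theta u^k_\theta(a') + \pi_k(a,a') \ge \sum_\theta \xi^a_\theta u^k_\theta(a')$, so lowering every off-path payment to $0$ keeps $a$ a best response: persuasiveness (and directness) survive, and the on-path play is unchanged. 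Consequently the left-hand sides of~\eqref{opt_reporting_c2} and~\eqref{opt_reporting_c3} and the objective~\eqref{opt_reporting_c1} — which only see the on-path payments $\pi_k(a,a)$ — are unchanged, while the only affected quantities, the terms $\max_a \sum_\theta \mu_\theta \phi^{k'}_\theta(s)[u^k_\theta(a) + \pi_{k'}(s,a)]$ on the right-hand side of~\eqref{opt_reporting_c2}, can only decrease. Hence IC and IR are preserved.

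\textbf{Step 2: raise each price to the budget.} Starting from the Step-1 protocol, set $\delta_k \defeq b_k - p_k \ge 0$ (a feasible protocol must satisfy $p_k \le b_k$), let $\tilde p_k \defeq b_k$, and add the lump sum $\delta_k$ to every on-path payment, i.e. $\tilde\pi_k(a,a) \defeq \hat\pi_k(a,a) + \delta_k$ and $\tilde\pi_k(a,a') \defeq 0$ for $a \neq a'$; all payments and prices stay non-negative. Adding the same constant to the payment of every recommended action only makes following the recommendation more attractive, so persuasiveness is retained. The crucial identity is $\sum_{a \in \mathcal{A}} \sum_{\theta \in \Theta} \mu_\theta \phi^k_\theta(a) = 1$: it implies the expected on-path payment to a buyer who honestly selects $\phi^k$ rises by exactly $\delta_k$, cancelling the $\delta_k$ increase in the price, so the left-hand sides of~\eqref{opt_reporting_c2},~\eqref{opt_reporting_c3} and the seller's objective~\eqref{opt_reporting_c1} do not change. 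For the right-hand side of~\eqref{opt_reporting_c2}, for a type $k$ contemplating scheme $k'$ each signal-$a$ term $\max_{a'} \sum_\theta \mu_\theta \phi^{k'}_\theta(a)[u^k_\theta(a') + \tilde\pi_{k'}(a,a')]$ gains at most $\delta_{k'} \sum_\theta \mu_\theta \phi^{k'}_\theta(a)$ (only the alternative $a' = a$ is subsidized), so summing over $a$ and using the identity the whole term grows by at most $\delta_{k'}$, which is exactly offset by the matching increase in $\tilde p_{k'}$; thus the right-hand side does not increase and IC (and likewise IR) is preserved.

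The step I expect to be the main obstacle is verifying IC in Step 2: at first glance, subsidizing the payments of $\phi^{k'}$ should tempt a buyer of a different type to misreport towards $k'$. The resolution — the heart of the proof — is that the subsidy is a \emph{lump sum} of expected size $\delta_{k'}$ collected by \emph{anyone} who picks $\phi^{k'}$, honest or not (this is where $\sum_a \sum_\theta \mu_\theta \phi^{k'}_\theta(a) = 1$ and $\delta_{k'} \ge 0$ are used), hence perfectly neutralised by raising $\tilde p_{k'}$ by $\delta_{k'}$. Chaining the two steps yields a direct, persuasive, IC, IR protocol with $\tilde p_k = b_k$, $\tilde\pi_k(a,a') = 0$ for $a \neq a'$, and the same seller's expected utility, as required.
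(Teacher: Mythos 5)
Your proof is correct and follows essentially the same two-step route as the paper's: first zero out the off-recommendation payments using persuasiveness, then raise each price to the budget while adding the lump sum $b_k-p_k$ to the on-path payments, with the normalization $\sum_{a}\sum_{\theta}\mu_\theta\phi^k_\theta(a)=1$ making the price increase and payment increase cancel in the objective and in both sides of the IC/IR constraints. Your explicit verification that the deviation (right-hand) side of IC grows by at most $\delta_{k'}$ is exactly the check the paper leaves as "similar arguments."
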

As a direct consequence of Lemma~\ref{lem:typerep2}, we can compactly denote $\pi_{k}(a,a)$ as $\pi_{k}(a)$ for every $a \in \A$, since we can focus w.l.o.g. on payment functions such that $\pi_{k}(a,a')=0$ for all $a \not = a'$.

We are now ready to introduce an LP with polynomially-many variables and constraints that is a linear relaxation of Problem~\eqref{eqn:LP_type_reporting1}.  
%problem of computing a seller-optimal protocol (Problem~\eqref{eqn:LP_type_reporting1}).
%
In order to formulate the LP, we exploit Lemmas~\ref{lem:typerep1}~and~\ref{lem:typerep2} to restrict the attention to direct and persuasive protocols, prices such that $p_k=b_k$ for every $k \in \K$, and payments such that $\pi_{k}(a,a')=0$ for every $k \in \K$ and $a \neq a' \in \A$.
Moreover, we encode the terms $\sum_{\theta \in \Theta} \mu_\theta \phi^k_\theta(a) \pi_k(a)$ as single variables $l_k(a)$.
Then, the LP reads as follows:
\begin{subequations}\label{fig:lp_reporting}
	\begin{align}
	\max_{\substack{ \phi^{k}_\theta(a) \geq 0 \\  l_k(a) \geq 0 \\y_{k,k',a} \geq 0 }} &  \,\,\sum_{k \in \mathcal{K}} \lambda_k \sum_{a \in\mathcal{A}} \left[  \sum_{\theta \in \Theta} \mu_\theta \phi^{k}_\theta(a) u_\theta^s(a) -  l_k(a) \right] + b_k \quad \textnormal{s.t.} \quad\quad\quad\quad\quad\quad\quad\quad\quad\quad\quad\quad\quad\,\,\,\,\\
	&\specialcell{ \sum_{a \in \mathcal{A}}  \left[  \sum_{\theta \in \Theta}  \mu_\theta \phi^k_{\theta}(a)  u^k_\theta(a) +  l_k(a) \right]  - b_k \ge \sum_{a \in \mathcal{A}} y_{k,k',a} -b_{k'} \hfill \forall k \in \mathcal{K}, \forall k' \in \mathcal{K}} \label{lp_reporting_c1} \\
	& \specialcell{y_{k,k',a} \geq \sum_{\theta \in \Theta} \mu_\theta \phi_\theta^{k'}(a) u^k_\theta(a) + l_{k'}(a) \hfill \forall k \in \mathcal{K}, \forall k' \in \mathcal{K},\forall a \in \mathcal{A} }\label{lp_reporting_c2} \\
	& \specialcell{y_{k,k',a} \geq \sum_{\theta \in \Theta} \mu_\theta \phi_\theta^{k'}(a) u^k_\theta(a') \hfill \forall k \in \mathcal{K}, \forall k' \in \mathcal{K},\forall a \neq a' \in \mathcal{A}} \label{lp_reporting_c3}\\
	& \specialcell{\sum_{a \in \mathcal{A}} \left[ \sum_{\theta \in \Theta} \mu_\theta \phi^k_{\theta}(a) u^k_\theta(a) +  l_k(a) \right] - b_k \ge \sum_{\theta \in \Theta} \mu_\theta  u^k_\theta(a') \hfill \forall k \in \mathcal{K}, \forall a' \in \mathcal{A} }\label{lp_reporting_c4}\\
	&\specialcell{ \sum_{\theta \in \Theta}  \mu_\theta \phi^k_{\theta}(a)  u^k_\theta(a) +  l_k(a) \geq \sum_{\theta \in \Theta}  \mu_\theta \phi^k_{\theta}(a)  u^k_\theta(a') \hfill \forall  k \in \mathcal{K}, \forall a \neq a' \in \mathcal{A} } \label{lp_reporting_c5} \\
	& \specialcell{  \sum_{a \in \mathcal{A}} \phi_\theta^k(a) = 1 \hfill \forall k \in \mathcal{K}, \forall \theta \in \Theta} \label{lp_reporting_c6}.
	\end{align}
\end{subequations}
In LP~\eqref{fig:lp_reporting}, Constraints~\eqref{lp_reporting_c1}--\eqref{lp_reporting_c3} ensure that the protocol is IC, Constraints~\eqref{lp_reporting_c4} enforce that it is IR, while Constraints~\eqref{lp_reporting_c5} guarantee that the protocol is persuasive.
%
%	\mat{LP for computing an optimal seller's protocol used to prove Theorem~\ref{thm:typerep}. Constraints~\eqref{lp_reporting_c1}--\eqref{lp_reporting_c3} unsure that the computed protocol is IC, Constraints~\eqref{lp_reporting_c4} enforce that it IR, while Constraints~\eqref{lp_reporting_c5} guarantee that the signaling schemes are persuasive.}

Given how LP~\eqref{fig:lp_reporting} is obtained from Problem~\eqref{eqn:LP_type_reporting1}, it is \emph{not} immediately clear how, given a feasible solution to LP~\eqref{fig:lp_reporting}, one can recover a protocol that is a solution to Problem~\eqref{eqn:LP_type_reporting1} with seller's expected utility equal to the value of the solution to LP~\eqref{fig:lp_reporting}.
Indeed, in a solution to LP~\eqref{fig:lp_reporting}, a variable $l_k(a)$ could be strictly positive even when the variables $\phi^k_\theta(a)$ are equal to zero.
In such a case, it is \emph{not} possible to immediately recover a value for $\pi_k(a)$ starting from a solution to LP~\eqref{fig:lp_reporting}, since $l_k(a)$ encodes $\sum_{\theta \in \Theta} \mu_\theta \phi^k_\theta(a) \pi_k(a)$, from which computing $\pi_k(a)$ would require a division by zero.

In the following, we show how, given an optimal solution to LP~\eqref{fig:lp_reporting}, it is indeed possible to build in polynomial time a seller-optimal protocol with menus.
% IC and IR protocol whose seller's expected utility is equal to the value of the given solution.
%
% First, we observe that the optimal value of LP~\eqref{fig:lp_reporting} is always at least as large as the value achieved by the supremum of Problem~\eqref{eqn:LP_type_reporting1}.
%
First, we prove a preliminary result:
%
%\mat{old}
%As a direct consequence we can compactly denote $\pi_{k}(a,a)$ as $\pi_{k}(a)$, since  $\pi_{k}(a,a')=0$ for each $a \not = a'$. We now introduce a LP with polynomially-many variables and constraints to compute an optimal IC and IR seller's protocol. To formulate the problem as a LP we exploit Lemmas~\ref{lem:typerep1}~and~\ref{lem:typerep2}, encoding the products $\sum_{\theta \in \Theta} \mu_\theta \phi_\theta(a) \pi_k(a)$ as single variables $l_k(a)$. First, we observe that the optimal value returned by LP~\ref{fig:lp_reporting} is always as large as the value achieved by the supremum of Problem~\ref{eqn:LP_type_reporting1}. 
%
\begin{restatable}{lemma}{typerethird}\label{lem:typerep3}
The optimal value of LP~\eqref{fig:lp_reporting} is at least as large as the supremum in Problem~\eqref{eqn:LP_type_reporting1}.
%, then $\textnormal{\textsc{lp}} \ge \textnormal{\textsc{sup}}$.
\end{restatable}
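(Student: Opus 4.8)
The plan is to show that any protocol feasible for Problem~\eqref{eqn:LP_type_reporting1} can be mapped to a feasible solution of LP~\eqref{fig:lp_reporting} with objective value at least as large; taking the supremum over all such protocols then yields the claim. By Lemmas~\ref{lem:typerep1}~and~\ref{lem:typerep2}, it suffices to consider protocols $\{(\phi^k,p_k,\pi_k)\}_{k\in\K}$ that are direct and persuasive, with $p_k=b_k$ and $\pi_k(a,a')=0$ for $a\neq a'$, so that the only relevant payments are $\pi_k(a)\defeq\pi_k(a,a)$. For such a protocol I would define the candidate LP solution by setting the signaling variables equal to $\phi^k_\theta(a)$, setting $l_k(a)\defeq\sum_{\theta\in\Theta}\mu_\theta\phi^k_\theta(a)\pi_k(a)$, and setting the auxiliary variables $y_{k,k',a}$ to the maximum of the two right-hand sides in Constraints~\eqref{lp_reporting_c2}~and~\eqref{lp_reporting_c3}, namely $y_{k,k',a}\defeq\max\{\sum_\theta\mu_\theta\phi^{k'}_\theta(a)u^k_\theta(a)+l_{k'}(a),\ \max_{a'\neq a}\sum_\theta\mu_\theta\phi^{k'}_\theta(a)u^k_\theta(a')\}$.

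Next I would verify feasibility constraint by constraint. Nonnegativity of $\phi^k_\theta(a)$ and $l_k(a)$ is immediate (payments and signaling probabilities are nonnegative), and $y_{k,k',a}\ge0$ follows since utilities are nonnegative; Constraints~\eqref{lp_reporting_c2},~\eqref{lp_reporting_c3},~\eqref{lp_reporting_c6} hold by construction. For Constraint~\eqref{lp_reporting_c5} (persuasiveness): because the protocol is persuasive, recommending $a$ makes $a$ a best response against the induced posterior $\xi^a$, which after multiplying through by the normalizing constant $\sum_\theta\mu_\theta\phi^k_\theta(a)$ is exactly $\sum_\theta\mu_\theta\phi^k_\theta(a)u^k_\theta(a)+\sum_\theta\mu_\theta\phi^k_\theta(a)\pi_k(a)\ge\sum_\theta\mu_\theta\phi^k_\theta(a)u^k_\theta(a')$, i.e.\ the constraint with $l_k(a)$ substituted in. For Constraint~\eqref{lp_reporting_c4} (IR): summing the above persuasiveness inequalities and using $p_k=b_k$, the left-hand side equals the buyer's expected utility from following the protocol, which by IR is at least $\max_{a'}\sum_\theta\mu_\theta u^k_\theta(a')$. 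For Constraints~\eqref{lp_reporting_c1} (IC): the left-hand side is again the true-type-$k$ buyer's utility under $(\phi^k,b_k,\pi_k)$, while $\sum_a y_{k,k',a}-b_{k'}$ upper-bounds the utility type $k$ would get by reporting $k'$ and best-responding (the max inside $y_{k,k',a}$ captures "follow the recommendation $a$ and collect $l_{k'}(a)$" versus "deviate to some $a'\neq a$ and collect nothing"), so IC of the original protocol gives the inequality. Finally, the objective: the LP objective $\sum_k\lambda_k\sum_a[\sum_\theta\mu_\theta\phi^k_\theta(a)u^s_\theta(a)-l_k(a)]+b_k$ equals, after substituting $l_k(a)=\sum_\theta\mu_\theta\phi^k_\theta(a)\pi_k(a)$ and $b_k=p_k$, exactly the seller's expected utility of the protocol in Problem~\eqref{eqn:LP_type_reporting1} restricted to direct, persuasive, $\pi_k(a,a')=0$ protocols.

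The main obstacle I anticipate is being careful about the $y_{k,k',a}$ variables and the "break ties in favor of the seller" convention: the right-hand side of the IC constraint~\eqref{opt_reporting_c2} in Problem~\eqref{eqn:LP_type_reporting1} contains a $\max_{a\in\A}$ inside the sum over signals, and I must check that defining $y_{k,k',a}$ as the pointwise maximum of the two linear expressions correctly dominates this term after the reductions of Lemmas~\ref{lem:typerep1}~and~\ref{lem:typerep2} (in particular, that the deviating buyer who collects payment only when playing the recommended action is properly modeled). Once this bookkeeping is settled, everything else is a direct substitution. Since every feasible protocol for Problem~\eqref{eqn:LP_type_reporting1} (after the w.l.o.g.\ reductions) yields a feasible LP solution of equal objective value, the optimal LP value is at least the supremum in Problem~\eqref{eqn:LP_type_reporting1}, which is the statement of the lemma.
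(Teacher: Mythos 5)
Your proposal is correct and follows essentially the same route as the paper's proof: after invoking Lemmas~\ref{lem:typerep1}~and~\ref{lem:typerep2} to restrict to direct, persuasive protocols with $p_k=b_k$ and zero off-recommendation payments, the paper likewise sets $\bar l_k(a)=\sum_{\theta\in\Theta}\mu_\theta\phi^k_\theta(a)\pi_k(a)$ and defines $\bar y_{k,k',a}$ as the pointwise maximum of the right-hand sides of Constraints~\eqref{lp_reporting_c2}--\eqref{lp_reporting_c3}, then checks feasibility. Your constraint-by-constraint verification (in particular that $\sum_a y_{k,k',a}-b_{k'}$ exactly captures the deviating type's value, since the off-recommendation payments are zero) is the bookkeeping the paper leaves as ``easy to verify,'' so no gap remains.
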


Then, we show that, given a solution to LP~\eqref{fig:lp_reporting}, it is possible to recover in polynomial time an IC and IR protocol with at least the same value.
Formally:
%we can always recover a seller's protocol achieving the same utility.
%
\begin{restatable}{lemma}{typerefourth}\label{lem:typerep4}
	Given a feasible solution to LP~\eqref{fig:lp_reporting}, it is possible to recover in polynomial time an IC and IR protocol whose seller's expected utility is greater than or equal to the value of the solution to LP~\eqref{fig:lp_reporting}. 
	%
	%Given a feasible solution to LP~\ref{fig:lp_reporting}, it is possible to recover a protocol with at least the same value. 
\end{restatable}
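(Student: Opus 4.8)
The plan is to take a feasible solution $(\phi^k_\theta(a), l_k(a), y_{k,k',a})$ of LP~\eqref{fig:lp_reporting} and explicitly construct a protocol $\{(\phi^k, p_k, \pi_k)\}_{k\in\mathcal{K}}$ that is direct, persuasive, IC, IR, and has seller's expected utility at least the LP objective value. The signaling schemes $\phi^k$ are inherited verbatim from the LP solution (Constraint~\eqref{lp_reporting_c6} guarantees they are well-defined distributions), and we set $p_k = b_k$ and $\pi_k(a,a')=0$ for $a\neq a'$, in line with Lemmas~\ref{lem:typerep1}~and~\ref{lem:typerep2}. The only genuine issue is recovering $\pi_k(a) \defeq \pi_k(a,a)$ from $l_k(a)$, since $l_k(a)$ encodes $\sum_\theta \mu_\theta \phi^k_\theta(a)\pi_k(a)$ and the ``division by zero'' problem arises precisely when $\sum_\theta \mu_\theta\phi^k_\theta(a)=0$ while $l_k(a)>0$.

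I would handle this by a case split on each pair $(k,a)$. If $\sum_{\theta\in\Theta}\mu_\theta\phi^k_\theta(a)>0$, simply set $\pi_k(a) = l_k(a)\big/\sum_{\theta\in\Theta}\mu_\theta\phi^k_\theta(a)\ge 0$, so that $\sum_\theta\mu_\theta\phi^k_\theta(a)\pi_k(a)=l_k(a)$ exactly; recommendation $a$ then has zero probability of being issued, so the value of $\pi_k(a)$ is irrelevant to every expectation in the protocol and I would just set $\pi_k(a)=0$. In this latter case the LP-encoded quantity $l_k(a)$ is ``lost'' — the reconstructed protocol uses $0$ in place of $l_k(a)$. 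The key observation to push through is that dropping these phantom $l_k(a)$ terms can only \emph{help} the seller: in the objective $\sum_k\lambda_k\sum_a[\sum_\theta\mu_\theta\phi^k_\theta(a)u^s_\theta(a) - l_k(a)] + b_k$, the terms $-l_k(a)$ enter with a negative sign, so replacing a positive $l_k(a)$ by $0$ strictly increases the objective. Hence the constructed protocol's seller utility is at least the LP value.

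It remains to verify that the constructed protocol still satisfies all the constraints of Problem~\eqref{eqn:LP_type_reporting1} — i.e., IC, IR, and persuasiveness — after zeroing out those phantom terms. For persuasiveness, Constraint~\eqref{lp_reporting_c5} is vacuous on the pairs $(k,a)$ where $\sum_\theta\mu_\theta\phi^k_\theta(a)=0$ (both sides are $0$ regardless of $l_k(a)$), and on the remaining pairs it reads exactly as the persuasiveness inequality with $l_k(a)$ replaced by $\sum_\theta\mu_\theta\phi^k_\theta(a)\pi_k(a)$, which holds by our choice of $\pi_k(a)$. For IR, Constraint~\eqref{lp_reporting_c4} involves $\sum_a l_k(a)$ on the larger side of a ``$\ge$'', so decreasing $l_k(a)$ to $0$ on the phantom pairs only \emph{weakens} the left-hand side — this is the one place where I must argue that it still holds, which follows because on phantom pairs $\sum_\theta\mu_\theta\phi^k_\theta(a)u^k_\theta(a)=0$ as well, so the term $\sum_\theta\mu_\theta\phi^k_\theta(a)u^k_\theta(a)+l_k(a)$ contributed only $l_k(a)$, and the true buyer utility from such a never-issued signal is genuinely $0$; thus the real IR left-hand side for the reconstructed protocol equals the LP left-hand side \emph{with the phantom $l_k$'s removed}, and I need the LP feasibility to already imply the surviving inequality. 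Here I would note that one can first preprocess the LP solution: whenever $\sum_\theta\mu_\theta\phi^k_\theta(a)=0$ and $l_k(a)>0$, simply set $l_k(a)\leftarrow 0$; I claim this modified vector is still LP-feasible — Constraints~\eqref{lp_reporting_c1}, \eqref{lp_reporting_c2}, \eqref{lp_reporting_c4} only get easier to satisfy since $l_k$ appears with a plus sign on the ``$\ge$''-side or inside a lower bound on $y$ that can be matched by the $y$-variables already being large enough (and if not, one raises $y_{k,k',a}$, which only tightens~\eqref{lp_reporting_c1}, so one must check~\eqref{lp_reporting_c1} is slack — actually since the \emph{true} best response never plays a never-recommended action, \eqref{lp_reporting_c2}–\eqref{lp_reporting_c3} already dominate). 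The cleanest route is: argue directly that $l_k(a)=0$ on phantom pairs is WLOG in LP~\eqref{fig:lp_reporting} because such $l_k(a)$ can be absorbed/removed without violating feasibility and without decreasing the objective, then the reconstruction above is immediate and exact. The main obstacle, then, is precisely this bookkeeping: showing that the phantom $l_k(a)$ entries can be zeroed out while preserving LP feasibility, so that the subsequent reconstruction of $\pi_k$ by honest division is both well-defined and constraint-preserving.
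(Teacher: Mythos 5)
There is a genuine gap, and it sits exactly where you flagged your own uncertainty. Your plan is to preprocess the LP solution by setting $l_k(a)\leftarrow 0$ on every ``phantom'' pair (those with $\sum_{\theta}\mu_\theta\phi^k_\theta(a)=0$ and $l_k(a)>0$) and then divide to recover $\pi_k$. But zeroing out $l_k(a)$ is \emph{not} feasibility-preserving: in Constraints~\eqref{lp_reporting_c1} and~\eqref{lp_reporting_c4}, $l_k(a)$ appears with a plus sign on the \emph{left-hand} (greater-or-equal) side, so decreasing it to zero makes these constraints strictly harder to satisfy, not ``easier'' as you claim. A feasible LP solution may satisfy type $k$'s IR constraint (or its IC constraint against misreporting) only \emph{because} of the promised expected payment $l_k(a)>0$ attached to a never-issued recommendation; deleting that payment can break IR and IC for type $k$, so your ``surviving inequality'' is not implied by LP feasibility. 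Your observation that the objective only improves is correct but irrelevant if the resulting protocol is no longer IC and IR. The word ``absorbed'' in your last sentence gestures at the right fix, but you never carry it out, and the route you actually develop (removal) fails.

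The paper's proof resolves this by \emph{relocating} the phantom mass rather than discarding it: for each phantom pair $(k,a)$ it picks any $\bar a$ with $\phi^k_{\bar\theta}(\bar a)>0$ for some $\bar\theta$ (such an $\bar a$ exists by Constraint~\eqref{lp_reporting_c6}), sets $\bar l_k(a)=0$ and $\bar l_k(\bar a)=l_k(\bar a)+l_k(a)$, and correspondingly updates the auxiliary variables by $\bar y_{k',k,a}=0$ and $\bar y_{k',k,\bar a}=y_{k',k,\bar a}+l_k(a)$ for all $k'$. Because $\sum_a l_k(a)$ is unchanged, the buyer's expected payment on the left-hand sides of~\eqref{lp_reporting_c1} and~\eqref{lp_reporting_c4} is preserved and the seller's objective value is exactly preserved (not merely non-decreasing); the $y$-update keeps~\eqref{lp_reporting_c2}--\eqref{lp_reporting_c3} satisfied while the net change $-y_{k',k,a}+l_k(a)\le 0$ (using that $y_{k',k,a}\ge l_k(a)$ held before) ensures the right-hand sides of~\eqref{lp_reporting_c1} do not increase. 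After iterating, every positive $l_k(a)$ sits on an action recommended with positive probability, and the division $\pi_k(a)=l_k(a)/\sum_\theta\mu_\theta\phi^k_\theta(a)$ goes through as in the rest of your construction. To repair your proof you would need to replace your zero-out preprocessing with this mass-shifting step (or an equivalent argument showing the buyer's total expected payment is maintained).
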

%To formulate the problem as an LP and compute it in polynomial time we need to show an thanks to Lemmas~\ref{lem:typerep1}~and~\ref{lem:typerep2}, we can  the seller's problem as an LP with polynomially-many variables and constraints, proving the following theorem.
%
Intuitively, Lemma~\ref{lem:typerep4} is proved by showing that, given a feasible solution to LP~\eqref{fig:lp_reporting}, it is possible to efficiently construct a new solution in which, whenever some variable $l_k(a)>0$, then there exists at least one state of nature $\theta \in \Theta$ for which $\phi_\theta^k(a) > 0$, \emph{i.e.}, action $a$ is recommended with strictly positive probability.
Moreover, such a procedure does \emph{not} detriment the objective function value and retains the IC and IR conditions.
Then, from the new solution, one can recover a protocol that is a valid solution to Problem~\eqref{eqn:LP_type_reporting1}, by letting $\pi_k(a) = l_k(a) / \sum_{\theta \in \Theta} \mu_\theta \phi^k_\theta(a)$ for all $k \in \K$ and $a \in \A$.

Finally, by exploiting Lemmas~\ref{lem:typerep3}~and~\ref{lem:typerep4}, we can design a polynomial-time algorithm that finds a seller-optimal protocol with menus.
Indeed, the algorithm can simply optimally solve LP~\eqref{fig:lp_reporting} (in polynomial time), and use Lemma~\ref{lem:typerep4} to recover an IC and IR protocol having at least the same value.
Tanks to Lemma~\ref{lem:typerep3}, such a protocol is optimal for the seller.
%
%The algorithm solves LP~\ref{fig:lp_reporting} and use Lemma~\ref{lem:typerep4} to recover a protocol with at least the same utility.
%
\begin{restatable}{theorem}{typerepthm}\label{thm:typerep}
	There exists a polynomial-time algorithm that computes a protocol with menus that maximizes the seller's expected utility.
\end{restatable}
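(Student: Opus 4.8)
The plan is to assemble Lemmas~\ref{lem:typerep1}--\ref{lem:typerep4} into an algorithm and then verify optimality via a short chain of inequalities. First I would note that LP~\eqref{fig:lp_reporting} has only polynomially many variables --- the probabilities $\phi^k_\theta(a)$, the auxiliary variables $l_k(a)$, and the variables $y_{k,k',a}$, for a total of $O(nmd+n^2m)$ --- and polynomially many constraints, Constraints~\eqref{lp_reporting_c1}--\eqref{lp_reporting_c6}, all of whose coefficients are computable in polynomial time from the instance data $(\{u^k_\theta\},\{u^s_\theta\},\{b_k\},\mu,\lambda)$. Hence an optimal solution $(\phi^\star,l^\star,y^\star)$, together with its value $v^\star$, can be computed in polynomial time by any standard linear-programming algorithm (e.g., an interior-point method or the ellipsoid method).

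Next I would invoke Lemma~\ref{lem:typerep4} on $(\phi^\star,l^\star,y^\star)$: this produces, in polynomial time, an IC and IR protocol $P=\{(\phi^k,p_k,\pi_k)\}_{k\in\mathcal{K}}$ whose seller's expected utility is at least $v^\star$. Since $P$ is IC and IR, it is a feasible point of Problem~\eqref{eqn:LP_type_reporting1}, so the seller's expected utility in $P$ is at most the supremum of that problem; on the other hand, Lemma~\ref{lem:typerep3} gives that $v^\star$ is at least that supremum. Chaining the three facts,
\[
\textnormal{value of } P \;\ge\; v^\star \;\ge\; \textnormal{sup of Problem~\eqref{eqn:LP_type_reporting1}} \;\ge\; \textnormal{value of } P,
\]
so all four quantities coincide. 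Therefore $P$ attains the supremum, i.e., it is a seller-optimal protocol with menus, and the whole procedure --- solve LP~\eqref{fig:lp_reporting}, then apply the reconstruction of Lemma~\ref{lem:typerep4} --- runs in polynomial time. As a byproduct, the supremum in Problem~\eqref{eqn:LP_type_reporting1} is in fact a maximum.

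Given that Lemmas~\ref{lem:typerep1}--\ref{lem:typerep4} are already established, the remaining work is exactly the bookkeeping above, so there is no real obstacle left; the one thing to be careful about is the \emph{direction} of the inequalities, and in particular the observation that feasibility of the reconstructed protocol $P$ for Problem~\eqref{eqn:LP_type_reporting1} is what closes the loop and simultaneously yields attainment of the supremum (this is the ``surprising'' feature noted in the introduction, contrasting with hidden-action principal-agent problems where no maximum need exist). For completeness I would also recall why the chain is tight at the modeling level: the genuinely delicate step, isolated by Lemma~\ref{lem:typerep4}, is that an optimal LP solution may have $l^\star_k(a)>0$ while $\sum_{\theta\in\Theta}\mu_\theta\phi^k_\theta(a)=0$, making $\pi_k(a)=l^\star_k(a)/\sum_{\theta\in\Theta}\mu_\theta\phi^k_\theta(a)$ undefined; that lemma perturbs the solution so every action $a$ with $l_k(a)>0$ is recommended with strictly positive probability --- without decreasing the objective and without violating Constraints~\eqref{lp_reporting_c1}--\eqref{lp_reporting_c6} --- after which the division is well posed and Lemmas~\ref{lem:typerep1} and~\ref{lem:typerep2} justify restricting to direct, persuasive protocols with $p_k=b_k$ and $\pi_k(a,a')=0$ for $a\neq a'$.
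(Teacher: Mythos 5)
Your proposal is correct and follows essentially the same route as the paper: solve LP~\eqref{fig:lp_reporting} in polynomial time, use Lemma~\ref{lem:typerep3} to upper-bound the supremum of Problem~\eqref{eqn:LP_type_reporting1} by the LP value, and use Lemma~\ref{lem:typerep4} to recover in polynomial time an IC and IR protocol of at least that value, which is therefore optimal. Your explicit chain of inequalities and the remark that feasibility of the reconstructed protocol yields attainment of the supremum just spell out what the paper's (shorter) proof leaves implicit.
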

Theorem~\ref{thm:typerep} also shows as a byproduct that Problem~\eqref{eqn:LP_type_reporting1} always admits a maximum.

Let us remark that the idea of formulating a linear relaxation of a quadratic problem by introducing a new variable has already been used in generalized principal-agent problems by~\citet{gan2022optimal}.
However, in such a setting, the linear relaxation cannot be used to solve the principal's optimization problem exactly, but only to recover a desirable approximation of an optimal solution.
This is because the problem may \emph{not} admit a maximum, as shown by~\citet{Castiglioni2022Randomized} even in the special case of hidden-action principal-agent problems.
Surprisingly, in our information-selling setting, the linear relaxation can be used to find an (exact) optimal solution.
Intuitively, this is possible since, in our setting, the seller observes the action undertaken by the buyer, while in hidden-action principal-agent problems the principal does \emph{not} directly observe the agent's action.
%
%\mat{per ora  a cazzo di cane
%	The idea of linearize a quadratic problem introducing a new variable is been used in the related generalized-principal problem introduced by~\cite{gan2022optimal}. In general, it is not clear how to find a protocol with the same value given a solution to the LP~\ref{fig:lp_reporting}. Indeed, a similar problem (principal-agent) problem this is not possible and the optimization problem does not admit a maximum \cite{Castiglioni2022Randomized}.
%	Nevertheless, we show that in our setting it is possible. In particular, we can given a solution to LP, we show how the recover a solution in which f an action $a \in \mathcal{A}$ is never recommended, then also the payments associated with that action will be null.
%	Intuitively, this is possible in our setting and not in the principal-agent problem since the seller observes the action, while in the principal-agent problem the action is hidden.}

\section{Drawing a Connection with Principal-agent Problems}\label{sec:principal_agent}

In this section, we show that our information-selling problem is intimately related to a particular class of principal-agent problems.
Specifically, we show that the problem of computing a seller-optimal protocol without menus is a generalization of the problem of computing an optimal contract in principal-agent problems in which the principal observes the action undertaken by the agent.
%
%the design of our protocol generalizes the principal-agent problem in the case in which the actions are observable.
%First, we define a principal-agent problem with observable actions.
%%

In Section~\ref{sec:principal_agent_def}, we formally introduce principal-agent problems with observable actions.
Then, in Section~\ref{sec:principal_agent_res}, we show how such problems are related to our information-selling problem, and we prove an hardness result for them which carries over to our problem
Finally, in Section~\ref{sec:principal_agent_add}, we provide some preliminary technical results that will be useful in the following sections.

\subsection{Principal-agent Problem with Observable Actions}\label{sec:principal_agent_def}

We start by formally defining an instance of \emph{(Bayesian) observable-action principal-agent problem}.\footnote{Notice that observable-action principal-agent problems are a special case of \emph{Bayesian} hidden-action principal-agent problems. Indeed, this can be easily seen by taking an instance of the hidden-action problem in which outcomes correspond one-to-one with agent's actions, and each action deterministically determines its corresponding outcome.}
%
%\footnote{The Principal-agent problem with observable action can be reduced to a classical principal-agent problem with hidden action adding an outcome for each action and maps each action deterministically to the relative outcome.}
%
For ease of exposition, we reuse some of the notation already introduced in Section~\ref{sec:preliminaries}, in order to denote elements that in observable-action principal-agent problems have the same role as in our information-selling setting.
The \emph{agent} has a finite set $\K$ of possible \emph{types}, and a type $k \in \K$ is drawn with probability $\lambda_k$ according to a known distribution $\lambda \in \Delta_{\K}$.
Each agent's type $k \in \K$ has a set $\A$ of \emph{actions}, with each action having a type-dependent \emph{cost} $c^k_a \in [0,1]$.
The \emph{principal} is characterized by a \emph{reward} $r_a \in [0,1]$ for every agent's action $a \in \A$.
Moreover, the principal can commit to a \emph{contract}, which can be encoded by a function $\pi : \A \to \mathbb{R}_+$ defining a payment $\pi(a)$ from the principal to the agent for every possible agent's action $a \in \A$.
Given a contract, an agent of type $k \in \K$ plays a \emph{best response} $b^k_\pi \in \A$, defined as $b^k_\pi \in  \arg \max_{a \in \A} \left\{ \pi(a)-c^k_a \right\}$, where, as usual, we assume that ties are broken in favor of the principal.
Finally, the principal's goal is to commit to a contract maximizing their expected utility, which is defined as $\sum_{k \in \K} \lambda_k [ r_{b^k_\pi} -\pi(b^k_\pi) ] $.
%
%
%\begin{definition}[Principal-agent problem with observable actions]
%	A principal-agent problem with observable action is a two player game between a principal (seller) and an agent (buyer). The agent has a set of possible types $\K$ and each type occurs with probability $\lambda_k$. Each agent's type $k \in \K$ has a set of  actions $\A$ such that each action has a cost $c^k_a$.\footnote{The Principal-agent problem with observable action can be reduced to a classical principal-agent problem with hidden action adding an outcome for each action and maps each action deterministically to the relative outcome.}
%	The principal has an utility $u^s_a$ for each possible action $a \in \A$.
%	Finally, the principal can set a payment $\pi(a) \in \mathbb{R}_{+}$ to each action $a \in \A$. This set of payments is called a contract.
%	The agent of type $k \in \K$ plays a best response $b^k_\pi \in \A$ defined as   $b^k_\pi \in  \arg \max_{a \in \A}\pi(a)-c^k_a$, breaking ties in favor of the principal. The principal goal is to maximize their expected utility $\sum_{k \in \K} \lambda_k (u^s_{b^k_\pi} -\pi(b^k_\pi)) $.
%\end{definition}

\subsection{From Selling Information to Observable-action Principal-agent Problems}\label{sec:principal_agent_res}

Next, we show that our information-selling problem in the case in which protocols are without menus and the buyer has limited liability (\emph{i.e.}, $b_k = 0$ for all $k \in \K$) is strongly related to the problem of finding an optimal (\emph{i.e.}, expected-utility-maximizing) contract in observable-action principal-agent problems.
%
%In the remaining of the section we focus on the relation between the protocol design problem without selection and the principal-agent problem.
%
Specifically, we show that, given a posterior $\xi \in \Delta_{\Theta}$,
%
%signaling scheme $\phi: \Theta \to \Delta_{\mathcal{S}}$ and a signal $s \in \mathcal{S}$ such that $\phi_\theta (s) > 0$ for at least one state of nature $\theta \in \Theta$, 
designing a payment function $\pi: \Delta_{\Theta} \times \A \to \mathbb{R}_+$ that maximizes the seller's expected utility conditioned on the fact that the induced posterior is $\xi$ is equivalent to finding an optimal contract in a suitably-defined principal-agent problem with observable actions.
Formally, for ease of presentation, we introduce the following notion of payment function that is optimal for the seller in a given posterior:
%
%given $\phi: \Theta \to \Delta_{\mathcal{S}}$ and $s \in \mathcal{S}$ as defined above, the payment-function-design problem reads as follows:

\begin{definition}[Optimal payment function in a posterior]
	Given a posterior $\xi \in \Delta_\Theta$, we say that a payment function $\pi: \Delta_{\Theta} \times \A \to  \mathbb{R}_{+}$ is \emph{optimal} in $\xi$ if the following holds:
	\begin{equation}\label{prob:payment_func}
		\pi \in \argmax_{\pi'} \,\, \sum_{k \in \K} \lambda_k \left[ \sum_{\theta \in \Theta} \xi_\theta \,  u^s_\theta(b^k_{\xi,\pi'})-\pi(\xi,b^k_{\xi,\pi'}) \right].
	\end{equation}
%	%
%	
%	%
%	\begin{equation*}
%	\sum_{k \in \K} \lambda_k\sum_{\theta \in \Theta} \xi_\theta \, u^s_\theta(\br) - \pi(\xi, \br) \ge
%	\sum_{k \in \K} \lambda_k \sum_{\theta \in \Theta} \xi_\theta \, u^s_\theta(b_{\xi,\pi'}^{k}) - \pi'(\xi, b_{\xi,\pi'}^{k})
%	\end{equation*}
%	%
%	for every payment function $\pi':\Delta_\Theta \times \mathcal{A} \to \mathbb{R}_{+} $.
\end{definition}
%
%%
%\begin{equation}\label{prob:payment_func}
%	\max_{\pi(s,a) \ge 0} \,\, \sum_{k \in \K} \lambda_k \sum_{\theta \in \Theta} \mu_\theta \phi_\theta(s) \left[  u^s_\theta(b^k_{\xi^s,\pi})-\pi(s,b^k_{\xi^s,\pi}) \right].
%\end{equation}
%%
Notice that, in Problem~\eqref{prob:payment_func}, the price $p$ of the signaling scheme $\phi$ does \emph{not} appear in the seller's expected utility, since we are restricted to settings in which the buyer has limited liability, and, thus, it is always the case that $p=0$.
For the same reason, we can safely assume that all the buyer's types satisfy IR constraints.
Then, we can state the following crucial result:
\begin{restatable}{lemma}{lemmaprincipal}\label{lem:prin2sing}
	%
	% Consider the problem of computing a seller-optimal protocol without menus.
	%
	% Given a signaling scheme $\phi: \Theta \to \Delta_{\mathcal{S}}$ and a signal $s \in \mathcal{S}$ such that $\phi_\theta (s) > 0$ for some $\theta \in \Theta$,
	%
	Given a posterior $\xi \in \Delta_{\Theta}$, solving Problem~\eqref{prob:payment_func} is equivalent to computing a contract maximizing the principal's expected utility in an instance of observable-action principal-agent problem such that, for every agent's type $k \in \K$ and action $a \in \A$, the following holds:
	%	
	%	
	%	computing the payments $\pi(s,a)$ of a payment function $\pi: \mathcal{S} \times \A \to \mathbb{R}_+$ maximizing the seller's expected utility conditioned on the fact that $s \in \mathcal{S}$ has been drawn according to $\phi$ is equivalent to computing the payments $\pi(a)$ of a contract $\pi: \A \to \mathbb{R}_+$ maximizing the principal's expected utility in an instance of observable-action principal-agent problem such that, for every agent's type $k \in \K$ and action $a \in \A$:
	\[
		c^k_a=\sum_{\theta \in \Theta} \xi_\theta \left[  u^k_\theta(b_\xi^k) -u^k_\theta(a) \right] \quad \text{and} \quad
		r_a = \sum_{\theta \in \Theta} \xi_\theta \, u^s_\theta(a).
	\]
	%where we let $a^* := \argmax_{a \in \A}\sum_{\theta \in \Theta} \xi_\theta \, u^k_\theta(a)$.
	%
	Moreover, finding an optimal contract in any instance of observable-action principal-agent problem can be reduced in polynomial time to computing a seller-optimal protocol without menus in a problem instance in which the buyer has limited liability and there is only one state of nature.
	%
	% \mat{and with a single state of nature.}
	%\albo{Di questo non si capisce un cazzo}
	%	
	%	Consider the setting without menus. Given a signaling scheme $\phi$ and a signal $s \in \supp(\phi_\theta)$ for at least one $\theta \in \Theta$, designing a price function $\{\pi(s,a)\}_{a \in \A}$ is equivalent to design a contract $\{\pi(a)\}_{a \in \A}$ for the principal-agent problem with observable action such that  for each $a \in \A$
	%	\begin{equation*}
	%	c^k_a=\frac{\sum_{\theta} \mu_\theta \phi_\theta(s) [ u^k_\theta(a)- u^k_\theta(a^*)]}{\sum_{\theta} \mu_\theta \phi_\theta(s)}, \hspace{1cm} \textnormal{and} \hspace{1cm} 
	%	r^s_a= \frac{\sum_{\theta} \mu_\theta \phi_\theta(s) u^s_\theta(a)}{\sum_{\theta} \mu_\theta \phi_\theta(s)},
	%	\end{equation*}
	%	where we denote with $a^*= \argmax_{a \in \A}\sum_{\theta} \mu_\theta \phi_\theta(s) u^k_\theta(a)$. Moreover, any instance of the principal-agent problem can be reduced to an instance of \mat{our protocol design problem without menus.}
\end{restatable}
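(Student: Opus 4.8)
The lemma has two directions. For the first (from selling information to observable-action principal-agent), the plan is to expand Problem~\eqref{prob:payment_func} and compare it term by term with the principal's objective $\sum_{k \in \K} \lambda_k [r_{b^k_\pi} - \pi(b^k_\pi)]$. Fix the posterior $\xi$. Because the buyer has limited liability and $p = 0$, the only decision variable is the payment function, which in a single posterior reduces to a map $\pi : \A \to \mathbb{R}_+$; call it $\hat\pi(a) \defeq \pi(\xi,a)$. First I would verify that the buyer's best-response rule in posterior $\xi$ coincides with the agent's best-response rule under the stated costs: the buyer maximizes $\sum_\theta \xi_\theta u^k_\theta(a) + \hat\pi(a)$, and subtracting the constant $\sum_\theta \xi_\theta u^k_\theta(b^k_\xi)$ (independent of $a$) turns this into maximizing $\hat\pi(a) - c^k_a$ with $c^k_a = \sum_\theta \xi_\theta[u^k_\theta(b^k_\xi) - u^k_\theta(a)]$, which is exactly $b^k_\pi$. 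Note $c^k_a \ge 0$ since $b^k_\xi$ is a best response to $\xi$, and $c^k_a \le 1$ because utilities lie in $[0,1]$, so this is a legal cost. Tie-breaking is in favor of the seller/principal on both sides, so the induced action profiles match. Then the seller's objective becomes $\sum_k \lambda_k[\sum_\theta \xi_\theta u^s_\theta(b^k_\pi) - \hat\pi(b^k_\pi)] = \sum_k \lambda_k[r_{b^k_\pi} - \hat\pi(b^k_\pi)]$ with $r_a = \sum_\theta \xi_\theta u^s_\theta(a) \in [0,1]$, which is precisely the principal's expected utility. Since the feasible sets (nonnegative payment functions) are in bijection and the objectives agree pointwise, the two optimization problems are equivalent.

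For the second direction, given an arbitrary observable-action instance $(\K, \lambda, \A, \{c^k_a\}, \{r_a\})$, I would construct an information-selling instance with a single state of nature $\Theta = \{\theta_0\}$, hence $\mu_{\theta_0} = 1$ and the unique posterior is $\xi = \delta_{\theta_0}$. The buyer has limited liability ($b_k = 0$ for all $k$), so $p = 0$ and the signaling scheme is vacuous (only one state). It remains to choose $u^k_{\theta_0}$ and $u^s_{\theta_0}$ so that the reduction of the previous paragraph recovers the given costs and rewards. Set $u^s_{\theta_0}(a) = r_a$ directly. For the costs: I need $u^k_{\theta_0}(b^k_\xi) - u^k_{\theta_0}(a) = c^k_a$ for all $a$, where $b^k_\xi = \arg\max_a u^k_{\theta_0}(a)$; taking $u^k_{\theta_0}(a) = 1 - c^k_a$ works, since then the maximizer is an action minimizing $c^k_a$, which has $c^k_{\cdot} = 0$ (WLOG, or subtract the minimum cost from every action's cost and absorb the shift, exploiting that a constant shift of all costs does not change best responses or — after cancellation — the principal's utility), giving $u^k_{\theta_0}(b^k_\xi) = 1$ and $u^k_{\theta_0}(b^k_\xi) - u^k_{\theta_0}(a) = c^k_a$, with all utilities in $[0,1]$. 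Then a seller-optimal protocol without menus in this instance is an optimal payment function in the unique posterior, which by the first direction is an optimal contract for the original principal-agent instance; the reduction is clearly polynomial.

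The main obstacle I anticipate is the normalization bookkeeping: the observable-action model only requires $c^k_a \in [0,1]$, so there need not be a zero-cost action, and the naive choice $u^k_{\theta_0}(a) = 1 - c^k_a$ makes $b^k_\xi$ the \emph{minimum-cost} action rather than a zero-cost one, so the identity $u^k_{\theta_0}(b^k_\xi) - u^k_{\theta_0}(a) = c^k_a$ becomes $c^k_a - \min_{a'} c^k_{a'}$ instead of $c^k_a$. This is harmless because shifting every cost of type $k$ by the common constant $\min_{a'} c^k_{a'}$ leaves every best response unchanged and changes the principal's utility $r_{b^k_\pi} - \pi(b^k_\pi)$ only through $\pi$, which is a free variable; one can equivalently argue that an optimal contract for the shifted instance is optimal for the original by adding $\min_{a'} c^k_{a'}$ to every payment. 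I would spell this out carefully, and also double-check the tie-breaking conventions line up (seller-favorable versus principal-favorable) so that the induced best responses — and hence both objectives — coincide exactly, not just up to indifference.
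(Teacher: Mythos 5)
Your proposal is correct and follows essentially the same route as the paper: the same definitions of $c^k_a$ and $r_a$ with a verification that best-response sets (and hence tie-broken choices and objectives) coincide, and for the converse the same single-state construction with $u^s_{\theta_0}(a)=r_a$ and $u^k_{\theta_0}(a)=1-c^k_a$. Your extra care about the type-dependent shift by $\min_{a'}c^k_{a'}$ is a legitimate detail the paper glosses over, and your resolution (a common shift of costs changes neither best responses nor the principal's utility under any contract) is exactly right.
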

The first statement in Lemma~\ref{lem:prin2sing} implies that, given an instance of our information-selling problem in which the buyer has limited liability and there is only one state of nature, it is possible to compute a seller-optimal protocol without menus by finding an optimal contract in an instance of observable-action principal-agent problem defined as in the lemma (notice that such an instance can be easily built in polynomial time).
%
% Moreover, the second statement in Lemma~\ref{lem:prin2sing} shows that, given any instance of observable-action principal-agent problem, an optimal contract can be computed by 
%
Thus, by Lemma~\ref{lem:prin2sing}, we can easily prove the following:
\begin{theorem}\label{thm:equiv}
	Restricted to instances in which the buyer has limited liability and there is only one state of nature, computing a seller-optimal protocol without menus is equivalent to the problem of finding an optimal contract in general instances of the observable-action principal-agent problem.
\end{theorem}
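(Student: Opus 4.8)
The plan is to derive Theorem~\ref{thm:equiv} as an almost immediate corollary of Lemma~\ref{lem:prin2sing}, by combining its two halves and adding the observation that, with a single state of nature, a seller's protocol without menus degenerates.

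First I would establish the reduction from the information-selling problem to the contract problem. Fix an instance with $\Theta = \{\theta\}$ and limited liability ($b_k = 0$ for all $k \in \mathcal{K}$). Since $|\Theta| = 1$, every signaling scheme $\phi$ induces the single posterior $\xi = \mu = \delta_\theta$, so the choice of $\phi$ is vacuous and the induced distribution over posteriors is the point mass on $\xi$. Moreover, limited liability forces $p = 0$ whenever any type participates (as $0 \le p \le b_k = 0$), and with $p = 0$ the IR constraint of every type $k$ is automatically satisfied: a best response $b^k_{\xi,\pi}$ maximizes $u^k_\theta(a) + \pi(\xi,a)$, and since $\pi \ge 0$ this quantity is at least $\max_{a} u^k_\theta(a)$, which is exactly the right-hand side of the IR constraint. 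Hence $\mathcal{R}_{\phi,0,\pi} = \mathcal{K}$ for every $\pi$, and the seller's expected utility collapses to $\sum_{k \in \mathcal{K}} \lambda_k \left[ u^s_\theta(b^k_{\xi,\pi}) - \pi(\xi, b^k_{\xi,\pi}) \right]$, i.e.\ exactly the objective of Problem~\eqref{prob:payment_func} for $\xi = \mu$. Thus computing a seller-optimal protocol without menus in such an instance is the same as solving Problem~\eqref{prob:payment_func}, which by the first part of Lemma~\ref{lem:prin2sing} is equivalent to finding an optimal contract in the observable-action principal-agent instance with $r_a = u^s_\theta(a)$ and $c^k_a = u^k_\theta(b^k_\xi) - u^k_\theta(a)$. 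This instance is built in polynomial time, its rewards and costs lie in $[0,1]$ (as $u^k_\theta, u^s_\theta \in [0,1]$), and best responses on the two sides coincide, so the reduction is value-preserving.

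Conversely, the reduction from the contract problem to the information-selling problem is precisely the second statement of Lemma~\ref{lem:prin2sing}: any instance of the observable-action principal-agent problem can be turned, in polynomial time, into an instance of our problem with one state of nature and limited liability whose seller-optimal protocols without menus correspond to optimal contracts. Chaining the two directions yields the claimed equivalence.

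I do not expect a substantial obstacle, since Lemma~\ref{lem:prin2sing} does the heavy lifting. The only points that require a little care are the degeneracy argument for $|\Theta| = 1$ --- verifying that the signaling scheme and the price $p$ may be ignored and that all types participate --- and checking that the seller-favorable (resp.\ principal-favorable) tie-breaking conventions in the two models match up, so that the reductions preserve the optimal value rather than merely feasibility.
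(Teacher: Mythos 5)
Your proposal is correct and follows essentially the same route as the paper: the paper derives Theorem~\ref{thm:equiv} directly from the two parts of Lemma~\ref{lem:prin2sing}, exactly as you do, with the one-state/limited-liability degeneracy (vacuous signaling, $p=0$, IR automatically satisfied) being the implicit observation you spell out. Your added care about tie-breaking and the polynomial-time construction is consistent with the paper's argument and raises no gap.
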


While the computational complexity of finding optimal contracts in hidden-action principal-agent problems is well understood (see, \emph{e.g.},~\citep{castiglioni2021Contract}), to the best of our knowledge, there are no results on problems with observable actions.
In following theorem, we prove a strong hardness result for them: there exists a constant $\alpha<1$ such that designing a contract which provides the principal with at least an $\alpha$ fraction of the expected utility in an optimal contract is computationally intractable.
%designing a contract that provides the principal with at least a constant fraction of the expected utility in an optimal contract is computationally intractable.
%
Formally:
%the problem is $\mathsf{APX}$-hard.
%
\begin{restatable}{theorem}{theoremHardnessState}\label{thm:hardness_principla}
	In observable-action principal-agent problems, the problem of computing a contract maximizing the principal's expected utility is $\mathsf{APX}$-hard.
\end{restatable}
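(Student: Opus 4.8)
The plan is to prove \NPHard{}ness (and in fact \APX{}-hardness) of the observable-action principal-agent problem by reducing from a classical inapproximable problem. The natural candidate is a maximum-coverage-type problem or, even more directly, a vertex-cover/independent-set style problem, since the multi-type structure here closely mirrors the hidden-action contract hardness reductions of~\citet{castiglioni2021Contract}. Concretely, I would reduce from \textsc{Max-3SAT} or from the gap version of a \textsc{Set-Cover}/\textsc{Label-Cover} variant that is known to be \APX{}-hard, so that a multiplicative gap in the objective is preserved. The key modelling idea: identify agent's actions with ``candidate explanations'' (e.g.\ truth assignments to a bounded number of variables, or chosen subsets), let the principal's reward $r_a$ encode how much global progress action $a$ represents, and let the type-dependent costs $c^k_a$ encode feasibility of action $a$ for type $k$ (prohibitively large cost when $a$ is inconsistent with the local constraint indexed by $k$, small cost otherwise). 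Since payments in a contract are \emph{not} type-specific, a single contract $\pi:\A\to\mathbb{R}_+$ must simultaneously work well across all types; this is exactly where the combinatorial hardness is injected.

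The main steps, in order, are: (i) fix the source problem and its known \APX{}-hardness constant, say \textsc{Max-E3SAT} with its $\frac{7}{8}+\delta$ inapproximability, or a bounded-occurrence variant so that the number of actions per type stays small; (ii) construct the instance: one type $k$ per clause (weighted by $\lambda_k$ proportional to $1/(\text{\# clauses})$), one ``action'' per partial assignment to the variables of that clause consistent with the clause being satisfied, plus a ``null'' action with zero reward and zero cost; set $c^k_a\in\{0,1\}$ (or $0$ vs.\ something large) according to whether $a$ is consistent with clause $k$, and $r_a$ encoding the literals set true; (iii) argue completeness: from a satisfying assignment build the contract that pays exactly enough to make the intended action the best response for every type, yielding principal utility close to the maximum possible; (iv) argue soundness: from any contract $\pi$ extract an assignment (e.g.\ by reading off, for each variable, the majority vote induced by the best responses, or via a randomized rounding of the induced posterior over actions) and show that the fraction of clauses satisfied is, up to the gap, the principal's normalized utility, so that a $\alpha$-approximation to the contract problem would beat the inapproximability threshold. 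Finally (v) check that all numbers are polynomially bounded and that the reduction is polynomial-time, and translate the additive/multiplicative gap of the source into a multiplicative gap here (using that rewards and costs lie in $[0,1]$ and that \opt{} is bounded away from $0$ by a polynomial factor, so an additive gap becomes a constant multiplicative one).

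The hard part will be the soundness direction, specifically extracting a good global assignment from an arbitrary contract. Unlike in a single-type setting, here the best responses $b^k_\pi$ of different types may implicitly ``disagree'' about the value of a shared variable, and a naive decoding can lose a constant factor; I expect to need a careful averaging or randomized-rounding argument, together with a gadget that penalizes inconsistency — e.g.\ duplicating variables into many copies and adding equality-enforcing types, or using the Label-Cover formulation where the projection structure forces consistency. A secondary subtlety is ensuring that tie-breaking ``in favor of the principal'' does not accidentally hand the principal extra utility that breaks soundness; I would design costs so that the intended action is the \emph{unique} best response whenever the contract is ``on profile,'' and otherwise bound the principal's gain from any alternative best response. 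I would also double-check that restricting to contracts with $\pi(a)\ge 0$ (limited liability) does not help the principal circumvent the gadget, which is immediate here since all intended payments are nonnegative by construction.

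Finally, I would note that Theorem~\ref{thm:equiv} then immediately transfers this \APX{}-hardness to the information-selling problem without menus in the limited-liability, single-state regime, which is the corollary the paper is really after; so the only genuine work is the reduction into the observable-action contract model itself, and the plan above isolates that work into the completeness/soundness pair of steps (iii)--(iv) with the decoding in (iv) as the crux.
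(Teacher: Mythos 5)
Your plan has a genuine gap, and it is structural rather than a matter of missing details. In the observable-action model the agent of type $k$ simply plays $\arg\max_a \{\pi(a)-c^k_a\}$ and the principal's utility is additive over types, so the problem only becomes hard when distinct types \emph{share} actions and, crucially, when raising the payment on a shared action forces the principal to also hand that payment to types who would take the action for free. Your clause gadget creates no such coupling: each clause-type has its own zero-cost ``consistent'' actions (partial assignments to that clause's variables are distinct objects for distinct clauses), so under the all-zero contract every type already best-responds with its reward-maximizing consistent action, ties being broken in the principal's favor. The optimal contract value is therefore completely insensitive to whether a globally consistent assignment exists; both completeness and soundness collapse, and no majority-vote or rounding-based decoding can repair this, because the objective never encodes cross-clause consistency in the first place. (Note also that with a single type the problem is exactly solvable by paying the cost of the welfare-maximizing action, so any hardness proof must manufacture a payment externality across types --- the ingredient your construction lacks, and which your soundness discussion, focused on extracting an assignment, does not supply.)

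The paper's proof injects exactly this externality by reducing from vertex cover in cubic graphs. There is a type $k_v$ per vertex and $k_e$ per edge, an action $a_v$ per vertex plus a null action $a_-$; an edge type has cost $\tfrac12$ on the actions of its endpoints and cost $0$ on $a_-$, so it must be paid $\tfrac12$ on some incident $a_v$ to leave the null action, while the vertex type $k_v$ has cost $0$ on $a_v$ and thus free-rides on any payment placed there. Paying $\tfrac12$ on the actions of a vertex set $V'$ serves all covered edge types but wastes $\tfrac12|V'|$ on the free-riding vertex types, so the principal's optimal utility is an affine function of the minimum vertex cover size, and the constant-factor inapproximability of vertex cover on cubic graphs transfers (using $\rho=\tfrac23\ell$ and $\ell\le 3\nu$). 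The technical work in soundness is normalizing payments to $\{0,\tfrac12\}$ and shifting edge types onto vertex actions without decreasing utility --- not decoding an assignment. If you want to keep a satisfiability-flavored source problem you would have to re-engineer the gadget around this ``pay-the-free-riders'' trade-off rather than around consistency decoding; as written, the reduction you sketch would produce instances whose optimum is independent of the source instance.
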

Then, Theorem~\ref{thm:equiv} immediately gives the following result:
\begin{corollary}\label{cor:hardness}
	The problem of computing a seller-optimal protocol without menus is $\mathsf{APX}$-hard, even when the buyer has limited liability and the number of states of nature $d$ is fixed.
\end{corollary}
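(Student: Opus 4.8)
The plan is to obtain Corollary~\ref{cor:hardness} simply by composing the two results just established. Theorem~\ref{thm:equiv} says that, restricted to information-selling instances in which the buyer has limited liability and there is a single state of nature (i.e.\ $d=1$), computing a seller-optimal protocol without menus is equivalent to finding an optimal contract in a general observable-action principal-agent instance; Theorem~\ref{thm:hardness_principla} says that the latter problem is $\mathsf{APX}$-hard. Since the case $d=1$ is a special case of ``$d$ fixed'', transporting the hardness of the contract problem through this equivalence yields exactly the stated claim.

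Concretely, I would carry out the following steps. (i) Invoke Theorem~\ref{thm:hardness_principla} to fix a constant $\alpha<1$ such that, unless $\mathsf{P}=\mathsf{NP}$, no polynomial-time algorithm outputs an $\alpha$-approximate contract for all observable-action instances. (ii) Recall that the reduction behind Lemma~\ref{lem:prin2sing} maps, in polynomial time, any contract instance to an information-selling instance with $b_k=0$ for every $k\in\K$ and a single state $\theta_1$; there the signaling layer is vacuous, the unique induced posterior is $\xi=\mu$ (all mass on $\theta_1$), and the price is forced to $p=0$. (iii) Check that under this correspondence the seller's expected utility of a protocol $(\gamma,p,\pi)$ equals the principal's expected utility of the contract $a\mapsto\pi(\xi,a)$, and conversely --- so the optimal values agree and the map is value-preserving in both directions; here the only point worth spelling out is that, since $d=1$ kills all posterior-dependent structure and (as noted after Problem~\eqref{prob:payment_func}) all IR constraints are trivially satisfied, the objective carries no residual terms coming from prices or from opting-out buyer types. (iv) Conclude: a polynomial-time $\rho$-approximation algorithm for seller-optimal protocols without menus, with $\rho>\alpha$, would compose with this reduction to give a polynomial-time $\rho$-approximate contract, contradicting (i).

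The genuine difficulty lies not in this corollary --- which is essentially bookkeeping on top of Theorems~\ref{thm:equiv} and~\ref{thm:hardness_principla} --- but in Theorem~\ref{thm:hardness_principla} itself. I expect its proof to go through a gap-preserving (PTAS-style) reduction from a canonical $\mathsf{APX}$-hard problem, e.g.\ a bounded-occurrence MAX-SAT variant or vertex cover on bounded-degree graphs, encoding the combinatorial instance in the agent's type set $\K$ and action set $\A$: each agent type contributes a constant number of actions, and the costs $c^k_a$ and principal rewards $r_a$ are calibrated so that an optimal contract is forced to ``buy'' an action profile mirroring a feasible solution of the source instance, making the principal's optimum an affine function of the source optimum with a constant multiplicative gap. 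The main obstacle there is the calibration itself --- arguing that the gap survives, in particular bounding how much the principal could gain by exploiting favorable tie-breaking or by paying for ``partial'' solutions. Once that theorem is in place, the corollary follows purely formally.
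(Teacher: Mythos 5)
Your proposal is correct and matches the paper's own argument: the corollary is obtained exactly by composing Theorem~\ref{thm:equiv} (via the reduction of Lemma~\ref{lem:prin2sing}, which sets $u^s_\theta(a)=r_a$, $u^k_\theta(a)=1-c^k_a$ with a single state and $b_k=0$, so values and best responses are preserved) with the $\mathsf{APX}$-hardness of observable-action principal-agent problems in Theorem~\ref{thm:hardness_principla}. Your side remark about how that theorem is likely proved is also on target --- the paper reduces from vertex cover in cubic graphs --- but it is not needed for this corollary.
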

As we show in the following sections (see Theorems~\ref{thm:fixedStates}~and~\ref{thm:fixedactions2}), whenever the number of states of nature is fixed, the problem of computing a seller-optimal protocol without menus admits a polynomial-time algorithm providing a particular \emph{bi-criteria} approximation of the seller's expected utility in an optimal protocol.
Such an approximation is similar to the the bi-criteria guarantees provided by~\citet{castiglioni2021Contract} for \emph{Bayesian} hidden-action principal-agent problems.
By Theorem~\ref{thm:equiv}, our polynomial-time bi-criteria approximation algorithm for the setting in which the buyer has limited liability (Theorem~\ref{thm:fixedStates}) can be easily adapted to work with observable-action principal-agent problems. 
Theorem~7 in~\citep{castiglioni2021Contract} shows that, for hidden-action problems, such bi-criteria approximations are tight.
We leave as an open problem to establish whether these are also tight in our observable-action principal-agent problems or one can obtain better guarantees in polynomial time for our specific case.
% of observable actions.
%
%
%we show that the protocol design problem without selection ---and hence the principal-agent problem with observable action--- admits a bi-criteria approximation similar to the one obtained by~\cite{castiglioni2021Contract} for the principal agent problem with hidden action.
%We leave as an interesting open problem to find if an hardness result that matches the bound in Theorem~7 in~\cite{castiglioni2021Contract} can be obtained even in the setting with observable action, showing that our results are tight, or if a better approximation can be found when the actions are observable.

\subsection{Additional Preliminary Technical Results}\label{sec:principal_agent_add}

We conclude the section by recalling two already-known results on hidden-action principal-agent problems.
Clearly, these also hold for the specific case of observable-action principal-agent problems.
%
% ---and hence also on the problem with observable action--- that will be useful in the following.
%
By Theorem~\ref{thm:equiv}, such results can be easily cast to our information-selling problem.
Indeed, we also show that one of them can be strengthen in our setting.
%
% we cast some of these results to our problem, providing a slightly stronger result.

The first result that we are going to introduce makes use of {linear contracts}, which are payment schemes that pay the agent a given fraction of the principal's reward.
Formally, in observable-action principal-agent problems, a contract $\pi: \A \to \mathbb{R}_+$ is said to be \emph{linear} if there exists a $\beta \in [0,1]$ such that $\pi(a) =\beta \, r_a $ for all $ a \in \A$.
%
% First, we introduce the class of linear contracts.
%
%\begin{definition}[Linear contract]
%	A contract is linear if for a $\beta \in [0,1]$ it holds $\pi(a) =\beta\  u^s_a \ \ \forall a \in \A$.
%\end{definition}
%
Despite their simplicity, linear contracts provide good approximations with respect to general ones.
In particular, the following holds:
%linear contracts provide the following bi-criteria approximation.
%
\begin{theorem} [Essentially Theorem~3 by~\citet{castiglioni2021Contract}] \label{thm:linear}
	In an observable-action principal-agent problem, for any $\rho \in (0, \nicefrac{1}{2}]$, there exists a linear contract $\pi :\A \to \mathbb{R}_{+}$ such that:
	\[
		\sum_{k \in \K} \lambda_k  \left[ r_{b^k_{\pi}}-\pi(b^k_{\pi}) \right] \ge  \rho \, \max_{\pi'} \sum_{k \in \K} \lambda_k \left[ r_{b^k_{\pi'}}-\pi'(b^k_{\pi'}) \right] -2^{\Omega(1/\rho)}. 
	\]
	Moreover, such a linear contract is defined by a parameter $\beta=1-2^{-i}$, for some $i \in \{1, \dots,\lfloor\nicefrac{1}{2\rho}\rfloor\}$. 
\end{theorem}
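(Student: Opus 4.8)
The plan is to reduce the statement to the analogous result for hidden-action principal-agent problems, which is Theorem~3 of~\citet{castiglioni2021Contract}. As noted in the footnote of Section~\ref{sec:principal_agent_def}, an observable-action principal-agent instance is a special case of a hidden-action instance: take the outcome set to coincide with the action set $\A$, let each action $a$ deterministically produce outcome $a$, assign the principal reward $r_a$ to outcome $a$, and keep the type-dependent costs $c^k_a$ unchanged. Under this identification, a contract in the hidden-action sense is a payment function over outcomes, which is exactly a payment function over actions $\pi:\A\to\mathbb{R}_+$; the agent's best response, the principal's expected utility $\sum_{k\in\K}\lambda_k[r_{b^k_\pi}-\pi(b^k_\pi)]$, and the notion of a linear contract $\pi(a)=\beta\,r_a$ all coincide verbatim with their observable-action counterparts. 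Hence any feasible (linear) contract in the constructed hidden-action instance maps back to a feasible (linear) contract in the original observable-action instance with identical principal utility, and optimal values agree.

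The key steps, in order, are: (i) formalize the above embedding of an arbitrary observable-action instance into a hidden-action instance and check that the three quantities in the theorem statement — the left-hand side $\sum_k\lambda_k[r_{b^k_\pi}-\pi(b^k_\pi)]$, the optimum $\max_{\pi'}\sum_k\lambda_k[r_{b^k_{\pi'}}-\pi'(b^k_{\pi'})]$, and the class of linear contracts $\{\beta\,r_a\}$ — are preserved under the embedding; (ii) apply Theorem~3 of~\citet{castiglioni2021Contract} to the embedded instance for the given $\rho\in(0,\nicefrac12]$, obtaining a linear contract with parameter $\beta=1-2^{-i}$ for some $i\in\{1,\dots,\lfloor\nicefrac{1}{2\rho}\rfloor\}$ satisfying the claimed multiplicative-plus-additive guarantee; and (iii) pull this contract back to the original instance, where it remains linear with the same $\beta$ and delivers the same principal utility, so the inequality holds as stated.

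The only point requiring mild care — and the one I would flag as the main obstacle — is the precise matching of conventions between the hidden-action model used in~\citep{castiglioni2021Contract} and the observable-action model defined here: one must verify that rewards and costs are normalized in $[0,1]$ the same way, that tie-breaking in the agent's best response is in favor of the principal in both models (it is, by our definition of $b^k_\pi$ and the corresponding assumption in~\citep{castiglioni2021Contract}), and that the deterministic outcome map does not degrade the additive error term (it does not, since the construction is exact and introduces no approximation). Once these conventions are checked, the theorem follows immediately, which is why we phrase it as ``essentially Theorem~3 by~\citet{castiglioni2021Contract}''; no new optimization argument is needed beyond the reduction.
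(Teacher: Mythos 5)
Your proposal is correct and matches the paper's (implicit) justification exactly: the paper treats this statement as an import of Theorem~3 of \citet{castiglioni2021Contract}, relying on the footnote observation that an observable-action instance embeds into a hidden-action instance with outcomes identified one-to-one with actions and deterministic outcome maps, which is precisely the reduction you formalize. No further argument is given in the paper, so your careful check that best responses, principal utility, and the linear-contract class are preserved under the embedding is all that is needed.
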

We will make use of a stronger version of Theorem~\ref{thm:linear}, which applies to our setting and directly follows from the analysis of~\citet{castiglioni2021Contract} and Lemma~\ref{lem:prin2sing}.
Formally:
\begin{restatable}{corollary}{corolPD} \label{cor:linear}
	% Given a signaling scheme $\phi: \A \to \Delta_{\mathcal{S}}$ and a signal $s \in \mathcal{S}$, for any $\rho \in (0, \nicefrac{1}{2}]$, 
	%
	Given a posterior $\xi \in \Delta_{\Theta}$, for any $\rho \in (0, \nicefrac{1}{2}]$, 
	there exists a payment function $\pi :\Delta_{\Theta} \times \A \to \mathbb{R}_{+}$ such that $\pi(\xi,a) = \beta \, \sum_{\theta \in \Theta} \xi_\theta \, u^s_\theta(a)$ for every $a \in \A$, where $\beta \in [0,1]$ is an (action-independent) parameter, and, additionally, the following holds:
	\begin{align*}
		\sum_{k \in \K} \lambda_k  \sum_{\theta \in \Theta} \xi_\theta& \left[ u^s_{\theta}( b^k_{\xi,\pi})-\pi(s,b^k_{\xi,\pi})\right] \ge \rho  \sum_{k \in \K} \lambda_k \max_{a \in A} \sum_{\theta \in \Theta}\xi_\theta  \left[ u^s_\theta(a) +  u^k_{\theta}(a)-u^k_{\theta}(b^k_{\xi})\right]-2^{\Omega(1/\rho)}.
	\end{align*}
	Moreover, such a parameter $\beta$ is equal to $1-2^{-i}$ for some $i \in \{1, \dots,\lfloor\nicefrac{1}{2\rho}\rfloor\}$. 
	%the linear contract that provides the approximation is such that the parameter $\beta=1-2^{-i}$, for some $i \in \{1, \dots,\lfloor\frac{1}{2\rho}\rfloor\}$. 
\end{restatable}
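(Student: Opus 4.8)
The plan is to derive Corollary~\ref{cor:linear} as a direct consequence of Theorem~\ref{thm:linear} combined with the reduction established in Lemma~\ref{lem:prin2sing}. First I would fix an arbitrary posterior $\xi \in \Delta_{\Theta}$ and instantiate the observable-action principal-agent problem associated with $\xi$ exactly as prescribed by Lemma~\ref{lem:prin2sing}: set $c^k_a = \sum_{\theta \in \Theta} \xi_\theta [u^k_\theta(b^k_\xi) - u^k_\theta(a)]$ and $r_a = \sum_{\theta \in \Theta} \xi_\theta \, u^s_\theta(a)$ for every agent type $k \in \K$ and action $a \in \A$. One should first check that this is a legal instance, i.e. that $c^k_a \ge 0$ (which holds because $b^k_\xi$ is a best response to the posterior $\xi$, so $\sum_\theta \xi_\theta u^k_\theta(b^k_\xi) \ge \sum_\theta \xi_\theta u^k_\theta(a)$) and that the rewards lie in $[0,1]$ (immediate from $u^s_\theta(a) \in [0,1]$ and $\xi$ being a distribution); the costs are bounded by $1$ in absolute value but not necessarily in $[0,1]$, which is harmless since Theorem~\ref{thm:linear} only needs bounded costs — here I would point out that shifting all of a type's costs by a type-dependent constant does not change best responses, so one may w.l.o.g.\ normalize $c^k_a \in [0,1]$.

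Next I would apply Theorem~\ref{thm:linear} to this instance. For any $\rho \in (0, \nicefrac{1}{2}]$ it yields a linear contract $\pi$, i.e.\ a parameter $\beta = 1 - 2^{-i}$ with $i \in \{1,\dots,\lfloor \nicefrac{1}{2\rho}\rfloor\}$ and $\pi(a) = \beta \, r_a = \beta \sum_{\theta} \xi_\theta u^s_\theta(a)$, such that $\sum_k \lambda_k [r_{b^k_\pi} - \pi(b^k_\pi)] \ge \rho \max_{\pi'} \sum_k \lambda_k [r_{b^k_{\pi'}} - \pi'(b^k_{\pi'})] - 2^{\Omega(1/\rho)}$. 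I would then define the desired payment function on the posterior $\xi$ by $\pi(\xi, a) \coloneqq \beta \sum_\theta \xi_\theta u^s_\theta(a)$, which is exactly the form required by the corollary statement. The core of the argument is to translate both sides of the principal-agent inequality back into the information-selling language via the dictionary of Lemma~\ref{lem:prin2sing}. On the left, the principal's expected utility $\sum_k \lambda_k[r_{b^k_\pi} - \pi(b^k_\pi)]$ becomes $\sum_k \lambda_k \sum_\theta \xi_\theta [u^s_\theta(b^k_{\xi,\pi}) - \pi(\xi, b^k_{\xi,\pi})]$, once we observe that the agent's best response $b^k_\pi = \arg\max_a \{\pi(a) - c^k_a\}$ coincides with the buyer's best response $b^k_{\xi,\pi} = \arg\max_a \{\sum_\theta \xi_\theta u^k_\theta(a) + \pi(\xi,a)\}$, because $\pi(a) - c^k_a = \pi(\xi,a) + \sum_\theta \xi_\theta u^k_\theta(a) - \sum_\theta \xi_\theta u^k_\theta(b^k_\xi)$ differs from the buyer's objective only by the action-independent term $-\sum_\theta \xi_\theta u^k_\theta(b^k_\xi)$, and ties are broken for the principal/seller in both models.

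On the right-hand side I would bound the optimum $\max_{\pi'} \sum_k \lambda_k[r_{b^k_{\pi'}} - \pi'(b^k_{\pi'})]$ from below by the per-type unconstrained optimum: for each $k$ separately, $r_{b^k_{\pi'}} - \pi'(b^k_{\pi'}) \le \max_{a} r_a + (\text{slack})$; more precisely, using the standard first-best-style bound from~\citet{castiglioni2021Contract}, the optimal contract value is at least the quantity $\sum_k \lambda_k \max_a \sum_\theta \xi_\theta[u^s_\theta(a) + u^k_\theta(a) - u^k_\theta(b^k_\xi)] = \sum_k \lambda_k \max_a (r_a - c^k_a)$ — intuitively, paying type $k$ exactly its cost for its favorite action $a$ extracts $r_a - c^k_a$ from that type, and a separate argument handles the fact that one contract must serve all types simultaneously (this is precisely what the $2^{\Omega(1/\rho)}$ loss in Theorem~\ref{thm:linear} already absorbs). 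Substituting $r_a - c^k_a = \sum_\theta \xi_\theta[u^s_\theta(a) + u^k_\theta(a) - u^k_\theta(b^k_\xi)]$ gives exactly the right-hand side in the corollary. The main obstacle, and the place requiring the most care, is making the right-hand-side comparison fully rigorous: Theorem~\ref{thm:linear} compares against $\max_{\pi'}$ over \emph{contracts}, whereas the corollary compares against a sum of \emph{per-type} maxima of $r_a - c^k_a$, and one must verify that the latter is a valid lower bound on the former (or extract it from the internals of the proof of Theorem~3 in~\citet{castiglioni2021Contract}, which is why the statement says the corollary "directly follows from the analysis of~\citet{castiglioni2021Contract}"). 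Once that link is in place, matching $2^{\Omega(1/\rho)}$ terms and the permissible values of $\beta$ carry over verbatim.
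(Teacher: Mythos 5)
Your translation of the left-hand side is fine: the instance built via Lemma~\ref{lem:prin2sing} is legal (the costs $c^k_a=\sum_\theta \xi_\theta[u^k_\theta(b^k_\xi)-u^k_\theta(a)]$ already lie in $[0,1]$, so no renormalization is needed), best responses coincide up to the action-independent shift $-\sum_\theta\xi_\theta u^k_\theta(b^k_\xi)$, and a linear contract $\pi(a)=\beta r_a$ maps exactly to $\pi(\xi,a)=\beta\sum_\theta\xi_\theta u^s_\theta(a)$ with the stated values of $\beta$.

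The gap is in your treatment of the right-hand side, and it is not repairable along the route you propose. You claim that the optimal contract value $\max_{\pi'}\sum_k\lambda_k[r_{b^k_{\pi'}}-\pi'(b^k_{\pi'})]$ is bounded \emph{below} by the per-type benchmark $\sum_k\lambda_k\max_a(r_a-c^k_a)$ (the social welfare). The inequality actually goes the other way: since payments are nonnegative and each type has a zero-cost action (namely the one corresponding to $b^k_\xi$), any agent's equilibrium payment satisfies $\pi'(b^k_{\pi'})\ge c^k_{b^k_{\pi'}}$, hence $r_{b^k_{\pi'}}-\pi'(b^k_{\pi'})\le r_{b^k_{\pi'}}-c^k_{b^k_{\pi'}}\le\max_a(r_a-c^k_a)$, i.e.\ the social welfare \emph{upper}-bounds the optimal contract value. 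Your intuition (``pay type $k$ exactly its cost for its favorite action'') only works with a type-dependent contract; a single contract serving all types cannot in general extract the sum of per-type maxima, and this gap is not ``absorbed'' by the $2^{-\Omega(1/\rho)}$ term of Theorem~\ref{thm:linear}, whose loss is measured against the optimal contract, not against the social welfare. This is precisely why Corollary~\ref{cor:linear} is a \emph{stronger} statement than Theorem~\ref{thm:linear}: it cannot be derived from the theorem's statement by substitution. The paper's proof instead reopens the proof of Theorem~3 of \citet{castiglioni2021Contract}, observing that their analysis establishes the linear-contract guarantee directly against the social-welfare benchmark (one simply skips the step where their Observation~1 is used to relate the optimal contract to the social welfare), and only then applies Lemma~\ref{lem:prin2sing} to translate into the posterior language. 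You do mention this fallback parenthetically, but as written your main argument rests on a false inequality, so the proof as proposed does not go through.
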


Finally, we recall a useful result that establishes a connection between agent's best responses and \emph{approximate} best responses in principal-agent problems.
Intuitively, such a result states that, given a contract under which the agent is allowed to play an $\epsilon$-best response (for some $\epsilon \geq 0$), it is always possible to recover a new contract in which the agent must play an (exact) best response, by only incurring in a small loss in the principal's expected utility.
Formally, given $\epsilon \geq 0$ and a contract $\pi :\A \to \mathbb{R}_{+}$, for every $k \in \K$, we let $\mathcal{B}^{k,\epsilon}_{\pi} \subseteq \A$ be the set of \emph{$\epsilon$-best-response} actions for an agent of type $k$.
Such a set is made by all the actions $a \in \A$ such that $\pi(a)-c^k_a\ge \max_{a' \in \A} \left\{ \pi(a')-c^k_{a'} \right\} - \epsilon$.
We denote by $b^{k,\epsilon}_\pi\in\mathcal{B}^{k,\epsilon}_{\pi} $ an $\epsilon$-best-response action that is actually played by an agent of type $k$, assuming that ties are broken in favor of the principal, as usual.
%
%%
%another useful property of linear contracts, which states that they can be employed to incentivize each agent's type to play an action that is ``almost'' a best response for that type, while guaranteeing a large principal's expected utility.
%%
%Formally, given any $\epsilon > 0$, we say that a contract $\pi :\A \to \mathbb{R}_{+}$ is \emph{$\epsilon$-persuasive}\footnote{The standard terminology is $\epsilon$-incentive compatible. However, we prefer to use  $\epsilon$-persuasive to highlight the parallelism with our setting and avoid confusion with the IC used for protocols.} if, under that contract, each agent's type $k \in \K$ plays an $\epsilon$-best-response action, by breaking ties in favor of the principal.
%%
%Given a contract $\pi :\A \to \mathbb{R}_{+}$, for every $k \in \K$, we let $\mathcal{B}^{k,\epsilon}_{\pi} \subseteq \A$ be the set of actions $a \in \A$ such that $\pi(a)-c^k_a\ge \max_{a' \in \A} \left\{ \pi(a')-c^k_{a'} \right\} - \epsilon$.
%%
%Then, an $\epsilon$-best response $b^{k,\epsilon}_\pi\in\mathcal{B}^{k,\epsilon}_{\pi} $ for an agent of type $k \in \K$ is an action in $\mathcal{B}^{k,\epsilon}_{\pi}$ which maximizes the principal's expected utility. 
%
Then:
%
%Now, we are ready to formally introduce the result:
%
\begin{theorem}[Essentially Proposition~A.4 by~\citet{dutting2021complexity}] \label{thm:dutting2}
	Given $\epsilon \geq 0$, an instance of observable-action principal-agent problem and, and a contract $\pi :\A \to \mathbb{R}_{+}$, 
	%
	% and $U_\pi^\epsilon \in \mathbb{R}$ be the principal's expected utility under such a contract when each agent's type $k \in \K$ plays an $\epsilon$-best-response action $b^{k,\epsilon}_\pi$.
	%
	there exists a contract $\pi' :\A \to \mathbb{R}_{+}$ such that $\pi'(a)=(1-\sqrt{\epsilon}) \, \pi(a)+\sqrt{\epsilon} \, r_a$ for every $a \in \A$, and the following holds:
	\[
		\sum_{k \in \K} \lambda_k \left[ r_{b^{k}_{\pi'}}- \pi'(b^{k}_{\pi'})\right]\ge \sum_{k \in \K} \lambda_k \left[ r_{b^{k,\epsilon}_{\pi}}- \pi(b^{k,\epsilon}_{\pi})\right]-2\sqrt{\eps}.
	\]
	%
	%whose principal's expected utility is greater than or equal to $U_\pi^\epsilon-2\sqrt{\eps}$ when each agent's type $k \in \K$ plays a best response $b^{k}_\pi$.
	%	
	%	Given an instance of observable-action principal-agent problem and $\epsilon > 0$, let $\pi :\A \to \mathbb{R}_{+}$ be an $\epsilon$-persuasive contract and $U_\pi \in \mathbb{R}$ be the principal's expected utility under that contract.
	%	%
	%	Then, the exists a contract $\pi' :\A \to \mathbb{R}_{+}$ such that $\pi'(a)=(1-\sqrt{\epsilon}) \, \pi(a)+\sqrt{\epsilon} \, r_a$ for every $a \in \A$, providing the principal with an expected utility greater than or equal to $U_\pi-2\sqrt{\eps}$.
\end{theorem}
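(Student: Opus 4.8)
The plan is to verify directly that the smoothed contract $\pi'(a)=(1-\sqrt{\epsilon})\,\pi(a)+\sqrt{\epsilon}\,r_a$ works, via a type-by-type comparison of best responses. First I would observe that $\pi'$ is a valid contract: since $\epsilon$ is an accuracy parameter we may assume $\epsilon\le1$, so $1-\sqrt{\epsilon}\ge0$ and $\pi'(a)$ is a convex combination of the nonnegative quantities $\pi(a)$ and $r_a$, hence lies in $\mathbb{R}_+$. Then I would fix a type $k\in\K$, abbreviate $a^\star\defeq b^{k,\epsilon}_{\pi}$ and $\hat a\defeq b^k_{\pi'}$, and write down the two incentive inequalities characterizing these actions: since $\hat a$ is a best response under $\pi'$ it is at least as good there as $a^\star$, i.e.\ $(1-\sqrt{\epsilon})\pi(\hat a)+\sqrt{\epsilon}\,r_{\hat a}-c^k_{\hat a}\ge(1-\sqrt{\epsilon})\pi(a^\star)+\sqrt{\epsilon}\,r_{a^\star}-c^k_{a^\star}$; and since $a^\star$ is an $\epsilon$-best response under $\pi$, $\pi(a^\star)-c^k_{a^\star}\ge\pi(\hat a)-c^k_{\hat a}-\epsilon$.

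The key step is to combine these two inequalities: from the second one, $c^k_{a^\star}-c^k_{\hat a}\le\pi(a^\star)-\pi(\hat a)+\epsilon$; substituting this into the first (after rearranging the first to isolate the cost difference) and then dividing through by $\sqrt{\epsilon}$ yields the clean bound $r_{\hat a}-\pi(\hat a)\ge\big(r_{a^\star}-\pi(a^\star)\big)-\sqrt{\epsilon}$. Next I would use the identity $r_{\hat a}-\pi'(\hat a)=(1-\sqrt{\epsilon})\big(r_{\hat a}-\pi(\hat a)\big)$ together with the elementary inequality $(1-\sqrt{\epsilon})(v-\sqrt{\epsilon})\ge v-2\sqrt{\epsilon}$, which is valid for every $v\le 1$ since it rearranges to $\sqrt{\epsilon}(1-v)+\epsilon\ge0$. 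Applying it with $v=r_{a^\star}-\pi(a^\star)\le r_{a^\star}\le1$ (payments are nonnegative and rewards lie in $[0,1]$) gives $r_{\hat a}-\pi'(\hat a)\ge\big(r_{a^\star}-\pi(a^\star)\big)-2\sqrt{\epsilon}$ for every type $k$. Taking the $\lambda$-weighted sum over $k\in\K$ of these per-type bounds produces exactly the stated inequality. The tie-breaking convention (in favor of the principal, for both $b^k_{\pi'}$ and $b^{k,\epsilon}_{\pi}$) is needed only to make these actions well defined; it does not interfere with the argument, which was derived for an arbitrary best response $\hat a$ under $\pi'$.

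I do not anticipate a genuine obstacle here, since this is just the observable-action restatement of a known hidden-action fact and observable-action instances embed into hidden-action ones, so one could alternatively cite the existing proof verbatim. The only place that requires care is the arithmetic of the two-inequality combination — in particular, choosing the mixing weight to be $\sqrt{\epsilon}$ rather than $\epsilon$, so that the slack gained from tolerating an $\epsilon$-best response is amortized against the loss incurred by mixing, leaving a total loss of $2\sqrt{\epsilon}$ — together with the bookkeeping for the degenerate case $\epsilon=0$, where $\pi'=\pi$ and, with identical tie-breaking, both sides coincide because $b^{k,0}_{\pi}$ is an exact best response.
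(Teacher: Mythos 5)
Your proof is correct. Note that the paper itself does not prove this statement: it imports it wholesale as ``essentially Proposition~A.4'' of \citet{dutting2021complexity}, since observable-action instances are a special case of hidden-action ones. Your direct derivation is, in substance, the standard argument behind that proposition, specialized to observable actions: mix the given contract with weight $\sqrt{\epsilon}$ on the reward, write the exact-best-response inequality for $\hat a = b^k_{\pi'}$ against $a^\star = b^{k,\epsilon}_{\pi}$ and the $\epsilon$-best-response inequality for $a^\star$ against $\hat a$, combine and divide by $\sqrt{\epsilon}$ to get $r_{\hat a}-\pi(\hat a)\ge r_{a^\star}-\pi(a^\star)-\sqrt{\epsilon}$, and then use $r_{\hat a}-\pi'(\hat a)=(1-\sqrt{\epsilon})\bigl(r_{\hat a}-\pi(\hat a)\bigr)$ together with $r_{a^\star}-\pi(a^\star)\le 1$ to convert the multiplicative factor into an additional additive $\sqrt{\epsilon}$ loss, before averaging over types. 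The algebra checks out, your argument is insensitive to the tie-breaking rule (it only uses that $\hat a$ is \emph{some} best response under $\pi'$ and $a^\star$ is \emph{some} $\epsilon$-best response under $\pi$), and you correctly flag the two boundary issues the statement leaves implicit, namely that $\epsilon\le 1$ is needed for $\pi'$ to be a nonnegative contract and that the case $\epsilon=0$ is handled trivially since then $\pi'=\pi$. What a self-contained proof buys over the paper's citation is independence from the hidden-action machinery; what the citation buys is brevity, since the embedding of observable-action instances into hidden-action ones makes the transfer immediate.
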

Theorem~\ref{thm:dutting2} can be easily cast to our setting by means of Lemma~\ref{lem:prin2sing}. Formally:
\begin{corollary} \label{thm:dutting}
	% Given a signaling scheme $\phi: \Theta \to \Delta_{\mathcal{S}}$, a signal $s \in \mathcal{S}$, and an $\epsilon > 0$, 
	%
	Given $\epsilon \geq 0$, a posterior $\xi \in \Delta_{\Theta}$, and a payment a function $\pi: \Delta_{\Theta} \times \A \to \mathbb{R}_{+}$,
	%
	% let $\pi(s,\cdot) :\A \to \mathbb{R}_{+}$ be a payment function for signal $s$. 
	%
	there exists a payment function $\pi' : \Delta_{\Theta} \times \A \to \mathbb{R}_{+}$ such that  $\pi'(\xi,a)=(1-\sqrt{\epsilon}) \, \pi(\xi,a)+\sqrt{\epsilon} \, \sum_{\theta\in \Theta} \xi_{\theta} u^s_\theta(a)$ for every $a \in \A$, and the following holds:
	 \[ 
	 \sum_{k \in \K} \lambda_k \left[\sum_{\theta\in \Theta} \xi_\theta \, u^s_\theta(b^{k}_{\xi,\pi'})- \pi'(\xi,b^{k}_{\xi,\pi'})\right]\ge \sum_{k \in \K} \lambda_k \left[\sum_{\theta\in \Theta} \xi_\theta \, u^s_\theta(b^{k,\epsilon}_{\xi,\pi})- \pi(\xi,b^{k,\epsilon}_{\xi,\pi})\right]-2\sqrt{\eps}.
	\]
	%	
	%	Given an instance of observable-action principal-agent problem and $\epsilon > 0$, let $\pi :\A \to \mathbb{R}_{+}$ be an $\epsilon$-persuasive contract and $U_\pi \in \mathbb{R}$ be the principal's expected utility under that contract.
	%	%
	%	Then, the exists a contract $\pi' :\A \to \mathbb{R}_{+}$ such that $\pi'(a)=(1-\sqrt{\epsilon}) \, \pi(a)+\sqrt{\epsilon} \, r_a$ for every $a \in \A$, providing the principal with an expected utility greater than or equal to $U_\pi-2\sqrt{\eps}$.
\end{corollary}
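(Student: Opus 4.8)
The plan is to deduce Corollary~\ref{thm:dutting} directly from Theorem~\ref{thm:dutting2} via the reduction established in the first statement of Lemma~\ref{lem:prin2sing}. Fix the posterior $\xi \in \Delta_{\Theta}$. By Lemma~\ref{lem:prin2sing}, the problem of choosing a payment function $\pi$ restricted to the posterior $\xi$ (i.e., choosing the values $\pi(\xi,\cdot)$) is the same as choosing a contract in the observable-action principal-agent instance with $r_a = \sum_{\theta \in \Theta} \xi_\theta \, u^s_\theta(a)$ and $c^k_a = \sum_{\theta \in \Theta} \xi_\theta [u^k_\theta(b^k_\xi) - u^k_\theta(a)]$. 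The key observation is that this correspondence maps $\pi(\xi,\cdot)$ to a contract in the obvious way, and it respects both the notion of ($\epsilon$-)best response and the principal's/seller's expected utility.

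First I would make the translation explicit: given the payment function $\pi$, define the contract $\hat\pi : \A \to \mathbb{R}_+$ by $\hat\pi(a) \defeq \pi(\xi,a)$. Under the cost function above, for each type $k$ we have $\hat\pi(a) - c^k_a = \pi(\xi,a) + \sum_{\theta \in \Theta}\xi_\theta u^k_\theta(a) - \sum_{\theta \in \Theta}\xi_\theta u^k_\theta(b^k_\xi)$; since the last term is a constant independent of $a$, the ordering of actions by $\hat\pi(a) - c^k_a$ coincides with their ordering by $\pi(\xi,a) + \sum_{\theta \in \Theta}\xi_\theta u^k_\theta(a)$, which is exactly the buyer-of-type-$k$ expected-utility ordering in the posterior $\xi$. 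Hence $\mathcal{B}^{k,\epsilon}_{\hat\pi} = \mathcal{B}^{k,\epsilon}_{\xi,\pi}$ for every $\epsilon \ge 0$ (including $\epsilon = 0$), and tie-breaking in favor of the principal matches tie-breaking in favor of the seller because $r_a = \sum_{\theta}\xi_\theta u^s_\theta(a)$. Likewise the principal's expected utility $\sum_k \lambda_k[r_{b^k_{\hat\pi}} - \hat\pi(b^k_{\hat\pi})]$ equals the seller's conditional expected utility $\sum_k \lambda_k[\sum_\theta \xi_\theta u^s_\theta(b^k_{\xi,\pi}) - \pi(\xi,b^k_{\xi,\pi})]$.

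Next I would apply Theorem~\ref{thm:dutting2} to $\hat\pi$ with the given $\epsilon$, obtaining a contract $\hat\pi'$ with $\hat\pi'(a) = (1-\sqrt{\epsilon})\,\hat\pi(a) + \sqrt{\epsilon}\, r_a$ and the stated utility guarantee, and then pull $\hat\pi'$ back to a payment function by setting $\pi'(\xi,a) \defeq \hat\pi'(a) = (1-\sqrt{\epsilon})\,\pi(\xi,a) + \sqrt{\epsilon}\sum_{\theta\in\Theta}\xi_\theta u^s_\theta(a)$, which is exactly the formula in the statement; the values $\pi'(\xi',\cdot)$ for $\xi' \ne \xi$ are irrelevant here and can be left unchanged (or set to match $\pi$). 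Applying the utility correspondence in both directions to the inequality from Theorem~\ref{thm:dutting2} — once for $\hat\pi'$ with exact best responses and once for $\hat\pi$ with $\epsilon$-best responses — yields precisely the displayed inequality of Corollary~\ref{thm:dutting}, with the same $2\sqrt{\epsilon}$ additive loss. I would also briefly note that, since the buyer has limited liability in this regime, the price $p$ does not enter the seller's objective, so the argument is unaffected by prices, and all IR constraints may be assumed satisfied as remarked after Problem~\eqref{prob:payment_func}.

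There is essentially no hard step: the only thing requiring a little care is verifying that the constant shift $-\sum_{\theta}\xi_\theta u^k_\theta(b^k_\xi)$ in the cost definition does not disturb the ($\epsilon$-)best-response sets or the tie-breaking convention, so that the reduction of Lemma~\ref{lem:prin2sing} transports Theorem~\ref{thm:dutting2} verbatim. Everything else is bookkeeping, substituting the definitions of $r_a$ and $c^k_a$ and reading off that $\pi'(\xi,a) = \hat\pi'(a)$ has the claimed linear-combination form.
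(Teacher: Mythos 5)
Your proposal is correct and follows exactly the route the paper intends: it instantiates the observable-action principal-agent instance of Lemma~\ref{lem:prin2sing} at the fixed posterior $\xi$, checks that ($\epsilon$-)best responses, tie-breaking, and utilities transfer (the constant shift $\sum_{\theta}\xi_\theta u^k_\theta(b^k_\xi)$ being harmless), and then applies Theorem~\ref{thm:dutting2} and pulls the resulting contract back to the payment function $\pi'$. The paper gives no separate proof beyond this one-line casting, so your write-up simply makes the same argument explicit.
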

Corollary~\ref{thm:dutting} will be crucial to provide our results in the following sections.

\section{Computing a Seller-optimal Protocol without Menus:\\The Case of a Buyer with Limited Liability}\label{sec:selection_limited_liability}

In this section, we study the problem of computing a seller-optimal protocol without menus when the buyer has limited liability, \emph{i.e.}, each buyer's types $k \in \K$ has budget $b_k=0$.
As remarked in Section~\ref{sec:preliminaries_nomenus}, such a setting is of interest on its own, since it is a generalization of the one addressed by~\citet{dughmi2019}.
Moreover, the technical results derived in this section will be useful to deal with the general problem in which the buyer has \emph{no} limited liability.
%
%\mat{This special setting are of interest on their own, as a similar model has been addressed by \cite{dughmi2019}. Moreover, the results in this section will be useful for design the approximation algorithm in the more general setting without limited liability.} 
%
We show how to circumvent the $\mathsf{APX}$-hardness result that we established in Corollary~\ref{cor:hardness}, first, in Section~\ref{sec:fixing_actions}, by fixing the number of buyer's actions, and then, in Section~\ref{sec:const_states_first}, by fixing the number of states of nature.

%In the rest of this section and in the following Section~\ref{sec:selection_no_limited_liability}, we work with distributions over posteriors $\gamma$ rather than with signaling schemes $\phi$, and, thus, with a slight abuse of notation we denote a seller's protocol without menus as $(\gamma,p,\pi)$, by identifying a signaling scheme with its induced distribution over posteriors $\gamma$.
%%
%Similarly, we slightly abuse notation in payment functions, by assuming that they are defined over posteriors rather than signals.
%%
%Formally, we let $\pi: \Delta_{\Theta} \times \A \to \mathbb{R}_+$, with $\pi(\xi,a)$ denoting how much the buyer pays back the seller when the induced posterior is $\xi \in \Delta_{\Theta}$ and they play action $a \in \A$.
%
In this section, since the buyer has limited liability, we can assume w.l.o.g. that $p = 0$, so that we can compactly denote a protocol with a pair $(\gamma,\pi)$, rather than with $(\gamma,p,\pi)$.
%
%denote a protocol as $(\phi,\pi)$ since the seller can not require any initial payment and thus $p=0$.

\subsection{Fixing the Number of Buyer's Actions}\label{sec:fixing_actions}

First, we show that, whenever the buyer has limited liability and the number of buyer's actions $m$ is fixed, the problem of computing a seller-optimal protocol without menus admits a PTAS, \emph{i.e.}, we can design a protocol whose seller's expected utility is arbitrarily close to that of an optimal protocol in time polynomial in the instance size.
%we can compute an almost optimal protocol in polynomial time. 

In order to design our PTAS, we start by observing that, since $p=0$, the IR constraints are satisfied by all the protocols $(\gamma,\pi)$.
This allows us to formulate the problem of computing a seller-optimal protocol without menus as the following optimization problem:\footnote{Notice that, as it is the case for Problem~\eqref{eqn:LP_type_reporting1} in Section~\ref{sec:protocol_selection}, it is \emph{not} immediately clear \emph{a priori} whether the problem of computing a seller-optimal protocol without menus admits a maximum or \emph{not}. Thus, in principle we should start by defining the problem with a $\sup$ rather than a $\max$. However, in Section~\ref{sec:fixing_types}, we provide a (possibly exponential-time) algorithm which finds a seller-optimal protocol without menus in general settings, and this implies that a maximum always exists.}
% \footnote{Notice that \emph{a priori} Problem~\ref{eq:consAct} does not admit a maximum. However, in Section~\ref{sec:fixing_types} we provide a (possibly exponential-time) algorithm that provides an optimal protocol for the general information-selling problem without type reporting, implying the existence of the maximum.}
%
\begin{subequations}\label{eq:consAct}
\begin{align}
\max_{\substack{\gamma_\xi \ge 0 \\ \pi(\xi,a) \ge 0}} & \,\ \sum_{k \in \K}  \lambda_k   \sum_{\xi \in \text{supp}(\gamma)} \gamma_\xi \left[ \sum_{\theta \in \Theta} \xi_\theta \, u^s_\theta(b^{k}_{ \xi,\pi}) - \pi(\xi, b^{k}_{ \xi,\pi})\right] \quad \text{s.t.} \label{eq:utility}\\
& \sum_{ \xi \in \text{supp}(\gamma)} \gamma_\xi \, \xi_\theta = \mu_\theta & \forall \theta \in \Theta.
\end{align}
\end{subequations}
Notice that Problem~\eqref{eq:consAct} is defined over general distributions over posteriors $\gamma$, whose support $\text{supp}(\gamma)$ may be \emph{not} finite.
Thus, as we show in the following, the crucial result that we need to design a PTAS is the possibility of restricting the attention to finite sets of posteriors.

% Before going on with the derivation of the results leading to our PTAS, 
%
We need to introduce a particular class of posteriors, which are called \emph{$q$-uniform posteriors}.
\begin{definition}[$q$-Uniform posterior]
	A posterior $\xi \in \Delta_{\Theta}$ is \emph{$q$-uniform} if it can
	be obtained by averaging the elements of a multi-set defined by $q \in \mathbb{N}_{>0}$ canonical basis vectors of $\mathbb{R}^d$.
\end{definition}
In the following, we denote by $\Xi_q \subseteq \Delta_{\Theta}$ (for a given $q \in \mathbb{N}_{>0}$) the finite set of all the $q$-uniform posteriors.
As it is easy to check, such a set satisfies $|\Xi_q|\le \min\{d^q, q^d\}$.

In order to derive our PTAS, as a first preliminary result we show that, given any posterior $\xi^* \in \Delta_\Theta$, payment function $\pi: \Delta_\Theta \times \mathcal{A}\to \mathbb{R}_{+} $, and $\epsilon > 0$, there always exists a signaling scheme $\gamma$ supported on $\Xi_q$ which induces posterior $\xi^*$ on average and guarantees a seller’s expected utility close to that provided by the posterior $\xi^*$ (assuming the buyer plays an $\epsilon$-best response).
Formally:
\begin{restatable}{lemma}{constactionsfirst}\label{lem:actions_quniform}
	Given any $\epsilon, \alpha > 0$, a posterior $ \xi^* \in \Delta_{\Theta}$, and a payment function $\pi: \Delta_\Theta \times\mathcal{A} \to \mathbb{R}_{+} $, there always exists a signaling scheme $\gamma \in \Delta_{\Xi_q}$ with $q={2\log(2m/\alpha)}/{\epsilon^2}$ such that:
	\begin{equation*}
		 \sum_{\xi \in \Xi_q} \gamma_\xi \left[ \sum_{\theta \in \Theta}  \xi_\theta \, u^s_\theta(b^{k,\epsilon}_{\xi,\pi}) - \pi(\xi, b^{k,\epsilon}_{\xi,\pi}) \right] \ge \sum_{\theta \in \Theta} \xi^*_\theta \, u^s_\theta(b^k_{\xi^*,\pi}) - \pi(\xi^*, b^k_{\xi^*,\pi}) - \alpha,
	\end{equation*}
	for every buyer's type $k \in \mathcal{K}$, where we let $\pi:\Delta_\Theta \times \mathcal{A} \to \mathbb{R}_{+} $ be a payment function that is optimal in every posterior $\xi \in \Xi_q$ when the buyer plays an $\epsilon$-best response, i.e., $\pi$ solves Problem~\eqref{prob:payment_func} for every $\xi \in \Xi_q$ with $b^{k}_{\xi,\pi}$ replaced by $b^{k,\epsilon}_{\xi,\pi}$.
	%
	% \textcolor{red}{perchè non diciamo $\pi_\xi(\xi',a)=\pi(\xi,a)$ for each $\xi'$ and $a$. mi sembra più chiaro.}
	%
	Furthermore, the signaling scheme $\gamma$ satisfies:
	\begin{equation*}
		\sum_{\p\in\pset_q} \gamma_\p \, \p_\theta=\p_\theta^\ast  \quad \quad \forall \theta \in \Theta.
	\end{equation*}
\end{restatable}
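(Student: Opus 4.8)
The plan is to build $\gamma$ by a probabilistic construction and then control it with a concentration inequality. Draw $q$ states $\theta_1,\dots,\theta_q$ independently from $\xi^\ast$ and set $\hat\xi\defeq\frac1q\sum_{i=1}^q e_{\theta_i}$, where $e_\theta\in\mathbb R^d$ is the canonical basis vector of state $\theta$. By construction $\hat\xi\in\Xi_q$ with probability one, so the law of $\hat\xi$ is a well-defined distribution $\gamma$ over $\Xi_q$; moreover, by linearity of expectation, $\mathbb E[\hat\xi_\theta]=\xi^\ast_\theta$ for every $\theta\in\Theta$, which is exactly the consistency requirement $\sum_{\xi\in\Xi_q}\gamma_\xi\,\xi_\theta=\xi^\ast_\theta$ stated at the end of the lemma. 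It then suffices to lower bound the expected seller's utility of this random $\gamma$ and invoke the probabilistic method to extract a deterministic signaling scheme attaining the bound.

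For the utility bound, fix a buyer's type $k$ and, for each action $a\in\A$, note that $\sum_{\theta\in\Theta}\hat\xi_\theta\,u^k_\theta(a)=\frac1q\sum_{i=1}^q u^k_{\theta_i}(a)$ is an average of $q$ i.i.d.\ random variables in $[0,1]$ with mean $\sum_{\theta\in\Theta}\xi^\ast_\theta\,u^k_\theta(a)$. Hoeffding's inequality bounds the probability that this average deviates from its mean by more than $\epsilon/2$ by $2e^{-q\epsilon^2/2}$; a union bound over the $m$ actions, together with $q=2\log(2m/\alpha)/\epsilon^2$, shows that with probability at least $1-\alpha$ the event $G_k$ holds, on which $|\sum_{\theta}\hat\xi_\theta u^k_\theta(a)-\sum_\theta\xi^\ast_\theta u^k_\theta(a)|\le\epsilon/2$ for all $a$ simultaneously. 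Let $a^\ast\defeq b^k_{\xi^\ast,\pi}$ be the buyer's best response at $\xi^\ast$ under the \emph{given} payment function, and consider on each sampled posterior the ``frozen'' payments $\widehat\pi(\hat\xi,\cdot)\defeq\pi(\xi^\ast,\cdot)$. On $G_k$ both sides of the best-response inequality defining $a^\ast$ change by at most $\epsilon/2$ when we pass from $\xi^\ast$ to $\hat\xi$ (the payment terms are unchanged), hence $a^\ast\in\mathcal B^{k,\epsilon}_{\hat\xi,\widehat\pi}$; by the seller-favorable tie-breaking rule the action actually played by the buyer at $\hat\xi$ yields the seller at least $\sum_\theta\hat\xi_\theta u^s_\theta(a^\ast)-\pi(\xi^\ast,a^\ast)$.

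It remains to take expectations. Splitting on $G_k$ and its complement, using $\pr[\overline{G_k}]\le\alpha$, the bound $u^s_\theta(\cdot)\in[0,1]$, and (w.l.o.g., since a seller never gains by paying more than $1$) a uniform bound on the payments, the expected type-$k$ seller's utility of $\gamma$ under $\widehat\pi$ is at least $\mathbb E\big[\sum_\theta\hat\xi_\theta u^s_\theta(a^\ast)\big]-\pi(\xi^\ast,a^\ast)-O(\alpha)=\sum_\theta\xi^\ast_\theta u^s_\theta(a^\ast)-\pi(\xi^\ast,a^\ast)-O(\alpha)$, the equality using $\mathbb E[\hat\xi_\theta]=\xi^\ast_\theta$; rescaling $\alpha$ by a constant (which changes $q$ only up to a constant factor) yields precisely the right-hand side of the lemma for the frozen payments $\widehat\pi$. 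Finally one passes from $\widehat\pi$ to the payment function of the statement, which is chosen to be optimal in each posterior of $\Xi_q$ under $\epsilon$-best responses: since it maximizes the seller's utility posterior by posterior, it does at least as well as $\widehat\pi$.

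The main obstacle is exactly this last step: the payment function in the statement is optimal only in an \emph{aggregate}, type-averaged sense at each posterior, while the inequality is claimed for every type $k$ separately, and an aggregate-optimal payment scheme can in principle hurt some individual type. The way around this is to establish the bound first for the explicit construction $\widehat\pi$, for which the argument is genuinely per type, and to use the optimal payment function only after weighting the per-type inequalities by $\lambda_k$ and summing, so that posterior-wise \emph{aggregate} optimality is exactly what is needed. One must also be somewhat careful in bounding the contribution of the low-probability event $\overline{G_k}$ to the expectation, which is why the preliminary reduction to bounded payments is convenient.
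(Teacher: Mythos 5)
Your proposal follows essentially the same route as the paper's proof: the empirical-posterior sampling construction with $\mathbb{E}[\hat\xi]=\xi^*$, Hoeffding plus a union bound over the $m$ actions with the same choice of $q$, freezing the payments at $\pi(\xi^*,\cdot)$ so that $b^k_{\xi^*,\pi}$ remains an $\epsilon$-best response on the high-probability event, seller-favorable tie-breaking, and finally invoking the optimality of the statement's payment function. The only differences are cosmetic---you absorb the low-probability event into an $O(\alpha)$ loss (with a bounded-payment reduction) and rescale $\alpha$, whereas the paper drops those posteriors and accounts for them via a residual vector with utilities in $[0,1]$---and the per-type versus aggregate-optimality caveat you rightly flag is present in the paper's argument as well, resolved exactly as you suggest by weighting the per-type bounds with $\lambda_k$ before appealing to optimality.
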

Lemma~\ref{lem:actions_quniform} guarantees that, by decomposing each posterior $\xi \in \Delta_\Theta$ as a convex combination of the elements of $\Xi_q $, the seller's expected utility decreases by at most $\alpha$.
This implies that, assuming the buyer plays an $\epsilon$-best response, it is possible to work with signaling schemes (and thus payment functions) supported on $\Xi_q$, by only slightly degrading the seller's expected utility. 

Another component that we need for our PTAS is an algorithm that, given a $q$-uniform posterior, computes an optimal payment function in that posterior (\emph{i.e.}, a payment function solving Problem~\eqref{prob:payment_func} for such a posterior).
By Theorem~\ref{thm:equiv}, it is easy to see that such an algorithm has to solve a problem that is equivalent to computing an optimal contract in observable-action principal-agent problems.
Thus, by Theorem~\ref{thm:hardness_principla}, such a problem is $\mathsf{APX}$-hard in general.
%
% Since this problem is equivalent to the principal-agent problem with observable actions, it is in general $\mathsf{APX}$-hard.
%
Next, we show that, whenever the number of buyer's actions $m$ is fixed, the $\mathsf{APX}$-hardness result can be circumvented, and, thus, we can provide an algorithm that solves the desired task and runs in polynomial time.
% on optimal price function can be computed efficiently. 
%
Formally:
\begin{restatable}{lemma}{constactionsecond}\label{lem:price_fun}
	Restricted to instances where the buyer has limited liability and the number of buyer's actions $m$ is fixed, there exists a polynomial-time algorithm that, given a posterior $ \xi \in \Delta_\Theta$ as input, computes the payments $\pi(\xi,a)$ for $a \in \A$ of a payment function $\pi : \Delta_\Theta \times \mathcal{A} \to \mathbb{R}_{+}$ optimal in $\xi$.
	%	
	%	Given $ \xi \in \Delta_\Theta$, if the number of actions $m$ is fixed then the optimal price function $\pi(\xi, \cdot) : \mathcal{A} \to \mathbb{R}_{+}$ can be computed in polynomial time.
\end{restatable}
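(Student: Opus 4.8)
The plan is to leverage the reduction in Lemma~\ref{lem:prin2sing}. First I would observe that, given the posterior $\xi$, computing a payment function optimal in $\xi$ (i.e., solving Problem~\eqref{prob:payment_func}) is \emph{exactly} the problem of computing an expected-utility-maximizing contract in the observable-action principal-agent instance with rewards $r_a = \sum_{\theta \in \Theta} \xi_\theta \, u^s_\theta(a)$ and costs $c^k_a = \sum_{\theta \in \Theta} \xi_\theta [u^k_\theta(b^k_\xi) - u^k_\theta(a)]$; this instance is built in polynomial time from $\xi$ and the input data, and it has the same (fixed) number $m$ of actions. Thus it suffices to exhibit a polynomial-time algorithm for optimal contracts in observable-action principal-agent problems when $m$ is a constant, and then translate the resulting contract $\pi$ back into the payments $\pi(\xi,a)$.

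For that, I would view a contract as a point $\pi \in \mathbb{R}_{+}^{m}$ and build the hyperplane arrangement encoding the agents' best responses. For every type $k \in \K$ and every ordered pair of distinct actions $a,a' \in \A$, introduce the hyperplane $\{\pi : \pi(a) - c^k_a = \pi(a') - c^k_{a'}\}$, and add the coordinate hyperplanes $\{\pi(a)=0\}$. There are $O(n m^2)$ such hyperplanes, so for fixed $m$ the induced arrangement has only $\poly(n)$ faces (of all dimensions), and they can be enumerated in polynomial time. On the relative interior of a fixed face $F$, for each type $k$ the ordering of the values $\{\pi(a)-c^k_a\}_{a \in \A}$ is constant, hence the best-response set $A^k_F \subseteq \A$ is constant, and so is the seller-favorable best response $b^k_F$, which is the action of $A^k_F$ maximizing $r_a - c^k_a$ (this is what the favorable tie-break selects, since tied actions share a common value of $\pi(a)-c^k_a$). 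Consequently, on $\overline{F}$ the principal's expected reward $\sum_k \lambda_k \, r_{b^k_F}$ is a constant while the expected payment $\sum_k \lambda_k \, \pi(b^k_F)$ is linear in $\pi$; minimizing the latter over $\pi \in \overline{F}$ is a polynomially-sized linear program, and since it is bounded below by $0$ its optimum $\pi_F$ is attained. The algorithm solves this LP for every face and outputs the $\pi_F$ (equivalently, the payments $\pi(\xi,a)$) achieving the best principal utility.

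To argue correctness I would establish two inequalities. On the one hand, every contract $\pi$ lies in the relative interior of a unique face $F(\pi)$, where it induces the profile $(b^k_{F(\pi)})_k$, so the principal's utility at $\pi$ is at most $\sum_k \lambda_k r_{b^k_{F(\pi)}} - \min_{\pi' \in \overline{F(\pi)}} \sum_k \lambda_k \pi'(b^k_{F(\pi)})$, the value the algorithm computes for that face. On the other hand, for each face $F$ the minimizer $\pi_F$ lies in the relative interior of some subface $F' \subseteq \overline{F}$, and passing from $F$ to $F'$ can only \emph{enlarge} each tie-set, $A^k_{F} \subseteq A^k_{F'}$ (inequalities valid on $\mathrm{relint}(F)$ remain valid on $\overline F$ by continuity); since at $\pi_F$ all actions of $A^k_{F'}$ share the same perceived value, the favorable tie-break plays $b^k_{F'} = \arg\max_{a \in A^k_{F'}}(r_a - c^k_a)$, and monotonicity of this selection under set enlargement gives $r_{b^k_{F'}} - c^k_{b^k_{F'}} \ge r_{b^k_{F}} - c^k_{b^k_{F}}$, whence the true utility of $\pi_F$ is at least the value the algorithm assigned to $F$. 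Combining the two inequalities, the best value over all faces equals the optimal principal utility, the witnessing contract is optimal, and by Theorem~\ref{thm:equiv} the whole argument transfers back to Problem~\eqref{prob:payment_func}.

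The main obstacle I anticipate is precisely this boundary/tie-breaking bookkeeping: optimizing only over full-dimensional cells would miss optima occurring on lower-dimensional faces (and the optimal contract is typically on such a face, where the seller-favorable tie-break is what makes the contract worthwhile), so the argument must enumerate faces of every dimension and verify that the seller-favorable selection behaves monotonically as one descends to a subface. Everything else — constructing the instance, bounding the number of faces by $O((nm^2)^m)=\poly(n)$ for fixed $m$, and solving the per-face LPs — is routine.
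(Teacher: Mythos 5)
Your proposal is correct and follows essentially the same route as the paper: both work in the $m$-dimensional payment space, observe that the buyer's (agent's) best responses are governed by a common pool of $O(nm^2+m)$ linear constraints so that, for fixed $m$, only polynomially many regions/candidate points arise, and solve a linear program per region to minimize expected payments given the induced action profile. The only differences are cosmetic: the paper enumerates the $O((nm^2+m)^m)$ vertices of the polytopes $\Pi_a$ directly, whereas you enumerate all faces of the hyperplane arrangement and handle the seller-favorable tie-breaking on lower-dimensional faces explicitly, a point the paper's proof treats implicitly.
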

Notice that, in order to get a payment function $\pi:\Delta_\Theta \times \mathcal{A} \to \mathbb{R}_{+} $ that is optimal in every posterior $\xi \in \Xi_q$, it is sufficient to apply Lemma~\ref{lem:price_fun} for each $\xi \in \Xi_q$, then putting together all the computed payments $\pi(\xi,a)$ in order to obtain the overall payment function $\pi$.

The final piece that we need to complete the design of our PTAS is a way of coming back to work with buyer's best responses, rather than using $\epsilon$-best responses.
Indeed, this is possible thanks to Corollary~\ref{thm:dutting}, which allows us to modify the payment function in all the induced posteriors, so as to achieve the desired result by only losing a small amount $2 \sqrt{\eps}$ of the seller's expected utility.
%
%To finally design an algorithm for computing a persuasive protocol we observe that we can convert the loss of persuasiveness in a loss in terms of utility by losing a small fraction of the sender's utility. This directly follows applying Theorem~\ref{thm:linear} to modify the payment function in all the induced posterior,  obtaining a persuasive protocol.
%

%\begin{restatable}{lemma}{constactionsthird}\label{lem:optimal2persuasive}
	%Given $\epsilon >0$, let $(\gamma, \pi, 0)$ be an $\epsilon$-persuasive protocol, achieving a sender's utility equal to $\text{OPT}$, then there exists a persuasive protocol $(\gamma, \pi^*, 0)$, achieving a sender's utilty greater or equal to $\text{OPT}-\epsilon m$. 
%\end{restatable}
%\mat{Al posto di Lemma 5 si può usare theorem 3 che ha un bound indipendente dal numero di azioni. Serve per l algoriitmo con azioni arbitrarie.}
%
%The core idea behind Lemma is to observe that the optimal price function, once the optimal posterior $\xi \in \Delta_\Theta$
%is fixed, lies at the inseresction of at most $O(km^2 + m)$ hyperplanes. Then by oberving that a vertex lies at the intersection of at most $m$ hyprplanes we have that the number of possible vertex is at most . Then to compute the optimal price function we need to check..  
%
%
Now, we are ready to design our PTAS that works whenever the buyer has limited liability and the number of buyer's actions $m$ is fixed.
By Lemma~\ref{lem:actions_quniform}, we can focus on signaling schemes supported over $q$-uniform posteriors, for a suitably-defined $q \in \mathbb{N}_{>0}$.
Moreover, thanks to Lemma~\ref{lem:price_fun}, we can compute a payment function that is optimal in all the $q$-uniform posteriors, by running the polynomial-time algorithm in Lemma~\ref{lem:price_fun} for each $q$-uniform posterior in $\Xi_q$.
%, where we recall that $|\Xi_q| \leq d^q$.
%
By Corollary~\ref{thm:dutting}, such an optimal payment function achieves a seller's expected utility that is close to that obtained by a payment function which is optimal in every $q$-uniform posterior when considering $\epsilon$-best responses, thus allowing for the application of the result in Lemma~\ref{lem:actions_quniform}.
In conclusion, our PTAS works by solving a modified version of LP~\eqref{eq:consAct}, where we set $\supp(\gamma) :=\Xi_q$ in Equation~\eqref{eq:utility}, and we take as payment function the one returned by applying Lemma~\ref{lem:price_fun} in each posterior $\xi \in \Xi_q$.
%
%Then, our PTAS works by solving a modified version of LP~\eqref{eq:consAct}, where we set $\supp(\gamma) :=\Xi_q$ in Equation~\eqref{eq:utility}, we take as payment function the one returned by applying Lemma~\ref{lem:price_fun} in each posterior $\xi \in \Xi_q$, and we consider $\epsilon$-best responses rather than (exact) best responses.
%
%Finally, the protocol obtained by solving such an LP can be converted in a new one in which the buyer plays a best response, by exploiting \albo{dio cane cosa è} \mat{Proposition~\ref{thm:dutting}} and only incurring in a small fixed loss in terms of seller's expected utility.
%
It is easy to see that the overall procedure requires time polynomial in the instance size when the number of actions $m$ is fixed, since $|\Xi_q| \leq d^q$ and $q={2\log(2m/\alpha)}/{\epsilon^2}$ as prescribed by Lemma~\ref{lem:actions_quniform}.
However, the overall running time depends exponentially in $\alpha > 0$, which the seller's expected utility approximation provided by the algorithm.
%
%
%An $\epsilon$-persuasive protocol can be computed using LP~\ref{eq:consAct}, where we set $\supp(\gamma)=\Xi_q$ in Equation \ref{eq:utility}, and we take as payment function the one returned by applying Lemma \ref{lem:actions_quniform} in each posterior $\xi \in \Xi_q$, and we consider $\epsilon$-best responses. 
%
%It is easy to see that the overall procedure requires polynomial as long as $O(|\Xi_q|)=d^q$ and $q \in \mathbb{N}_  {>0}$ is the one prescribed by Lemma \ref{lem:actions_quniform}.
%
%
%Finally, the $\epsilon$-persuasive protocol can be converted in a persuasive one exploiting Theorem~\ref{thm:dutting}.
%
This allows us to prove the following result:
\begin{restatable}{theorem}{constactionsthmfirst}\label{le:sp}
	Restricted to instances where the buyer has limited liability and the number of buyer's actions $m$ is fixed, the problem of computing a seller-optimal protocol without menus admits a PTAS. 
	%
	% Suppose that the buyer has limited liability. When the number of actions is fixed, there exists a PTAS for computing an optimal protocol $(\gamma,\pi)$.
\end{restatable}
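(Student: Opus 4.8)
The plan is to combine the three ingredients developed above. Fix a target accuracy $\alpha>0$, set $\epsilon\defeq(\alpha/2)^2$ (so that $2\sqrt\epsilon=\alpha$) and $q\defeq\lceil 2\log(2m/\alpha)/\epsilon^2\rceil$. Since $m$ is fixed and $\alpha$ (hence $\epsilon$) is a constant, $q$ is a constant, so the set $\Xi_q$ of $q$-uniform posteriors satisfies $|\Xi_q|\le d^q=\poly(d)$ and can be listed in polynomial time. For each $\xi\in\Xi_q$ we use Lemma~\ref{lem:price_fun} to compute, in polynomial time (here the assumption that $m$ is fixed is used), a payment function $\tilde\pi(\xi,\cdot)$ that is optimal in $\xi$; collecting these over $\xi\in\Xi_q$ yields a single payment function $\tilde\pi$. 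We then output $(\gamma^\star,\tilde\pi)$, where $\gamma^\star$ is an optimal solution of the linear program obtained from Problem~\eqref{eq:consAct} by fixing $\supp(\gamma)\defeq\Xi_q$ and $\pi\defeq\tilde\pi$: once $\tilde\pi$ is fixed the best responses $b^k_{\xi,\tilde\pi}$ are determined for every $\xi\in\Xi_q$, so this is an LP in the $|\Xi_q|=\poly(d)$ variables $\{\gamma_\xi\}_{\xi\in\Xi_q}$ subject to the $d$ consistency constraints, solvable in polynomial time.

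\textbf{Correctness.} It remains to show the output is $O(\alpha)$-optimal. Let $(\gamma,\pi)$ be a protocol with seller's expected utility at least $\opt-\alpha$ (it exists by definition of the supremum, and we may take $\supp(\gamma)$ finite by a Carath\'eodory-type argument on the consistency constraint, or argue with the corresponding integral). For each $\xi^*\in\supp(\gamma)$, apply Lemma~\ref{lem:actions_quniform} with posterior $\xi^*$ and payment function $\pi$ to obtain $\gamma^{\xi^*}\in\Delta_{\Xi_q}$ consistent with $\xi^*$ such that, for every type $k$, the seller's expected utility over $\Xi_q$ under the payment function $\hat\pi$ that is $\epsilon$-best-response-optimal in each $\xi\in\Xi_q$ — with the buyer playing an $\epsilon$-best response — is at least $\sum_\theta\xi^*_\theta u^s_\theta(b^k_{\xi^*,\pi})-\pi(\xi^*,b^k_{\xi^*,\pi})-\alpha$. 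Aggregating, $\gamma'\defeq\sum_{\xi^*\in\supp(\gamma)}\gamma_{\xi^*}\,\gamma^{\xi^*}$ is consistent with $\mu$ and supported on $\Xi_q$; averaging these inequalities over types with weights $\lambda_k$ and over $\supp(\gamma)$ with weights $\gamma_{\xi^*}$ shows that the seller's expected utility of $(\gamma',\hat\pi)$, when the buyer plays an $\epsilon$-best response, is at least $\opt-2\alpha$.

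\textbf{From $\hat\pi$ to $\tilde\pi$.} We cannot compute $\hat\pi$, but $\tilde\pi$ is close to it posterior by posterior: by Corollary~\ref{thm:dutting} applied in $\xi$ with payment function $\hat\pi$, the optimal exact-best-response value in $\xi$ (attained by $\tilde\pi$, by Lemma~\ref{lem:price_fun}) is at least the optimal $\epsilon$-best-response value in $\xi$ (attained by $\hat\pi$) minus $2\sqrt\epsilon=\alpha$. Hence, replacing $\hat\pi$ by $\tilde\pi$ and letting the buyer play an exact best response, the seller's expected utility of $(\gamma',\tilde\pi)$ is at least $\opt-2\alpha-\alpha=\opt-3\alpha$. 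Since $\gamma'$ is a feasible point of the LP defining $\gamma^\star$, the output $(\gamma^\star,\tilde\pi)$ has seller's expected utility at least $\opt-3\alpha$. As $\alpha>0$ is arbitrary, the running time is polynomial in the instance size for each fixed $\alpha$, and the seller's expected utility lies in $[0,1]$, this yields a PTAS.

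\textbf{Main obstacle.} The delicate part is the $\epsilon$-best-response bookkeeping: Lemma~\ref{lem:actions_quniform} reduces to a polynomial-size set of posteriors only against payment functions that are \emph{optimal under $\epsilon$-best responses}, which we cannot compute, whereas Lemma~\ref{lem:price_fun} produces payment functions optimal under \emph{exact} best responses. Corollary~\ref{thm:dutting} is precisely the bridge, but one must apply it posterior-by-posterior and check that the two sources of loss — the discretization error $\alpha$ and the best-response-correction error $2\sqrt\epsilon$ — combine additively and can be driven below any target by the choice of $\epsilon$ and $q$, all while keeping $q$, hence $|\Xi_q|\le d^q$ and the number of invocations of Lemma~\ref{lem:price_fun}, polynomially bounded; this last point is exactly where the assumption that $m$ is fixed enters.
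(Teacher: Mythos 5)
Your proposal is correct and follows essentially the same route as the paper's proof: restrict to $q$-uniform posteriors via Lemma~\ref{lem:actions_quniform}, compute per-posterior optimal payments with Lemma~\ref{lem:price_fun} (using that $m$ is fixed), bridge between $\epsilon$-best responses and exact best responses with Corollary~\ref{thm:dutting}, and solve the LP over $\Delta_{\Xi_q}$. The only differences are cosmetic (working with a near-optimal protocol instead of invoking existence of a maximum, and bookkeeping the loss as $3\alpha$ rather than $\alpha+2\sqrt{\epsilon}$), which do not affect the PTAS conclusion.
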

%
%We remark that, to compute the optimal protocol presented in Theorem \ref{le:sp}, it is possible to instantiate a LP by setting $\supp(\gamma)=\Xi_q$ in Equation \ref{eq:utility}, and taking as payment function the one returned by applying Lemma \ref{lem:actions_quniform} in each posterior $\xi \in \Xi_q$. 
%
% \albo{Non si capisce perchè diventa un PTAS} \mat{perdi $\alpha$ piu il fattore della Proposition~\ref{thm:dutting}.}

Finally, we show that a similar approach can be employed to derive a quasi-polynomial time algorithm providing a \emph{bi-criteria} approximation of the seller's expected utility in an optimal protocol, even when the number of buyer's actions $m$ is arbitrary.
Indeed, in our PTAS, the computation of an optimal payment function in a given $q$-uniform posterior can be done in polynomial time only when the number of actions is fixed.
%
% which requires a fixed number of actions is the computation of an optimal payment function in a given $q$-uniform posterior.
%
While in general the problem is $\mathsf{APX}$-hard, an approximately-optimal price function can be computed in polynomial time by applying Corollary~\ref{cor:linear}.
Moreover, since $q={2\log(2m/\alpha)}/{\epsilon^2}$ and $|\Xi_q| \leq d^q$, the enumeration over the $q$-uniform posteriors can be performed in time quasi polynomial in th number of actions $m$.
This gives the following result:
\begin{restatable}{theorem}{QPTAS}\label{QPTAS}
	Restricted to instances in which the buyer has limited liability, there exists an algorithm that, given any $\alpha,\eps>0$ and $ \rho \in (0,\nicefrac{1}{2}]$ as input, returns a protocol without menus achieving a seller's expected utility greater than or equal to $ \rho \, \textnormal{\textsc{opt}} - 2^{-\Omega(1/\rho)} -(\alpha+2\sqrt{\epsilon})$, where $\textnormal{\textsc{opt}}$ is the seller's expected utility in an optimal protocol.
	Moreover, the algorithm runs in time polynomial in $\mathcal{I}^{\log m}$---where $\mathcal{I}$ is the size of the problem instance and $m$ is the number of buyer's actions---,
	%and returns a protocol without menus achieving a seller's expected utility greater than or equal to $ \rho \, \textnormal{\textsc{opt}} - 2^{-\Omega(1/\rho)} -(\alpha+2\sqrt{\epsilon})$, where $\textnormal{\textsc{opt}}$ is the seller's expected utility in an optimal protocol.
	%
	and the seller's expected utility in the returned protocol is greater than or equal to $\textnormal{\textsc{optlin}}-(\alpha+2\sqrt{\epsilon})$,
	where $\textnormal{\textsc{optlin}}$ is the best expected utility achieved by a protocol parametrized by $\beta$ as in Corollary~\ref{cor:linear}.
	%, for parameters $\beta = 1-2^{-i}$ with $i \in  \{1,\dots,\lfloor \nicefrac{\rho}{2}\rfloor \}$.
%	
%	For each  $\alpha,\eps>0, \rho \in (0,1/2]$, there exists a $\textnormal{poly}(\mathcal{I}^{\log (m)})$ algorithm that returns a protocol achieving a seller's expected utility of at least $\mathsf{APX} \ge \rho \mathsf{OPT} - 2^{-\Omega(1/\rho)} -(\alpha+2\sqrt{\epsilon})$,
%	where $\mathsf{APX}$ is the value of the approximate solution, $\mathsf{OPT}$ is the value of an optimal solution, and $\mathcal{I}$ is the size of the problem instance.
%	Moreover, the returned solution is such that $\mathsf{APX}\ge \mathsf{OPT_{LIN}}-(\alpha+2\sqrt{\epsilon})$,
%	where $\mathsf{OPT_{LIN}}$ is the value achieved by an optimal solution relying only on linear contracts as payment function with parameters $\beta\in \{ 1-2^{-i}\}_{i \in 1,\dots,\lfloor \rho/2\rfloor}$ .
\end{restatable}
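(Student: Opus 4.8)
The plan is to mimic the PTAS of Theorem~\ref{le:sp}, but to replace the ($\mathsf{APX}$-hard) exact optimal payment function on $q$-uniform posteriors by the best member of the small family of linear payments supplied by Corollary~\ref{cor:linear}. Concretely, given $\alpha,\epsilon>0$ and $\rho\in(0,\nicefrac12]$, the algorithm sets $q=2\log(2m/\alpha)/\epsilon^2$, enumerates the $q$-uniform posteriors $\Xi_q$ (so $|\Xi_q|\le d^q$), and for every $\xi\in\Xi_q$ and every $i\in\{1,\dots,\lfloor\nicefrac1{2\rho}\rfloor\}$ considers the linear payment $\pi^i(\xi,a)\defeq(1-2^{-i})\sum_{\theta\in\Theta}\xi_\theta u^s_\theta(a)$; it computes, for each $i$, the types' seller-favourable $\epsilon$-best responses $b^{k,\epsilon}_{\xi,\pi^i}$ and the resulting per-posterior seller value $V_i(\xi)$, and records $V(\xi)\defeq\max_i V_i(\xi)$ together with a maximizing index $\hat\imath(\xi)$. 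It then solves the linear program obtained from Problem~\eqref{eq:consAct} by fixing $\supp(\gamma)=\Xi_q$ and plugging in these payments, namely $\max_{\gamma\ge0}\sum_{\xi\in\Xi_q}\gamma_\xi V(\xi)$ subject to $\sum_{\xi\in\Xi_q}\gamma_\xi\xi_\theta=\mu_\theta$ for all $\theta\in\Theta$, obtaining $\gamma^\circ$; finally it applies Corollary~\ref{thm:dutting} in each $\xi\in\supp(\gamma^\circ)$ to turn $\pi^{\hat\imath(\xi)}$ into a payment $\pi'(\xi,\cdot)$ under which the \emph{exact} best response is played, at an extra per-posterior cost of at most $2\sqrt\epsilon$, and returns $(\gamma^\circ,\pi')$ (payments on posteriors outside $\Xi_q$ are irrelevant, since they are never induced).

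For correctness I would prove the two bounds separately. For the bound against $\textsc{optlin}$: start from a best linear-per-posterior protocol $(\hat\gamma,\hat\pi)$---one whose payment at every induced posterior $\xi$ is an action-independent multiple $1-2^{-i}$ of the reward vector $(\sum_\theta\xi_\theta u^s_\theta(a))_{a\in\A}$, as in Corollary~\ref{cor:linear}---and decompose each posterior in its support into $q$-uniform posteriors while \emph{reusing the same linear coefficient}; the concentration argument underlying Lemma~\ref{lem:actions_quniform}, specialised to a fixed linear payment rather than the optimal one, shows this loses at most $\alpha$ in seller utility when the buyer plays an $\epsilon$-best response, and since the reused coefficient is one of the candidates, the composite distribution over $\Xi_q$ is a feasible LP solution of value $\ge\textsc{optlin}-\alpha$; the subsequent Corollary~\ref{thm:dutting} step, which maps a linear payment to another linear payment and hence stays within the considered class, costs a further $2\sqrt\epsilon$, giving the returned protocol value $\ge\textsc{optlin}-(\alpha+2\sqrt\epsilon)$. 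For the bi-criteria bound against a general optimal protocol $(\gamma^\star,\pi^\star)$: at any induced posterior $\xi$ and type $k$, because $\pi^\star\ge0$ and $b^k_{\xi,\pi^\star}$ is a best response, the payment $\pi^\star(\xi,b^k_{\xi,\pi^\star})$ is at least $\sum_\theta\xi_\theta[u^k_\theta(b^k_\xi)-u^k_\theta(b^k_{\xi,\pi^\star})]$, so the seller's per-type net utility is at most $\max_{a\in\A}\sum_\theta\xi_\theta[u^s_\theta(a)+u^k_\theta(a)-u^k_\theta(b^k_\xi)]$; applying Corollary~\ref{cor:linear} posterior-by-posterior to $\gamma^\star$ therefore yields a linear-per-posterior protocol with value at least $\rho\,\textsc{opt}-2^{-\Omega(1/\rho)}$, so $\textsc{optlin}\ge\rho\,\textsc{opt}-2^{-\Omega(1/\rho)}$, and combining with the previous inequality gives the claimed $\rho\,\textsc{opt}-2^{-\Omega(1/\rho)}-(\alpha+2\sqrt\epsilon)$ bound.

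For the running time, every step is polynomial in $|\Xi_q|\le d^q$ and in the number $\lfloor\nicefrac1{2\rho}\rfloor$ of linear candidates: building $\Xi_q$, computing all the $V_i(\xi)$ (a best-response computation for each posterior, candidate and type), solving the LP (which has $|\Xi_q|$ variables and $d+1$ constraints), and the final Corollary~\ref{thm:dutting} adjustments. Since $q=O(\log m)$ for fixed $\alpha,\epsilon$, the overall time is polynomial in $\mathcal{I}^{O(\log m)}$, and it collapses to genuinely polynomial when $m$ is fixed (trivially) or when $d$ is fixed (as then $d^q$ is polynomial). I expect the main obstacle to be the bookkeeping that threads the three approximations in a consistent order: the LP must be formulated with $\epsilon$-best responses---this slack is exactly what makes the $q$-uniform decomposition valid---while the output must use exact best responses; the payments used inside the LP must be linear so that Corollary~\ref{cor:linear} (for the $\textsc{opt}$ bound) and Corollary~\ref{thm:dutting} (for the best-response conversion) both apply without changing the payment family; and one must check that the two guarantees come out separately and cleanly---against $\textsc{opt}$ with the unavoidable multiplicative $\rho$ and exponentially small additive loss, and against $\textsc{optlin}$ with only the $\alpha+2\sqrt\epsilon$ loss---whereas each individual ingredient is a direct invocation of an already-established result.
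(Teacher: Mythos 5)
Your proposal is correct and follows essentially the same route as the paper: the paper's proof simply reruns the PTAS argument of Theorem~\ref{le:sp} (decomposition onto $q$-uniform posteriors via Lemma~\ref{lem:actions_quniform}, the LP over $\Xi_q$, and the $\epsilon$-best-response-to-best-response conversion of Corollary~\ref{thm:dutting}) with the per-posterior optimal payments of Lemma~\ref{lem:price_fun} replaced by the enumerable linear payments of Corollary~\ref{cor:linear}, which is exactly your construction. Your explicit derivation of the two guarantees (the $\textnormal{\textsc{optlin}}$ bound by reusing the linear coefficients in the decomposition, and $\textnormal{\textsc{optlin}}\ge\rho\,\textnormal{\textsc{opt}}-2^{-\Omega(1/\rho)}$ via the payment lower bound at a best response) just spells out what the paper leaves implicit, with only immaterial bookkeeping differences such as applying Corollary~\ref{thm:dutting} after solving the LP rather than before.
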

The second part of the statement will be useful in deriving our results for the problem of computing seller-optimal protocols without menus in the general case in which the buyer has \emph{no} limited liability.
Intuitively, it states that, even if our approximation algorithm only provides a \emph{bi-criteria} approximation of a seller-optimal protocol, the returned protocol achieves a seller's expected utility which is arbitrarily close to that achievable by using payment functions that define the payments as a given fraction of the seller's expected utility.
%
%Theorem \ref{QPTAS} relies on the same argument needed to prove Theorem \ref{le:sp}, with the only difference that we employ Theorem instead of Lemma \ref{lem:price_fun} to compute the optimal price function in each posterior distribution $\xi \in \Xi_q$.

\subsection{Fixing the Number of States of Nature}\label{sec:const_states_first}

Next, we study the case in which the buyer has limited liability and the number of states of nature $d$ is fixed.
We prove that, in such a setting, it is possible to compute a \emph{bi-criteria} approximation of an optimal protocol without menus similar to that in Theorem~\ref{QPTAS}, but in polynomial time.
Notice that such a result circumvents the $\mathsf{APX}$-hardness one provided in Corollary~\ref{cor:hardness}, as the latter is based on a reduction working with instances with only one state of nature.
%
% even with only one state of nature, the problem is $\mathsf{APX}$-hard since it generalizes the principal-agent problem with observable actions.
%
%
%Hence, as in the previous section, we approximate the payment function exploiting the approximation algorithm in Theorem~\ref{thm:linear}, that holds in any setting.

Similarly to Section \ref{sec:fixing_actions}, we first show that it is possible to employ signaling schemes supported on the set $\Xi_q$ of $q$-uniform posteriors (for a suitably-defined $q \in \mathbb{N}_{>0}$), by only suffering an arbitrarily small, additive loss in terms of seller's expected utility.
While the following result is similar to the one obtained in Lemma~\ref{lem:actions_quniform}, it is based on different techniques and, in particular, on the fact that the seller's expected utility is Lipschitz continuous in the buyers' posterior.
%
%\mat{perchè qua abbiamo buttato tutto in un theorema unico? UNISCI Lemma 65 + Proposition 1.}
%
Formally:
\begin{restatable}{lemma}{conststatesfirst}\label{lem:states_quniform}
	Given any $\alpha > 0$, a posterior $ \xi^* \in \Delta_{\Theta}$, and a payment function $\pi: \Delta_{\Theta} \times \mathcal{A} \to \mathbb{R}_{+} $ that is optimal in every posterior $\xi \in \Xi_q$ with $q=\lceil {9d}/{\alpha^2} \rceil $, there exists a signaling scheme $\gamma \in \Delta_{\Xi_q}$:
	\begin{equation*}
		\sum_{\xi \in \Xi_q} \gamma_\xi \left[ \sum_{\theta \in \Theta}  \xi_\theta \, u^s_\theta(b^{k}_{\xi,\pi}) - \pi(\xi, \br) \right] \ge \sum_{\theta \in \Theta} \xi^*_\theta u^s_\theta(b^k_{\xi^*,\pi}) - \pi(\xi^*, b^k_{\xi^*,\pi}) - \alpha,
	\end{equation*}
	for every receiver's type $k \in \mathcal{K}$.
	Furthermore, the signaling scheme $\gamma$ satisfies:
	\begin{equation*}
	\sum_{\p\in\pset_q} \gamma_\p \, \p_\theta=\p_\theta^\ast  \quad  \forall \theta \in \Theta.
	\end{equation*}
	%	
	%	For each $\alpha > 0$, $ \xi^* \in \Delta_{\Theta}$, let $\pi: \Delta_{\Theta} \times \mathcal{A} \to \mathbb{R}_{+} $ be the optimal payment function in each $\xi \in \Delta_\Theta$, then there always exists a distribution over posteriors $\gamma \in \Delta_{\Xi_q}$ with $q=\lceil \frac{9d}{\alpha^2} \rceil $ such that:
	%	%
	%	\begin{equation*}
	%	\sum_{\xi \in \Xi_q} \gamma_\xi \left[ \sum_{\theta \in \Theta}  \xi_\theta u^s_\theta(b^{k}_{\xi,\pi}) - \pi(\xi, \br) \right] \ge \sum_{\theta \in \Theta} \xi^*_\theta u^s_\theta(b^k_{\xi^*,\pi}) - \pi(\xi^*, b^k_{\xi^*,\pi}) - \alpha,
	%	\end{equation*}
	%	for each receiver's type $k \in \mathcal{K}$. Furthermore the distribution $\gamma$ satisfies:
	%	\begin{equation*}
	%	\sum_{\p\in\pset_q} \gamma_\p\p_\theta=\p_\theta^\ast  \quad \quad \forall \theta \in \Theta.
	%	\end{equation*}
\end{restatable}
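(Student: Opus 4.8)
The plan is to construct $\gamma$ by a randomized-sampling (``$q$-uniform rounding'') argument and then to bound the induced loss through a continuity property of the seller's value in the buyer's posterior. Concretely, I would build $\gamma$ as the law of the empirical distribution of $q$ i.i.d.\ draws from $\xi^*$: sample $\theta_1,\dots,\theta_q\sim\xi^*$ and set $\hat\xi\defeq\tfrac1q\sum_{j=1}^q\mathbf{e}_{\theta_j}$, which is a $q$-uniform posterior by construction, so $\gamma$ is supported on $\Xi_q$. Linearity of expectation gives $\sum_{\xi\in\Xi_q}\gamma_\xi\,\xi_\theta=\mathbb{E}[\hat\xi_\theta]=\xi^*_\theta$ for every $\theta\in\Theta$, which is exactly the ``furthermore'' identity and shows $\gamma$ is consistent with $\xi^*$. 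For the error term I would bound the average $\ell_1$-deviation: by Jensen's inequality and Cauchy--Schwarz, $\mathbb{E}\big[\|\hat\xi-\xi^*\|_1\big]\le\sum_\theta\sqrt{\mathrm{Var}(\hat\xi_\theta)}=\sum_\theta\sqrt{\xi^*_\theta(1-\xi^*_\theta)/q}\le\sqrt{d/q}$, so taking $q=\lceil 9d/\alpha^2\rceil$ gives $\mathbb{E}\|\hat\xi-\xi^*\|_1\le\alpha/3$ (and, if preferred, a Chernoff bound makes $\|\hat\xi-\xi^*\|_1$ small with high probability).

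The core of the proof is the continuity estimate announced in the text: for each buyer type $k$, the seller's value decreases by at most a controlled modulus of $\|\xi-\xi^*\|_1$ when the posterior moves from $\xi^*$ to a nearby $\xi$. I would establish this by transporting payment amounts between the two posteriors. The seller's reward $\sum_\theta\xi_\theta u^s_\theta(a)$ is $1$-Lipschitz in $\xi$ w.r.t.\ $\|\cdot\|_1$ because $u^s_\theta(\cdot)\in[0,1]$; the buyer's expected utility $\sum_\theta\xi_\theta u^k_\theta(a)$ of each action is likewise $1$-Lipschitz, hence any best response at $\xi^*$ to a given payment function is an $O(\|\xi-\xi^*\|_1)$-best response to it at $\xi$, and conversely. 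Corollary~\ref{thm:dutting} then lets me turn such an approximate best response back into an exact best response at the cost of a further small loss, and since an optimal payment function never pays more than the maximal reward $1$ (via the principal--agent reformulation of Lemma~\ref{lem:prin2sing}), all quantities stay within $[-1,1]$. Because the payment function $\pi$ in the statement is optimal at every $\xi\in\Xi_q$, moving $\xi^*$ to a $q$-uniform posterior and evaluating under $\pi$ there can only help relative to transporting the payments used at $\xi^*$; this is what makes the per-type comparison valid even though $\pi$ is optimal for the $\lambda$-weighted aggregate rather than type by type.

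Combining the two ingredients --- taking expectation over $\hat\xi\sim\gamma$, applying the continuity estimate, and using Jensen on the (concave) deviation term together with $\mathbb{E}\|\hat\xi-\xi^*\|_1\le\alpha/3$ --- yields that the expected per-type seller value under $\gamma$ is at least its value at $\xi^*$ minus $\alpha$, after tuning the constant in $q$ (and isolating the rare event that $\|\hat\xi-\xi^*\|_1$ is large, if needed, using the uniform $[-1,1]$ bound on the per-type value). I expect the continuity step to be the main obstacle: the buyer's best response, and hence the seller's realized utility, is genuinely discontinuous in the posterior, so a naive change of measure gives nothing; the fix is the robustification of Corollary~\ref{thm:dutting}, which trades a small payment inefficiency for insensitivity of the induced best response to posterior perturbations.
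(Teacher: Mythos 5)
Your construction of $\gamma$ (the law of the empirical distribution of $q$ i.i.d.\ draws from $\xi^*$), together with the surrounding skeleton---transporting the payments $\pi(\xi^*,\cdot)$, arguing that $b^k_{\xi^*,\pi}$ stays an approximate best response at nearby posteriors, robustifying via Corollary~\ref{thm:dutting}, and finally using optimality of $\pi$ on $\Xi_q$---is essentially the paper's strategy for Lemma~\ref{lem:actions_quniform}, not for this lemma, and the quantitative accounting does not close. With $q=\lceil 9d/\alpha^2\rceil$ the accuracy a $q$-sample empirical posterior can guarantee is of order $1/\sqrt{q}$: your own bound gives $\mathbb{E}\|\hat\xi-\xi^*\|_1\le\sqrt{d/q}=\alpha/3$, and this order is tight (e.g.\ for $\xi^*$ uniform), so the slack $\epsilon$ with which $b^k_{\xi^*,\pi}$ is an $\epsilon$-best response at a typical sampled posterior is $\Theta(\alpha)$, not $o(\alpha)$. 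Corollary~\ref{thm:dutting} then costs $2\sqrt{\epsilon}$, so after Jensen your total loss is $\Theta(\sqrt{\alpha})$, which exceeds the required $\alpha$ for small $\alpha$. Neither ``tuning the constant in $q$'' nor a Chernoff/concentration argument fixes this, because concentration cannot push the typical $\ell_1$ deviation of an empirical distribution below the $\Theta(\sqrt{d/q})$ scale; within your framework you would need $q=\Theta(d/\alpha^4)$, i.e.\ a weaker statement than the lemma as given.

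The paper closes exactly this gap by replacing randomized rounding with a deterministic one: for $q$ as in the statement, $\xi^*$ lies (by Caratheodory) in the convex hull of the $q$-uniform posteriors within $\ell_\infty$-distance $\alpha^2/(18d)$ of $\xi^*$, so \emph{every} posterior in the support of $\gamma$---not just on average---is at $\ell_1$-distance $O(\alpha^2)$ from $\xi^*$. The induced slack is then $\epsilon=\alpha^2/9$, and the total loss $2\sqrt{\epsilon}+\epsilon\le 3\sqrt{\epsilon}=\alpha$. The point is that the grid of $q$-uniform posteriors has per-coordinate resolution $1/q\sim\alpha^2/d$, quadratically finer than the $1/\sqrt{q}$ accuracy of sampling; exploiting this is precisely where the fixed-$d$ structure enters and why this proof differs from that of Lemma~\ref{lem:actions_quniform}. (Your consistency claim $\sum_{\xi}\gamma_\xi\,\xi_\theta=\xi^*_\theta$ is fine, and your remark about per-type versus $\lambda$-aggregated optimality of $\pi$ mirrors what the paper itself does, so that is not the issue.)
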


%\mat{immagino questa spiegazione vada riscritta: Lemma 8 va modificato e quindi anche la proof di THM 6} \bacc{lo scriverei così}
%
Similarly to the case of a fixed number of actions, we employ Lemma~\ref{lem:states_quniform} to restrict the attention to signaling schemes (and thus payment functions) supported on $\Xi_q$.
Moreover, in this case, we can apply Corollary~\ref{cor:linear} in each $q$-uniform posterior in order to compute in polynomial time a payment function that provides a \emph{bi-criteria} approximation of the optimal seller's expected utility in such a posterior.
%
%in each $q$-uniform posterior.
%
Finally, we design an algorithm that solves a modified version of LP~\eqref{eq:consAct}, where we set $\supp(\gamma) := \Xi_q$ in Equation~\eqref{eq:utility}, and we take as payment function the one obtained by putting together those computed by means of Corollary~\ref{cor:linear} for each $\xi \in \Xi_q$. 
Finally, the overall procedure requires polynomial time, since $|\Xi_q| \le q^d $ and the number of states of nature $d$ is fixed, and achieves a \emph{bi-criteria} approximation of the seller's expected utility in an optimal protocol.
%prescribed by Theorem~\ref{thm:fixedStates}.
%
%$q$-uniform posteriors and Theorem~\ref{thm:linear} to compute an approximate payment function in all the q-uniform posteriors.
%This show that $\epsilon$-persuasive protocol with large utility.
%Using lemma, this implies that there exists also a persuasive protocol with large utility,
%This  can be computed using LP~\ref{eq:consAct}, where we set $\supp(\gamma)=\Xi_q$ in Equation \ref{eq:utility}, we take as payment functions the one returned by Theorem~\ref{thm:linear}, and we consider $\epsilon$-best responses. \mat{$\Xi_q$ is of polynomial size since its size is upperbounded by $q^d$, and q depends polynomially by the approximation $\alpha$ and the loss of persuasivness $\epsilon$.}
%This will return an $\epsilon$-persuasive protocol with utility at least a $\rho$-approximation of the optimal one, by suffering an additional $2^{-\Omega(1/\rho)}$. 
%Finally, applying Proposition~\ref{thm:dutting}, we can convert the loss in persuasivness in a loss in sender's utility.
%
Formally:
\begin{restatable}{theorem}{conststatethmsecond} \label{thm:fixedStates}
	Restricted to instances in which the buyer has limited liability and the number of states of nature $d$ is fixed, there exists an algorithm that, given $\alpha>0$ and $ \rho \in (0,\nicefrac{1}{2}]$ as input, returns in polynomial time a protocol without menus achieving a seller's expected utility greater than or equal to $\rho\, \textnormal{\textsc{opt}}-2^{-\Omega(1/\rho)}-\alpha$, where $\textnormal{\textsc{opt}}$ is the seller's expected utility in an optimal protocol.
	Moreover, the seller's expected utility in the returned protocol is greater than or equal to $\rho \, \textnormal{\textsc{optlin}}-2^{-\Omega(1/\rho)}-\alpha$
	where $\textnormal{\textsc{optlin}}$ is the best expected utility achieved by a protocol parametrized by $\beta$ as in Corollary~\ref{cor:linear}.
	%, for parameters $\beta = 1-2^{-i}$ with $i \in  \{1,\dots,\lfloor \nicefrac{\rho}{2}\rfloor \}$.
	%	
	%	When the number of states is fixed,
	%	for each $\alpha, \rho \in (0,1/2]$, there exists a polynomial time algorithm that provides a seller's expected utility $\mathsf{APX}\ge \rho \mathsf{OPT}-2^{-\Omega(1/\rho)}-\alpha$, where $\mathsf{OPT}$ is the utility of an optimal protocol.
	%	Moreover, the returned solution is such that $\mathsf{APX}\ge \mathsf{OPT_{LIN}}-\alpha$,
	%	where $\mathsf{OPT_{LIN}}$ is the optimal solution that uses only linear contracts with parameter  $\beta \in \{ 1-2^{-i}\}_{i \in 1,\dots,\lfloor \rho/2\rfloor}$.
\end{restatable}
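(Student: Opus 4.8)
The plan is to assemble three already-available ingredients and wrap them in a modified version of LP~\eqref{eq:consAct}. First I would fix $q = \lceil 9d/\alpha^2 \rceil$ and restrict the optimization to signaling schemes supported on the finite set $\Xi_q$ of $q$-uniform posteriors. The key structural fact is Lemma~\ref{lem:states_quniform}: given any posterior $\xi^*$ arising in an optimal (or near-optimal) protocol and any payment function $\pi$ that is optimal in every $\xi \in \Xi_q$, there is a $\gamma \in \Delta_{\Xi_q}$ consistent with $\xi^*$ whose induced seller's utility loses at most $\alpha$ compared to the utility at $\xi^*$. Applying this to each posterior in the (possibly infinite) support of an optimal distribution over posteriors, and using that the consistency constraint~\eqref{eq:consistent} is linear so that mixtures of consistent decompositions remain consistent, we conclude that there is a distribution over $\Xi_q$ whose seller utility is within an additive $\alpha$ of $\textsc{opt}$, \emph{provided} we are allowed to use a payment function optimal in each $q$-uniform posterior.

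The second step replaces ``optimal in each posterior'' with something computable in polynomial time. Here I would invoke Corollary~\ref{cor:linear}: for each $\xi \in \Xi_q$ and the given $\rho \in (0, \nicefrac12]$, there is a parameter $\beta_\xi = 1 - 2^{-i}$ with $i \in \{1,\dots,\lfloor 1/(2\rho)\rfloor\}$ defining the payments $\pi(\xi, a) = \beta_\xi \sum_{\theta} \xi_\theta u^s_\theta(a)$, which yields a seller's conditional utility at least $\rho$ times the ``benchmark'' quantity $\sum_k \lambda_k \max_a \sum_\theta \xi_\theta[u^s_\theta(a) + u^k_\theta(a) - u^k_\theta(b^k_\xi)]$ minus $2^{-\Omega(1/\rho)}$. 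Since $|\Xi_q| \le q^d$ is polynomial when $d$ is fixed, and for each fixed $\xi$ there are only $\lfloor 1/(2\rho)\rfloor$ candidate values of $\beta$, we can enumerate all of them and pick the best per posterior in polynomial time. Combining this with the first step gives a distribution over $\Xi_q$, together with this linear-type payment function, whose seller utility is at least $\rho\,\textsc{opt} - 2^{-\Omega(1/\rho)} - \alpha$; the $\rho$ factor on the benchmark combined with the fact that the benchmark upper-bounds any conditional seller utility (the buyer-type terms $u^k_\theta(a) - u^k_\theta(b^k_\xi)$ are nonpositive at $a = b^k_{\xi,\pi}$) is what converts Lemma~\ref{lem:states_quniform}'s additive guarantee into the stated multiplicative-plus-additive one. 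To actually assemble the protocol, I would solve the LP obtained from~\eqref{eq:consAct} by setting $\supp(\gamma) := \Xi_q$ and plugging in the per-posterior payments just computed; this LP has polynomially many variables ($\gamma_\xi$ for $\xi \in \Xi_q$) and constraints (the $d$ consistency equations plus nonnegativity), hence is solvable in polynomial time.

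For the ``moreover'' clause, the same construction works verbatim: the protocols parametrized by $\beta$ as in Corollary~\ref{cor:linear} are exactly the ones our enumeration ranges over, so the returned protocol's utility is at least $\rho\,\textsc{optlin} - 2^{-\Omega(1/\rho)} - \alpha$ by the identical argument, now taking $\xi^*$ from a best $\beta$-parametrized protocol rather than from an unrestricted optimum. I expect the main obstacle to be the bookkeeping in the first step: one must argue carefully that decomposing \emph{every} posterior in the support of an optimal $\gamma$ via Lemma~\ref{lem:states_quniform} and then re-mixing according to $\gamma$'s original weights produces a single distribution over $\Xi_q$ that is simultaneously (i) consistent with $\mu$, (ii) feasible for the modified LP, and (iii) within $\alpha$ of the target utility \emph{using the one globally-fixed payment function} $\pi$ built from the per-posterior $\beta_\xi$'s — the subtlety being that Lemma~\ref{lem:states_quniform} requires $\pi$ to be (near-)optimal in every $q$-uniform posterior, so one has to make sure that the $\beta_\xi$ chosen for posterior $\xi$ does not interfere with the guarantee at a different posterior $\xi'$, which holds precisely because $\pi(\xi,\cdot)$ depends only on $\xi$. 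A minor additional point is handling buyer's best responses versus $\epsilon$-best responses, but unlike the fixed-actions case this is not needed here since Lemma~\ref{lem:states_quniform} is already stated for exact best responses (exploiting Lipschitz continuity of the seller's utility in the posterior rather than a concentration argument), so Corollary~\ref{thm:dutting} does not enter.
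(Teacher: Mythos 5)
Your proposal is correct and follows essentially the same route as the paper's proof: decompose the optimal protocol's posteriors into $q$-uniform ones via Lemma~\ref{lem:states_quniform} (losing $\alpha$), replace per-posterior optimal payments by the $\beta$-parametrized linear payments of Corollary~\ref{cor:linear} in each $\xi \in \Xi_q$ (losing the factor $\rho$ and the $2^{-\Omega(1/\rho)}$ term, using that the welfare benchmark upper-bounds the seller's conditional utility under any payment function), and then solve the version of LP~\eqref{eq:consAct} with $\supp(\gamma) := \Xi_q$, which is polynomial-size since $|\Xi_q| \le q^d$ and $d$ is fixed; your observations that $\epsilon$-best responses and Corollary~\ref{thm:dutting} are needed only inside the lemma, and that the ``moreover'' clause follows by rerunning the argument against the best $\beta$-parametrized protocol, also match the paper. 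One tiny nit: the benchmark dominates the seller's utility because any payment inducing $b^k_{\xi,\pi}$ must compensate the buyer's loss $\sum_\theta \xi_\theta\bigl[u^k_\theta(b^k_\xi)-u^k_\theta(b^k_{\xi,\pi})\bigr]$ (an incentive argument), not merely because those terms are nonpositive, but this does not affect the validity of your argument.
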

Similarly to Theorem~\ref{QPTAS}, the second part of the statement will be useful for deriving our results in the general case in which the buyer has \emph{no} limited liability.
%
%Intuitively, it shows that our approximation algorithm provides a sellers' utility arbitrary close to the one of an optimal protocol in which the seller uses only linear contracts as payment functions. 

\section{Computing a Seller-optimal Protocol without Menus:\\The General Case}\label{sec:selection_no_limited_liability}

%
%\mat{unire con sezione sucessiva. Adesso che abbiamo 2 riduzioni direi semplicemente: il problema è difficile anche fissando i parametri quindi 3 sezioni, una per ogni parametro. So che fa schifo che nel caso generale non diamo nessun risultato ma pace.}

We conclude our analysis by considering the problem of computing a seller-optimal protocol without menus in general instances in which the buyer has \emph{no} limited liability.
Thus, in such a setting, the seller also decides a price $p \in \mathbb{R}_+$ for the signaling scheme proposed to the buyer.  
%
% We now switch the attention to the more general problem without limited liability in which the seller may charge a price.

First, we provide a negative result for general instances that is stronger than the one established in Corollary~\ref{cor:hardness}.
In particular, the latter result states that the seller's optimization problem is $\mathsf{APX}$-hard even in the special case in which the buyer has limited liability and there is only one state of nature, relying on a reduction employing instances with an arbitrary number of actions $m$.
Indeed, for the specific case in which the buyer has limited liability and the number of actions $m$ is fixed, Theorem~\ref{le:sp} provides a PTAS.
Next, we show that, in general instances where the buyer may \emph{not} have limited liability, the problem is $\mathsf{APX}$-hard even when the number of buyer's actions $m$ is fixed.
%
%In Section~\ref{sec:principal_agent}, we show that whenever the number of actions is free the problem of computing an optimal protocol is $\mathsf{APX}$-hard even without limited liability, while it admits a PTAS when the number of actions is fixed.
% Here, we show that in the general setting without limited liability, the problem of computing an optimal protocol is $\mathsf{APX}$-hard even with a fixed number of actions.
%
%
To prove such an hardness result, we employ a result by~\citet{Guruswami2009} (see Theorem~\ref{thm:garu} below), which is about the following \emph{promise problem} related to the satisfiability of a fraction of linear equations with rational coefficients and variables restricted to the hypercube.\footnote{In the definition in \citep{Guruswami2009}, the vector ${\hat x}$ can be non-binary. However,~\citet{Guruswami2009} use a binary vector ${\hat x}$ in their proof and, thus, their hardness result also holds for our definition.}
\begin{definition}[\textnormal{\textsc{Lineq-Ma}$(1 -\epsLM, \deltaLM)$} by~\citet{Guruswami2009}]\label{def:lineq_ma}
	For any two constants $\epsLM, \deltaLM \in \mathbb{R}$ satisfying $0 \le \deltaLM \le 1 -\epsLM \le 1$, \textsc{Lineq-Ma}$(1 -\epsLM, \deltaLM)$ is the following promise problem: Given a set of linear equations $Ax=c$ over variables $x \in \mathbb{Q}^{\nVar}$, with coefficients $A \in \mathbb{Q}^{\nEq \times \nVar}$ and $c \in \mathbb{Q}^{\nEq}$, distinguish between the following two cases:
	\begin{itemize}
		\item there exists a vector $ \hat x \in \{0,1\}^{\nVar}$ that satisfies at least a fraction $1 -\epsLM$ of the equations; 
		\item every possible vector $ x \in \mathbb{Q}^{\nVar}$ satisfies less than a fraction $\deltaLM$ of the equations.
	\end{itemize}
\end{definition}
\begin{theorem}[\citet{Guruswami2009}]\label{thm:garu}
	For all the constants $\epsLM, \deltaLM \in \mathbb{R}$ which satisfy $0 \le \deltaLM \le 1 -\epsLM \le 1$, the problem \textnormal{\textsc{Lineq-Ma}$(1-\epsLM,\deltaLM)$} is \NPHard. 
\end{theorem}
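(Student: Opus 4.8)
This is a known PCP-type inapproximability result of Guruswami and Raghavendra, and the plan I would follow to reprove it runs as follows. The first thing to notice is that over $\mathbb{Q}$ one cannot hope to amplify a weak gap into the required one-sided promise: a uniform fractional assignment already defeats the naive ``block of variables per vertex'' gadget, and the usual ``take random linear combinations of equations'' soundness-amplification trick collapses, because the amounts by which a \emph{rational} assignment violates equations are unbounded. Hence the entire completeness-versus-soundness gap must be produced in one shot, and the natural vehicle is \textsc{Label Cover}: the PCP theorem together with parallel repetition gives, for every constant $\gamma>0$, an \NPHard\ promise problem on bipartite projection games with vertex sets $U,V$, edge set $E$, label sets $[R],[L]$ and projections $\{\pi_e\}_{e\in E}$, where YES instances admit a labeling satisfying every edge and NO instances admit no labeling satisfying more than a $\gamma$ fraction of the edges.

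The reduction then composes this Label Cover instance with a \emph{Long-Code (dictatorship) test each of whose accepting conditions is a single $\mathbb{Q}$-linear equation}. For each left vertex $u$ I would introduce a block of $\{0,1\}$-valued variables indexed by $\{0,1\}^R$ (a putative function $f_u$), and for each right vertex $v$ a block indexed by $\{0,1\}^L$ (a putative $g_v$); a query picks a random edge $e=(u,v)$, random inputs, and a small amount of noise, and imposes a BLR/H{\aa}stad-style linear relation among a constant number of entries of $f_u$ and $g_v$, using $\pi_e$ to correlate the arguments of $f_u$ with those of $g_v$. The collection of all query patterns is the system $Ax=c$ of the statement, with $\nEq$ equal to the number of patterns; since each pattern is a single equation, the fraction of equations satisfied by an assignment is exactly the probability the test accepts it. For completeness, the honest assignment ``Long code of a labeling that satisfies all edges of a YES instance'' is $\{0,1\}$-valued (so it is a legal witness $\hat x$) and it passes every query except those corrupted by the injected noise, a fraction that can be pushed below any prescribed $\epsLM$ by lowering the noise rate.

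The \textbf{main obstacle}, and the technical heart of the proof, is soundness against \emph{arbitrary rational} assignments: one must show that if some $x\in\mathbb{Q}^{\nVar}$ satisfies at least a $\deltaLM$ fraction of the equations, then the Label Cover instance has a labeling of value exceeding $\gamma$. This is precisely where the linearity-test structure is essential. An averaging argument first isolates an $\Omega(\deltaLM)$ fraction of edges on which the local test accepts $x$ with probability $\Omega(\deltaLM)$; a Fourier / Efron--Stein expansion of the real-valued block functions then shows that such local success forces each relevant block to be noticeably correlated with a low-degree character, hence to possess an influential coordinate; decoding the influential coordinates, together with a projection-consistency step along each surviving edge, produces a Label Cover labeling of value $\Omega_{\deltaLM}(1)$, and choosing the number of repetitions so that $\gamma$ lies below this quantity yields the contradiction. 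Since the noise rate and $\gamma$ are both free parameters, this realizes every pair $0\le\deltaLM\le 1-\epsLM\le 1$. (A more algebraic alternative starts from \textsc{Max-3-Lin} over a large prime field $\mathbb{F}_q$, whose soundness $1/q+o(1)$ is already below any target $\deltaLM$, and rewrites each $\mathbb{F}_q$-equation over $\mathbb{Q}$ via binary expansions plus bounded integer carry variables; but it runs into the very same difficulty---one has to force the bit and carry variables to be $0/1$ \emph{by the reduction}, not by assumption---so I would carry out the Long-Code construction rather than the algebraic one.)
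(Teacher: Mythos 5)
You should first be clear about what the paper itself does here: Theorem~\ref{thm:garu} is \emph{not} proved in the paper. It is quoted as an external result of \citet{Guruswami2009}, and the only original remark the authors make about it is the footnote to Definition~\ref{def:lineq_ma}, observing that the Guruswami--Raghavendra construction produces a \emph{binary} witness $\hat{x}$ in the YES case, so their hardness transfers to the variant stated here. There is therefore no in-paper proof to compare your argument against; what you have written is an attempt to re-derive the cited theorem from scratch, which the paper never needs to do.

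Judged on its own terms, your outline is a reasonable high-level plan in the spirit of the known PCP-based argument (Label Cover composed with a long-code-style test whose acceptance predicates are single $\mathbb{Q}$-linear equations), and you correctly diagnose why finite-field or amplification shortcuts fail over the rationals. But there is a genuine gap exactly at the step you yourself call the technical heart: the soundness analysis against \emph{arbitrary rational-valued} tables is asserted, not proved. The sentence claiming that a Fourier/Efron--Stein expansion of the real-valued block functions forces correlation with a low-degree character, and hence an influential coordinate that can be decoded consistently along edges, is not a routine application of Boolean dictatorship-test machinery: the tables are unbounded and rational-valued, so Parseval-based arguments, noise-stability bounds, and influence decoding for $[-1,1]$-valued functions do not apply as stated. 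Taming unbounded values (for instance by composing the tables with random characters of the reals to obtain bounded functions before any Fourier analysis, and then controlling the resulting phase cancellations and the quantitative dependence of the decoded Label Cover value on $\deltaLM$) is precisely the main technical contribution of Guruswami and Raghavendra, and nothing in your sketch substitutes for it. Until that analysis is carried out, the proposal is a plan rather than a proof of the \NPHard ness claim; the completeness side, the binary-witness observation, and the parameter bookkeeping (hardness for small $\epsLM,\deltaLM$ implying it for all pairs with $0\le\deltaLM\le 1-\epsLM\le 1$) are the easy parts and are fine.
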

Then, Theorem~\ref{def:lineq_ma} allows us to prove the following hardness result:
\begin{restatable}{theorem}{hardnessfirst}\label{thm:hardness1}
	The problem of computing a seller-optimal protocol without menus is $\mathsf{APX}$-hard, even when the number of buyer's actions $m$ is fixed.
	%	
	%	It is $\mathsf{APX}$-Hard to compute an optimal protocol with a fixed number of actions.
\end{restatable}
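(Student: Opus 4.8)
The plan is to reduce from the promise problem \textsc{Lineq-Ma}$(1-\epsLM,\deltaLM)$, which is \NPHard{} by Theorem~\ref{thm:garu}, to the problem of computing a seller-optimal protocol without menus with a \emph{constant} number of buyer's actions. Given a system $Ax=c$ with $\nVar$ variables and $\nEq$ equations, I would build an information-selling instance with only $O(1)$ actions $\A$, with states of nature $\Theta$ in (essentially) bijection with the variables of the system plus a constant number of auxiliary states, and with buyer's types $\K$ in (essentially) bijection with the equations plus a few auxiliary types. The point is that, since a protocol without menus commits to a \emph{single} signaling scheme, the probabilities with which the seller recommends a fixed ``good'' action across the variable-states play the role of the assignment $x\in[0,1]^{\nVar}$, with a deterministic (binary) signaling scheme corresponding to a binary $\hat x\in\{0,1\}^{\nVar}$. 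Each equation-type $k_i$ is designed so that, through its budget $b_{k_i}$, its IR/participation condition, and its best response to the posteriors induced by the signaling scheme, it contributes a bounded amount to the seller's expected utility precisely when equation $i$ is (nearly) satisfied by the implied assignment; the coefficients $A_{ij}$ and $c_i$ are hard-wired into the utilities $u^{k_i}_{\theta_j}$, the prior $\mu$, and the budgets. Crucially, it is the price $p$ --- available only because the buyer has \emph{no} limited liability --- that lets the seller turn a satisfied equation into revenue; equivalently, the absence of this free parameter is exactly why the limited-liability, fixed-actions case admits a PTAS (Theorem~\ref{le:sp}) and why Corollary~\ref{cor:hardness} needs an unbounded number of actions. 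I would tune prices and budgets so that the maximum seller utility is $\Theta(\nEq)$ in both cases, making the YES/NO gap multiplicative.

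For completeness I would show that a binary $\hat x$ satisfying at least a $(1-\epsLM)$-fraction of the equations yields, together with an appropriate uniform price $p^\star$, a protocol achieving seller utility at least some $V_1=\Theta(\nEq)$: the deterministic signaling scheme produces a constant number of ``clean'' posteriors, payments $\pi$ can be set so that each satisfied equation-type participates and follows the good-action recommendation, and the per-type contributions add up. For soundness I would prove the contrapositive of the NO case: from any protocol with seller utility at least $V_2$ (for a suitable $V_2<V_1$ with $V_2/V_1$ a constant bounded away from $1$), one first argues --- using the buyer's seller-favorable tie-breaking and a normal-form argument for the payments and the price --- that the protocol can be brought into a canonical shape, then one \emph{rounds} the (possibly randomized) signaling scheme to a binary assignment $\hat x$ while losing only a controlled amount of seller utility, and finally reads off that $\hat x$ must satisfy at least a $\deltaLM$-fraction of the equations, contradicting the NO promise. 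Hence distinguishing seller utility $\ge V_1$ from $\le V_2$ is \NPHard{}, which gives \APX-hardness, and by construction $|\A|=O(1)$.

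The main obstacle, as usual in this type of reduction, is the soundness argument: the seller has a genuine continuum of deviations --- arbitrary randomized signaling (infinitely many posteriors), arbitrary nonnegative payments $\pi(\xi,a)$ that may depend on the posterior, and an arbitrary price $p$ --- and one must show that none of these extra degrees of freedom lets the seller extract more than the combinatorial optimum of the underlying linear system, all while the buyer best-responds. Controlling the rounding of the signaling scheme so that the induced assignment degrades gracefully, and simultaneously keeping the number of actions constant rather than letting it grow with $\nVar$ or $\nEq$, is the delicate part; a secondary technical point is to calibrate all baselines (default utilities of non-participating types, auxiliary states and types, and the scale of the payments) so that the gap between the YES and NO cases is multiplicative in the sense required for \APX-hardness rather than merely additive.
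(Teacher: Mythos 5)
Your high-level skeleton matches the paper's reduction (from \textsc{Lineq-Ma}$(1-\epsLM,\deltaLM)$, states in bijection with variables plus auxiliary states, types in bijection with equations plus auxiliary types, constantly many actions, and the price/budget/IR machinery as the source of hardness), but the soundness plan has a genuine gap: the step where you \emph{round} the randomized signaling scheme to a binary assignment $\hat x\in\{0,1\}^{\nVar}$ is both unnecessary and unworkable. The NO-case promise of \textsc{Lineq-Ma} quantifies over \emph{all rational} vectors, not just binary ones; the binary restriction matters only in the YES case. The paper exploits exactly this asymmetry: from a high-utility protocol it locates a single induced posterior $\xi$ in which a large fraction of the equation-types best-respond with the ``good'' action, and the two-sided best-response comparisons (against the actions encoding $+\bar A_{ji}-\bar c_j$ and $-\bar A_{ji}+\bar c_j$) force the \emph{equalities} $\sum_i \xi_{\theta_i}\bar A_{ji}=\bar c_j$, so the fractional vector $\hat x_i=\xi_{\theta_i}$ itself witnesses a $\deltaLM$-fraction of satisfied equations, contradicting the promise --- no rounding ever happens. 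Your rounding step, by contrast, would have to preserve knife-edge linear equalities under a projection to $\{0,1\}^{\nVar}$, which fails in general (and if such a rounding existed, the asymmetric promise in \textsc{Lineq-Ma} would largely collapse). Note also that even the completeness direction uses a genuinely randomized scheme (states corresponding to variables set to $1$ send the ``revealing'' signal only with probability $q$, so that the posterior entries equal the normalized value $1/\normaliz$), so identifying deterministic schemes with binary assignments does not match how the encoding has to work.

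A second, related omission is how you prevent the seller from buying the good action with posterior-dependent payments when the corresponding equation is \emph{not} satisfied: your ``canonical shape'' normal-form argument does not obviously deliver this, because payments $\pi(\xi,a)$ can legitimately shift best responses. The paper needs dedicated gadget types for this --- per equation, three companion types $k^2_j,k^3_j,k^4_j$ and a global type $k^\star$ together with extra states $\theta_1,\theta_2$ --- so that paying to induce the good action at a posterior where the equation fails forces a payment ordering ($\pi(\xi,a_7)>\pi(\xi,a_0)$) that makes type $k^4_j$ strictly IR and collapses the seller's revenue from that block of types, yielding the contradiction. Without some mechanism of this kind, the soundness argument cannot rule out that the continuum of payment/price deviations lets the seller exceed the combinatorial optimum, which is precisely the ``delicate part'' you flag but do not resolve.
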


In the following, we show how to circumvent the hardness result in Theorem~\ref{thm:hardness1}, by providing, in Section~\ref{sec:fixing_actions_with_budget}, a quasi-polynomial-time \emph{bi-criteria} approximation algorithm and, in Section~\ref{sec:fixing_types}, a polynomial-time (exact) algorithm working when the number of buyer's types is fixed.

\subsection{A General Quasi-polynomial-time Bi-criteria Approximation Algorithm}\label{sec:fixing_actions_with_budget}

In order to circumvent the negative result presented in Theorem~\ref{thm:hardness1}, we design a 
%general algorithm to compute an approximate protocol without limiting liability. This result implies a 
%
quasi-polynomial-time algorithm that computes a protocol without menus providing a \emph{bi-criteria} approximation of the seller's expected utility in an optimal protocol.
Formally, our algorithm guarantees a multiplicative approximation $\rho$ of the optimal utility, by only suffering an additional $2^{-\Omega(1/\rho)} + \alpha$ additive loss.
Moreover, we show that our algorithm runs in polynomial time whenever either the number of buyer's actions $m$ or that of states of nature $d$ is fixed. 

In order to prove the approximation guarantees of our algorithm, we rely
%
%The proof of the theorem relies 
on Theorems~\ref{QPTAS}~and~\ref{thm:fixedStates}, and we decompose the seller's expected utility in an optimal protocol without menus into the sum of three different terms.
Our algorithm works by computing three protocols without menus, each one approximating one of the three terms. 
Choosing the best protocol among the three provides the desired approximation guarantees.
The following is an intuition of how each term composing the optimal seller's expected utility is approximated by our algorithm:
\begin{itemize}
\item The \emph{first term} is related to the seller's expected utility collected from buyer's types for which the IR constraints are \emph{not} satisfied.
Such a utility term can be trivially achieved by a protocol that charges no price, discloses no information, and never pays back the buyer.
\item The \emph{second term} is related to the best seller's expected utility which can be extracted from a buyer's action.
This is related to the optimal seller's expected utility in a setting with limited liability, since, in that case, the seller's expected utility is determined by the buyer's action only.
%the only component of the seller's expected utility is de the buyer action.
%
Thus, the second utility term can be approximated by using either the algorithm provided in Theorem~\ref{QPTAS} or that given in Theorem~\ref{thm:fixedStates}.\footnote{Notice that we cannot employ Theorem~\ref{le:sp} in place of Theorem~\ref{QPTAS}, since the latter guarantees to achieve a seller's expected utility that is arbitrarily close to that of the best protocol employing payment functions parametrized by $\beta$, and this is needed in order to derive the guarantees of our algorithm. Such a guarantee is \emph{not} provided by Theorem~\ref{le:sp}, which only predicates on the quality of the returned protocol with respect to an optimal protocol without menus.}
%	
%	
%	need to employ Theorem~\ref{QPTAS} since it guarantees an utility arbitrary close to the one of protocols employs linear contracts. This is not the case for Theorem~\ref{le:sp} that provides guarantees only with respect to the optimal protocol with limited liability.}
%
%
\item The \emph{third term} is related to the seller's expected utility obtained by the transfers between the seller and the buyer, which include the charged price and the final payment.
Such a utility term can be approximated by using a protocol that reveals all the information to the buyer while charging a carefully-chosen price for that.
\end{itemize}
Formally, we prove the following main result:
\begin{restatable}{theorem}{constactionsthmsecond}\label{thm:fixedactions2}
	There exists an algorithm that, given any $\alpha>0$ and $\rho\in (0,\nicefrac{1}{6}]$ as input, computes a protocol without menus whose seller's expected utility is greater than or equal to $\rho \, \textnormal{\textsc{opt}}-2^{-\Omega(1/\rho)}-\alpha$, where $\textnormal{\textsc{opt}}$ is the seller's expected utility in an optimal protocol.
	%provided by an optimal protocol and $\mathcal{I}$ is the instance size.
	%
	Moreover, the algorithm runs in time polynomial in $\mathcal{I}^{\log m}$---where $\mathcal{I}$ is the size of the problem instance---when it is implemented with the algorithm in Theorem~\ref{QPTAS} as a subroutine, while it runs in time polynomial in $\mathcal{I}^{d}$ when it is implemented with the algorithm in Theorem~\ref{thm:fixedStates} as a subroutine.
	%	
	%	For each constant $\rho\in (0,1/6]$ and $\alpha>0$, there exists a $\min(O(\mathcal{I}^{d}),O(\mathcal{I}^{\log(m)}))$-time algorithm computing a protocol achieving a seller's expected utility $APX\ge \rho OPT-2^{-\Omega(1/\rho)}-\alpha$, where $OPT$ is the expected utility provided by an optimal protocol and $\mathcal{I}$ is the instance size.
\end{restatable}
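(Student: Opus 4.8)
The plan is to bound $\textsc{opt}$ from above by a sum of three quantities $T_1+T_2+T_3$, to design one protocol without menus that (approximately) attains each of them, and to output the most profitable of the three; since the maximum of three numbers is at least a third of their sum this loses only a factor $3$, and combined with the factor $2$ inside Corollary~\ref{cor:linear} (whose parameter must lie in $(0,\nicefrac12]$) this is exactly what forces $\rho\le\nicefrac16$.

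Fix an optimal protocol without menus $(\gamma^\star,p^\star,\pi^\star)$ and let $\mathcal{R}^\star\defeq\mathcal{R}_{\gamma^\star,p^\star,\pi^\star}$. I would split $\textsc{opt}=T_1+X+Y$, where $T_1$ collects the contribution of the types outside $\mathcal{R}^\star$, $X\defeq\sum_{k\in\mathcal{R}^\star}\lambda_k\sum_{\xi}\gamma^\star_\xi\big[\sum_\theta\xi_\theta u^s_\theta(b^k_{\xi,\pi^\star})-\pi^\star(\xi,b^k_{\xi,\pi^\star})\big]$ is the ``action'' part, and $Y\defeq p^\star\sum_{k\in\mathcal{R}^\star}\lambda_k$ is the revenue collected upfront. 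Using that $b^k_{\xi,\pi^\star}$ is a best response and $\pi^\star\ge0$ gives the payment-free bound
\[
\sum_\theta\xi_\theta u^s_\theta(b^k_{\xi,\pi^\star})-\pi^\star(\xi,b^k_{\xi,\pi^\star})\le\max_{a\in\A}\sum_\theta\xi_\theta\big[u^s_\theta(a)+u^k_\theta(a)-u^k_\theta(b^k_\xi)\big],
\]
whose right-hand side is precisely the quantity that Corollary~\ref{cor:linear} approximates; averaging over $\gamma^\star$ and over the types bounds $X$ by a quantity $T_2$. For $Y$, the IR constraint of each $k\in\mathcal{R}^\star$ together with the consistency identity $\sum_\xi\gamma^\star_\xi\xi_\theta=\mu_\theta$ controls the transfer contribution by the buyers' value of information $V_k\defeq\sum_\theta\mu_\theta\max_a u^k_\theta(a)-\max_a\sum_\theta\mu_\theta u^k_\theta(a)\in[0,1]$ and by the budgets, giving $T_3\defeq\sum_k\lambda_k\min\{b_k,V_k\}$, so that $\textsc{opt}\le T_1+T_2+T_3$.

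I would then exhibit one protocol per term. The $T_1$-protocol discloses nothing, charges nothing and never pays back, so all types play a best response to $\mu$ and the seller obtains at least $T_1$. The $T_2$-protocol is the output of Theorem~\ref{QPTAS} (or of Theorem~\ref{thm:fixedStates}, when $d$ is fixed) run on the limited-liability version of the instance; the ``moreover'' parts of those theorems relate the returned value to $\textsc{optlin}$ (additively, up to $\alpha$, in the first case; up to a $(\rho,2^{-\Omega(1/\rho)})$ bi-criteria gap in the second), and applying Corollary~\ref{cor:linear} in every posterior of $\gamma^\star$ shows that $\textsc{optlin}$ captures a $\Theta(\rho)$ fraction of $T_2$ up to an additive $2^{-\Omega(1/\rho)}$ --- note that Theorem~\ref{le:sp} cannot be used in its place, since we need a guarantee phrased against $\textsc{optlin}$ rather than against an optimal menuless protocol. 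The $T_3$-protocol fully reveals the state and charges a price taken from the geometric grid $\{2^{-i}\}_i$; since $V_k\le1$, with full revelation and no payment the IR set is exactly $\{k:\min\{b_k,V_k\}\ge p\}$, so the best grid price collects, by the usual dyadic thresholding argument, a $\Theta(\rho)$ fraction of $T_3$ up to an additive $2^{-\Omega(1/\rho)}$. Outputting the most profitable of the three and using $\max\ge\tfrac13\sum$ gives the $\rho\,\textsc{opt}-2^{-\Omega(1/\rho)}-\alpha$ bound, after taking $\epsilon$ small enough that the $2\sqrt\epsilon$ loss from Corollary~\ref{thm:dutting}/Theorem~\ref{QPTAS} is absorbed in $\alpha$. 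Since the $T_1$- and $T_3$-protocols take polynomial time, the running time is that of the subroutine used for the $T_2$-protocol: $\poly(\mathcal{I}^{\log m})$ with Theorem~\ref{QPTAS}, and $\poly(\mathcal{I}^{d})$ with Theorem~\ref{thm:fixedStates}.

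The step I expect to be the main obstacle is the decomposition itself. Since the protocol has no menu, the single payment function $\pi^\star$ is shared by all types, so the best responses $b^k_{\xi,\pi^\star}$ are coupled and $\pi^\star$ plays a double role: it incentivizes seller-favorable actions --- so it must stay attached to the action term to make the payment-free bound above go through --- and it transfers money to the buyer --- so only its net effect, together with $p^\star$, is pinned down by IR. Splitting $\pi^\star$ and $p^\star$ between $T_2$ and $T_3$ so that no more than a $2^{-\Omega(1/\rho)}+\alpha$ error leaks out, while arranging for the multiplicative factors from Corollary~\ref{cor:linear}, from the $T_3$ thresholding, and from the best-of-three step to compose to exactly $\rho$, is the delicate bookkeeping the proof must carry out.
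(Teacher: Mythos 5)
Your overall architecture is the same as the paper's: split the optimal seller's utility into (i) the contribution of the non-IR types, (ii) an ``action'' term that posterior-wise linear payment functions approximate via Corollary~\ref{cor:linear} and that is then realized algorithmically through Theorem~\ref{QPTAS} or Theorem~\ref{thm:fixedStates} (correctly through their $\textnormal{\textsc{optlin}}$ guarantees, and correctly excluding Theorem~\ref{le:sp}), and (iii) a transfer term handled by full revelation plus a dyadic price grid; output the best of the three protocols, pay a factor $3$, and compose with the factor $2$ restriction of Corollary~\ref{cor:linear} to get $\rho\le\nicefrac{1}{6}$. All of this matches the paper, and your use of $\min\{b_k,V_k\}$ in the price-grid step is, if anything, a bit more careful about budgets than the paper's use of $\delta_k$ alone.

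The genuine gap is the justification of $\textnormal{\textsc{opt}}\le T_1+T_2+T_3$. You split $\textnormal{\textsc{opt}}=T_1+X+Y$ with $Y=p^\star\sum_{k\in\mathcal{R}^\star}\lambda_k$ the gross upfront revenue, and claim that IR plus consistency yields $Y\le T_3=\sum_k\lambda_k\min\{b_k,V_k\}$. That inequality is false in general: the seller can charge a price far exceeding every type's value of information by returning the money through $\pi^\star$ (e.g., reveal nothing, pay $c$ after every action, charge $p^\star=c$), so IR only pins down the \emph{net} extraction $p^\star-\mathbb{E}[\pi^\star(\cdot,b^k_{\cdot,\pi^\star})]$ (together with the incentive differential), never the gross price. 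The paper's proof resolves exactly this by adding and subtracting the minimum incentive payment $\sum_{s}\sum_{\theta}\mu_\theta\phi_\theta(s)\bigl(u^k_\theta(b^k_{\xi^s,\pi^\star})-u^k_\theta(b^k_{\xi^s})\bigr)$: the action term $U_2$ then charges only the minimal payment needed to move type $k$ off $b^k_{\xi^s}$ (and is still upper-bounded by the welfare-like quantity of Corollary~\ref{cor:linear}), while the transfer term $U_3$ becomes the price minus the actual expected payment minus that differential, which IR bounds by $\delta_k$ and budget feasibility by $b_k$, so only then do the per-type bounds by $T_{2}$ and $\min\{b_k,V_k\}$ hold simultaneously. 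You explicitly flag this ``splitting of $\pi^\star$ and $p^\star$ between $T_2$ and $T_3$'' as the main obstacle and do not carry it out, so the central step of the decomposition is left open; once this regrouping is in place, the rest of your plan goes through essentially as in the paper.
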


\subsection{Fixing the Number of Buyer's Types}\label{sec:fixing_types}

% \mat{PD i segnali diretttiiiiiiiiiiiiiiiiiiiii}

Next, we study the problem of computing a seller-optimal protocol without menus when the number of buyer's types is fixed, showing that it is possible to design a polynomial-time algorithm.
As a byproduct, the existence of such an algorithm shows that, for protocols without menus, the seller's optimization problem always admits a maximum.
%
% in the information-selling problem  there always exists an optimal solution.

As a preliminary result, we show that it is always possible to focus on protocols without menus $(\phi,p,\pi)$ that employ signals belonging to the set $\mathcal{A}^{n}$, and define signaling schemes $\phi: \Theta \to \Delta_{\A^n}$ and payment functions $\pi: \A^n \times \A \to \mathbb{R}_+$ such that, for every signal $\boldsymbol{a} \in \A^n$ and $k \in \K$, it holds $a_k \in \mathcal{B}^k_{\xi^{\boldsymbol{a} },\pi}$, where $a_k \in \A$ denotes the action corresponding to type $k$ in $\boldsymbol{a} $.
%
%where each signal $\boldsymbol{a} \in \A^n$ is such that $a_k \in \mathcal{B}^k_{\xi^a,\pi}$ for every $k \in \K$.
%
%With a slight abuse of notation, we call these signals direct.
%
Intuitively, in such protocols, a signal specifies an action recommendation for each buyer's type, so that the buyer is always incentivized to follow such recommendations.
With a slight abuse of notation, we say that protocols without menus $(\phi,p,\pi)$ as described above are \emph{generalized-direct} and \emph{generalized-persuasive}.
Formally, we prove the following result:
\begin{restatable}{lemma}{constypesfirst}\label{lem:constypes1}
	Given a seller's protocol without menus, there always exists another protocol without menus which is generalized-direct and generalized-persuasive, and achieves the same seller's expected utility as the original protocol.
\end{restatable}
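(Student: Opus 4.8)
The plan is to mimic the merging argument used in the proof of Lemma~\ref{lem:typerep1}, but now applied to a single signaling scheme that must simultaneously serve all $n$ buyer's types. Start from an arbitrary protocol without menus $(\phi, p, \pi)$ with signal set $\mathcal{S}$. For each signal $s \in \mathcal{S}$ and each type $k \in \K$, fix a best-response action $b^k_{\xi^s, \pi} \in \mathcal{B}^k_{\xi^s, \pi}$ (the one actually played, breaking ties in favor of the seller). This associates to every signal $s$ the vector $\boldsymbol{a}(s) \defeq (b^1_{\xi^s,\pi}, \dots, b^n_{\xi^s,\pi}) \in \A^n$. The idea is to relabel each signal $s$ by its vector $\boldsymbol{a}(s)$, and then \emph{merge} all signals carrying the same vector into a single signal in $\A^n$.

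First I would make the relabeling precise. Define a new signaling scheme $\phi' : \Theta \to \Delta_{\A^n}$ by $\phi'_\theta(\boldsymbol{a}) \defeq \sum_{s \in \mathcal{S} : \boldsymbol{a}(s) = \boldsymbol{a}} \phi_\theta(s)$, and a new payment function $\pi' : \A^n \times \A \to \mathbb{R}_+$ that, for a merged signal $\boldsymbol{a}$, must reproduce the payments of the signals being merged. The key observation — exactly as in Lemma~\ref{lem:typerep1} — is that all signals $s$ with $\boldsymbol{a}(s) = \boldsymbol{a}$ induce the \emph{same} conditional distribution over $(\theta, \text{type-}k\text{ best responses})$ up to normalization, so the induced posterior of the merged signal $\boldsymbol{a}$ is a convex combination of the posteriors $\xi^s$, and one can take $\pi'(\boldsymbol{a}, \cdot)$ to be the corresponding $\phi$-weighted average of $\pi(s, \cdot)$ over those $s$. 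Then I would check three things: (i) for every $\theta$ the probabilities sum to one, so $\phi'$ is a valid signaling scheme; (ii) for each type $k$, action $a_k$ is still a best response at the merged posterior $\xi^{\boldsymbol{a}}$ under $\pi'$ — this follows because best response is preserved under merging signals that share a best response, since the merged posterior and merged payment are convex combinations and the best-response inequality is linear; and (iii) the seller's expected utility is unchanged, because both the transfer terms and the $u^s_\theta(\cdot)$ terms are linear in the per-signal quantities, and the IR-participation set $\mathcal{R}_{\phi',p,\pi'}$ coincides with $\mathcal{R}_{\phi,p,\pi}$ since each type's total expected utility is a sum over signals that is invariant under merging.

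Having merged, the signal set is now a subset of $\A^n$, and by construction each signal $\boldsymbol{a}$ satisfies $a_k \in \mathcal{B}^k_{\xi^{\boldsymbol{a}},\pi'}$ for every $k$, i.e., the protocol is generalized-direct and generalized-persuasive. To get signals ranging over \emph{all} of $\A^n$ (rather than a subset), I would simply pad with zero-probability signals, which changes nothing. One subtlety to handle carefully in the writeup is the tie-breaking convention: when I fix $b^k_{\xi^s,\pi}$ I must pick the seller-optimal tie-break, and after merging I must verify that $a_k$ is \emph{still} a seller-optimal tie-break at $\xi^{\boldsymbol{a}}$ (or at least that the seller's utility does not drop); this is where the bulk of the bookkeeping lies, since the best-response \emph{set} could in principle grow after merging. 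The main obstacle is therefore not the merging itself but ensuring that this combined tie-breaking/merging step does not inadvertently let some type switch to an action worse for the seller — which I would resolve exactly as in Lemma~\ref{lem:typerep1}, by noting that the merged payment makes $a_k$ weakly optimal for type $k$ and, by the seller-favorable tie-breaking assumption, the buyer of type $k$ will continue to play an action at least as good for the seller as $a_k$, so we may assume w.l.o.g. it plays $a_k$.
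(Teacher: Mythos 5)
Your proposal is correct and follows essentially the same route as the paper: the paper merges, pairwise and iteratively, any two signals inducing the same best-response vector across all types, using exactly your probability-weighted averaging of payments, and verifies by linearity that buyer utilities, best responses, the IR-participation set, and the seller's expected utility are preserved. One small remark: your ``key observation'' that signals with the same best-response vector induce the same conditional distribution over states up to normalization is not true in general (their posteriors may differ), but it is also not needed---the merged posterior being a convex combination of the $\xi^s$ and the linearity of the best-response inequalities are all the argument requires, just as in the proof of Lemma~\ref{lem:typerep1}.
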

In order to prove the lemma, we observe that, given a protocol, if two signals induce the same best response for every buyer's type, it is always possible to merge the two signals, retaining the same expected utility for both the seller and the buyer.
Then, by iterating such a process, we recover a signaling scheme and a payment function employing $\A^{n}$ as set of signals . 

As a second crucial step, we show that we can focus on protocols without menus $(\phi,p,\pi)$ whose price $p$ is equal to the budget $b_k$ of one buyer's type $k \in \K$.
Formally:
%
%it is possible to exploit that protocols in which the initial payment required by the seller coincides with one of the possible buyer's budgets $b_k$ for a suitable defined $k \in \K$. 
%
\begin{restatable}{lemma}{constypesecond}\label{lem:constypes2}
	Given a protocol without menus, there always exists another protocol $(\phi, p, \pi)$ such that $p=b_k$ for some $k \in \mathcal{K}$, while achieving the same seller's expected utility as the original protocol.
	% such that $p=b_k$ for some buyer's type $k \in \mathcal{K}$. 
\end{restatable}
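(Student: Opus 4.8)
The plan is to take an arbitrary protocol without menus $(\phi, p, \pi)$ and show that we can always shift the price $p$ to coincide with some budget $b_k$ without hurting the seller. First I would observe that, by Lemma~\ref{lem:constypes1}, we may assume $(\phi,p,\pi)$ is generalized-direct and generalized-persuasive, so that the set of participating types $\mathcal{R}_{\phi,p,\pi}$ is exactly $\{k \in \mathcal{K} : b_k \ge p \text{ and the IR constraint holds for } k\}$, and for each participating type the buyer follows the recommendation encoded in the signal. The key point is to understand how the seller's expected utility depends on $p$ when $\phi$ and the ``net'' payment structure are held fixed appropriately. Concretely, for a participating type $k$, its contribution to the seller's utility is $\lambda_k\big[\sum_{\boldsymbol a}\sum_\theta \mu_\theta \phi_\theta(\boldsymbol a)(u^s_\theta(a_k) - \pi(\boldsymbol a, a_k)) + p\big]$, which is \emph{increasing} in $p$; for a non-participating type the contribution is the fixed no-information term $\lambda_k \sum_\theta \mu_\theta u^k_\theta(b^k_\mu)$, independent of $p$.

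Next I would argue that raising $p$ is essentially always (weakly) beneficial, subject to two coupled effects: (i) it directly increases revenue from each still-participating type, and (ii) it may push some type out of $\mathcal{R}_{\phi,p,\pi}$ — both because the budget constraint $b_k \ge p$ can fail and because the IR constraint can fail. The idea is to increase $p$ continuously starting from its current value. As long as no type drops out, the objective strictly increases, so we can push $p$ up until we hit the first threshold at which some type $k$ becomes non-participating. That threshold is either $p = b_k$ (budget binds) or the value of $p$ at which type $k$'s IR constraint becomes tight. In the first case we are done: $p = b_k$ for that type. In the second (IR-binding) case, observe that at the IR-binding price the type is indifferent between participating and not, so — using the tie-breaking-in-favor-of-the-seller convention baked into the definition of $\mathcal{R}_{\phi,p,\pi}$ — we may treat it as non-participating, remove it, and \emph{continue} raising $p$ for the remaining types; the removed type's contribution is now the fixed term and no longer constrains $p$. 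Iterating, after finitely many steps either $p$ reaches $b_k$ for some $k$ that still participates, or all types have been dropped, in which case $p = b_k$ for, say, the type with the smallest budget is a valid (and at-least-as-good) choice since the objective is then governed entirely by fixed terms. The one subtlety is to handle the corner case where the starting $p$ is already above every $b_k$: then $\mathcal{R}_{\phi,p,\pi} = \emptyset$, the seller's utility is $\sum_k \lambda_k \sum_\theta \mu_\theta u^k_\theta(b^k_\mu)$, and replacing $p$ by $\max_k b_k$ yields the identical (or better) value while satisfying $p = b_k$ for some $k$.

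The main obstacle I anticipate is making the ``continuously raise $p$'' argument fully rigorous when the IR constraints depend on $p$ in a way that is not purely monotone because of how best responses $b^k_{\xi^{\boldsymbol a},\pi}$ could in principle shift — however, since we have fixed $\phi$ and $\pi$ and only vary the scalar $p$, the payment function $\pi$ and hence all best responses $b^k_{\xi^{\boldsymbol a},\pi}$ are \emph{unchanged} as $p$ varies (the price $p$ does not enter the buyer's post-purchase incentives, only the participation decision). This makes both the IR slack $\sum_{\boldsymbol a}\sum_\theta \mu_\theta\phi_\theta(\boldsymbol a)[u^k_\theta(a_k)+\pi(\boldsymbol a,a_k)] - p - \max_a \sum_\theta \mu_\theta u^k_\theta(a)$ and the budget slack $b_k - p$ affine and strictly decreasing in $p$, so each type has a single well-defined cutoff price, and the ``first threshold'' is simply the minimum of these cutoffs over currently-participating types. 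With this observation the induction goes through cleanly, and the resulting protocol $(\phi, p, \pi)$ with $p = b_k$ for some $k$ has seller's expected utility no smaller than the original, as claimed.
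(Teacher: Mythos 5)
There is a genuine gap in your argument, and it sits exactly at the step ``remove the IR-bound type and \emph{continue} raising $p$.'' Once $p$ exceeds a type's IR threshold, that type strictly prefers to opt out, and its contribution to the seller switches from $\lambda_k\big[\sum_{\boldsymbol a}\sum_\theta \mu_\theta\phi_\theta(\boldsymbol a)\,(u^s_\theta(a_k)-\pi(\boldsymbol a,a_k))+p\big]$ to the fixed no-information term. Nothing guarantees these are comparable: if the seller was extracting substantial value from that type (through the induced action and the price), dropping it causes a discontinuous loss that the extra revenue $\Delta p$ from the remaining types need not offset. In the extreme, your iteration can shed \emph{every} type before $p$ ever hits a budget (e.g.\ all budgets are large but all IR thresholds are small), and the final utility collapses to the no-information baseline, strictly below the original protocol's value -- so your closing claim that the resulting protocol is ``at-least-as-good'' is false in general, and the lemma actually needs the utility to be preserved (it is used to restrict the search space without loss). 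The tie-breaking convention does not rescue this: it only helps at the exact indifference price, while your procedure moves strictly past it.

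The missing idea is that the price should not be raised with $\pi$ held fixed; the slack between $p$ and a budget should instead be recycled through the payment function. The paper's proof takes $\hat k \in \arg\min_{k \in \mathcal{R}_{\phi,p,\pi}}\, b_k$ (every participating type has $b_k \ge p$, so $b_{\hat k}-p \ge 0$), sets the new price to $b_{\hat k}$, and defines $\hat\pi(s,a) = \pi(s,a) + b_{\hat k}-p$ for all $s,a$. Adding the same constant to every payment leaves all post-purchase best responses unchanged, every buyer's net utility from participating unchanged (the extra upfront charge is returned with certainty), hence $\mathcal{R}_{\phi,p,\pi}$ unchanged, and the seller's extra revenue $b_{\hat k}-p$ per participating type is exactly offset by the extra payments, so the seller's expected utility is exactly preserved; nonnegativity of $\hat\pi$ holds because $b_{\hat k}\ge p$. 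Your observation that $p$ does not affect post-purchase incentives is correct and useful, but without this ``pay the difference back'' construction the monotone-raising argument cannot deliver the conclusion.
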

%
%The core idea behind Lemma~\ref{lem:constypes2} is that, given a protocol $(\phi, p, \pi)$, if the seller charges a price $b_{\bar k} \le p$, where $b_{\bar k}$ is the largest budget smaller than $p$, all the buyer's types taking part to the protocol, still have enough budget to buy the information at the price $b_{\bar k}$.
%%
%Then, the protocol defined by adding to the optimal payment function the difference between $b_{\bar k}$ and $p$ achieves the same seller's utility and satisfies the IR constraints for each buyer's type in $\mathcal{R}_{\phi,p,\pi}$.

Finally, equipped with Lemma \ref{lem:constypes1} and Lemma \ref{lem:constypes2}, we are ready to provide our polynomial-time algorithm.
Intuitively, since we can restrict the attention to protocols without menus $(\phi, p, \pi)$ that are generalized-direct and generalized-persuasive, and whose prices $p$ belong to the set $ \{b_{k}\}_{k \in \K}$, we can solve the seller's problem by iterating over all the possible price values $p \in \{b_{k}\}_{k \in \K}$ and, for each of them, over all the possible subsets $\mathcal{R}\subseteq \K\cap \{ k \in\K : b_k \geq p \}$ of buyer's types that satisfy the IR constraint.
%
% Hence, we can iterate over all the prices and the set $\mathcal{R}\subseteq \K\cap \{ k \in\K : b_k \geq p \}$ of types that satisfy the IR constraint.
%
This can be done in polynomial time since the number of buyer's types $n$ is fixed.
Then, for every price value $p \in \{b_{k}\}_{k \in \K}$ and set $\mathcal{R}\subseteq \K\cap \{ k \in\K : b_k \geq p \}$, it is sufficient to solve the following optimization problem:
%the problem of computing an optimal protocol restricts to the following Problem.
%
\begin{subequations}\label{quad:LP_kfixed}
	\begin{align}
	\sup_{\substack{\phi \ge 0 \\ \pi \ge 0}} & 
	\,\, \sum_{k \in \mathcal{R}} \lambda_k\sum_{\boldsymbol{a} \in \mathcal{A}^{n}} \sum_{\theta \in \Theta} \mu_\theta \phi_\theta(\avec) \left[u_\theta^s(a_k) -\pi(\avec, a_k)\right] + \sum_{k \notin \mathcal{R}} \lambda_k\sum_{\theta \in \Theta} \mu_\theta u_\theta^s(b^k_{\mu}) \quad\textnormal{s.t.}\,\,\,\quad\quad\quad\quad\quad \\ 
	& \sum_{\theta \in \Theta} \mu_\theta \phi_{\theta}(\avec) \left[u^k_\theta(a_k) + \pi(\avec,a_k) \right] \ge  \sum_{\theta \in \Theta} \mu_\theta \phi_{\theta}(\avec) \left[u^k_\theta(a') +\pi(\avec,a')\right] \nonumber\\
	&\specialcell{\hfill \forall k \in \mathcal{R}, \forall \boldsymbol{a} \in \mathcal{A}^{n},  \forall a' \not = a_k \in \mathcal{A} }\\
	&\specialcell{  \sum_{\boldsymbol{a} \in \mathcal{A}^{n} } \sum_{\theta \in \Theta} \mu_\theta \phi_{\theta}(\avec) \left[u^k_\theta(a_k) + \pi(\avec,a_k) \right] - b_k \ge \sum_{\theta \in \Theta} \mu_\theta  u^k_\theta(b^k_\mu) \hfill \forall k \in \mathcal{R}} \\
	&\specialcell{\sum_{\boldsymbol{a} \in \mathcal{A}^{n} } \sum_{\theta \in \Theta} \mu_\theta \phi_{\theta}(\avec) \left[u^k_\theta(a_k) + \pi(\avec,a_k) \right] - b_k \le \sum_{\theta \in \Theta} \mu_\theta  u^k_\theta(b_\mu^k)\hfill \forall k \not \in \mathcal{R}}\\
	&\specialcell{ \sum_{\boldsymbol{a} \in \mathcal{A}^{n} } \phi_\theta(\avec)=1 \hfill \forall \theta \in \Theta.}
	\end{align}
\end{subequations}

%
\begin{comment}
The core idea behind Theorem 
the main idea behind Theorem \ref{thm:fixed_types} we observe that, thanks to Lemma \ref{lem:constypes2}, the optimal initial payment coincide with one the $b_k$ for some $k \in \mathcal{K}$. Due to that, once the seller requires a payment equal to $b_k$, only the subset of the receiver's types $\mathcal{K}'$ having enough budget can take part to the mechanism. Then, only a subset of $\mathcal{K}'$ may actually take part to the meacnism and satisty the IR constraints. Due to that we can instatiate a LP with initial payemt equal to $b_k$ and set of receiver's types individually rational one of the $A \subseteq \mathcal{P}(\K')$. By iterating the procedure for each receiver's types and $A \subseteq \mathcal{P}(\K)$ we get the optimal value.
\end{comment}

By using techniques similar to those used in Section~\ref{sec:protocol_selection} for protocols with menus, we can show that Problem~\eqref{quad:LP_kfixed} is solvable in polynomial time by means of a suitable-defined LP.
%
%the setting with menus, we can modify~\ref{quad:LP_kfixed} into a LP. \mat{albo ci pensiamo poi a che scrivere qui}
%
This allows us to state our last results:
\begin{restatable}{theorem}{fixedtypes}\label{thm:fixed_types}
	Restricted to instances in which the number of buyer's types $n$ is fixed, the problem of computing a seller-optimal protocol without menus admits a polynomial-time algorithm.
	%	
	%	If the number of types is fixed, it is possible to compute an optimal protocol in polynomial time.
\end{restatable}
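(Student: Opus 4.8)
The plan is to combine Lemma~\ref{lem:constypes1} and Lemma~\ref{lem:constypes2} in order to reduce the search for a seller-optimal protocol without menus to a constant number of structured subproblems (for fixed $n$), and then to solve each such subproblem in polynomial time by linearizing a quadratic program, exactly in the spirit of Section~\ref{sec:protocol_selection}. First, by Lemma~\ref{lem:constypes1} we restrict the attention to protocols $(\phi,p,\pi)$ that are \emph{generalized-direct} and \emph{generalized-persuasive}, so that the signal set is $\A^n$ and each type $k$ follows the recommendation $a_k$; since $n$ is fixed, $|\A^n|=m^n$ is polynomial in the instance size, and so is the description size of such a protocol. By Lemma~\ref{lem:constypes2} we further assume $p=b_k$ for some $k\in\K$, giving at most $n$ candidate prices. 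Once $p$ is fixed, the set $\mathcal{R}_{\phi,p,\pi}$ of participating types is a subset of $\{k\in\K : b_k\ge p\}$, so there are at most $2^n$ possibilities for it. Enumerating all pairs $(p,\mathcal{R})$ thus costs constantly many iterations for fixed $n$, and for each pair it remains to maximize the seller's utility over generalized-direct, generalized-persuasive protocols with price $p$ and participating set exactly $\mathcal{R}$, which is precisely Problem~\eqref{quad:LP_kfixed}.

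Next I would show that Problem~\eqref{quad:LP_kfixed} is solvable in polynomial time. The only nonlinearities are the bilinear terms $\phi_\theta(\avec)\,\pi(\avec,a)$ appearing in the objective, in the generalized-persuasiveness constraints, and in the IR/participation constraints. As in Section~\ref{sec:protocol_selection}, I would introduce fresh variables $\plist(\avec,a)\coloneqq\sum_{\theta\in\Theta}\mu_\theta\,\phi_\theta(\avec)\,\pi(\avec,a)$ for every $\avec\in\A^n$ and $a\in\A$ (and it is without loss of generality to use only those with $a$ recommended to some type, setting $\pi(\avec,a)=0$ otherwise, since paying for an action no type is recommended to can only hurt the seller and, if it tempted a type to deviate, would break generalized-persuasiveness). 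Substituting makes the objective and all constraints linear in the variables $\phi_\theta(\avec)\ge 0$ and $\plist(\avec,a)\ge 0$, yielding an LP with $\poly(m^n,d)$ variables and constraints, hence of polynomial size for fixed $n$. Dropping the (now implicit) coupling between $\plist$ and $\phi\cdot\pi$ only enlarges the feasible region, so the optimal LP value is at least the supremum of Problem~\eqref{quad:LP_kfixed}, the analogue of Lemma~\ref{lem:typerep3}.

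It then remains to turn an optimal LP solution back into a feasible protocol of the same value, which is the technical heart of the proof and mirrors Lemma~\ref{lem:typerep4}: a solution of the LP may have $\plist(\avec,a)>0$ while $\phi_\theta(\avec)=0$ for every $\theta$, in which case $\pi(\avec,a)=\plist(\avec,a)/\sum_\theta\mu_\theta\phi_\theta(\avec)$ would require a division by zero. I would argue that any feasible LP solution can be transformed in polynomial time into another feasible one, with no worse objective, in which $\plist(\avec,a)>0$ implies $\phi_\theta(\avec)>0$ for some $\theta$: whenever a signal $\avec$ carries payment mass but zero probability, that mass can be safely zeroed out, because such a signal is never sent, so the participation set $\mathcal{R}$ is unaffected, the generalized-persuasiveness constraints for $\avec$ become vacuous, and the $-\plist$ terms in the objective can only improve; the subtle point — reducing $\plist(\avec,a_k)$ for an unsent signal appears inside the sum over signals in type $k$'s IR/participation constraint — is handled by reassigning the probability and payment mass to signals with nonzero probability as in Lemma~\ref{lem:typerep4}. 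After this cleanup one recovers $\pi(\avec,a)=\plist(\avec,a)/\sum_\theta\mu_\theta\phi_\theta(\avec)$, obtaining a generalized-direct, generalized-persuasive protocol with participation set $\mathcal{R}$ whose seller's expected utility equals the LP value; taking the best protocol over all pairs $(p,\mathcal{R})$ gives a seller-optimal protocol without menus in polynomial time, and as a byproduct shows that the seller's optimization problem attains a maximum.

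The main obstacle I expect is exactly this recovery step: one must verify carefully that the ``zero-out-and-redistribute'' operation leaves every constraint of Problem~\eqref{quad:LP_kfixed} satisfied — in particular each IR constraint, which aggregates $\plist(\avec,a_k)$ over all signals $\avec$ — while not decreasing the objective, so that the linear relaxation is exact rather than merely an upper bound. Everything else (the enumeration over $(p,\mathcal{R})$, the linearization, and the polynomiality bookkeeping) is routine once $n$ is fixed.
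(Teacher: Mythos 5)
Your proposal is correct and follows essentially the same route as the paper's proof: restrict to generalized-direct, generalized-persuasive protocols with $p\in\{b_k\}_{k\in\K}$ via Lemmas~\ref{lem:constypes1} and~\ref{lem:constypes2}, enumerate the $O(n2^n)$ pairs $(p,\mathcal{R})$, linearize Problem~\eqref{quad:LP_kfixed} with variables $l(\avec,a)$, and recover a protocol by shifting payment mass from unsent signals to a signal sent with positive probability before dividing to obtain $\pi$. The only detail you leave implicit is how the reassignment preserves persuasiveness at the target signal: unlike the per-type, single-action move of Lemma~\ref{lem:typerep4}, here one adds the mass uniformly to \emph{all} actions of the chosen sent signal, which works because the (vacuous) persuasiveness constraints at an unsent signal force all values $l(\bar{\avec},\cdot)$ to coincide, so every participating type's IR total is compensated exactly while both sides of each persuasiveness constraint shift by the same constant.
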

\begin{corollary}
	The problem of computing a seller-optimal protocol without menus always admits a maximum.
\end{corollary}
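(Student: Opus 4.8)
The plan is to observe that this corollary is an immediate consequence of the \emph{correctness} --- as opposed to the running-time bound --- of the algorithm behind Theorem~\ref{thm:fixed_types}, which is valid for every instance regardless of the number of buyer's types; fixing $n$ affects only efficiency, not whether an optimum is returned. I would make this precise as follows.

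First, fix an arbitrary instance. By Lemma~\ref{lem:constypes1} I may restrict attention to protocols without menus that are generalized-direct and generalized-persuasive, i.e., that use the signal set $\mathcal{A}^n$ with each profile $\avec \in \mathcal{A}^n$ recommending action $a_k$ to type $k$ and satisfying $a_k \in \mathcal{B}^k_{\xi^{\avec},\pi}$, since this transformation preserves the seller's expected utility. By Lemma~\ref{lem:constypes2} I may further take the price $p$ in the finite set $\{b_k\}_{k \in \K}$. Hence the seller's supremum over protocols without menus equals the supremum, taken over the finitely many pairs $(p,\mathcal{R})$ with $p \in \{b_k\}_{k \in \K}$ and $\mathcal{R} \subseteq \{k \in \K : b_k \ge p\}$, of the optimal value of Problem~\eqref{quad:LP_kfixed}.

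Next, I would show that each such subproblem attains its optimum. Exactly as in Section~\ref{sec:protocol_selection} for protocols with menus, Problem~\eqref{quad:LP_kfixed} admits a linear relaxation obtained by replacing each bilinear term $\sum_{\theta \in \Theta} \mu_\theta \phi_\theta(\avec)\,\pi(\avec,a_k)$ by a fresh nonnegative variable; the analogues of Lemma~\ref{lem:typerep3} and Lemma~\ref{lem:typerep4} give that (i) the relaxation's optimal value is at least the supremum of Problem~\eqref{quad:LP_kfixed}, and (ii) any feasible point of the relaxation can be turned into a bona fide protocol whose seller's utility is at least the value of that point, the only delicate point --- a payment variable being positive while the corresponding recommendation probability vanishes --- being resolved by shifting probability mass onto that recommendation, just as in the menu case. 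Since this LP has a nonempty feasible region and a bounded objective (all utilities lie in $[0,1]$ and $p \le \max_{k} b_k$), it attains its maximum; pulling this back through (i)--(ii), Problem~\eqref{quad:LP_kfixed} attains its supremum via an explicit protocol, for every pair $(p,\mathcal{R})$.

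Finally, the seller's overall supremum is the maximum of finitely many quantities, each realized by an explicit protocol, so it is attained by the best among those protocols, and is therefore a maximum. The only step requiring care is the passage between Problem~\eqref{quad:LP_kfixed} and its linear relaxation --- in particular checking that the relaxation's optimum is simultaneously an upper bound for and realizable within the quadratic problem --- but this is a direct adaptation of the argument already developed for protocols with menus in Section~\ref{sec:protocol_selection}, now carried out over recommendation profiles in $\mathcal{A}^n$, and introduces no new obstacle; the number of pairs $(p,\mathcal{R})$ being possibly exponential but always finite is immaterial for the existence of a maximum.
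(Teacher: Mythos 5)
Your proposal is correct and follows essentially the same route as the paper: the corollary is an immediate consequence of the correctness (not the running time) of the algorithm behind Theorem~\ref{thm:fixed_types}, i.e., Lemmas~\ref{lem:constypes1}~and~\ref{lem:constypes2} reduce the problem to finitely many pairs $(p,\mathcal{R})$ with $p \in \{b_k\}_{k\in\K}$, each handled via the linear relaxation of Problem~\eqref{quad:LP_kfixed} whose solution is converted back into a protocol of at least the same value, so the overall supremum is attained. One minor wording slip: the orphaned-payment issue is resolved, as in Lemma~\ref{lem:typerep4} and the proof of Theorem~\ref{thm:fixed_types}, by zeroing the variable $l(\bar{\avec},\cdot)$ and transferring its value to a signal with positive probability, not by shifting probability mass onto the zero-probability recommendation.
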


%

%\bibliography{biblio}
\bibliographystyle{ACM-Reference-Format}
\bibliography{biblio}

% Appendix
\clearpage
\appendix

%\begin{center}
%	\LARGE{\textbf{Supplementary Material for Paper}}
%\end{center}
\section{Proofs Omitted from Section~\ref{sec:protocol_selection}}

\typerepfirst*
\begin{proof}
	Let  $ \{\left(\phi^k, p_k,\pi_k \right) \}_{k \in \mathcal{K}} $ be an IC and IR protocol such that there exist two signals $s_1,s_2 \in \mathcal{S}$ inducing the same best response $\bar a \in A$ for a given receiver's type $ \bar k \in \mathcal{K} $, $i.e.$, $b_{\xi^{s_1}}^{\bar k}=b_{\xi^{s_2}}^{\bar k}=\bar a$.
	
	In the following, we show how to replace $\phi^{\bar k}$ and $\pi^{\bar k}$ with a new signaling scheme $\bar\phi^{\bar k}$ and a new payment function $\bar\pi^{\bar k}$ by merging $s_1,s_2$ into a single signal $\bar s$. Formally we define:
	$ \bar\phi^{\bar k}_{\theta}( \bar s)=\phi^{k}_{\theta}(s_1)+\phi^{k}_{\theta}(s_2)$ for each $\theta \in \Theta$. Similarly, we define:
	\begin{equation*}
	\bar\pi^{\bar k}(\bar s,\bar a)=\frac{\sum_{\theta \in \Theta} \mu_\theta \phi^{\bar k}_{\theta}(s_1) \pi_{\bar k}(s_1,\bar a) +\sum_{\theta \in \Theta} \mu_\theta \phi^{\bar k}_{\theta}(s_2) \pi_{\bar k}(s_2,\bar a)}{\sum_{\theta \in \Theta} \mu_\theta (\phi^{\bar k}_{\theta}(s_1) + \phi^{\bar k}_{\theta}(s_2))},
	\end{equation*}
	 Finally, we does not change all the other components of the protocol \emph{i.e.}, we leave these components of $\{\left(\bar \phi^k, \bar p_k,\bar \pi_k \right) \}_{k \in \mathcal{K}} $ equal to the one in $\{\left(\phi^k, p_k,\pi_k \right) \}_{k \in \mathcal{K}}$.  To prove the lemma we show that the protocol $ \{\left(\bar\phi^k, \bar p_k,\bar\pi_k \right) \}_{k \in \mathcal{K}} $ achieves the same seller's expected utility of $ \{\left(\phi^k, p_k,\pi_k \right) \}_{k \in \mathcal{K}} $, while satisfying the IC and IR constraints. As a first step, we observe that:
	\begin{align*}
	& \sum_{s \in \mathcal{S} \setminus \{s_1,s_2\}} \sum_{\theta \in \Theta} \mu_\theta \bar \phi^{\bar k}_{\theta}(s) \left[ u^{\bar k}_\theta(b^{\bar k}_{\xi^s, \bar \pi}) + \bar\pi_{\bar k}(s, b^{\bar k}_{\xi^s, \bar\pi}) \right] + \sum_{\theta \in \Theta} \mu_\theta \bar\phi^{\bar k}_{\theta}(\bar s) [u^{\bar k}_\theta(\bar a) + \bar\pi_{\bar k}(\bar s,\bar a) ]- \bar p_{\bar k} = \\
	& \sum_{s \in \mathcal{S} \setminus \{s_1,s_2\}} \sum_{\theta \in \Theta} \mu_\theta \phi^{\bar k}_{\theta}(s) \left[ u^{\bar k}_\theta(b^{\bar k}_{\xi^s, \pi}) + \pi_{\bar k}(s, b^{\bar k}_{\xi^s, \pi}) \right] + \sum_{\theta \in \Theta} \mu_\theta \phi^{\bar k}_{\theta}(s_1)[ u^{\bar k}_\theta( \bar a) + \pi_{\bar k}(s_1,\bar a)] + \\ & \hspace{50mm} +	\sum_{\theta \in \Theta} \mu_\theta \phi^{\bar k}_{\theta}(s_2) [u^{\bar k}_\theta( \bar a) + \pi_{\bar k}(s_2,\bar a)] - p_{\bar k}.
	\end{align*}
	The latter equality holds by linearity and proves that the protocol $ \{(\bar \phi^k, \bar p_k,\bar \pi_k ) \}_{k \in \mathcal{K}} $ preserves the left-hand sides of the IR and IC constraints. Moreover, thanks to the convexity of the max operator, we can show that:
	\begin{align*}
	& \max_{a\in \mathcal{A}} \sum_{\theta \in \Theta} \mu_\theta \phi^{ \bar k}_{\theta}(s_1) [u^{\bar k}_\theta(a) + \pi_{\bar k}(s_1,a)] +	\max_{a \in \mathcal{A}} \sum_{\theta \in \Theta} \mu_\theta \phi^{\bar k}_{\theta}(s_2) [ u^{\bar k}_\theta(a) + \pi_{\bar k}(s_2,a)] - p_{\bar k}\ge \\
	& \max_{a\in \mathcal{A}} \sum_{\theta \in \Theta} \mu_\theta \bar\phi^{\bar k}_{\theta}(\bar s)  [u^{\bar k}_\theta(a) + \bar\pi_{\bar k}(\bar s,a)] - p_{\bar k}.
	\end{align*}
	Then, by summing over the set $(\mathcal{S} \cup \{\bar s\})\setminus \{s_1,s_2\}$, we notice that the value of the right-hand side of the IC constraints achieved by the protocol $ \{(\bar \phi^k, \bar p_k,\bar \pi_k ) \}_{k \in \mathcal{K}} $ is less or equal to the the value achieved by $ \{( \phi^k, p_k, \pi_k ) \}_{k \in \mathcal{K}} $. Due to that, we can easily conclude that the new protocol preserves the IC and the IR constraints. Finally, by observing that the following equality holds:
	\begin{equation*}
	\sum_{\theta \in \Theta} \mu_\theta \bar\phi^k_{\theta}(\bar s) [u^{\bar k}_\theta(\bar a) + \bar\pi_{\bar k}(\bar s,\bar a) ] = \sum_{\theta \in \Theta} \mu_\theta \phi^{\bar k}_{\theta}(s_1)[ u^{\bar k}_\theta( \bar a) + \pi_{\bar k}(s_1,\bar a)] +	\sum_{\theta \in \Theta} \mu_\theta \phi^{\bar k}_{\theta}(s_2) [u^{\bar k}_\theta( \bar a) + \pi_{\bar k}(s_2,\bar a)],
	\end{equation*}
	we can easily prove that the two protocols achieve the same seller's expected utility. Then, by iterating this procedure for each buyer's type and couple of signals until there are no two signals inducing the same best response for that type, we get a protocol that employs direct and persuasive signals. 
\end{proof}

\typerepsecond*
\begin{proof}
	%By Lemma \ref{lem:typerep1} we know that there exist direct and persuasive signaling schemes belonging to an optimal protocol $ \{\left(\phi^k, p_k,\pi_k \right) \}_{k \in \mathcal{K}} $.
	
	We first prove that by setting $\tilde \pi_k(a,a')=0$ for each $a \ne a' \in A$ and $k \in \K$, the seller's expected utility does not change and the IC and IR constraints are preserved. First, we show that by taking $\tilde \pi_k(a,a')=0$ for each $a \ne a' \in \mathcal{A}$ and $k \in \K$, the signaling schemes $\phi^k$ remain persuasive. Indeed, we have that:
	\begin{align*}
	\sum_{\theta \in \Theta} \mu_\theta \phi^k_{\theta}(a) u^k_\theta(a) + \pi_k(a,a) & \ge \sum_{\theta \in \Theta} \mu_\theta \phi^k_{\theta}(a') u^k_\theta(a') + \pi_k(a,a')\\ 
	&  \ge \sum_{\theta \in \Theta} \mu_\theta \phi^k_{\theta}(a') u^k_\theta(a').
	\end{align*}
	Moreover, the left-hand side of all the equations defining the IR and IC constraints does not change since it depends only from the payments $\tilde \pi_k(a,a)= \pi_k(a,a)$ for each $a ' \in \mathcal{A}$ that remains unchanged. Furthermore, by setting $\tilde \pi_k(a,a')=0$ for each $a \ne a' \in \mathcal{A}$ and $k \in \K$, the right-hand sides of the IC constraints achieve smaller or equal values, since, intuitively, we are setting to zero non-negative values. Hence, the IC constraints are satisfied. Finally, by observing that the left-hand sides of the IR constraints and the seller's expected utility do not embed terms $\tilde \pi_k(a,a')$ with $a \ne a' \in \mathcal{A}$ and $k \in \K$, we conclude the first part of the proof.
	
	In the second part of the proof we show that it is always possible to define a protocol in which the seller asks to each buyer's type to deposit all their budget at the beginning of the interaction, achieving the same seller's expected utility and satisfying the constraints. Formally, we show that given a protocol $ \{\left(\phi^k, p_k,\pi_k \right) \}_{k \in \mathcal{K}} $  the protocol $ \{\left(\phi^k, \tilde p_k,\tilde\pi_k \right) \}_{k \in \mathcal{K}} $, with $\tilde p_k=b_k$ and $\tilde\pi_k(a) = \pi_k(a,a) + b_k-p_k$ for each $k \in \mathcal{K}$ and $a \in \mathcal{A}$, achieves the same seller's expected utility.
	Indeed by linearity we have:
	\begin{align*}
	\sum_{k \in \mathcal{K}} \lambda_k \Big[ \sum_{a \in\mathcal{A}} \sum_{\theta \in \Theta} \mu_\theta \phi^{k}_\theta(a) u_\theta^s(a) -\pi_k(a)  + p_k \Big] & \hspace{-0.1cm}=\hspace{-0.1cm}
	\sum_{k \in \mathcal{K}} \lambda_k \Big[ \sum_{a \in\mathcal{A}} \sum_{\theta \in \Theta} \mu_\theta \phi^{k}_\theta(s)  u_\theta^s(a) -\pi_k(a) + b_k - b_k + p_k \hspace{-0.05cm} \Big]\\
	& \hspace{-0.1cm}=\hspace{-0.1cm}
	\sum_{k \in \mathcal{K}}  \lambda_k \Big[ \sum_{a \in\mathcal{A}} \sum_{\theta \in \Theta} \mu_\theta \phi^{k}_\theta(a)  u_\theta^s(a) -\tilde\pi_k(a) + b_k \Big].
	\end{align*}
	With  similar arguments it is easy to check  that the protocol $ \{\left(\phi^k, \tilde p_k,\tilde\pi_k \right) \}_{k \in \mathcal{K}} $ satisfies the IC and IR constraints, concluding the lemma.
\end{proof}

\typerethird*
\begin{proof}
We show that for each protocol $ \{\left(\phi^k, p_k,\pi_k \right) \}_{k \in \mathcal{K}} $ that is feasible for Problem~\eqref{eqn:LP_type_reporting1}, we can derive a solution to LP~\eqref{fig:lp_reporting} with at least the same value. This is sufficient to prove the statement.

 By Lemma~\ref{lem:typerep1} and Lemma~\ref{lem:typerep2}, we focus without loss of generality on protocols $ \{\left(\phi^k, p_k,\pi_k \right) \}_{k \in \mathcal{K}} $  that are direct and persuasive.
 Then, we can build a solution $(\bar \phi,\bar l,\bar y)$ to LP~\eqref{fig:lp_reporting} letting 
 \[\bar l_k(a)=\sum_{\theta \in \Theta} \mu_\theta \phi_\theta(a) \pi_k(a)\] 
 for each $k \in \K$ and $a \in \mathcal{A}$, and 
 \[\bar y_{k,k',a} = \max\left\{ \sum_{\theta \in \Theta}  \mu_\theta \phi^{k'}_{\theta}(a)  \left (u^{k}_\theta(a) + \pi_{k'}(a,a)\right) ,\max_{a'\neq a} \sum_{\theta \in \Theta}  \mu_\theta \phi^{k'}_{\theta}(a')  u^{k}_\theta(a') \right\} \]
 for each $k,k' \in \K$ and $a \in \mathcal{A}$.
 Moreover, we let $\bar \phi=\phi$.
It is easy to verify that the solution $(\bar \phi,\bar l, \bar y)$ results feasible for LP~\eqref{fig:lp_reporting}.  
\end{proof}

\typerefourth*
\begin{proof}
As a first step we show that from a feasible solution $(\phi,l,y)$ to LP~\eqref{fig:lp_reporting} we can recover another solution with at least the same value in which if $\pi_k(a)>0$ then there exists a $\theta \in \Theta$ such that $\phi^k_\theta (a)>0$.  
%, achieving the same utility while satisfying the constraints.
Specifically, given a $k \in \K$ and an $a \in \mathcal{A}$ such that $\phi^k_\theta (a)=0$ for all $\theta \in \Theta$ and $l_k(a)>0$,  let $\bar a \in \mathcal{A}$ be $(\bar a, \bar \theta)$ be any couple of an action and a state such that $\phi^k_{\bar \theta} (\bar a)>0$.
 We now define a new feasible solution $(\bar \phi, \bar l, \bar y)$ as follows:
 \begin{itemize}
 	\item $\bar l_k(a) = 0$
 	\item $\bar l_k(\bar a) = l_k(a) + l_k(\bar a)$
 	\item $\bar{y}_{k',k,a} = 0 \quad \forall k' \in \K $
 	\item $\bar y_{k',k,\bar a} = y_{k',k,\bar a} + l_{k}(a) \quad  \forall k' \in \K $,
\end{itemize}
while we leave all the other terms variables equal to the ones in $(\phi,l,y)$.
It is easy to check that the solution $(\bar \phi,\bar l,\bar y)$achieves the same seller's expected utility while satisfying the constrains. 
 Applying  this procedure for each couple $(k,a)$ such that $\phi^k_\theta (a)=0$ for all $\theta \in \Theta$ and $l_k(a)>0$, we obtain a new solution $(\tilde \phi,\tilde l, \tilde y)$ such that if $\tilde l_k(a)>0$ then there exists a $\theta \in \Theta$ such that $\tilde \phi^k_\theta (a)>0$.
 
To recover a feasible protocol we just need to set payments as follows. For each  couple $(k,a)$, if there exists a $\theta \in \Theta$ such that $\tilde \phi^k_\theta(a)>0$, we set $\tilde \pi_k(a)= \tilde  l_k(a)/(\sum_{\theta \in \Theta}\mu_\theta \tilde \phi^k_\theta(a))$.
Otherwise, we set $\tilde \pi_k(a)=0$. 
Notice that the ratio $\tilde  l_k(a)/(\sum_{\theta \in \Theta}\mu_\theta \tilde \phi^k_\theta(a))$ is always well defined.
Moreover, it is easy to see that $\{\tilde \phi^k,\tilde p_k,\tilde \pi_k\}_{k \in \K}$, with $\tilde p_k=b_k$ for each $k \in \K$ is a feasible solution to Problem~\eqref{eqn:LP_type_reporting1} with at least the same value as $(\phi,l,y)$ for LP~\eqref{fig:lp_reporting}  . This concludes the proof.
\end{proof}

\typerepthm*
\begin{proof}
The algorithm solves LP~\ref{fig:lp_reporting}. 
By Lemma~\ref{lem:typerep3}, this solution has value
%at least equal to the supremum of Program~\ref{eqn:LP_type_reporting1}, \emph{i.e.}, 
greater or equal to the supremum of Program~\ref{eqn:LP_type_reporting1}. Then, exploiting Lemma~\ref{lem:typerep4}, we can recover in polynomial-time a protocol with at least the same utility, \emph{i.e.}, an optimal one.
\end{proof}

\section{Proofs Omitted from Section~\ref{sec:principal_agent}}
\lemmaprincipal*
\begin{proof}
	We start proving the first part of the statement.
	Given an instance of Problem~\eqref{prob:payment_func}, we build an instance of the observable-action principal-agent problem	with $c^k_a=\sum_{\theta \in \Theta} \xi_\theta \left[  u^k_\theta(b_\xi^k) -u^k_\theta(a) \right]$ and $r_a = \sum_{\theta \in \Theta} \xi_\theta \, u^s_\theta(a)$.
	To prove the equivalence between the two settings, we first show that the set of best-responses for the two problems coincides.
	Indeed, given a payment function $\pi$ and a type $k\in \K$, let $a \in \mathcal{B}^k_{\xi,\pi}$. Then, for each $a'\in \A$
	\begin{align*}
		\pi(a) - c^k_a&= \pi(a) - \sum_{\theta \in \Theta} \xi_\theta \left[  u^k_\theta(b_\xi^k) -u^k_\theta(a) \right]\\
		& = \pi(a) + \sum_{\theta \in \Theta} \xi_\theta u^k_\theta(a)  -\sum_{\theta \in \Theta} \xi_\theta u^k_\theta(b_\xi^k) \\
		&\ge \pi(a') + \sum_{\theta \in \Theta} \xi_\theta u^k_\theta(a')  -\sum_{\theta \in \Theta} \xi_\theta u^k_\theta(b_\xi^k) \\
		& = \pi(a') - c^k_{a'},
	\end{align*}
	showing that $a \in  \mathcal{B}^k_{\pi}$.
	Similarly, we can prove that if $a \in  \mathcal{B}^k_{\pi}$, then $a \in \mathcal{B}^k_{\xi,\pi}$
	This implies that the set of best responses are equivalent for each payment function $\pi$, \emph{i.e.}, $\mathcal{B}^k_{\pi} =\mathcal{B}^k_{\xi,\pi}$.
	Hence, 
	\begin{align*}
	\argmax_{\pi} \sum_{k \in \K}\lambda_k \left[ r_{b^k_\pi}- \pi(b^k_\pi)\right]  &=\argmax_{\pi} \sum_{k \in \K}\lambda_k \left[ \sum_{\theta \in \Theta} \xi_\theta \, u^s_\theta(b^k_{\pi})  - \sum_{\theta \in \Theta} \xi_\theta \left[  u^k_\theta(b_\xi^k) -u^k_\theta(b^k_\pi) \right]\right]\\
	&=\argmax_{\pi} \sum_{k \in \K}\lambda_k \left[ \sum_{\theta \in \Theta} \xi_\theta \, u^s_\theta(b^k_{\pi})  + \sum_{\theta \in \Theta} \xi_\theta u^k_\theta(b^k_\pi) \right]\\
	&=\argmax_{\pi} \sum_{k \in \K}\lambda_k \left[ \sum_{\theta \in \Theta} \xi_\theta \, u^s_\theta(b^k_{\xi,\pi})  + \sum_{\theta \in \Theta} \xi_\theta u^k_\theta(b^k_{\xi,\pi}) \right],
	\end{align*}
	showing the equivalence between the two problems.
	This proves the first part of the statement.
	
	We now show that from an instance of the observable-action principal-agent problem we can always build an instance of the selling-information problem without menus and with only a single state of nature $\theta$.
	In particular, we set $u^s_{\theta}(a)=r_a$ for each $a \in \mathcal{A}$, and $u^k_\theta(a)=1-c_a^k$ for each $a \in \mathcal{A}$ and $k \in \K$. 
	Following a analysis similar to the first part of the proof, we can show that the two problems are equivalent.
	This concludes the proof.
\end{proof}

\theoremHardnessState*

\begin{proof}
	
	We reduce from vertex cover in cubic graphs.
	Formally, it is \NPHard\ to approximate the size of the minimum vertex cover in cubic graphs with an approximation $(1+\varepsilon)$, for a given constant $\varepsilon>0$~\cite{APXAlimonti}.
	Let $\eta= \varepsilon/7$.
	We show that an $(1-\eta)$-approximation to the principal-agent problem with observable actions can be used to provide a $(1+\varepsilon)$ approximation to vertex cover, concluding the proof.
	
	Consider an instance of vertex cover $(V,E)$ with nodes $V$ and edges $E$. Let $\rho=|V|$ and $\ell=|E|$.
	Given a vertex $v \in V$, we let $E(v)$ be the set of edges $e$ such that $v$ is one of the extreme of the edge $e$. Similarly, given an edge $e \in E$, let $V(e)$ be the set of vertexes $v$ such that $e$ is an edge with extreme $v$.
	We build an instance of the principal-agent problem with observable actions as follows. For each vertex $v \in V$, there exists an agent's type $k_v$, while for each $e \in E$, there exists a type $k_e$.
	For each vertex $v \in V$, there exists an action $a_v$ and an additional action  $a_-$.
	The cost of a type $k_e$, $e \in E$, is $c^{k_e}_{a_-}=0$, $c^{k_e}_{a_v}=\frac{1}{2}$ if $e \in E(v)$ and $1$ otherwise.
	The cost of a type $k_v$, $v \in V$, is $c^{k_v}_{a_{v}}=0$, and $1$ otherwise.
	Finally, the principal's utility is equal to $1$ if the action is in $\{a_{v}\}_{v \in V}$, while is equal to $0$ otherwise, \emph{i.e.}, $u^s_{a_v}=1$ for each $v \in V$ and $u^s_{a_-}=0$.
	All the types are equally probable, \emph{i.e.}, $\lambda_k=\frac{1}{\rho+\ell}$ for each $k \in \K$.
	
	First, we show that if there exists a vertex cover $V^\star$ of size $\nu$, the value of the problem is at least $\frac{(\rho-\nu) + \frac{\nu}{2} + \frac{\ell}{2}}{\rho+\ell}$. 
	Consider the payment function such that $\pi(a_v)=\frac{1}{2}$ if $v\in V^\star$ and $0$ otherwise.
	A type $k_v$ with $v \notin V^\star$ plays the action $a_v$ and receives a payment of $0$. A type $k_v$ with $v \in V^\star$ plays the action $a_v$ and receives a payment of $\frac{1}{2}$. A type $k_e$ plays an action $a_v$ such that $e \in E(v)$ (this action exists by construction) and receives a payment of $\frac{1}{2}$. 
	It is easy to see that the expected seller's utility is $\frac{(\rho-\nu) + \frac{\nu}{2} + \frac{\ell}{2}}{\rho+\ell}$.
	
	Suppose that there exists an algorithm that provides a $1-\eta$ approximation. This implies that the algorithm returns a solution, \emph{i.e.}, a payment function $\pi$,  with value at least $(1-\eta)\frac{\rho-\frac{k}{2}+\frac{\ell}{2}}{\rho+\ell}$.
	We show how to exploit the payment function $\pi$ to build a vertex cover of size at most $(1+\varepsilon) \nu$ in polynomial time.
	In particular, given $\pi$ we recover a vertex cover $\bar V$ of the desired size as follows.
	First, it is easy to see that we can set the payment $\pi(a_-)=0$ and payments $\pi(a_v)\in \{0,\frac{1}{2}\}$ for each $v \in V$ without decreasing the utility. Intuitively, payments are useful only to change the best response of a type $k_e$, $e \in E$, from $a_-$ to $a_v$, $v \in V$. To do so, it is sufficient a payment of $\frac{1}{2}$.
	Then, let $\bar E$ be the set of edges $e \in E$ such that the best response of $k_e$ is $a_-$, \emph{i.e.},  $\bar E= \{e \in E: b^{k_e}_{\pi}=a_-	\}$. 
	Consider a edge $e \in \bar E$ and a vertex $v \in V(e)$. Since $e \in \bar E$ the payment  $\pi(a_v)=0$ and no type $k_e$ plays action $a_v$.
	 Hence, if we modify the payments by letting $\pi(a_v)=\frac{1}{2}$ on the action $a_v$ we have three effects: i) the type $k_e$ changes the best response to $a_v$, ii) some other types $e' \in E$ could change from action $a_-$ to $a_v$,\footnote{Notice that there could be other actions $a_v'$ with $\pi(a_{v'})=\frac{1}{2}$ that provides the same utility of $a_{v}$ for both the principal and the agent.} iii) the payment of type $k_v$ increases by $\frac{1}{2}$.
	Overall the principal's total utility increases by $\frac{1}{2}\lambda_{k_e}$ since $k_e$ changes from action $a_-$ with payment $0$ to action $a_v$ with payment $\frac{1}{2}$, and it decreases by $-\frac{1}{2}\lambda_{k_v}$ as the payment to type $k_v$ increases by $\frac{1}{2}$. Moreover, if other types $k_{e'}$, $e'\neq e$ change from $a_-$ with payment $0$ to action $a_v$ with payment $\frac{1}{2}$ the principal's utility increases.
	This implies that the principal's utility does not decrease with this procedure.
	Hence, repeating this procedure we can build a payment function $\pi$ with the same utility such that all the agent's type plays actions $a_v$, $v \in V$.
	Then, let $\bar V$ be the set of vertexes with at least one agent of type $k_e$, $e \in E$ that plays this action, \emph{i.e.}, $\bar V= \{v \in V: b^{k_e}_\pi=a_v, e \in E\}$. We show that $\bar V$ is an vertex cover of size at most $(1+\epsilon)\nu$, concluding the proof.
	Notice that we can set payment $\pi(a_v)=0$ for each $v \in V \setminus \bar V$ without decreasing the seller's utility. Removing the payment does not change any best response since the only type playing the action $a_v$ is the type $k_v$, the utility $u^{k_v}(a_v)=1$, and the utility of playing any other action $a\neq a_v$ is  $u^{k_v}(a)+\pi(a)\le \frac{1}{2}$.
	Then, the principal's utility is  \[\frac{(\rho-|\bar V|)-\frac{1}{2}|\bar V|+\frac{1}{2} \ell}{\rho+\ell}=\frac{\rho-\frac{1}{2}|\bar V|+\frac{1}{2} \ell}{\rho+\ell},\]
	since $\rho-|\bar V|$ types $k_v$, $v \in V$, play $a_v$ and receive  payment $0$, $|\bar V|$ types $k_v$, $v \in V$, play $a_v$ and receive payment $\frac{1}{2}$, and all the types $k_e$, $e \in E$,  play an action $a_v$, $v \in V$ and receive payment $\frac{1}{2}$.
	Then, since the principal's utility is by assumption $\frac{\rho-\frac{1}{2}|\bar V|+\frac{1}{2}\ell}{\rho+\ell}\ge(1-\eta)\frac{\rho-\frac{\nu}{2}+\frac{\ell}{2}}{\rho+\ell} $,
	it holds 
	\[|\bar V|\le \eta (2\rho+ \ell)+\nu\le (1+7 \eta)\nu=(1+7\eta)\nu=(1+\varepsilon)\nu,\]
	where the second inequality comes from $\rho= \frac{2}{3}\ell$ and $\ell \le 3\nu$. 
	This concludes the proof.
\end{proof}

\corolPD*

\begin{proof}
	As a first step, we notice that even if \citet{castiglioni2021Contract} show that linear contracts provide the desired approximation with respect to optimal contracts, their proof can be extended to show that linear contracts provide the same approximation with respect to the optimal social welfare, \emph{i.e.}, $\sum_{k \in \K} \lambda_{k} \max_{a \in \A} [r_a-c^k_a]$. To prove this result, it is sufficient to follow all the steps of Theorem~3 of  \citep{castiglioni2021Contract} except for the one in which Observation~1 is employed to upperboud the value of the optimal contract with  the social welfare.
	Finally, we can modify this result to hold in our setting exploiting Lemma~\ref{lem:prin2sing}.
\end{proof}

\section{Proofs Omitted from Section~\ref{sec:selection_limited_liability}}

\constactionsfirst*
\begin{proof}
	Let $\tilde\p\in {\pset_q}$ be the empirical mean of $q$ \emph{i.i.d.} samples drawn according to $\p^\ast \in \Delta_\Theta$, where each $\theta\in\Theta$ has probability $\p^\ast_\theta$ of being sampled. 
	Therefore, $\tilde\p \in \Xi_q$ is a random vector supported on $q$-uniform posteriors with expectation $\p^\ast \in \Delta_\Theta$. 
	Moreover, let $\gamma\in\Delta_{\pset_q}$ be a probability distribution such as, for each $\p\in\pset_q$, it holds $\gamma_\p\defeq \pr(\tilde\p=\p)$. 
	We build a new payment function $\tilde \pi$ such that for each $\xi \in \Delta_\Theta$ and $a \in \A$, we have $\tilde\pi(\xi,a)=\pi(\xi^*,a)$
	Moreover, we let $\pset_{q,\epsilon}$ be the set of posteriors such that $\p\in\pset_{q,\epsilon}$ if and only if for each $a\in\A$ it holds:
	\begin{equation}\label{eq:pset_q_eps}
	\left|\sum_{\theta\in\Theta}\left(\p_\theta u^k_\theta(a)-\p^*_\theta u_\theta^k(a)\right)\right|\leq \frac{\epsilon}{2}.
	\end{equation}
	Then, for each $\p\in\pset_{q,\epsilon}$, we have that $\brset_{\p^\ast, \pi}^k\subseteq \brset^{k,\epsilon}_{\p, \pi}$.
	In particular, for any $a^\ast\in\brset_{\p^\ast, \pi}^k$, $\p\in\pset_{q,\epsilon}$ and $a\in \A$:
	\begin{align*}
	\sum_{\theta\in\Theta}\p_\theta u_\theta^k(a^\ast) + \tilde \pi(\xi, a^\ast)
	& \geq \sum_{\theta\in\Theta}\p_\theta^\ast u_\theta^k(a^\ast) + \tilde\pi(\xi^\ast, a^\ast) -\frac{\epsilon}{2} 
	& \textnormal{(By Eq.~\eqref{eq:pset_q_eps} and the definition of $\brset_{\p^\ast, \pi}^k$)}\\
	& \geq \sum_{\theta\in\Theta} \p_\theta^\ast u_\theta^k(a) + \tilde\pi(\xi^\ast, a) -\frac{\epsilon}{2}\\
	&\geq \sum_{\theta\in\Theta}\p_\theta u_\theta^k(a)+\tilde \pi(\xi, a) -\epsilon&\textnormal{(By Equation~\eqref{eq:pset_q_eps})}
	\end{align*}
	which is precisely the definition of $\brset_{\p^\ast, \pi}^{k,\epsilon}$.
	
	 For each $a\in \A$, let $\tilde t_a^k \defeq \sum_{\theta\in\Theta}\tilde\p_\theta u_\theta^k(a) + \tilde \pi(\tilde \xi, a)$ and $t_a^k\defeq \sum_{\theta\in\Theta} \p^\ast_\theta u^k_\theta(a)+ \tilde \pi(\xi^*, a)$.	
	By the Hoeffding's inequality we have that, for each $a\in\A$,
	%\begin{subequations}
	\begin{align}\label{eq:lemma_hoeffding_t}
	\pr(|\tilde{t}^k_a  - \Expec[\tilde{t}^k_a]|\geq \frac{\epsilon}{2})\leq 2e^{-2q(\epsilon/2)^2}=2e^{- \log(2\nAct/\alpha)}\leq \frac{\alpha}{\nAct}.
	\end{align}
	%\end{subequations}
	%
	Moreover, Equation~\eqref{eq:pset_q_eps} and the union bound yield the following:
	\begin{align*}
	\sum_{\p\in\pset_{q,\epsilon}} \gamma_\p &= \pr(\tilde\xi\in \pset_{q,\epsilon})\\
	& = \pr (\bigcap_{a\in\A} \left| \tilde t_a^k- t_a^k \right|\leq \frac{\epsilon}{2})\\
	& \geq 1-\sum_{a\in\A} \pr(\left| \tilde t_a^k- t_a^k \right|\geq \frac{\epsilon}{2}) \\
	%&\geq 1-\sum_{a\in\A}\frac{\alpha}{\nAct} & \textnormal{(By Equation~\eqref{eq:lemma_hoeffding_t})}\\
	& \geq  1-\alpha. & \textnormal{(By Equation~\eqref{eq:lemma_hoeffding_t})}
	\end{align*}
	Let $\bar{\xi}$ be a $\nState$-dimensional vector defined as $\bar\alpha_\theta\defeq \sum_{\p\in\pset_q\setminus\pset_{q,\epsilon}}\gamma_\p\p_\theta$. By definition and for the previous result we have:
	$
	\sum_{\theta\in\Theta}\bar{\xi}_\theta\leq \alpha.
	$ Finally, we can show:
	\begin{align*}
	\sum_{\p\in\pset_{q}}\gamma_\p \sum_{\theta \in \Theta} &\xi_\theta  u^s_\theta( b^{k,\epsilon}_{\xi,\pi'})  - \pi'(\xi, b^{k,\epsilon}_{\xi,\pi'}) \\
	\geq&  \sum_{\p\in\pset_{q}}\gamma_\p \sum_{\theta \in \Theta} \xi_\theta u^s_\theta( b^{k,\epsilon}_{\xi,\pi}) - \pi(\xi, b^{k,\epsilon}_{\xi,\pi})\\ \geq &
	\sum_{\p\in\pset_{q,\epsilon}} \gamma_\p \sum_{\theta \in \Theta} \xi_\theta u^s_\theta( b^{k,\epsilon}_{\xi,\pi}) - \pi(\xi, b^{k,\epsilon}_{\xi,\pi}) \\
	\geq & \specialcell{\sum_{\p\in\pset_{q,\epsilon}} \gamma_\p \sum_{\theta\in\Theta} \p_\theta u_\theta^\send(b_{\p^\ast,\pi}^k)-\pi(\xi^*, b_{\p^\ast,\pi}^k) 
	\hfill \textnormal{($\brset_{\p^\ast, \pi}^{k}\subseteq \brset_{\p, \pi}^{k,\epsilon}$ for each $\p\in\pset_{q,\epsilon}$)}}\\
	= &  \sum_{\theta\in\Theta}\Big( u_\theta^\send(b_{\p^\ast,\pi}^k) -\pi(\xi^*, b_{\p^\ast,\pi}^k) \Big)\Big( \sum_{\p\in\pset_{q,\epsilon}} \gamma_\p\p_\theta \Big) \\
	= & \specialcell{\sum_{\theta\in\Theta} \Big( u_\theta^\send(b_{\p^\ast,\pi}^k) -\pi(\xi^*, b_{\p^\ast,\pi}^k) \Big)\Big( \sum_{\p\in\pset_q} \gamma_\p\p_\theta - \bar\xi_\theta \Big)
	\hfill \textnormal{(By definition of $\bar{\alpha}$)} }\\
	 = &\sum_{\theta\in\Theta} \Big( \xi_\theta u_\theta^\send(b_{\p^\ast,\pi}^k) -\pi(\xi^*, b_{\p^\ast,\pi}^k) \Big)\Big( \sum_{\p\in\pset_q} \gamma_\p\p_\theta \Big)- \sum_{\theta\in\Theta} \Big( u_\theta^\send(b_{\p^\ast,\pi}^k) -\pi(\xi^*, b_{\p^\ast,\pi}^k) \Big)\bar\xi_\theta \\
	\geq & \specialcell{\sum_{\theta\in\Theta} \Big( u_\theta^\send(b_{\p^\ast,\pi}^k) -\pi(\xi^*, b_{\p^\ast,\pi}^k) \Big)\Big( \sum_{\p\in\pset_q} \gamma_\p\p_\theta \Big)-  \sum_{\theta\in\Theta} \bar\xi_\theta \hfill \textnormal{(Utilities in $[0,1]$)}}\\
	\geq & \sum_{\theta\in\Theta}  \xi^*_\theta u_\theta^\send(b_{\p^\ast,\pi}^k) -\pi(\xi^*, b_{\p^\ast,\pi}^k) - \alpha.
	\end{align*}
	
	Finally, by definition of $\gamma$, we have that, for each $\theta\in\Theta$:
	\[
	\sum_{\p\in\pset_q} \gamma_\p\p_\theta=\p_\theta^\ast.
	\]
	This concludes the proof.
\end{proof}

\constactionsecond*
\begin{proof}
	Given a posterior $\xi \in \Delta_\Theta$ and a tuple $a \in \mathcal{A}^{|\mathcal{K}|}$ we let $\Pi_{a} \subseteq \mathbb{R}_+^m$ be the set of payment functions $\pi$ such that for each $k \in \K$ it holds $a_k \in \mathcal{B}^k_{\xi,\pi}$.
	Given an $a \in \mathcal{A}^{|\mathcal{K}|}$, the problem of computing an optimal payment function restricted to payment functions in $\Pi_a$ can be formulated as follows:
	\begin{subequations}
		\begin{align*}
		%\max_{a \in \mathcal{A}^{|\mathcal{K}|}}
		\min_{\pi} & \,\, \sum_{k \in \mathcal{K}} \lambda_k \pi(\xi, a_k)   \,\,\,\, \text{s.t.} \quad\quad\quad\quad\quad\quad\quad\quad\quad\quad\quad\quad\quad\quad\quad\quad\quad\quad\quad\quad\\
		& \specialcell{\sum_{\theta \in \Theta}\xi_\theta u_\theta^k(a_k)+\pi(\xi, a_k) \ge \sum_{\theta \in \Theta}\xi_\theta u_\theta^k(a')+\pi(\xi, a')  \hfill \forall a' \in \mathcal{A}, k \in \mathcal{K}}\\
		&\specialcell{\pi(\xi,a') \ge 0  \hfill \forall a' \in \A.}
		\end{align*}
	\end{subequations}
	We observe that, for each tuple $a \in \mathcal{A}^{|\mathcal{K}|}$, the vertexes of the regions $\Pi_a \subseteq \mathbb{R}_+^m$ are identified by $m$ of the common $O(nm^2+m)$ constraints:  
	%We observe that, for each $a \in \mathcal{A}^{|\mathcal{K}|}$, the vertexes of each region $\Pi_a \subseteq \mathbb{R}_+^m$ are identified by the intersection of $m$ linear independent hyperplanes.
	%Moreover, each one of these vertexes is identified by a subset the $O(km^2)$ constraints ensuring that: 
	\begin{subequations}
				\begin{align*}
	&\specialcell{\sum_{\theta \in \Theta}\xi_\theta u_\theta^k(a')+\pi(\xi, a') \ge \sum_{\theta \in \Theta}\xi_\theta u_\theta^k(a'')+\pi(\xi, a'') \hfill \forall a'\ne a''  \in \mathcal{A}, \forall k \in \mathcal{K} }\\
	&\specialcell{ \pi(,\xi,a') \ge 0 \hfill \forall a' \in \mathcal{A}.}
	\end{align*}
	\end{subequations}
	
	%Moreover, we note that each one of these vertexes is identified by a subset of the $O(km^2)$ constraints $\sum_{\theta \in \Theta}\xi_\theta u_\theta^k(a_i)+\pi(\xi, a_i) \ge \sum_{\theta \in \Theta}\xi_\theta u_\theta^k(a_j)+\pi(\xi, a_j)$ for each $i \neq j \in [m]$ and $k \in \mathcal{K}$ and the $m$ constraints ensuring that $\pi(a) \ge 0$ for each $a \in \mathcal{A}$.
	Hence, the total number of vertexes defining all the regions $\Pi_a$, $a \in \mathcal{A}^{|\mathcal{K}|}$, is at most $ \binom{n m^2+m}{m}= O((n m^2+m)^{m})$.
	Finally, since the objective function is linear in $\Pi_a$ for each tuple $a \in \mathcal{A}^{|\mathcal{K}|}$, given the optimal tuple of induced actions $a^*\in \mathcal{A}^{|\mathcal{K}|}$ the optimum is attained in one of the vertexes of $\Pi_{a^*} $. Moreover, there are overall $\big|\bigcup_{a\in \mathcal{A}^{|\mathcal{K}|}} V(\Pi_a) \big|= O((k m^2+m)^{m})$ vertexes, where $V(\cdot)$ denotes the set of vertexes of the polytope. Hence, when $m$ is  fixed, it is possible to enumerate in polynomial time over all the vertexes in $\bigcup_{a\in \mathcal{A}^{|\mathcal{K}|}} V(\Pi_a)$   and compute the optimal payment function.
\end{proof}

\constactionsthmfirst*
\begin{proof}
	Given two arbitrary constants $\alpha,\epsilon >0$ we let $(\gamma, \pi)$ be an optimal protocol. We show that an $(\alpha+2\sqrt{\eps})$-optimal protocol $(\gamma^*, \pi^*)$  can be computed in polynomial time. As a first step we define a signaling scheme $\gamma^*$ supported in $\Xi_q$ as follows:
	\begin{equation*}
	\gamma_{\tilde{\xi}}^*=\sum_{\xi \in \text{supp}(\gamma)} \gamma_\xi  \gamma^{\xi}_{\tilde{\xi}} \hspace{4mm} \forall \tilde{\xi} \in \Xi_q,
	\end{equation*}
	where $\gamma^{\xi} \in \Delta_{\Xi_q}$ is the signaling scheme satisfying Lemma \ref{lem:actions_quniform} with $q=\frac{2\log(2m/\alpha)}{\epsilon^2}$. First we observe that $\gamma^* \in \Delta_{\Xi_q}$ satisfies the consistency constraints, indeed we have:
	\begin{equation*}
	\sum_{\tilde \xi \in \Xi_q}\gamma_{\tilde{\xi}}^* \, \tilde{\xi_\theta}  
	= \sum_{\xi \in \text{supp}(\gamma)} \gamma_\xi \sum_{\tilde \xi \in \Xi_q}  \gamma^{\xi}_{\tilde{\xi}} \tilde{\xi}_\theta
	= \sum_{ \xi \in \text{supp}(\gamma)} \gamma_{\xi} \xi_\theta 
	= \mu_\theta \,\,\,\ \forall \theta \in \Theta.
	\end{equation*}
	 Moreover, let $\pi^*:\Delta_\Theta \times \mathcal{A} \to \mathbb{R}_{+}$ be the optimal payment function in each $\tilde \xi \in \Xi_q$. We show that the protocol $(\gamma^*, \pi^*, 0)$ is $(\alpha+2\sqrt{\eps})$-optimal.
	 Let $\pi'':\Delta_\Theta \times \mathcal{A} \to \mathbb{R}_{+}$ be the optimal payment function in each $\xi \in \Xi_q$ when the buyer is playing an $\epsilon$-best response, \emph{i.e.}, 
	 \[\pi''(\xi,\cdot)\in \arg \max_{\tilde \pi(\xi,\cdot)} \sum_{k \in \K} \lambda_k [\sum_{\theta} \xi_\theta u^s_\theta(b^{k,\epsilon}_{\xi,\tilde \pi})- \tilde \pi(\xi)].\] 
	 Moreover, let $\pi':\Delta_\Theta \times \mathcal{A} \to \mathbb{R}_{+}$ be the payment function such that $\pi'(\xi,a)=(1-\sqrt{\epsilon})\pi''(\xi,a)+ \sqrt{\epsilon} \sum_{\theta \in \theta} \xi_\theta u^s_\theta(a)$ for each $\xi$ and $\A$. 
	 Then, we have:
	\begin{align*} 
	\sum_{\tilde{\xi} \in \Xi_q}\gamma_{\tilde{\xi}}^* &\Big( \sum_{\theta \in \Theta} \tilde \xi_\theta  u_\theta^s(b^{k}_{\tilde \xi,\pi^*})- \pi^*(\tilde{\xi},
	b^{k}_{\tilde \xi,\pi^*}) \Big) 
	\quad  \quad\quad\quad\quad\quad\quad\quad\quad\quad\quad\quad\quad\quad\quad\quad\quad\quad\quad\quad\quad\quad\quad\quad
	\\ &\specialcell{\ge \sum_{\tilde{\xi} \in \Xi_q}\gamma^*_{\tilde{\xi}} \Big( \sum_{\theta \in \Theta} \tilde \xi_\theta u_\theta^s(b^{k}_{\tilde \xi,\pi'})-\pi'(\tilde{\xi},
	b^{k}_{\tilde \xi,\pi'}) \Big) \hfill
	 \textnormal{(Optimality of $\pi^*$)}}
	 \\
	& \specialcell{ \ge   \sum_{\tilde{\xi} \in \Xi_q}\gamma_{\tilde{\xi}}^* \Big( \sum_{\theta \in \Theta} \tilde{\xi}_\theta u_\theta^s(b^{k,\epsilon}_{\tilde \xi,\pi''})-\pi''(\tilde{\xi},
	b^{k,\epsilon}_{\tilde \xi,\pi''})\Big) - 2 \sqrt{\epsilon} 
	\hfill \textnormal{(By Proposition \ref{thm:dutting})}}
	 \\
	& \specialcell{\ge \sum_{\xi \in \text{supp}(\gamma)} \gamma_\xi \Big( \sum_{\tilde{\xi} \in \Xi_q }
	\gamma^{\xi}_{\tilde{\xi}}  \big(\sum_{\theta \in \Theta} \tilde{\xi}_\theta u_\theta^s(b^{k,\epsilon}_{\tilde \xi,\pi''})-\pi''(\tilde \xi, b^{k,\epsilon}_{\tilde \xi,\pi''})\big) \Big) - 2 \sqrt{\epsilon} \hfill \textnormal{(By defintion of $\gamma^*$)} }
	\\
	& \specialcell{\ge \sum_{\xi \in \text{supp}(\gamma)} \gamma_{\xi} \Big(  \sum_{\theta \in \Theta} \xi_\theta u_\theta^s(b^{k}_{ \xi,\pi})-\pi(\xi, b^{k}_{ \xi,\pi})\Big) -\alpha - 2 \sqrt{\epsilon} 
	\hfill \textnormal{(By Lemma \ref{lem:actions_quniform})}}.
	\end{align*}
	%

	%\mat{Then, note that for each price function $\pi^*$ that is $\eps$-persuasive, there exists a price function $\bar \pi$ that is persuasive and provides utility at least $(u^s(b^{k}_{ \xi,\pi})-\pi(\xi, b^{k}_{ \xi,\pi}))-2\sqrt{\eps}$. Moreover, the optimal persuasive price functions $\pi^*$ can be efficiently computed by means of Lemma \ref{lem:price_fun}. }
	Notice that the optimal payment $\pi^*:\Delta_\Theta \times \mathcal{A} \to \mathbb{R}_{+}$ in each $\xi \in \Xi_q$ can be computed in polynomial time employing Lemma~\ref{lem:price_fun}.
	 Hence, to compute the optimal signaling scheme $\gamma^* \in \Delta_{\Xi_q}$ we can solve the following LP:
	\begin{align*}
	& \sum_{k \in \K} \lambda_k \sum_{\xi \in \Xi_q} \gamma_\xi \sum_{\theta \in \Theta} \xi_\theta u^s_\theta(b^{k}_{ \xi,\pi^*}) - \pi^*(\xi, b^{k}_{ \xi,\pi^*}) \,\,\ \text{s.t.} \\
	& \sum_{ \xi \in \text{supp}(\gamma)} \gamma_\xi \xi_\theta = \mu_\theta \,\,\,\ \forall \theta \in \Theta.
	\end{align*}
	Note that since $|\Xi_q|=O(d^q)$, all the payment function $\pi^*:\Xi_q \times \mathcal{A} \to \mathbb{R}_{+}$ can be precomputed in polynomial time. Moreover, the LP has polynomially many variables and constraints and can be solved efficiently. 
	Finally, the  solution returned by the LP is $\alpha+2\sqrt{\eps}$-optimal.
	This concludes the proof.
	%Finally, by Lemma \ref{lem:optimal2persuasive} we can convert the protocol $(\gamma^*, \pi^*, 0)$ into a persuasive and $(\alpha+\epsilon m)$-optimal protocol, concluding the proof. 
%	 
\end{proof}

\QPTAS*

\begin{proof}
	The proof follows the same steps of Theorem~\ref{le:sp}. However, it relies on Theorem~\ref{thm:linear} instead of Lemma~\ref{lem:price_fun} to compute an approximate payment function for all $q$-uniform posteriors. 
\end{proof}

%\section{Omitted proofs from section \ref{sec:const_states_first}}
\conststatesfirst*

%\mat{cambiare proof per essere consistenti con il lemma}

\begin{proof}
	
	We define a payment function $\pi' : \Delta_\Theta \times \mathcal{A} \to \mathbb{R}_{+}$ as follows: $\pi'(\xi,a)=\pi(\xi^*,a)$ for each $a \in \mathcal{A}$ and $\xi \in \Delta_\Theta$.
	Furthermore, we define: 
	\[I_{\alpha}(\xi^*)=\left\{ \xi \in \Delta_{\Theta} : \lVert \xi-\xi^*\rVert_{\infty} \le \frac{\alpha^2}{18d}\right\},\] as the neighborhood of the given posterior $\xi^* \in \Delta_{\Theta}$ and $\Xi(\xi^*)= I_{\epsilon} (\xi^*)\cap \Xi_q $ its intersection with the set $\Xi_q$. Notice that if $\ q \ge \frac{18d}{\alpha^2}$, it holds $\xi^* \in \text{co}(\Xi(\xi^*))$.
	\footnote{Given a finite set $A$ we denote with $\text{co}(A)$ the set containing all the convex combination of elements in $A$.}

	We show that for each $\xi \in I_{\alpha}(\xi^*)$, it holds $ b_{\xi^*, \pi'}^{k} \in \mathcal{B}_{\xi, \pi'}^{k,\epsilon}$, where $\epsilon \coloneqq  \alpha^2/9$. As a first step, by Hölder's inequality we have that
	\begin{equation*}
	\sum_{\theta \in \Theta}|(\xi_\theta - \xi^*_\theta) u_\theta^s(a)| \le d || \xi-\xi^*||_{\infty} = \epsilon/2\,\,\ \forall a \in \mathcal{A},
	\end{equation*}
	 Moreover, by the definition of best response and the previous inequality, we have that:
	\begin{align*}
	\sum_{\theta \in \Theta}\xi_\theta u_\theta^k(b^{k}_{\xi^*,\pi'})+\pi'(\xi, b^{k}_{\xi^*,\pi'})
	& \ge \sum_{\theta \in \Theta}\xi^*_\theta u_\theta^k(b^{k}_{\xi^*,\pi'})+\pi'(\xi^*, b^{k}_{\xi^*,\pi'})  -\epsilon/2 \\
	& \ge \sum_{\theta \in \Theta}\xi^*_\theta u_\theta^k(b^{k}_{\xi,\pi'})+\pi'(\xi^*, b^{k}_{\xi,\pi'})  -\epsilon/2 \\
	& \ge \sum_{\theta \in \Theta}\xi_\theta u_\theta^k(b^{k}_{\xi,\pi'})+\pi'(\xi, b^{k}_{\xi,\pi'}) - \epsilon \\
	& \ge \sum_{\theta \in \Theta}\xi_\theta u_\theta^k(a')+\pi'(\xi, a')  - \epsilon,
	\end{align*}
	for each $a' \in \A$.
	This shows that $ b_{\xi^*, \pi'}^{k} \in \mathcal{B}_{\xi, \pi'}^{k,\epsilon}$. 
	
	Let $\pi^*: \Delta_\Theta \times \mathcal{A} \to \mathbb{R}_{+}$ be the payment function prescribed by Proposition \ref{thm:dutting}. Then, we have that:

	\begin{comment}
	\begin{align*}
	\sum_{\theta \in \Theta}\xi_\theta u_\theta^s(\br)+\pi(\xi, \br) - \epsilon- m \epsilon
	& \le 	\sum_{\theta \in \Theta}\xi'_\theta u_\theta^s(\br)+\pi(\xi', \br) - m \epsilon \\
	& \le 	\sum_{\theta \in \Theta}\xi'_\theta u_\theta^s(b^{k,\epsilon}_{\xi',\pi})+\pi(\xi', b^{k,\epsilon}_{\xi',\pi}) - m \epsilon \\
	& \le \sum_{\theta \in \Theta}\xi'_\theta u_\theta^s(b^{k}_{\xi',\pi^*})+\pi^*(\xi', b^k_{\xi',\pi^*})\\
	& \le \sum_{\theta \in \Theta}\xi'_\theta u_\theta^s(b^{k}_{\xi',\pi'})+\pi'(\xi', b^k_{\xi',\pi'})
	\end{align*} 
	\end{comment}
	%
	\begin{align*}
	\sum_{\theta \in \Theta}\xi_\theta u_\theta^s(b^{k}_{\xi,\pi})-\pi(\xi, b^k_{\xi,\pi})
	& \ge	\sum_{\theta \in \Theta}\xi_\theta u_\theta^s(b^{k}_{\xi,\pi^*})-\pi^*(\xi, b^k_{\xi,\pi^*}) 
	& (\text{Optimality of $\pi'$}) \\
	& \ge 	\sum_{\theta \in \Theta}\xi_\theta u_\theta^s(b^{k,\epsilon}_{\xi,\pi'})-\pi'(\xi, b^{k,\epsilon}_{\xi,\pi'}) - 2 \sqrt{\epsilon} 
	& (\text{By Proposition \ref{thm:dutting} }) \\
	& \ge \sum_{\theta \in \Theta}\xi_\theta u_\theta^s(b^{k}_{\xi^*,\pi'})-\pi'(\xi, b^k_{\xi^*,\pi'}) - 2 \sqrt{\epsilon}  \\
	& \ge \sum_{\theta \in \Theta}\xi^*_\theta u_\theta^s(b^{k}_{\xi^*,\pi'})-\pi'(\xi^*, b^k_{\xi^*,\pi'}) - 2 \sqrt{\epsilon} - \epsilon \\
	& = \sum_{\theta \in \Theta}\xi^*_\theta u_\theta^s(b^{k}_{\xi^*,\pi'})-\pi'(\xi^*, b^k_{\xi^*,\pi'}) - 3 \sqrt{\epsilon} \\
	& = \sum_{\theta \in \Theta}\xi^*_\theta u_\theta^s(b^{k}_{\xi^*,\pi})-\pi(\xi^*, b^k_{\xi^*,\pi}) - 3 \sqrt{\epsilon}.
	\end{align*} 
	This shows that the expected seller's utility decreases of at most $3\sqrt{\epsilon}$  when we consider sufficiently close posteriors. 	
	Hence, by Caratheodory's theorem we can decompose $\xi^*$ as follows:
	\begin{equation*}
	\sum_{{\xi'} \in \Xi(\xi) } \gamma^{\xi^*}_{\xi'}{\xi'_\theta} = \xi^*_\theta \hspace{3mm} \forall \theta \in \Theta
	\end{equation*}
	with $\gamma^{\xi^*} \in \Delta_{\Xi(\xi^*)}$, where we recall  that $\xi^* \in \text{co}(\Xi(\xi^*))$. We show now that such a decomposition decreases the expected seller's utility only by the desired amount. Formally, we have that:
	\begin{align*}
	\sum_{{\xi'} \in {\Xi(\xi)} } \gamma^{\xi^*}_{\xi'} \Big(\sum_{\theta \in \Theta}\xi'_\theta u_\theta^s(b^{k}_{\xi',\pi})+\pi(\xi', b^k_{\xi',\pi})  \Big) 
	&  \ge   \sum_{{\xi'} \in \Xi(\xi)}  \gamma^{\xi^*}_{\xi'} \Big(\sum_{\theta \in \Theta}\xi^*_\theta u_\theta^s(b^{k}_{\xi^*,\pi})+\pi(\xi, b^{k}_{\xi^*,\pi}) - 3 \sqrt{\epsilon} \Big)   \\
	& = \sum_{\theta \in \Theta}\xi^*_\theta u_\theta^s( b^k_{\xi^*,\pi})+\pi(\xi^*,  b^k_{\xi^*,\pi}) - 3 \sqrt{\epsilon} .
	\end{align*}
	Since $3 \sqrt{\epsilon} \le \alpha$, this concludes the proof.
\end{proof}

\conststatethmsecond*
\begin{proof}
	Given a constant $\alpha >0$ we let $(\gamma, \pi)$ be an optimal protocol. As a first step, we show that there exists a protocol $(\gamma^*, \pi^*)$ achieving a seller's expected utility of at least $\mathsf{APX}\ge \rho \mathsf{OPT}-2^{-\Omega(1/\rho)} -\alpha$, where $\mathsf{OPT}$ is the utility achieved with $(\gamma, \pi)$. Moreover, the payment function $\pi^*$ is a linear function with parameter $\beta \in \{1-2^{-i}\}_{i \in \{i,\dots,\lfloor \rho/2\rfloor\}}$.  
	We define a signaling scheme $\gamma^*$ supported in $\Xi_q$ as follows:
	\begin{equation*}
	\gamma_{\tilde{\xi}}^*=\sum_{\xi \in \text{supp}(\gamma)} \gamma_\xi  \gamma^{\xi}_{\tilde{\xi}} \hspace{4mm} \forall \tilde{\xi} \in \Xi_q,
	\end{equation*}
	where $\gamma^{\xi} \in \Delta_{\Xi_q}$ is the signaling scheme satisfying Lemma \ref{lem:actions_quniform} with $q= \frac{18d}{\alpha^2} $. First we observe that $\gamma^* \in \Delta_{\Xi_q}$ satisfies the consistency constraints, indeed we have:
	\begin{equation*}
	\sum_{\tilde \xi \in \Xi_q}\gamma_{\tilde{\xi}}^* \, \tilde{\xi_\theta}  
	= \sum_{\xi \in \text{supp}(\gamma)} \gamma_\xi \sum_{\tilde \xi \in \Xi_q}  \gamma^{\xi}_{\tilde{\xi}} \tilde{\xi}_\theta
	= \sum_{ \xi \in \text{supp}(\gamma)} \gamma_{\xi} \xi_\theta 
	= \mu_\theta \,\,\,\ \forall \theta \in \Theta.
	\end{equation*}
	Moreover, we can define as  $\pi^*:\Delta_\Theta \times \mathcal{A} \to \mathbb{R}_{+}$ as the payment function computed (in polynomial time) with Corollary~\ref{cor:linear} in each q-uniform posterior.
 	
 	Let $\pi'$ be the optimal payment function. We show that the protocol $(\gamma^*, \pi^*)$ achieves the desired approximation. Formally:
	\begin{align*} 
	\sum_{\tilde{\xi} \in \Xi_q}  \gamma_{\tilde{\xi}}^* \Big( \sum_{\theta \in \Theta}& \tilde \xi_\theta  u_\theta^s(b^{k}_{\tilde \xi,\pi^*})- \pi^*(\tilde{\xi},
		b^{k}_{\tilde \xi,\pi^*}) \Big) \quad\quad\quad\quad\quad\quad\quad\quad\quad\quad\quad\quad\quad\quad\quad\quad\quad\quad\quad\quad\quad\\
	&\specialcell{ \ge \rho \left( \sum_{\tilde{\xi} \in \Xi_q}\gamma^*_{\tilde{\xi}} \Big( \sum_{\theta \in \Theta} \tilde \xi_\theta u_\theta^s(b^{k}_{\tilde \xi,\pi'})-\pi'(\tilde{\xi},
		b^{k}_{\tilde \xi,\pi'}) \Big) \right) - 2^{\Omega(1/\rho)} \hfill
		\textnormal{(Corollary~\ref{cor:linear})}}
	\\
	& \specialcell{= \sum_{\xi \in \text{supp}(\gamma)} \gamma_\xi \Big( \sum_{\tilde{\xi} \in \Xi_q }
		\gamma^{\xi}_{\tilde{\xi}}  \big(\sum_{\theta \in \Theta} \tilde{\xi}_\theta u_\theta^s(b^{k,\epsilon}_{\tilde \xi,\pi})-\pi(\tilde \xi, b^{k,\epsilon}_{\tilde \xi,\pi})\big) \Big) \hfill \textnormal{(By defintion of $\gamma^*$)} }
	\\
	& \specialcell{\ge \sum_{\xi \in \text{supp}(\gamma)} \gamma_{\xi} \Big(  \sum_{\theta \in \Theta} \xi_\theta u_\theta^s(b^{k}_{ \xi,\pi})-\pi(\xi, b^{k}_{ \xi,\pi})\Big) -\alpha 
		\hfill \textnormal{(By Lemma \ref{lem:states_quniform})}}.
	\end{align*}
	%
	
	%\mat{Then, note that for each price function $\pi^*$ that is $\eps$-persuasive, there exists a price function $\bar \pi$ that is persuasive and provides utility at least $(u^s(b^{k}_{ \xi,\pi})-\pi(\xi, b^{k}_{ \xi,\pi}))-2\sqrt{\eps}$. Moreover, the optimal persuasive price functions $\pi^*$ can be efficiently computed by means of Lemma \ref{lem:price_fun}. }
	 
	 This implies that since $(\gamma^*,\pi^*)$ is feasible for the following LP, it has value at least $\rho \textnormal{OPT}- 2^{\Omega(1/\rho)} -\alpha$.

	\begin{align*}
	\max_{\gamma\ge 0}& \sum_{k \in \K} \lambda_k \sum_{\xi \in \Xi_q} \gamma_\xi \sum_{\theta \in \Theta} \xi_\theta u^s_\theta(b^{k}_{ \xi,\pi^*}) - \pi^*(\xi, b^{k}_{ \xi,\pi^*}) \,\,\ \text{s.t.} \\
	& \sum_{ \xi \in \text{supp}(\gamma)} \gamma_\xi \xi_\theta = \mu_\theta \,\,\,\ \forall \theta \in \Theta.
	\end{align*}

	Hence, to find the desired approximation it is sufficient to compute $\pi^*$ in each $q$-uniform posterior and solve the LP. 
	Note that since $|\Xi_q|=O(q^d)$, the computation of the payment function $\pi^*:\Delta_\Theta \times \mathcal{A} \to \mathbb{R}_{+}$ and the computation of the previous LP require polynomial time for each fixed $\alpha>0$. 
	%Finally, by Lemma \ref{lem:optimal2persuasive} we can convert the protocol $(\gamma^*, \pi^*, 0)$ into a persuasive and $(\alpha+\epsilon m)$-optimal protocol, concluding the proof. 
	
	Finally, to prove the second part of the statement it is sufficient to notice that $\pi^*$ is optimal with respect to the desired set of linear payment functions.
\end{proof}

\section{Proofs Omitted from Section~\ref{sec:selection_no_limited_liability}}
\hardnessfirst*
	\begin{proof}
	We introduce a reduction from \textsc{Lineq-Ma}$(1 -\epsLM, \delta)$ to the design of the optimal protocol, showing that for $\epsLM$ and $\deltaLM$ small enough, the following holds:
	\begin{itemize}
		\item \emph{Completeness}: If an instance of \textsc{Lineq-Ma}$(1 -\epsLM, \delta)$ admits a $1-\epsLM$ fraction of satisfiable equations when variables are restricted to lie in the hypercube $\{0,1\}^{\nVar}$, then there exists a protocol that provides to the seller's expected utility at least of $\eta$, where $\eta$ will be defined in the following;
		\item \emph{Soundness}: If at most a $\deltaLM$ fraction of the equations can be satisfied, then every protocol provides to the seller's expected utility at most $\eta-c$, where $c$ is a \emph{constant} defined in the following.
	\end{itemize}
	In the rest of the proof, given a vector of variables $\mathbf{x} \in \mathbb{Q}^{\nVar}$, for $i \in [\nVar]$, we denote with $x_i$ the component corresponding to the $i$-th variable.
	Similarly, for $j \in [\nEq]$, $c_j$ is the $j$-th component of the vector $\mathbf{c}$, whereas, for $i \in [\nVar]$ and $j \in [\nEq]$, the $(j,i)$-entry of $\mathbf{A}$ is denoted by $A_{ji}$.
	
	\paragraph{Reduction}
	As a preliminary step, we normalize the coefficients by letting $\mathbf{\bar A} \defeq \frac{1}{\normaliz} \mathbf{A}$ and $\mathbf{\bar c} \defeq \frac{1}{\normaliz^2} \mathbf{c}$, where we let $\normaliz \defeq 2 M\max\left\{ \max_{i \in [\nVar], j \in [\nEq]} \ {A}_{ji}, \max_{j \in [\nEq]} \ c_j, \nVar^2 \right\}$ and $M$ will be defined in the following.
	It is easy to see that the normalization preserves the number of satisfiable equations.
	Formally, the number of satisfied equations of $\mathbf{A}\mathbf{x}=\mathbf{c}$ is equal to the number of satisfied equations of  $\mathbf{\bar A} \mathbf{\bar x} = \mathbf{\bar c}$, where $\mathbf{\bar x }=\frac{1}{\normaliz} \mathbf{x}$.
	For every variable $i \in [\nVar]$, we define a state of nature $\theta_{i} \in \Theta$.
	Moreover, we introduce three additional states $ \theta_0, \theta_1,\theta_2 \in \Theta$.
	The prior distribution $\mu \in \textnormal{int}(\Delta_\Theta)$ is defined in such a way that $\mu_{\theta_i} =\frac{1}{2\nVar^2}$ for every $i \in [\nVar]$, while $\mu_{ \theta_0} = \frac{\nVar-1}{2\nVar}$, $\mu_{ \theta_1} = \frac{1}{4}$, and $\mu_{ \theta_2} = \frac{1}{4}$  (notice that $\sum_{\theta \in \Theta} \mu_\theta = 1$).
	We define four buyer's types $k^1_j,k^2_j,k^3_j,k^4_j \in \K$ for each equation $j \in [\nEq]$, where the probability of observing each buyer's type is $\frac{1}{8\nEq}$. Moreover, we define an additional type $k^\star$. All the types $k \in \K$ have budget $b_k=\nu/2$, where $\nu$ will be defined in the following.
	The buyer has $9$ actions available, namely $\A \defeq \{ a_0, a_1, a_2, a_3, a_4, a_5,a_6, a_7,a_8 \}$. 
	Then, we define the utilities of the players, where the utility is $0$ when not specified.
	For each $k^1_j, j \in [\nEq]$, the utilities are:
	\begin{itemize}
		\setlength\itemsep{0.4em}
		\item $u^{k^1_j}_{\theta_i}(a_0)=\frac{1}{2}$ for each $i \in [\nVar]$, 
		\item$u^{k^1_j}_{\theta_i}(a_1)=\frac{1}{2}-\bar A_{j i}+\bar c_{j}$ for each $i \in [\nVar]$,
		\item$u^{k^1_j}_{\theta_i}(a_2)=\frac{1}{2}+\bar A_{j i}-\bar c_{j}$ for each $i \in [\nVar]$
		\item $u^{k^1_j}_{\theta_0}(a_0)=\frac{1}{2}$, 
		\item $u^{k^1_j}_{\theta_0}(a_1)=\frac{1}{2}+\bar c_{j}$,  
		\item $u^{k^1_j}_{\theta_0}(a_2)=\frac{1}{2}-\bar c_{j}$.
		%			\item $u^{k_j}_{\theta_1}(a_0)=0$.
		%			\item $u^{k_j}_{\theta_1}(a_1)=0$
		%			\item $u^{k_j}_{\theta_1}(a_2)=0$
		%			\item $u^{k_j}_{\theta_2}(a_0)=0$.
		%			\item $u^{k_j}_{\theta_2}(a_1)=0$
		%			\item $u^{k_j}_{\theta_2}(a_2)=0$
		%			\item $u^{k_j}_{\theta_i}(a_3)=0$ for each $i$,
		%			\item $u^{k_j}_{\theta_i}(a_4)=0$ for each $i$,
		%			\item $u^{k_j}_{\theta_i}(a_5)=0$ for each $i$,
		%			\item $u^{k_j}_{\theta_0}(a_3)=0$ 
		%			\item $u^{k_j}_{\theta_0}(a_4)=0$ 
		%			\item $u^{k_j}_{\theta_0}(a_5)=0$ 
		\item $u^{k^1_j}_{\theta_1}(a_3)=\frac{1}{2}+2 \nu$, 
		%\item $u^{k_j}_{\theta_1}(a_4)=\frac{1}{2}$, 
		%\item $u^{k_j}_{\theta_2}(a_4)=\frac{1}{2}$.
	\end{itemize}
	For each $k^2_j, j \in [\nEq]$, the utilities are:
	\begin{itemize}
			\setlength\itemsep{0.4em}
		\item$u^{k^2_j}_{\theta_i}(a_0)=\frac{1}{2}-\bar A_{j i}+\bar c_{j}$ for each $i\in [\nVar]$,
		\item$u^{k^2_j}_{\theta_i}(a_7)=\frac{1}{2}$ for each $i \in [\nVar]$
		\item $u^{k^2_j}_{\theta_0}(a_0)=\frac{1}{2}+ \bar c_{j}$, 
		\item $u^{k^2_j}_{\theta_1}(a_3)=\frac{1}{2}+2 \nu$, 
		%\item $u^{k_j}_{\theta_0}(a_1)=\frac{1}{2}+\bar c_{j}$,  
		%\item $u^{k_j}_{\theta_0}(a_2)=\frac{1}{2}-\bar c_{j}$.
		%			\item $u^{k_j}_{\theta_1}(a_0)=0$.
		%			\item $u^{k_j}_{\theta_1}(a_1)=0$
		%			\item $u^{k_j}_{\theta_1}(a_2)=0$
		%			\item $u^{k_j}_{\theta_2}(a_0)=0$.
		%			\item $u^{k_j}_{\theta_2}(a_1)=0$
		%			\item $u^{k_j}_{\theta_2}(a_2)=0$
		%			\item $u^{k_j}_{\theta_i}(a_3)=0$ for each $i$,
		%			\item $u^{k_j}_{\theta_i}(a_4)=0$ for each $i$,
		%			\item $u^{k_j}_{\theta_i}(a_5)=0$ for each $i$,
		%			\item $u^{k_j}_{\theta_0}(a_3)=0$ 
		%			\item $u^{k_j}_{\theta_0}(a_4)=0$ 
		%			\item $u^{k_j}_{\theta_0}(a_5)=0$ 
		%\item $u^{k_j}_{\theta_1}(a_3)=\frac{1}{2}+b$, 
		%\item $u^{k_j}_{\theta_2}(a_3)=0$,
		%\item $u^{k_j}_{\theta_1}(a_4)=\frac{1}{2}$, 
		%\item $u^{k_j}_{\theta_2}(a_4)=\frac{1}{2}$.
	\end{itemize}
	For each type $k^3_j$, $j \in [\nEq]$ the utilities are:
	\begin{itemize}
			\setlength\itemsep{0.4em}
		\item$u^{k^3_j}_{\theta_i}(a_0)=\frac{1}{2}+\bar A_{j i}-\bar c_{j}$ for each $i \in [\nVar]$,
		\item$u^{k^3_j}_{\theta_i}(a_7)=\frac{1}{2}$ for each $i \in [\nVar]$
		%\item $u^{k_j}_{\theta_0}(a_0)=\frac{1}{2}$, 
		\item $u^{k^3_j}_{\theta_0}(a_1)=\frac{1}{2}-\bar c_{j}$,  
		\item $u^{k^3_j}_{\theta_1}(a_3)=\frac{1}{2}+2 \nu$, 
		%\item $u^{k_j}_{\theta_0}(a_2)=\frac{1}{2}-\bar c_{j}$.
		%			\item $u^{k_j}_{\theta_1}(a_0)=0$.
		%			\item $u^{k_j}_{\theta_1}(a_1)=0$
		%			\item $u^{k_j}_{\theta_1}(a_2)=0$
		%			\item $u^{k_j}_{\theta_2}(a_0)=0$.
		%			\item $u^{k_j}_{\theta_2}(a_1)=0$
		%			\item $u^{k_j}_{\theta_2}(a_2)=0$
		%			\item $u^{k_j}_{\theta_i}(a_3)=0$ for each $i$,
		%			\item $u^{k_j}_{\theta_i}(a_4)=0$ for each $i$,
		%			\item $u^{k_j}_{\theta_i}(a_5)=0$ for each $i$,
		%			\item $u^{k_j}_{\theta_0}(a_3)=0$ 
		%			\item $u^{k_j}_{\theta_0}(a_4)=0$ 
		%			\item $u^{k_j}_{\theta_0}(a_5)=0$ 
		%\item $u^{k_j}_{\theta_1}(a_3)=\frac{1}{2}+b$, 
		%\item $u^{k_j}_{\theta_2}(a_3)=0$,
		%\item $u^{k_j}_{\theta_1}(a_4)=\frac{1}{2}$, 
		%\item $u^{k_j}_{\theta_2}(a_4)=\frac{1}{2}$.
	\end{itemize}
For each type $k^4_j$, $j \in [\nEq]$ the utilities are equivalent to the one of type $k^1_j$ but with the following differences:
	\begin{itemize}
			\setlength\itemsep{0.4em}
		\item  $u^{k_j}_{\theta}(a_5)=\frac{1}{2}$ for each $\theta \in \Theta$,
		\item  $u^{k_j}_{\theta_i}(a_7)=\frac{1}{2}$ for each $i \in [\nVar]$,
	\end{itemize}
	Finally, the utilities of type $k^\star$ are:
	\begin{itemize}
			\setlength\itemsep{0.4em}
		\item $u^{k^\star}_{\theta_1}(a_6)=1$,
		\item $u^{k^\star}_{\theta}(a_1)=1$ for each $\theta \in \Theta$.
	\end{itemize}
	Moreover, we let $u^{k}_{\theta}(a_8)=\frac{1}{2}$ for every $k \in K$ and $\theta \in \Theta$. %
	%Finally, there is an action $a_7$ with utility 1/2 everywhere.
	%
	Finally, the utility of the seller is:
	\begin{itemize}
			\setlength\itemsep{0.4em}
		\item $u^\mathsf{s}_{\theta} (a_6) = \frac{1}{4}$ for each $\theta \in \Theta$,
		\item$u^\mathsf{s}_{\theta} (a_5) = 4\nu$ for each $\theta \in \Theta$,
		\item $u^\mathsf{s}_{\theta} (a_0) = \nu$ for each $\theta \in \Theta$,
		\item $u^\mathsf{s}_{\theta} (a_7) = 2 \nu$ for each $\theta \in \Theta$.
	\end{itemize}
	We recall that the utility is $0$ when not defined explicitly.
	
	\paragraph{Completeness.}
	Suppose that there exists a vector $\mathbf{\hat x} \in \{0,1\}^{\nVar}$ such that at least a fraction $1-\epsLM$ of the equations in $\mathbf{A} \mathbf{\hat x}=\mathbf{c}$ are satisfied.
	Let $X^1\subseteq [\nVar]$ be the set of variables $i \in [\nVar]$ with $\hat x_i = 1$, while $X^0 \defeq [\nVar]\setminus X^1$.
	Given the definition of $\mathbf{\bar A}$ and $\mathbf{\bar c}$, there exists a vector $\mathbf{\bar x} \in \{0,\frac{1}{\normaliz}\}^{\nVar}$ such that at least a fraction $1-\epsLM$ of the equations in $\mathbf{\bar A} \mathbf{\bar x}= \mathbf{\bar c}$ are satisfied, and, additionally, $\bar x_i = \frac{1}{\normaliz}$ for all the variables in $i \in X^1$, while $\bar x_i = 0$ whenever $i \in X^0$.
	Let us consider an (indirect) signaling scheme $\phi : \Theta \to \Delta_\sset$ where the set of signals is $\sset \defeq \{s_1, s_2,s_3\}$.
	Let $q \defeq \frac{\nVar(\nVar-1)}{\normaliz-|X^1|}$.
	For each $i \in [\nVar]$, let $\phi_{\theta_i}(s_1) = q$ and  $\phi_{\theta_i}(s_2)=1-q$ if $i \in X^1$, while $\phi_{\theta_i}(s_2)=1$ otherwise.
	Moreover, let $\phi_{\theta_0}(s_1)=1$, $\phi_{\theta_1}(s_3)=1$ and $\phi_{\theta_2}(s_2)=1$.
	Then, all the other probabilities $\phi_\theta(s)$ are set to $0$. It is easy to see that the signaling scheme is feasible. 
	Moreover, we set the price $p=\nu/2$.
	Finally, we set $\pi(s_3,a_6)=2\nu$ and all the other payments $\pi(s,a)=0$.
	
	% Notice that signal $s_1$ is received with probability at least $\frac{\nVar-1}{\nVar}$.
	%
	
	Now, we compute the expected seller's utility due of each type of buyer.
	\begin{itemize}
		\item The buyer of type $k^\star$ in the posterior $\xi^{s_3}$ plays the action $a_6$ and gets utility $\sum_{\theta \in \theta}\xi^{s_3}_\theta u^{k^\star}_\theta(a_6)+\pi(s_3,a_6)=1+2\nu$. Moreover, in the other posteriors $\xi^{s_1}$ and $\xi^{s_2}$ the seller's utility is at least $0$. Finally, the protocol is IR for the buyer since the expected utility declining the protocol is $1$ while accepting it is $-\pi/2+1\cdot \frac{3}{4}+(1+2\nu)\frac{1}{4}=1$. Hence, the expected principal utility when the buyer's type is $k^\star$ is at least $\sum_{s \in \sset}\sum_{\theta \in \theta}\xi^{s}_\theta u^{k^\star}_\theta(b^{k^\star}_{\xi^s,\pi})=\frac{1}{16}$.
	\item Consider a buyer $k^1_j$, $j \in [\nEq]$, such that the j-th equality is satisfied by the vector $\hat \xvec$.
	Now, let us take the buyer's posterior $\xi^{s_1} \in \Delta_\Theta$ induced by the signal $s_1$. 
	Let $h \defeq \frac{\frac{q}{\nVar^2}}{\sum_{i \in X^1}\frac{q}{\nVar^2}+\frac{\nVar-1}{\nVar}}$. 
	Then, using the definition of $\xi^{s_1}$, it is easy to check that $\p^1_{\theta_i}=h$ for every $i \in X^{s_1}$, $\p^{s_1}_{\theta_i}=0$ for every $i \in X^0$, while $\p^{s_1}_{ \theta_0} = \frac{\frac{\nVar-1}{\nVar}}{\sum_{i \in X^1}\frac{q}{\nVar^2}+\frac{\nVar-1}{\nVar}} = 1 - h \left| X^1 \right|$.
	%
	%Next, we prove that given the posterior $\pvec^1$ at least a fraction $1-\epsLM$ of the receiver's types has action $a_0$ as a best response, implying that the expected utility of the sender is equal to $\frac{1}{\nType}\sum_{k \in \K} u^\mathsf{s} (\phi, k)\ge \frac{\nType-1}{\nType} \left(1-\epsLM \right)\ge 1-2\epsLM$, which holds for $\nType$ large enough.
	%
	The buyer of type $k_j \in\K$ experiences a utility of $\sum_{\theta \in \Theta} \p^{s_1}_\theta u^{k_j}_{\theta} (a_0)= \frac{1}{2}$ by playing action $a_0$.
	Instead, the utility she gets by playing $a_1$ is defined as follows:
	\begin{align*}
	\sum_{\theta \in \Theta} \p^{s_1}_\theta u^{k_j}_{\theta} (a_1) & = \sum_{i \in X^1} h \left( \frac{1}{2} - \bar A_{j i} + \bar c_{j} \right)+ \p^1_{ \theta_0} \left( \frac{1}{2} + \bar c_j \right)= \\
	& = h \left|X^1 \right| \left(\frac{1}{2} + \bar c_j \right) - h \sum_{i \in X^1} \bar A_{j i} +  \left(1- h \left| X^1 \right|  \right) \left(\frac{1}{2} + \bar c_j \right) = \\
	& = \frac{1}{2}+ \bar c_{j}-h\sum_{i \in X^1} \bar A_{j i } = \frac{1}{2}+\bar c_{j}-\frac{1}{\normaliz}\sum_{i \in X^1} \bar A_{j i }= \frac{1}{2},
	\end{align*}
	where the second to last equality holds since $h = \frac{1}{\normaliz}$ (by definition of $h$ and $q$), while the last equality follows from the fact that the $j$-th equation is satisfied, and, thus, $\frac{1}{\normaliz}\sum_{i \in X^1} \bar A_{j i } =  \bar c_j$ (recall that $\bar x_i = \frac{1}{\normaliz}$ for all $i \in X^1$).
	Using similar arguments, we can write $\sum_{\theta \in \Theta} \p^{s_1}_\theta u^{k_j}_{\theta} (a_2) = \frac{1}{2}$.
	Moreover, all the other actions have utility $0$.
	Hence, the buyer plays $a_0$ in the posterior $\xi^{s_1}$.
	In  posterior $\xi^{s_2}$ induced by signal $s_2$, the utility of each action different from $a_8$ is strictly smaller than $\frac{1}{2}$. Hence, the buyer will play $a_8$,
	while in posterior $\xi^{s_3}$ induced by signal $s_3$, the utility of action $a_3$ is $\frac{1}{2}+2 \nu $ and the buyer will play $a_3$.
	Hence the expected utility of the buyer is $\frac{3}{4}\frac{1}{2}+ \frac{1}{4} (\frac{1}{2}+2 \nu)-p=\frac{1}{2}$.
	Moreover, the protocol is IR for the buyer since if she declines the protocol the utility is $\frac{1}{2}$ while if she accepts the protocol the utility is $\frac{1}{2}$.  Hence, when the buyer's type is $k^1_j$ the expected seller's utility is $\nu \mu_{\theta_0}+\nu/2$.
	
	\item	Consider a buyer $k^2_j$ or $k^3_j$ , $j \in \nEq$ such that the j-th equality is satisfied.
	A similar argument as before shows that in posterior $\xi^{s_1}$ the buyer's optimal action is $a_7$, while in posterior $\xi^{s_2}$, the optimal action is $a_8$.
	In posterior $\xi^{s_3}$, the optimal action is $a_3$.
	Hence,  the expected buyer's utility is $\frac{3}{4}\frac{1}{2}+ \frac{1}{4} (\frac{1}{2}+2 \nu)-p=\frac{1}{2}$.
	Hence, the protocol is IR for the buyer and provides expected seller's utility at least $2\nu \mu_{\theta_0}+\nu/2$.
	
	\item Consider a buyer $k^4_j$, $j \in [\nEq]$ such that the j-th equality is satisfied.
	The buyer has an utility similar to $k^1_j$ and plays the same best responses. Hence, it is indifferent in participating or not participating to the protocol. We assume that they brake ties in favor of the seller and does not accept. 
	She plays action $a_5$ and the expected seller's utility is $2\nu$.
	\end{itemize}
	
	Since all the other buyer's types provide positive utility ---it never happens that the expected payment from the seller to the buyer exceeds the payment from the buyer to the seller---, the expected seller's utility is at least 
	\[\eta=\frac{1}{32}+(1-\epsLM)\frac{1}{8}( \mu_{\theta_0}\nu+ \nu/2)+(1-\epsLM)\frac{1}{4} ( \mu_{\theta_0} 2\nu+ \nu/2)+ (1-\epsLM)\frac{1}{8} 2\nu\]
	%		  [\mu_{\theta_0} \nu +\frac{b}{4}]+(1-\epsLM)\frac{1}{4}[\mu_{\theta_0} 2\nu+\frac{b}{4}]\]

	%		The IR constraint are satisfied by type $k^\star$ since the expected utility is $1$ in any case.
	%		The IR constraint is satisfied for the $k^1$, $k^2$, and $k^4$ that satisfy the inequality and it is not satisfied for the type $k^3$ that satisfy the inequality.
	%		
	%		Hence, in posterior $\xi^1$ that occurs with probability at least $\mu_{\theta_0}$ the expected utility from receiver's of type $k^1$is at least $(1-\epsLM)\frac{1}{2} \nu$, while the expected utility from receiver's of type $k^2$ and $k^4$ is $(1-\epsLM)\frac{1}{2} 2\nu$,
	%		In posterior $\xi^3$ the utility from receiver $k^\star$ is $\frac{1}{16}$.
	%		Moreover, the at least $1-\epsLM$ receivers of type $k^3$ will not satisfies IR and plays action $a_5$ providing utility $(1-\epsLM)\frac{1}{4} \nu$.

	%		The expected sender's utility is $\frac{1}{64}$ from type $k^\star$, $(1-\epsLM)\frac{1}{4} \nu$ from type $k^3_v$ that not IR,$(1-\epsLM)\frac{1}{4}[\mu_{\theta_0} \nu +\frac{b}{4}]$ from $k^1$ that IR, $(1-\epsLM)\frac{1}{4}[\mu_{\theta_0} \nu+\frac{b}{4}]$ from $k^2$ that not IR.
	%		\[\eta=\frac{1}{64}+(1-\epsLM)\frac{1}{4} \nu+(1-\epsLM)\frac{1}{4} \nu+ (1-\epsLM)\frac{1}{4}[\mu_{\theta_0} \nu +\frac{b}{4}]+(1-\epsLM)\frac{1}{4}[\mu_{\theta_0} 2\nu+\frac{b}{4}]\] lower bound on the utility.
	%		

	\paragraph{Soundness}

	As a first step, we upperbound the expected seller's utility from each type.
	It is easy to see that the maximum expected utility that the seller can extract from the buyer's type $k^\star$ is at most $\frac{1}{32}$.
	Moreover, the maximum expected utility that the seller can extract from a buyer of type $k^1_j$, $j \in [\nEq]$, is at most $\frac{1}{8}(\nu)$.
	The maximum  expected utility that the seller can extract from a buyer of type $k^2_j$ or $k^3_j$, $j \in [\nEq]$  is at most  $\frac{1}{4}\frac{3}{2}\nu$.
	Finally, the maximum  expected utility that the seller can extract from a buyer of type $k^4_j$, $j \in [\nEq]$, is $\frac{1}{8}  2\nu$.

	%		Moreover, the maximum utility of a type $k^3$ is at most $\frac{1}{4}\nu$. 
	%		Moreover, the maximum utility from a type $k^1$ is $\frac{1}{4}(\nu/2+\nu/2)$, while for $k^2$ and $k^4$ is at most $\frac{1}{4}(\nu+\nu/2)$.
	%		
	Using the previous upperbounds, we can bound the component of the utility due to each set of types.
	%Let $c$ be the required additive approximation.
	For each constant $t<1$, there exist constants $c=c(t)$, $\epsLM=\epsLM(t)$ such that if the expected utility is greater than $\eta-c$ then the expected utility from types $k^1_j$, $j \in [\nEq]$, is at least  $t \frac{1}{8}(\nu)$, the expected utility from types $k^2_j$, $j \in [\nEq]$, and $k^3_j$, $j \in [\nEq]$, is at least $t \frac{1}{4}\frac{3}{2}\nu$, and the expected utility from types $k^4_j$, $j \in [\nEq]$, is at least $t \frac{1}{8}  2\nu$.
	To see that, consider for instance the types $k^1_j$, $j \in [\nEq]$. It must hold:
	\[\frac{1}{32}+ \bar t\frac{1}{8}\nu+\frac{1}{4} \frac{3}{2}\nu+ \frac{1}{8} 2\nu\ge \frac{1}{32}+(1-\epsLM)\frac{1}{8}( \mu_{\theta_0}\nu+ \nu/2)+(1-\epsLM)\frac{1}{4} ( \mu_{\theta_0} 2\nu+ \nu/2)+ (1-\epsLM)\frac{1}{8} 2\nu \]
	Since for $\nVar$ large enough $\mu_{\theta_0}$ is close to $\frac{1}{2}$, for $c(t)$, $\epsLM(t)$ small enough constant the equation is satisfied for $\bar t\ge t$.
	A similar result holds for every other set of types $k^2_j$ with $j \in [\nEq]$, $k^3_j$ with $j \in [\nEq]$, and $k^4_j$ with $j \in [\nEq]$.
	
	The next step is to show the existence of a posterior in which a $t$ fraction of agent of types $k^1_j$, $j \in [\nEq]$, play $a_0$ and the the same holds for each other set of types $k^2_j$,$k^3_j$ with action $a_7$.
	Suppose by contradiction that there is no posterior in which a $t$ fraction of $k^1_j$, $j \in [\nEq]$,  plays $a_0$.
	First, notice that the maximum payment is at most $p=\nu/2+\frac{1}{M}$, otherwise all the buyer's types $k^1_j$ are not IR.
	Moreover, the seller's utility minus payment is greater than $0$ in a posterior only if the agent plays $a_0$.
	Finally, it is easy to see that it is sufficient to consider signaling schemes that induce posteriors such that if  $\xi_{\theta_i}>0$, then $\xi_{\theta_1}=0$ and $\xi_{\theta_2}=0$ since states $\xi_{\theta_1}$ and $\xi_{\theta_2}$ disincentivize the actions with high seller's utility.
	Hence, the maximal utility from agents of types $k^1_j$ is  at most
	\[ \frac{1}{8} \left[\nu/2+\frac{1}{M}+ (t-1/\nEq) \frac{1}{2} \nu \right] < t \frac{1}{8}\nu,\]
	for $M$ large enough, reaching a contradiction.
	A similar argument holds for the other types.
%
%	Consider a posterior in which $\xi_{\theta_1}>0$ or $\xi_{\theta_2}>0$.
%	To incentivize $a_0$ must pay at least ?. $\frac{1}{2}(1-\xi_{\theta_1}-\xi_{\theta_2})+\pi(s,a_0)\ge \frac{1}{2}$, implying $\pi(s,a_0)\ge (\xi_{\theta_1}-\xi_{\theta_2})/2$.
%	%
%	It is better to split in two posteriors.
%	A similar argument holds for the other sets of types.
%
	%
	%		Let $\phi,\pi,p$ be the signaling scheme and $\gamma$ the distribution over posteriors.
	%		%
	%		As a first step, we show that the optimal signaling scheme induces only posteriors in which:
	%		$\xi_{\theta_1}=1$, $\xi_{\theta_2}=1$, or $\xi_{\theta_1} =\xi_{\theta_2}=0$.
	%		Suppose otherwise. Let $\xi$ be such a posterior and $a$ the induced best response. Consider a distribution over posterior $\xi^1$ with $\xi^1_{\theta_1}=1$, $\xi^2_{\theta_2}=1$ and $\xi^3_{\theta_i}=\frac{\xi^3_{\theta_i}}{\sum_{\theta_i}\xi_{\theta_i}}$.
	%		%
	%		$\nu/2 +x \nu/2=t \nu$, x=2(t-1)
	%		Since the expected payment is at most $p=\nu/2$, it implies that the expected utility from actions is at least $2(t-\frac{1}{2})\nu/2$.
	%
	%		Consider a signaling scheme $\gamma$. Suppose that there exists a posterior in which $\xi_{\theta_1}$ or $\xi_{\theta_2}>0$.
	%		%
	%		CONCLUDE + PROVE FOR OTHER TYPES
	%		
	%
	%		Let $\xi$ be a signaling scheme that induces a posterior $\xi$ in which $\xi_{\theta_1}$ or  $\xi_{\theta_2}$ >0. There exists two posteriors $\xi^1$ and $\xi^2$ where $\xi^1_{\theta_1}$ and  $\xi_{\theta_2}^1$ =0 and at least the same sender's utility.
	%		
	%		A similar argument holds for $k^2$, and $k^3$.
	%
	This implies that there exists a set $Q \subseteq [\nEq]$ and a posterior $\xi$ such that for each $j \in Q$ all the buyers $k^1_j$, $k^2_j$, and $k^3_j$ in the posterior play $a_0$,$a_7$, and $a_7$, respectively.
	Notice that $|Q|\ge 1-3(1-t)$ and for $t$ large enough $|Q|>\delta$.

	Suppose that there exists a signal inducing a posterior $\xi \in \Delta_\Theta$ in which all the buyer's types $k^1_j$, $j \in Q$ best respond by playing action $a_0$.
	We show that there exists at least one $j \in Q$ such that  it holds $\sum_{\theta \in \Theta} \p_\theta u^{k^1_j}_{\theta} (a_1) > \sum_{\theta \in \Theta} \p_\theta u^{k^1_j}_{\theta} (a_0)$ or $\sum_{\theta \in \Theta} \p_\theta u^{k^1_j}_{\theta} (a_2) > \sum_{\theta \in \Theta} \p_\theta u^{k^1_j}_{\theta} (a_0)$.
	For every buyer's type $k^1_j \in \K$, it holds $\sum_{\theta \in \Theta} \p_\theta u^{k_j}_{\theta} (a_0)=\frac{1}{2}$.
	Moreover, it is the case that:
	\[
	\sum_{\theta \in \Theta} \p_\theta u^{k^1_j}_{\theta} (a_1)=\sum_{i \in [\nVar]} \p_{\theta_i} \left( \frac{1}{2}-\bar A_{j i}+\bar c_{j} \right) + \p_{ \theta_0}\left(\frac{1}{2}+\bar c_{j} \right)=\frac{1}{2}+\bar c_{j}-\sum_{i \in [\nVar]} \p_{\theta_i}	\bar A_{j i}.
	\]
	Similarly, it holds:
	\[ 
	\sum_{\theta \in \Theta} \p_\theta u^{k^1_j}_{\theta} (a_2)=\frac{1}{2}-	\bar c_{j}+\sum_{i \in [\nVar]} \p_{\theta_i} \bar A_{j i}.
	\]
	Suppose by contradiction that for every type $k^1_j,$ $j \in Q$, it is the case that $\sum_{\theta \in \Theta} \p_\theta u^{k^1_j}_{\theta} (a_0) \ge  \sum_{\theta \in \Theta} \p_\theta u^{k^1_j}_{\theta} (a_1)$, which implies that $\bar c_{j}-\sum_{i \in [\nVar]} \p_{\theta_i}\bar A_{j i}\le0$, whereas it holds $\sum_{\theta \in \Theta} \p_\theta u^{k^1_j}_{\theta} (a_0)\ge \sum_{\theta \in \Theta} \p_\theta u^{k_j}_{\theta} (a_2)$, implying $-\bar c_{j}+\sum_{i \in [\nVar]} \p_{\theta_i}\bar A_{j i}\le0$.
	Thus, $\sum_{i \in [\nVar]} \p_{\theta_i}\bar A_{j i} =\bar c_{j}$ for every $j \in Q$ and the vector $\mathbf{\hat x} \in \mathbb{Q}^{\nVar}$ with $\hat x_i = \p_{\theta_{i}}$ for all $i \in [\nVar]$ satisfies at least a fraction $\delta$ of the equations, reaching a contradiction.
	Since we have that $t$ types $k^1_j$ play $a_0$, this implies that $\pi(\xi,a_0)>0$. However, at the same time we have that the  buyers of type $k^2_j$ and $k^3_j$ plays action $a_7$. Consider a $j^* \in Q$ such that  $\sum_{\theta \in \Theta} \p_\theta u^{k^1_{j^*}}_{\theta} (a_1) > \sum_{\theta \in \Theta} \p_\theta u^{k^1_{j^*}}_{\theta} (a_0)$ or $\sum_{\theta \in \Theta} \p_\theta u^{k^1_{j^*}}_{\theta} (a_2) > \sum_{\theta \in \Theta} \p_\theta u^{k^1_{j^*}}_{\theta} (a_0)$.  Recall that this buyer must play $a_7$. If the first inequality holds then it must hold $\sum_{\theta \in \Theta} \p_\theta u^{k^2_{j^*}}_{\theta} (a_7) + \pi(\xi,a_7) \ge \sum_{\theta \in \Theta} \p_\theta u^{k^2_{j^*}}_{\theta} (a_0) + \pi(\xi,a_0)$. Moreover, $\sum_{\theta \in \Theta} \p_\theta u^{k^2_{j^*}}_{\theta} (a_7) = \sum_{\theta \in \Theta} \p_\theta u^{k^1_{j^*}}_{\theta} (a_0)  < \sum_{\theta \in \Theta} \p_\theta u^{k^1_{j^*}}_{\theta} (a_1)  = \sum_{\theta \in \Theta} \p_\theta u^{k^2_{j^*}}_{\theta} (a_0) $, implying  $\pi(\xi,a_7)>\pi(\xi,a_0)$.
	A similar argument holds for the buyer $k^3_{j^*}$ if the second inequality is satisfied.
	This implies that type $k^4_{j^*}$ can play the same best responses of player $k^1_j$ in any posterior different from $\xi$ and play action $a_7$ in $\xi$.
	Hence, the expected utility of buyer $k^4_{j^*}$ is strictly greater than the one of $k^1_{j^*}$ (that is IR), and hence it is strictly IR. 
	
	We conclude the proof showing that the utility of this buyer's type is too small, reaching a contradiction.
	First, notice that the seller must induce a posterior with $\xi_{\theta_1}\ge \frac{3}{4}$ with probability at least $\frac{1}{8}$. In all the other posteriors the seller's utility from type $k^*$ is $0$.
	However, it must hold that the utility from type $k^\star$ is at least $\frac{1}{64}$ for $\nu$ small enough. Hence, playing posteriors with $\xi_{\theta_1}\ge \frac{3}{4}$ with probability smaller than $\frac{1}{8}$ the seller's utility form type $k^*$ is at most $\frac{1}{2}\frac{1}{4} \frac{1}{8} <\frac{1}{64}$.
	Now consider the type $k^4_{j^*}$ that is IR.
	In a posterior $\xi$ with $\xi_{\theta_1}\ge \frac{3}{4}$, the seller's utility when the type is $k^4_{j^*}$ is at most $0$. Hence, the total utility from this type is at most $p+ \frac{7}{8}4\nu\le\nu+1/M$, where the last inequality follows by the fact that the payment is at most $\frac{\nu}{2}+1/M$.
	For $|Q|$ large enough, we have that a $|Q|/\nEq-\delta$ fraction of types $k^4_{j}$ provide seller's utility at most $\nu+1/M$. Hence, the total utility from type $k^4_{j}$ is at most $\frac{1}{8} [(|Q|/\nEq-\delta) (\nu+1/M)+  (1- (|Q|/\nEq-\delta)) 2\nu ] \le t \frac{1}{8} 2\nu$.
	Thus, we reach a contradiction.
\end{proof}

\constactionsthmsecond*
\begin{proof}
	Let $(\phi,p,\pi)$ be an optimal protocol.
	 Then, the seller's expected utility is given by:
	\[\sum_{k \notin \R_{\phi,p,\pi}} \lambda_k \sum_{\theta \in \Theta} \mu_\theta u^s_\theta(b^k_\mu)+\sum_{k \in \R_{\phi,p,\pi}} \lambda_k \left[  \sum_{s \in \mathcal{S}} \sum_{\theta \in \Theta} \mu_\theta \phi_\theta(s) \left(u^s_\theta(b^k_{\xi^s,\pi}) -\pi(s,b^k_{\xi^s,\pi})\right) +p \right], \]
	where we recall that $\R_{\phi,p,\pi}$ is the set of buyer's types for which the IR constraint is satisfied under protocol $(\phi,p,\pi)$.
	Given a signal $s \in \mathcal{S}$ and a type $k\in \K$, let $b^k_{\xi^s}\in \arg \max_{a \in \mathcal{A}} \sum_{\theta \in \Theta} \mu_\theta \phi_\theta(s) u^{k}_\theta(a)$. Intuitively, $b^k_{\xi^s}$ is an optimal action for the buyer without considering the payment function.
 	Then, the seller's utility can be spitted in three components:
 	\begin{enumerate}
 		\item[(i)] The utility from the buyer's types that are not IR
	\[U_1\defeq \sum_{k \notin \mathcal{R}_{\phi,p,\pi}}  \lambda_{k}  \sum_{\theta \in \Theta} \mu_\theta u^s_\theta(b^k_\mu) ;\]
	\item[(ii)] The maximum seller's utility deriving from the buyer's action \[U_2\defeq \sum_{k \in \R_{\phi,p,\pi}} \lambda_k  \left[ \sum_{s \in \mathcal{S}} \sum_{\theta \in \Theta} \mu_\theta \phi_\theta(s)  \left( u^s_\theta(b^k_{\xi^s}) + u^k_\theta(b^k_{\xi^s,\pi}) - u^k_\theta(b^k_{\xi^s})\right)\right] ,\]
	where we use the fact that to incentivize action $b^k_{\xi^s,\pi}$ over  $b^k_{\xi^s}$ the payment must be at least $\frac{\sum_{s \in \mathcal{S}} \sum_{\theta \in \Theta} \mu_\theta \phi_\theta(s)\left(u^k_\theta(b^k_{\xi^s})-u^k_\theta(b^k_{\xi^s,\pi})\right)}{\sum_{\theta \in \Theta} \mu_\theta \phi_\theta(s)}$;
	\item[(iii)]  The utility related to the overall payment that the seller's can extract from the buyer given the price function $\pi$
	 \[U_3\defeq \sum_{k \in \R_{\phi,p,\pi}}  \lambda_k \left[  p- \sum_s \sum_\theta \mu_\theta \phi_\theta(s) \left[\pi(s,b^k_{s,\pi})  + u^k_\theta(b^k_{s,\pi}) - u^k_\theta(b^k_{\xi^s})\right]\right].\]
\end{enumerate}

	Notice that the term $U_2+U_3$ is the utility deriving from buyer's types for which the IR constraint is satisfied, where we add, respectively subtract, the term
	\[\sum_{k \in \R_{\phi,p,\pi}}\lambda_k  \left[ \sum_{s \in \mathcal{S}} \sum_{\theta \in \Theta} \mu_\theta \phi_\theta(s)  \left(u^k_\theta(b^k_{\xi^s,\pi}) - u^k_\theta(b^k_{\xi^s})\right)\right] \]
	  to $U_2$, respectively $U_3$.

	%	Consider the optimal sender's strategy $\phi,p,\pi$.
	%	The sender utility is given by
	%	\[\sum_{k \notin IR} \sum_{\theta} \mu_\theta u^s_\theta(br^k_\mu)+\sum_{k \in IR}[ [ \sum{s} \sum_\theta \mu_\theta \phi_\theta(s) (u^s_\theta(br^k_s) -\pi(s,br^k_s))] +p] \]
	%	We split the sender's utility in three components.
	%	The utility from the receiver's types that are not IR, $U_1\defeq \sum_{k \notin IR} \sum_{\theta} \mu_\theta u^s_\theta(br^k_\mu) $.
	%	The utility from the receiver's type that are IR excluding p, $U_2\defeq \sum_{k \in IR} [ \sum_{s} \sum_\theta \mu_\theta \phi_\theta(s) (u^s_\theta(br^k_s) -\pi(s,br^k_s))] $.
	%	The utility given by payment $p$, namely  $U^3\defeq |IR| p $, where $IR$ is the set of receiver's types that are IR.
	In the following, we design three protocols $(\phi^1,p^1,\pi^1)$, $(\phi^2,p^2,\pi^2)$, and $(\phi^3,p^3,\pi^3)$, each with seller's utility that approximates the corresponding utility terms  $U_1$, $U_2$, and $U_3$. We will show that this will implies that at least one protocol provides a good approximation of the overall seller's utility, \emph{i.e.}, of $U_1+ U_2+ U_3$.
	
	\paragraph{Approximate $U_1$.}
	The protocol $(\phi^1,p^1,\pi^1)$ that provides no information, charges no price, and does not provides any payment has seller's utility 
	\[\sum_{k \in \K} \sum_{\theta} \mu_\theta u^s_\theta(b^k_\mu) \ge \sum_{k \notin \R_{\phi,p,\pi}} \sum_{\theta} \mu_\theta u^s_\theta(b^k_\mu) =U_1\]
	
	\paragraph{Approximate $U_2$.}
	By Corollary~\ref{cor:linear}, we know that for each signal $s \in \mathcal{S}$ (inducing a posterior $\xi^s$) and $\rho\in (0,1/2]$, there exists a linear contract $\pi'(s,\cdot)$ such that $\pi'(s,a)= \beta \sum_{\theta \in \Theta}\xi^s_{\theta} u^s_\theta(a)$ with parameter $\beta=1-2^{-i}$, $i \in \{1, \dots,\lfloor\frac{1}{2\rho}\rfloor \}$ that guarantees:
	\begin{subequations}\label{U2}
	\begin{align}
	 \sum_{k\in \K}  \lambda_k& \Big[\sum_{\theta \in \Theta} \xi^s_\theta \left( u^s_\theta(b^k_{\xi^s, \pi'})-  \pi'(\xi^s,b^k_{ \xi^s,\pi'}) \right) \Big]   \\
	 %&\ge \rho \sum_{k \in \K}  \lambda_k  \left[ \sum_\theta \mu_\theta \phi_\theta(s)  \left( u^s_\theta(b^k_{\xi^s,\pi}) - \pi(\xi^s,b^k_{\xi^s,\pi})  \right) \right]  -2^{-\Omega(1/\rho)} \\
	 & \ge \rho \sum_{k \in \K}  \lambda_k  \left[ \sum_\theta \xi^s_\theta  \left( u^s_\theta(b^k_{\xi^s,\pi})  + u^k_\theta(b^k_{\xi^s,\pi}) - u^k_\theta(b^k_{\xi^s})  \right) \right]  -2^{-\Omega(1/\rho)} \\
	&\ge \rho \sum_{k \in \mathcal{R}_{\phi,p,\pi}}  \lambda_k  \left[ \sum_\theta \xi^s_\theta \left( u^s_\theta(b^k_{\xi^s,\pi})  + u^k_\theta(b^k_{\xi^s,\pi}) - u^k_\theta(b^k_{\xi^s})  \right) \right]  -2^{-\Omega(1/\rho)} 
	\end{align}
	\end{subequations}
	where the first inequality comes from Corollary~\ref{cor:linear}, and the last one since we restrict the elements in the first summation.
	
	Now, we need a protocol $(\phi^2,p^2,\pi^2)$ that approximate the utility obtained by the optimal protocol that uses only linear payment functions.
	When the number of states is fixed, we can approximate the optimal protocol that uses linear payment functions using Theorem~\ref{thm:fixedStates} with an additive loss $\alpha$.
	Otherwise, we can use Theorem \ref{QPTAS} that is polynomial time when the number of actions is fixed, while it runs in quasi-polynomial time and provides a loss $\alpha$ when instantiated with sufficiently small parameters.
	Hence, protocol $(\phi^2,p^2,\pi^2)$  can be computed in time $\textnormal{poly}(\min\{\mathcal{I}^d,\mathcal{I}^{log(m)}\})$. 
	Notice that both the algorithms returns a protocol such that $p=0$ and hence $p^2=0$.
	Then, we can show that the protocol $(\phi^2,p^2,\pi^2)$ has seller's utility  
	\begin{align*}
	&\sum_{k \in \K} \lambda_k \left[\sum_{s \in \sset} \sum_{\theta} \mu_\theta \phi_{\theta}(s) \left( u^s_\theta(b^k_{\xi^s, \pi^2})- \pi^2(s,b^k_{\xi^s, \pi^2})\right)\right] \\
	&\hspace{0.5cm}\ge \sum_{k \in \K} \lambda_k \left[\sum_{s \in \sset} \sum_{\theta} \mu_\theta \phi_{\theta}(s) \left(u^s_\theta(b^k_{\xi^s, \pi'})- \pi'(s,b^k_{\xi^s, \pi'}\right)\right]-\alpha   \\
	&\hspace{0.5cm}= \sum_{k \in \K}  \lambda_k \left[\sum_{s \in \sset} \left(\sum_{\theta} \mu_\theta \phi_{\theta}(s)\right) \sum_{\theta}  \xi^s_\theta \left(u^s_\theta(b^k_{\xi^s, \pi'})- \pi'(s,b^k_{\xi^s, \pi'}\right)\right]-\alpha   \\
	& \hspace{0.5cm}= \sum_{s \in \sset} \left(\sum_{\theta} \mu_\theta \phi_{\theta}(s)\right) \sum_{k \in \K}  \lambda_k \left[\sum_{\theta}  \xi^s_\theta \left(u^s_\theta(b^k_{\xi^s, \pi'})- \pi'(s,b^k_{\xi^s, \pi'}\right)\right]-\alpha   \\
	& \hspace{0.5cm}\ge \sum_{s \in \sset} \left(\sum_{\theta} \mu_\theta \phi_{\theta}(s)\right) \left[ \rho \sum_{k \in \mathcal{R}}  \lambda_k   \sum_\theta \xi^s_\theta \left( u^s_\theta(b^k_{\xi^s,\pi})  + u^k_\theta(b^k_{\xi^s,\pi}) - u^k_\theta(b^k_{\xi^s})  \right)   -2^{-\Omega(1/\rho)} \right]-\alpha   \\
	& \hspace{0.5cm}= \sum_{s \in \sset} \left(\sum_{\theta} \mu_\theta \phi_{\theta}(s)\right) \left[ \rho \sum_{k \in \mathcal{R}_{\phi,p,\pi}}  \lambda_k   \sum_\theta \xi^s_\theta \left( u^s_\theta(b^k_{\xi^s,\pi})  + u^k_\theta(b^k_{\xi^s,\pi}) - u^k_\theta(b^k_{\xi^s})  \right) \right]  -2^{-\Omega(1/\rho)} -\alpha   \\
	& \hspace{0.5cm}=\rho \sum_{k \in \mathcal{R}_{\phi,p,\pi}}  \lambda_k    \sum_{s \in \sset} \left(\sum_{\theta} \mu_\theta \phi_{\theta}(s)\right) \left[  \sum_\theta \xi^s_\theta \left( u^s_\theta(b^k_{\xi^s,\pi})  + u^k_\theta(b^k_{\xi^s,\pi}) - u^k_\theta(b^k_{\xi^s})  \right)  \right] -2^{-\Omega(1/\rho)} -\alpha   \\
	&\hspace{0.5cm}=\rho \sum_{k \in \mathcal{R}_{\phi,p,\pi}}  \lambda_k    \sum_{s \in \sset} \sum_{\theta} \mu_\theta \phi_{\theta}(s) \left( u^s_\theta(b^k_{\xi^s,\pi})  + u^k_\theta(b^k_{\xi^s,\pi}) - u^k_\theta(b^k_{\xi^s})  \right)   -2^{-\Omega(1/\rho)} -\alpha, 
	\end{align*}
	where the first inequality holds since $\pi'$ employs linear payments functions and $(\phi^2,p^2,\pi^2)$ has an additive loss $\alpha$ w.r.t. any protocol that employs linear payments functions, while the second inequality comes from Equation~\eqref{U2}.

	%	As a first step, we provide a sender's strategy that has utility greater or equal to $U_2$.
	%	Consider a new sender's strategy $\phi,p',\pi'$ in which $p'=0$ and for each signal $s$ and action $a$ for each signal and action \[\pi'(s,a)=min\{ \sum_\theta \mu_\theta \phi_\theta(s) (u^s_\theta(br^k_s)),\pi(s,br^k_s)\}\].
	%	Notice that under the signaling scheme each receiver's type is IR since $p'=0$. The utility of the sender's strategy $\phi,p',\pi'$ is
	%	\begin{align}
	%	\sum_{k \in K} [ \sum_{s} \sum_\theta \mu_\theta \phi_\theta(s) (u^s_\theta(br^k_s) -\pi(s,br^k_s))]&\ge  \sum_{k \in IR} [ \sum_{s} \sum_\theta \mu_\theta \phi_\theta(s) (u^s_\theta(br^k_s) -\pi'(s,br^k_s))]\\
	%	 &\ge\sum_{k \in IR} [ \sum_{s} \sum_\theta \mu_\theta \phi_\theta(s) (u^s_\theta(br^k_s) -\pi(s,br^k_s))]\\
	%	 &=U_2,
	%	\end{align}
	%	where the second inequality comes from the fact that if $\pi(s,br^k_s) \le \sum_\theta \mu_\theta \phi_\theta(s) (u^s_\theta(br^k_s))$ then $\pi(s,br^k_s)=\pi'(s,br^k_s)$ and the other payment decreases and hence the best response is the same.
	%	Otherwise, it holds $\sum_\theta \mu_\theta \phi_\theta(s) (u^s_\theta(br^k_s) -\pi(s,br^k_s)\le 0$, while $\sum_\theta \mu_\theta \phi_\theta(s) (u^s_\theta(br^k_s) -\pi'(s,br^k_s))\ge 0$ by construction.
	%	LA BEST RESPONSE DIPENDE ANCHE DA PI.
	%	Since we show that there exists a sender's strategy  such that $p'=0$ and has utility at least $U_2$, we can approximate $U_2$ approximating the optimal signaling scheme with no budget.
	%	It admits a MIN BETWEEN CONSTANT STATE AND QPTAS time algorithm by previous section.
	
	\paragraph{Approximate $U_3$.}
	Let $\delta_{k}\defeq \sum_{\theta} \mu_\theta u^k_\theta(b^k_\theta)- \sum_{\theta} \mu_\theta u^k_\theta(b^k_\mu)$ for each $k \in \K$, where $b^k_\theta$ is the best response of agent of type $k \in \K$ when the state of nature  is $\theta$.
	For each $k \in \mathcal{R}_{\phi,p,\pi}$, by the definition of IR it holds 
	\begin{align}\label{eq:second}
	  \sum_{s \in \sset} \sum_{\theta \in \Theta} \mu_\theta \phi_\theta(s)[\pi(s,b^k_{\xi^s,\pi})  + u^k_\theta(b^k_{\xi^s,\pi})] -p\ge \sum_{\theta} \mu_\theta u^k_\theta(b^k_\mu),  
	 \end{align}
	Hence, 
	\begin{subequations}%\label{eq:4}
	\begin{align*}
	&p-\sum_{s\in \sset} \sum_{\theta \in \Theta} \mu_\theta \phi_\theta(s)\left[\pi(s,b^k_{\xi^s,\pi})  + u^k_\theta(b^k_{\xi^s,\pi})-u^k_\theta(b^k_{\xi^s})\right] \\
	&\hspace{2cm} = p-\sum_{s\in \sset} \sum_{\theta \in \Theta} \mu_\theta \phi_\theta(s)\left[\pi(s,b^k_{\xi^s,\pi})  + u^k_\theta(b^k_{\xi^s,\pi})\right] + \sum_{s\in \sset} \sum_{\theta \in \Theta} \mu_\theta \phi_\theta(s) u^k_\theta(b^k_{\xi^s}) \\
	&\hspace{2cm} \le  p-\sum_{s\in \sset} \sum_{\theta \in \Theta} \mu_\theta \phi_\theta(s)\left[\pi(s,b^k_{\xi^s,\pi})  + u^k_\theta(b^k_{\xi^s,\pi})\right] + \sum_{s\in \sset} \sum_{\theta \in \Theta} \mu_\theta \phi_\theta(s) u^k_\theta(b^k_{\theta}) \\
	&\hspace{2cm} =  p-\sum_{s\in \sset} \sum_{\theta \in \Theta} \mu_\theta \phi_\theta(s)\left[\pi(s,b^k_{\xi^s,\pi})  + u^k_\theta(b^k_{\xi^s,\pi})\right] +\sum_{\theta \in \Theta} \mu _\theta  u^k_\theta(b^k_{\theta}) \\
	&\hspace{2cm} \le -\sum_{\theta \in \Theta} \mu _\theta u^k_{\theta}(b^k_\theta) +\sum_{\theta \in \Theta} \mu _\theta  u^k_\theta(b^k_{\theta}) \\
	&\hspace{2cm}  \le \delta_k ,
	\end{align*}
	\end{subequations}
	where the first inequality follows by the optimality of action $b^k_\theta$ in state $\theta$, and the second one by Equation~\eqref{eq:second}.
	
	Next, we show that for each $\zeta \in [0,1]$ we can design a protocol with seller's utility of at least $\frac{\zeta}{2}\sum_{k \in \K} \delta_k -2^{-{1/\zeta}}$.
	Let $ P_{\zeta} \defeq \{2^{-i}\}_{i \in \{1,\dots, \lfloor1/\zeta\rfloor\}}\cup \{0\}$, and for each $k \in \K$ let $p^k$ be the greatest $p \in P_{\zeta}$ such that $p\le \delta_k$. Then, 
	\[\sum_{k \in \K} \lambda_k p^k \ge 	\sum_{k \in \K} 	\lambda_k \left(\delta_k/2-2^{-\lfloor1/\zeta\rfloor}\right)= \sum_{k \in \K} 	\lambda_k \delta_k/2-2^{-\lfloor1/\zeta\rfloor},\]
	where the inequality holds since either $p^k\ge \delta_k/2$ or $p^k\le 2^{-\lfloor1/\zeta\rfloor}$
	 
	Hence, 
	$\sum_{p \in P_{\zeta}} p \sum_{k \in \K:p^k=p} \lambda_k \ge \sum_{k \in K} 	\lambda_k \delta_k/2-2^{-\lfloor1/\zeta\rfloor}$, implying
	\[\max_{p \in P_{\zeta}} \, p \sum_{k \in \K:p^k=p} \lambda_k \ge  \frac{1}{2|P_\zeta|}\sum_{k \in \K} \lambda_k \delta_k-2^{-\lfloor1/\zeta\rfloor}\ge\frac{\zeta}{2}\sum_{k \in \K} \lambda_k \delta_k-2^{-\lfloor1/\zeta\rfloor} .\]
	Let $p^*=\argmax_{p \in  P_\zeta} p \sum_{k \in K:p^k=p} \lambda_k $. Consider the protocol $(\phi^3,p^3,\pi^3)$ that charges payment $p^3=p^*$, reveals all information with $\phi^3$ and set payment $\pi^3(s,a)=0$ for each $s\in \sset$ and $a \in \A$.
	We show that this protocol satisfies the IR constraint for all the players such that $p^k=p^*$.
	Indeed, for all these types it holds   
	\begin{align} 
	\sum_{\theta \in \Theta} \mu_\theta  u^k_\theta(b^k_{\theta}) -p^*  &\ge  \sum_{\theta \in \Theta} \mu_\theta  u^k_\theta(b^k_{\theta}) - \delta_k \nonumber\\
	& = \sum_{\theta \in \Theta} \mu_\theta  u^k_\theta(b^k_{\theta}) -\left( \sum_{\theta} \mu_\theta u^k_\theta(b^k_\theta)- \sum_{\theta} \mu_\theta u^k_\theta(b^k_\mu)\right) \ge 0.\label{eq:3}
	\end{align}
	Then, the utility of the protocol is at least the payment obtained by the buyers' type in $\R_{\phi^3,p^3,\pi^3} \supseteq \{k \in \K:p^k=p^*\} $. In particular, it is at least
	\begin{align*}
	p^* \sum_{k \in \K:p^k=p^*} \lambda_k &\ge \frac{\zeta}{2} \sum_{k \in \K} \lambda_k \delta_k-2^{-\lfloor 1/\zeta\rfloor} \\
	&\ge \frac{\zeta}{2} \sum_{k \in \R_{\phi,p,\pi}} \lambda_k \delta_k-2^{-\lfloor 1/\zeta\rfloor}\\
	&\ge \frac{\zeta}{2} \sum_{k \in \R_{\phi,p,\pi}} \lambda_k \left[p-\sum_{s\in \sset} \sum_\theta \mu_\theta \phi_\theta(s)[\pi(s,b^k_{\xi^s,\pi})  + u^k_\theta(b^k_{\xi^s,\pi})-u^k_\theta(b^k_{\xi^s})]\right] -2^{-\lfloor 1/\zeta\rfloor}\\
	&= \frac{\zeta}{2} U_3 -2^{-\lfloor 1/\zeta\rfloor} ,
	\end{align*}
	where in the the first  inequality we use Equation~\eqref{eq:3}, and in the third inequality we use Equation~\eqref{eq:4}.
	Equivalently, setting $\rho=\zeta/2$, we obtain that for each $\rho\in [0,1/2]$ there exists a protocol $(\phi^3,p^3,\pi^3)$ that has seller's utility at least $\rho U_3 -2^{-\Omega(1/\rho)}$.
	
	\paragraph{Wrapping up.} Let $i=\arg \max_{j \in \{1,2,3\}} U_j$ and $\textnormal{OPT}$ be the seller's utility with the optimal protocol $(\phi,p,\pi)$. Then, since $U_1+U_2+U_3=\textnormal{OPT}$, we have that $U_i\ge \frac{1}{3}\textnormal{OPT}$. Moreover, since for each $\rho \in [0,1/2]$ we can  approximate each utility $U_i$, $i \in \{1,2,3\}$ with a protocol with utility at least $\rho U_i-2^{-\Omega(1/\rho)}-\alpha$, the seller's utility of our approximation algorithm is at least $\rho U_i-2^{-\Omega(1/\rho)}-\alpha\ge \rho \textnormal{OPT}/3-2^{-\Omega(1/\rho)}-\alpha$. Finally, setting $\rho'=\rho/3$, we obtain that for each $\rho'\in [0,1/6]$ the utility of the designed protocol is at least $\textnormal{OPT}-2^{-\Omega(1/\rho)}-\alpha$. 
	This concludes the proof.
%	\mat{ forse rimuovere, perchè con l'$\eta$ diventa ancora più fumosa.
%	To prove the second statement, notice that if $OPT \ge 2^{-1/\epsilon}$, then  $\frac{\epsilon}{48} OPT-2^{-8/\epsilon}\ge\frac{\epsilon}{48} OPT-OPT 2^{(-7)/\epsilon}\ge OPT [\frac{\epsilon}{48} - 2^{(-7)/\epsilon}] \ge  OPT [\frac{\epsilon}{48} - \epsilon  2^{-7}] \ge \epsilon OPT [\frac{1}{48} -   2^{-7}] \ge  \epsilon OPT \frac{1}{96}$.
%	Letting $\nu=2^{-1/\epsilon}$, we get that if $OPT \ge \nu$, then $APX\ge\frac{1}{ log(1/\nu)} OPT \frac{1}{96} $}
\end{proof}

\constypesfirst*
\begin{proof}
	Let $(\phi,\pi,p)$ be a protocol and let be $s_1,s_2 \in \mathcal{S}$ be two signals such that $b^k_{\xi^{s_1}}=b^k_{\xi^{s_2}}$ for each receiver's type $k \in \K$. We show that it is always possible to define a new protocol $(\phi^*,\pi^*,p)$ that employs a single signal $s^*$ instead of $s_1$ and $s_2$ achieving the same seller's expected utility while satisfying the constraints.
	%by replacing $s_1,s_2$ with a single signal $s^*$ in the support of both the signaling scheme and the price function.
	Formally, we  define a new signaling scheme $\phi^*$ as follows:

	$$\begin{cases}
	\phi_\theta^*(s^*)=\phi_\theta(s_1)+\phi_\theta(s_2) \,\,\, \forall \theta \in \Theta \\
	\phi_\theta^*(s)=\phi_\theta(s) \,\,\, \forall \theta \in \Theta, \,\,\, \forall s \in \mathcal {S}  \setminus \{s_1,s_2\}\end{cases}$$
	and a new payment function $\pi^*$ as follows:
	$$\begin{cases}
	\pi^*(s^*, a ) =z\pi(s_1, a )+ (1-z)\pi(s_2, a ) \,\,\,\quad  \forall a \in \A\\
	\pi^*(s, a ) =\pi(s, a ) \,\,\, \forall a \in \A, \,\,\quad \quad \quad\quad\quad \forall s \in \mathcal {S}  \setminus \{s_1,s_2\}
	\end{cases}$$
	with $z=\sum_{\theta \in \Theta}\mu_\theta\phi_\theta(s_1)/(\sum_{\theta \in\Theta}\mu_\theta(\phi_\theta(s_1)+\phi_\theta(s_2))$. As a first step, we observe that for each $k \in \K$ it holds:
	\begin{align*}
	 & \sum_{\theta \in \Theta} \mu_\theta \Big[ \phi_\theta(s_1) \left[ u_\theta^{s}(b^k_{\xi^{s_1},\pi}) -\pi(s_1, b^k_{\xi^{s_1},\pi}) \right] 
	+ \phi_\theta(s_2) \left[ u_\theta^{s}(b^k_{\xi^{s_2},\pi}) -\pi(s_2, b^k_{\xi^{s_2},\pi}) \right] \Big ] = \\ & \hspace{7cm}\sum_{\theta \in \Theta} \mu_\theta \phi^*_\theta(s^*) \left[ u_\theta^{s}(b^k_{\xi^{s^*},\pi^*}) -\pi^*(s^*, b^k_{\xi^{s^*},\pi^*}) \right].
	\end{align*}
	Moreover, for each $k \in \K$ it holds:
	\begin{align*}
	& \sum_{\theta \in \Theta} \mu_\theta \Big[ \phi_\theta(s_1) \left[ u_\theta^{k}(b^k_{\xi^{s_1},\pi}) +\pi(s_1, b^k_{\xi^{s_1},\pi}) \right] 
	+ \phi_\theta(s_2) \left[ u_\theta^{k}(b^k_{\xi^{s_2},\pi}) +\pi(s_2, b^k_{\xi^{s_2},\pi}) \right] \Big ] = \\ & \hspace{7cm}\sum_{\theta \in \Theta} \mu_\theta \phi^*_\theta(s^*) \left[ u_\theta^{k}(b^k_{\xi^{s^*},\pi^*}) +\pi^*(s^*, b^k_{\xi^{s^*},\pi^*}) \right].
	\end{align*}
	 Hence, noticing that for each signal $s\in \mathcal{S}\setminus\{s_1,s_2\}$ the seller's utility and the buyer's utility does not change from $(\phi,\pi,p)$ to $(\phi^*,\pi^*,p)$, the set $\mathcal{R}$ of buyer's type for which the  IR is satisfied does not change.
	As a consequence, the two protocols achieve the same seller's expected utility. 
	
	Applying this procedure to all the couples of signals that induces the same vector of best responses, we obtain a generalized-direct and generalized-persuasive protocol providing the same seller's expected utility.
\end{proof}

\constypesecond*
\begin{proof}
	Let $(\phi , \pi , p )$ be a protocol.
	We show that there exists a $\hat k  \in \mathcal{K}$ and a payment function $\hat \pi $ such that the protocol $(\phi, \hat\pi , b_{\hat k} )$ provides the same seller's expected  utility. Let 
	\[\hat k \in \arg\min_{k \in \mathcal{R}_{\phi,\pi,p}: b_k\ge p } \{  b_k \}.\] We observe that all the buyer's types $k \in \mathcal{R}_{\phi,\pi,p}$ have enough budget to participate in the protocol,\emph{i.e.}, $b_k\ge b_{\hat k}$. Furthermore, we define $\hat \pi(s,a) = \pi(s,a) + b_{\hat k} - p$ for each $s \in \mathcal{S}$ and $a \in \A$. 
	
	Then, we show that the set of types  $\mathcal{R}_{\phi,\pi,p}=\mathcal{R}_{\phi,\hat \pi,\hat p}$. Indeed, for each type $k \in \mathcal{R}_{\phi, \hat\pi, \hat p}$ it holds
		\begin{align*}
	 &\sum_{\theta \in \Theta} \sum_{s \in \mathcal{S}} \mu_\theta \phi_\theta(s)  \left[u_\theta^{k}(b^k_{\xi^{s},\hat \pi}) +	\hat \pi(s, b^k_{\xi^{s},\hat \pi})\right] -b_{\hat k}  \\
	 &\hspace{5cm} = \sum_{\theta \in \Theta} \sum_{s \in \mathcal{S}} \mu_\theta \phi_\theta(s)  \left[ u_\theta^{k}(b^k_{\xi^{s},\hat\pi}) +	 \pi(s, b^k_{\xi^{s},\hat\pi})+b_{\hat k}-p\right] -b_{\hat k} \\
	 &\hspace{5cm} =	\sum_{\theta \in \Theta} \sum_{s \in \mathcal{S}} \mu_\theta \phi_\theta(s)  \left[u_\theta^{k}(b^k_{\xi^{s},\pi}) +\pi(s, b^k_{\xi^{s},\pi})   \right]-p,
	\end{align*}
	and hence $k \in \mathcal{R}_{\phi, \pi, p}$.
	Similarly, we can prove that each buyer's type $k \notin \mathcal{R}_{\phi,\hat \pi,\hat p}$ does not belong to $\mathcal{R}_{\phi, \pi, p}$.
	It follows that $\mathcal{R}_{\phi,\pi,p}=\mathcal{R}_{\phi,\hat \pi,\hat p}$.
	
	Finally, we can show that the seller's utility results equal to the one in $(\phi, \pi , p)$. Indeed, we have:
	\begin{align*}
		&\sum_{k \in \mathcal{R}_{\phi,p,\pi}} \lambda_k \Big[ \sum_{\theta \in \Theta} \sum_{s \in \mathcal{S}} \mu_\theta \phi_\theta(s)  \left[u_\theta^{s}(b^k_{\xi^{s},\pi}) -\pi(s, b^k_{\xi^{s},\pi}) \right]  + p \Big] + \sum_{k  \notin \mathcal{R}_{\phi,p,\pi}} \lambda_k  \sum_{\theta \in \Theta}  \mu_\theta   u_\theta^{s}(b^k_{\mu})    \\
		& \hspace{0.6cm}=	\sum_{k \in \mathcal{R}_{\phi, \hat p,\hat \pi}} \lambda_k \Big[ \sum_{\theta \in \Theta} \sum_{s \in \mathcal{S}} \mu_\theta \phi_\theta(s) \left[ u_\theta^{s}(b^k_{\xi^{s},\hat \pi}) -\hat \pi(s, b^k_{\xi^{s},\hat \pi}) \right]  + b_{\hat k} \Big] + \sum_{k  \notin \mathcal{R}_{\phi,\hat p,\hat \pi}} \lambda_k \Big[ \sum_{\theta \in \Theta}  \mu_\theta   u_\theta^{s}(b^k_{\mu}) \Big] 
	\end{align*}
	This concludes the proof.
%	Moreover, with a similar argument it is possible to show that also the IR constraints continue to hold for each $k \in \mathcal{R}_{\phi,p,\pi}$.
%	Finally, it is easy to see that for each buyer's type  $k \in \mathcal{R}_{\phi,p,\pi}$ the IR constraint continue to be not satisfied.
	%
\end{proof}

\fixedtypes*
\begin{proof}
	In the following, we present an algorithm to compute an optimal protocol that works in polynomial time when the number of buyer's types is fixed. As a first step, we observe that, thanks to Lemma \ref{lem:constypes2}, the initial payment required by the seller coincides with $b_k$ for some $k \in \mathcal{K}$. %Hence, once the seller sets a payment equal to $b_k$, only the subset of buyer's types $\mathcal{K}' \subseteq \mathcal{K}$ with enough budget can take part to the protocol. 
	Furthermore, we can focus on direct protocols by Lemma~\ref{lem:constypes1}.
	Then, given a price $p \in \{b_k\}_{k \in \K}$ and a set of buyer's types $\mathcal{R}\subseteq \K \cap \{k \in \K : b_k\ge p\} $ for which the IR constraint is satisfied, the the problem of computing the optimal protocol can be formulated as Problem~\eqref{quad:LP_kfixed}.
	Similarly to Section~\ref{sec:protocol_selection}, we can provide a  linear relaxation of Problem~\eqref{quad:LP_kfixed}  introducing a variable $l(\avec,a')$ that replaces $ \sum_{\theta \in \Theta} \mu_\theta \phi_\theta(\avec)  \pi(\avec, a')$ for each $\avec \in \A^{n}$ and $a' \in \A$.
	Then, we obtain the following LP.

	\begin{subequations}\label{eqn:LP_kfixed}
		\begin{align}
		&\max_{\phi \ge 0,l \ge 0}
		\sum_{k \in \mathcal{R}} \lambda_k \sum_{\avec \in \mathcal{A}^{n}} \left[\sum_{\theta \in \Theta} \mu_\theta \phi_\theta(\avec) u_\theta^s(a_k) -l(\avec, a_k)\right] + \sum_{k \notin \mathcal{R}} \lambda_k\sum_{\theta \in \Theta} \mu_\theta u_\theta^s(b^k_{\mu}) \quad\quad\quad\quad\quad\,\,\,\,\,\,\\ 
		&\hspace{0.5cm} \sum_{\theta \in \Theta} \mu_\theta \phi_{\theta}(\avec) u^k_\theta(a_k) + l(\avec,a_k)  \ge  \sum_{\theta \in \Theta} \mu_\theta \phi_{\theta}(\avec) u^k_\theta(a') + l(\avec,a') \nonumber\\
		&\specialcell{\hfill \forall k \in \mathcal{R}, \forall \avec \in \mathcal{A}^{n},  \forall a' \not = a_k \in \mathcal{A} }\label{17b}\\
		& \specialcell{ \hspace{0.5cm}\sum_{\avec \in \mathcal{A}^{n} }\left[\sum_{\theta \in \Theta} \mu_\theta \phi_{\theta}(\avec) u^k_\theta(a_k) + l(\avec,a_k)\right]  - b_k \ge \sum_{\theta \in \Theta} \mu_\theta  u^k_\theta(b^k_\mu) \hfill \forall k \in \mathcal{R}} \label{17c}\\
		&\specialcell{\hspace{0.5cm} \sum_{\avec \in \mathcal{A}^{n} } \left[\sum_{\theta \in \Theta} \mu_\theta \phi_{\theta}(\avec) u^k_\theta(a_k) + l(\avec,a_k)\right] - b_k \le \sum_{\theta \in \Theta} \mu_\theta  u^k_\theta(b_\mu^k)\hfill \forall k \not \in \mathcal{R}}\label{17d}\\
		&\specialcell{\hspace{0.5cm} \sum_{\avec \in \mathcal{A}^{n} } \phi_\theta(\avec)=1 \hfill \forall \theta \in \Theta.}
		\end{align}
	\end{subequations}
	Hence, once we fix $b_k$ and $\mathcal{R}$, LP~\eqref{eqn:LP_kfixed} returns a solution that has the same value of the optimal protocol.

	To compute the optimal protocol we can iterate over all the possible prices $p \in \{b_k\}_{k \in \K}$ and all the possible  subsets $\R \subseteq \K\cap \{k \in \K : b_k\ge p\} $ of receivers types for which the IR constraint is satisfied. Notice that, given a price $p$, the IR constraint can be satisfied only the buyer's type $k \in \K$  with $b_k\ge p$. Then, we solve LP~\eqref{eqn:LP_kfixed}.
	Finally, we return the solution with highest value.
	As we show in the first part of the proof, this solution has the same value of the optimal protocol.
	Moreover, it is easy to check that the overall procedure requires to solve $O(n2^{n})$ LPs, showing that the algorithm runs in polynomial time.
	
	To conclude the proof, we need to show how to modify the solution of LP~\ref{eqn:LP_kfixed} to obtain a protocol, \emph{i.e.}, a solution to Problem~\eqref{quad:LP_kfixed}, with at least the same value.
	To do so, we exploit a similar approach to the one presented in Section~\ref{sec:protocol_selection}.
	Let $(\phi,l)$ be the solution to LP~\eqref{eqn:LP_kfixed} returned by the algorithm.
	Suppose that there exists a couple $(\bar \avec,\bar k)$ such that $l(\bar \avec,a_{\bar k})>0$ and $\sum_{\theta \in \Theta} \mu_\theta \phi_\theta(\bar \avec)=0$.
	We show how to obtain a solution such that $l(\bar \avec,a)=0$ for each $a\in \A$.
	Notice that by Constraint~\eqref{17b}, it holds $l(\bar \avec,\bar a_{ k})\ge l(\bar\avec,a)$ for each $k\in \K$, $a \in \A$.
	This implies that $l(\bar \avec,\bar a_k)=l(\bar \avec,\bar a_{k'})$ for each $k \neq k'$. We denote this value with $l(\bar \avec)$.
	Let $\hat \avec \in \A^n$ be any signal such that  $\sum_{\theta \in \Theta} \mu_\theta \phi_\theta(\hat\avec)>0$.
	Consider a assignment $(\phi,l')$ to the variables such that
	\begin{itemize}
		\item $l'(\bar \avec,a)=0$ for each $a \in \A$;
		\item $l'(\hat\avec,a)=l(\hat\avec,a)+ l(\bar \avec)$ for each $a \in \A$;
		\item $l'(\avec)= l(\avec)$ for each $\avec\notin \{\bar \avec,\hat \avec\}$.
	\end{itemize}
	We show that this solution is feasible to LP~\eqref{eqn:LP_kfixed} and has the same objective value of $(\phi,l)$.
	Indeed, it holds 
	\begin{align*}
	&\sum_{k \in \mathcal{R}} \lambda_k \sum_{\avec \in \mathcal{A}^{n}} \left[\sum_{\theta \in \Theta} \mu_\theta \phi_\theta(\avec) u_\theta^s(a_k) -l'(\avec, a_k)\right] + \sum_{k \notin \mathcal{R}} \lambda_k\sum_{\theta \in \Theta} \mu_\theta u_\theta^s(b^k_{\mu}) \\
	&\hspace{2cm}=\sum_{k \in \mathcal{R}} \lambda_k \bigg[\sum_{\avec \in \mathcal{A}^{n}\setminus \{\bar\avec,\hat \avec\}} \left(\sum_{\theta \in \Theta} \mu_\theta \phi_\theta(\avec) u_\theta^s(a_k) -l'(\avec, a_k)\right)+\sum_{\theta \in \Theta} \mu_\theta \phi_\theta(\bar \avec) u_\theta^s(\bar a_k)\\
	&\hspace{5cm} +\sum_{\theta \in \Theta} \mu_\theta \phi_\theta(\hat \avec) u_\theta^s(\hat a_k) -\left(l(\hat \avec, \hat a_k)-l(\bar \avec)\right)\bigg] + \sum_{k \notin \mathcal{R}} \lambda_k\sum_{\theta \in \Theta} \mu_\theta u_\theta^s(b^k_{\mu})\\
	&\hspace{2cm}=\sum_{k \in \mathcal{R}} \lambda_k \bigg[\sum_{\avec \in \mathcal{A}^{n}\setminus \{\bar\avec,\hat \avec\}} \left(\sum_{\theta \in \Theta} \mu_\theta \phi_\theta(\avec) u_\theta^s(a_k) -l(\avec, a_k)	\right)+\sum_{\theta \in \Theta} \mu_\theta \phi_\theta(\bar \avec) u_\theta^s(\bar a_k)-l(\bar \avec,\bar a_k) \\
	&\hspace{5cm} +\sum_{\theta \in \Theta} \mu_\theta \phi_\theta(\hat \avec) u_\theta^s(\hat a_k) -l(\hat \avec, \hat a_k)\bigg] + \sum_{k \notin \mathcal{R}} \lambda_k\sum_{\theta \in \Theta} \mu_\theta u_\theta^s(b^k_{\mu})\\
	&\hspace{2cm}=\sum_{k \in \mathcal{R}} \lambda_k \sum_{\avec \in \mathcal{A}^{n}} \left[\sum_{\theta \in \Theta} \mu_\theta \phi_\theta(\avec) u_\theta^s(a_k) -l(\avec, a_k)\right] + \sum_{k \notin \mathcal{R}} \lambda_k\sum_{\theta \in \Theta} \mu_\theta u_\theta^s(b^k_{\mu}),
	\end{align*}
	showing that the seller's utility does not change.
	Moreover, Constraints~\eqref{17b} relative to $\bar \avec$ are satisfied since have the form $0\ge 0$.
	The Constraints~\eqref{17b} relative to $\hat \avec$ continue to be satisfied since we add a term $l(\bar \avec)$ on both sides of the inequality.
	Finally, all the other Constraint~\eqref{17b} are unchanged.
	Consider Constraint~\eqref{17c} relative to a buyer's type $k\in \K$.
	It holds 
	\begin{align*}
	&\sum_{\avec \in \mathcal{A}^{n} }\left[ \sum_{\theta \in \Theta} \mu_\theta \phi_{\theta}(\avec) u^k_\theta(a_k) + l'(\avec,a_k)\right] - b_k \\
	&\hspace{4cm}= \sum_{\avec \in \mathcal{A}^{n}\setminus\{\bar \avec,\hat \avec\}} \left[\sum_{\theta \in \Theta} \mu_\theta \phi_{\theta}(\avec) u^k_\theta(a_k) + l(\avec,a_k) \right]  +  \sum_{\theta \in \Theta} \mu_\theta \phi_{\theta}(\bar \avec) u^k_\theta(\bar a_k) \\
	&\hspace{7.5cm}+\sum_{\theta \in \Theta} \mu_\theta \phi_{\theta}(\hat \avec) u^k_\theta(\hat a_k) +l(\hat\avec,\hat a_k)  + l(\bar \avec)  - b_k   \\
	&\hspace{4cm}= \sum_{\avec \in \mathcal{A}^{n}\setminus\{\bar \avec,\hat \avec\}} \left[\sum_{\theta \in \Theta} \mu_\theta \phi_{\theta}(\avec) u^k_\theta(a_k) + l(\avec,a_k) \right]  +  \sum_{\theta \in \Theta} \mu_\theta \phi_{\theta}(\bar \avec) u^k_\theta(\bar a_k) \\
	&\hspace{7cm}+l(\bar \avec, \bar a_k)+\sum_{\theta \in \Theta} \mu_\theta \phi_{\theta}(\hat \avec) u^k_\theta(\hat a_k) +l(\hat\avec,\hat a_k)  - b_k   \\
	&\hspace{4cm}= \sum_{\avec \in \mathcal{A}^{n} }\left[ \sum_{\theta \in \Theta} \mu_\theta \phi_{\theta}(\avec) u^k_\theta(a_k) + l'(\avec,a_k)\right] - b_k \\
	&\hspace{4cm}\ge \sum_{\theta \in \Theta} \mu_\theta  u^k_\theta(b^k_\mu) 
	\end{align*}
	Similarly, we can show that Constraints~\eqref{17d} continue to hold.
	Hence, iteratively applying this procedure we obtain a solution with the same value of the optimal protocol and such that for each tuple $(\avec,k)$ if  $l(\avec,a_k)>0$ and $\sum_{\theta \in \Theta} \mu_\theta \phi_\theta(\avec)>0$.
	We can convert this solution into an optimal protocol, \emph{i.e.}, an optimal solution to Problem~\eqref{quad:LP_kfixed} setting $\pi(\avec,a_k)=\frac{l(\avec,a_k)}{\sum_{\theta \in \Theta} \mu_\theta \phi_\theta(\avec)}$ for each $\avec \in \A^n$ such that $\sum_{\theta \in \Theta} \mu_\theta \phi_\theta(\avec)=0$ and $k\in \K$.
	Moreover, we set all the other payments to $0$.
	It is easy to see that the obtained protocol  is a feasible optimal solution to Problem~\eqref{quad:LP_kfixed}.
	This concludes the proof.
%
%
%\begin{algorithm}\label{algo:fixed_types}
%	\caption{Compute the optimal protocol when the number of buyer's types is fixed.}
%	%\begin{algorithmic} 
%	\For{$\bar k$ in $\mathcal{K}$}{
%		set $p=b_{\bar k}$ and $\mathcal{K}'= \{k \in \K \, | \, b_k \ge b_{\bar k} \, \}$\;
%		\For{$\mathcal{R}\subseteq \K \cap \{k\in \K:b_k\ge b_{\bar k}\} $}{
%			instantiate and solve LP~\eqref{eqn:LP_kfixed} with parameters $\mathcal{R}$ and $b_{\bar k}$ \;}}
%	\Return the protocol achieving the highest seller's utility\;
%	%\end{algorithmic}
%\end{algorithm}
\end{proof}

\end{document}